\documentclass[11pt,reqno]{amsart} 
\usepackage{amsaddr}
\usepackage[utf8]{inputenc}
\usepackage[a4paper, margin=3.8cm]{geometry}

\usepackage{amsmath}
\usepackage{amsfonts}

\usepackage{amssymb}
\usepackage{amsthm}
\usepackage{mathtools}
\usepackage{paralist}
\usepackage{cancel}

\usepackage{natbib}
\usepackage{xcolor}
\usepackage[colorlinks=true,linkcolor=blue,citecolor=blue,pdfborder={0 0 0}]{hyperref}

\usepackage{booktabs}
\usepackage{geometry}

\usepackage{dsfont}
\usepackage{bm}

\usepackage{graphicx}
\usepackage{enumerate}
\usepackage{subcaption}

\theoremstyle{plain}
\newtheorem{rmk}{Remark}
\newtheorem{Lem}{Lemma}
\newtheorem{Theo}{Theorem}
\newtheorem{Prop}{Proposition}

\newcommand{\Mu}{$\mathrm{(M_1)}$}
\newcommand{\Md}{$\mathrm{(M_2)}$}
\newcommand{\Au}{${\mathrm (A_1)}$}
\newcommand{\Ad}{${\mathrm (A_2)}$}
\newcommand{\At}{${\mathrm (A_3)}$}
\newcommand{\Aq}{${\mathrm (A_4)}$}
\newcommand{\brem}{\begin{rmk}}
	\newcommand{\erem}{\end{rmk}}
\newcommand{\indin}{\mathds{1}_{\{Y_i\geq y_n\}}}
\newcommand{\indn}{\mathds{1}_{\{Y\geq y_n\}}}
\newcommand{\ind}{\mathds{1}_{\{Y\geq y\}}}

\newcommand{\cov}{\operatorname{cov}}

\newcommand{\var}{\operatorname{var}}

\addtolength{\textwidth}{3cm}
\addtolength{\hoffset}{-1.5cm}
\addtolength{\voffset}{-2cm}
\addtolength{\headheight}{0.5cm}
\addtolength{\textheight}{2.5cm}

\newcommand{\vfi}{\varphi}
\newcommand{\E}{\mathbb{E}}
\newcommand{\R}{\mathbb{R}}
\newcommand{\f}{\frac}
\newcommand{\ff}{\frac{1}}

\newcommand{\p}{\mathbb{P}}

\begin{document}


\title[Extreme-PLS with missing data under weak dependence]{Extreme-PLS with missing data under weak dependence}
\author{Stéphane Girard\textsuperscript{1} and Cambyse Pakzad\textsuperscript{2}}
\address{
  \textsuperscript{1}Univ. Grenoble Alpes, Inria, CNRS, Grenoble INP, LJK, 38000 Grenoble, France \\
  \textsuperscript{2}Modal’X, Université Paris Nanterre, 200 Avenue de la République, 92001 Nanterre, France  
}
\email{\textsuperscript{1}stephane.girard@inria.fr, \textsuperscript{2}cambyse.pakzad@parisnanterre.fr}
\maketitle 

\begin{abstract}
This paper develops a theoretical framework for \emph{Extreme Partial Least Squares} (EPLS) dimension reduction in the presence of missing data and weak temporal dependence. Building upon the recent EPLS methodology for modeling extremal dependence between a response variable and high-dimensional covariates, we extend the approach to more realistic data settings where both serial correlation and missing-ness occur. Specifically, we consider a single-index inverse regression model under heavy-tailed conditions and introduce a \emph{Missing-at-Random} (MAR) mechanism acting on the covariates, whose probability depends on the extremeness of the response. The asymptotic behavior of the proposed estimator is established within an $\alpha$-mixing framework, leading to consistency results under regularly varying tails. Extensive Monte-Carlo experiments covering eleven dependence schemes (including ARMA, GARCH, and nonlinear ESTAR processes) demonstrate that the method performs robustly across a wide range of heavy-tailed and dependent scenarios, even when substantial portions of data are missing. A real-world application to environmental data further confirms the method's capacity to recover meaningful tail directions.\\

\noindent{\sc Keywords}: Extreme Value Theory, Partial Least Squares, Missing Data, Weak Dependence, Dimension Reduction, Heavy Tails.
\end{abstract}
\section{Introduction}

In modern statistical applications, analysts increasingly face interconnected challenges that defy classical assumptions. Indeed, the data collected are often complex, sometimes arising as time series exhibiting serial dependence, with numerous predictors/covariates, which weights on statistical procedures and their computational cost. Furthermore and realistically, datasets often suffer from missing data, particularly in the extreme values of the response variable -- the very observations often most critical to study. Together, these issues build up a compounded sparsity with exacerbated curse of dimensionality by the rarity of extreme events, which in turn may be under-recorded because of the missing-ness mechanism. This convergence undermines standard statistical methods and creates a pressing need for tailored robust techniques.

Addressing the challenge of raw sparsity due to high-dimensions, the field of sufficient dimension reduction (SDR) seeks a low-dimensional projection $P_SX$ of the high-dimensional covariate $X$ preserving the information about the response variable $Y$. More precisely, it is such that $Y$ conditionally to $X$ has same distribution as $Y$ conditionally to $P_SX$, or equivalently $Y$ and $X$ are independent conditionally to $P_SX$ (\citealp{Cook2007}). A classic example is the single-index model, which assumes dim$(S)=1$ and therefore $Y$ only depends on a single linear combination of $X$ with a deterministic vector~$\beta$. Next, one would substitute $X$ by $P_SX$ in further statistical methods as it carries the same amount of information on $Y|X$ without the dimensionality curse.

Two prominent SDR methods are Partial Least Squares (PLS), initiated by \cite{Wold1975}, and Sliced Inverse Regression (SIR), introduced by \cite{Li1991}. PLS finds projections with high variance and high covariance with $Y$, meaning that the information in $X$ that impacts $Y$ is encompassed in $P_SX$, while being the most exhaustive. SIR, in contrast, leverages the simpler inverse regression '$X$ against $Y$', usually at the price of a certain linearity condition and constant variance. This inverse model, where the inference is rather conditioned on the predictors/covariates $X$, somewhat contradicts the philosophy of Fisher, as noted but also encouraged by \cite{Cook2007}.

A nascent research area aims to merge SDR with extreme-value theory 
since nonparametric estimators of extreme conditional features~\citep{Annals} may be impacted both by the scarcity of extremes and the high-dimensional setting.
The challenging task is to capture the information relevant specifically to the response tails.
Pioneering work by \cite{Gardes2018} introduced a notion of tail conditional independence. Later, \cite{xu2022extreme} and  \cite{sabourin} developed estimators for extreme quantiles and tail inverse regression, respectively. However, these approaches largely assume independent data and a fully observed response, conditions that are often violated in practice.

This paper bridges this gap by establishing the theoretical foundations for dimension reduction in the extremes under more realistic conditions: serial dependence and missing data. Our focus is on Extreme-PLS (EPLS, \citet{Bousebata2023}) relying on PLS 
principles for estimating the linear combination $\beta^\top X$ of $X$ that best explains the extreme values of $Y$. The asymptotic properties of the estimated direction $\hat \beta$ are established assuming a single-index inverse regression model and extreme-value type conditions. Here, the asymptotic properties of the estimator are rather derived under the more realistic context of dependent and missing data.
Specifically, a Missing At Random (MAR, \citet{rubin1976inference}) mechanism is assumed on the covariate $X$, the missing-ness probability depending on the extreme-ness of the response variable $Y$. This, for instance, captures scenarios where the failure of a sensor (leading to a missing covariate) is more likely during an extreme weather event.
We refer to \cite{beirlant2023estimation,xu2022handling} for recent works on the estimation of tail features in the case where the missing-ness mechanism is assumed on the response variable $Y$.
Finally, an $\alpha$-mixing framework is adopted both for the observations $(X,Y)$ and the masking process associated with the missing data, in order to account for the temporal dependence common in real-world data like financial or environmental time series.

The remainder of the paper is organized as follows. Some links between  EPLS and the Marginal Expected Shortfall (MES, \citet{Cai}) risk measure are established in Section~\ref{sec-approche}.  The missing data and dependence frameworks are described in Section~\ref{sec-framework} and the EPLS estimator is derived. 
Its asymptotic properties are established in Section~\ref{sec-esti} under tail conditions while its finite sample behavior is illustrated on large scale experiments in Section~\ref{sec-simul} and Section~\ref{sec-reel}. The first 
part of the experiments is conducted on simulated data involving eleven different dependence schemes. The second part illustrates the performance of the method compared to four competitor baselines on environmental data involving measures at nearly 5800 triplets of weather stations. Proofs are postponed to the Appendix.

\section{EPLS and Marginal Expected Shortfall}
\label{sec-approche}

Let us consider the vector $w(y)$ referred to as the EPLS direction in \cite{Bousebata2023}. It is defined as
the unit vector maximizing over $\| w\|=1$ the covariance between $Y$ and the projection $w^\top X$ of $X$ on $w$ given that $Y$ exceeds a large threshold $y>0$: 
 \begin{equation}
 \label{covar}
  w(y)= \arg\max_{\|w\|=1} {\rm cov} (w^\top X,Y \mid Y\geq y),
 \end{equation}
 where $\|\cdot\|$ denotes the Euclidean norm. 
 We also introduce the marginal expected shortfall of $X$ at level $y$ which plays a central role in this study: 
 $$
 \textnormal{MES}_X(y) = \mathbb{E}(X \mid Y > y),
 $$
see \cite{Cai}. Note that $\textnormal{MES}_X(y)\in\mathbb{R}^p$ and
the associated $j$th coordinate is given by
$\textnormal{MES}_X^{(j)}(y)=\textnormal{MES}_{X^{(j)}}(y)$.
This risk measure may alternatively be expressed thanks to the tail-moment defined as $m_X(y) = \mathbb{E}(X \ind)$, whenever it exists. This notation will be kept throughout our work. In fact, denoting $\bar{F}(y) = \mathbb{P}(Y \geq y)$ the survival function of $Y$, one may write $\textnormal{MES}_X(y)=m_X(y)/\bar{F}(y)$. Let us also remark that 
$$
 \textnormal{MES}_Y(y) = \mathbb{E}(Y \mid Y > y)=:\text{ES}_Y(y)
 $$
 is the so-called Expected Shortfall (ES).
The linear optimization problem \eqref{covar} under a quadratic constraint benefits from a closed-form solution  obtained with Lagrange multipliers method and involving the cross-MES between $X$ and $Y$. The next proposition precises this point, it is a simple rewriting of \cite[Proposition~1]{Bousebata2023}.
\begin{Prop}
\label{prop-w}
Suppose that $\mathbb{E}(\|X\|\ind)<\infty$, $\mathbb{E}(|Y|\ind)<\infty$ and $\mathbb{E}(\|XY\|\ind)<\infty$ for all $y>0$. Then, the unique solution of the optimization problem~(\ref{covar}) is given for all $y>0$ by: 
\begin{align}
\label{solution}
w(y) &= \frac{v(y)}{\|v(y)\|}, \quad \text{where} \quad v(y) := \textnormal{MES}_{XY}(y) - \textnormal{MES}_X(y) \textnormal{ES}_Y(y).
\end{align}
\end{Prop}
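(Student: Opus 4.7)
The plan is to reduce the constrained optimization in \eqref{covar} to maximizing a linear functional on the unit sphere, where the answer is dictated by Cauchy--Schwarz.

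First I would unfold the conditional covariance using linearity of the conditional expectation given $\{Y\geq y\}$. The three integrability hypotheses $\mathbb{E}(\|X\|\,\mathds{1}_{\{Y\geq y\}})<\infty$, $\mathbb{E}(|Y|\,\mathds{1}_{\{Y\geq y\}})<\infty$ and $\mathbb{E}(\|XY\|\,\mathds{1}_{\{Y\geq y\}})<\infty$ ensure that each term below is finite and that one may pull the deterministic vector $w$ out of the expectation. This gives
\begin{align*}
\mathrm{cov}(w^\top X,Y\mid Y\geq y)
&= \mathbb{E}(w^\top X\, Y\mid Y\geq y) - \mathbb{E}(w^\top X\mid Y\geq y)\,\mathbb{E}(Y\mid Y\geq y)\\
&= w^\top \bigl[\mathrm{MES}_{XY}(y) - \mathrm{MES}_X(y)\,\mathrm{ES}_Y(y)\bigr]
= w^\top v(y),
\end{align*}
where I have identified $\mathrm{MES}_{XY}(y)=\mathbb{E}(XY\mid Y\geq y)$ and $\mathrm{ES}_Y(y)=\mathbb{E}(Y\mid Y\geq y)$ as per the definitions introduced just above the statement.

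Second, the problem \eqref{covar} becomes $\max_{\|w\|=1} w^\top v(y)$, i.e.\ maximizing a linear form over the Euclidean unit sphere. By the Cauchy--Schwarz inequality, $w^\top v(y)\leq \|w\|\,\|v(y)\|=\|v(y)\|$ with equality if and only if $w$ is a nonnegative scalar multiple of $v(y)$. Since $\|w\|=1$, the unique maximizer is $w(y)=v(y)/\|v(y)\|$, which is exactly the asserted formula. Equivalently, and as the statement suggests, one can recover this via Lagrange multipliers: differentiating $w^\top v(y)-\lambda(\|w\|^2-1)$ yields $v(y)=2\lambda w$, so $w\propto v(y)$, and the constraint fixes the sign and the normalization.

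There is essentially no conceptual obstacle; the only mildly delicate point is the implicit non-degeneracy assumption $v(y)\neq 0$, needed for uniqueness and for the expression $v(y)/\|v(y)\|$ to make sense. This is the natural non-triviality hypothesis (otherwise the conditional covariance vanishes in every direction and the arg max is the whole unit sphere), and it is tacit in the statement of Proposition~\ref{prop-w} as inherited from \cite[Proposition~1]{Bousebata2023}. Beyond that, the proof is a direct one-line application of Cauchy--Schwarz once the integrability assumptions have been used to linearize the covariance.
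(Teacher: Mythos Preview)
Your proof is correct and matches the paper's approach: the paper does not give a self-contained proof but refers to \cite[Proposition~1]{Bousebata2023} and explicitly mentions the Lagrange multiplier method, which you recover (your Cauchy--Schwarz argument being the same thing in disguise). Your remark on the tacit non-degeneracy assumption $v(y)\neq 0$ is also apt.
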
 

In the following, we aim at investigating the behavior of $w(y)$ for large thresholds $y$ without resorting  neither to a linear conditional
expectation assumption as in~\cite{xu2022extreme} nor to a conditional independence assumption as in~\cite{Gardes2018,Saracco}. In contrast, additional assumptions on the  distribution tails are introduced below.

\section{Statistical framework}
\label{sec-framework}

 Here and throughout, $\odot$ and $/$ are respectively the componentwise product and ratio.
For a generic random vector $Z\in \R^p$, let us recall that $Z^{(j)}$ denotes the $j$th coordinate of $Z$ and
the $j$th coordinate of $m_{Z}$ is then given by $m^{(j)}_Z=m_{Z^{(j)}}$, $j\in\{1,\dots,p\}$. 

\paragraph{\bf Model and tail conditions.} Let us consider the single-index inverse regression model introduced in \cite{Bousebata2023}:     
\begin{description}
    \item [\Mu~] $X= g(Y) \beta  + \varepsilon$ 
where $X$ and $\varepsilon$ are $p-$dimensional random vectors, $g:{\mathbb R}\to {\mathbb R}$ is an unknown link function and $\beta \in {\mathbb R}^p$ is the unknown direction of interest.
\end{description}
Model \Mu~is an inverse regression model since the covariates $X=(X^{(1)},\dots,X^{(p)})^\top$ are written as functions of the response variable $Y$.
A Bayesian version of EPLS is introduced in~\cite{STCO7} basing on \Mu~while similar models were used to establish the theoretical properties of SIR, see for instance~\cite{BGG2009,Cook2007}.

Three assumptions on the link function $g$ and the distribution tail of $Y$ and $\varepsilon$ are considered. They rely on the notion of regularly-varying functions.
Recall that $\varphi\in RV_\theta$ with $\theta\in\mathbb{R}$ if and only if $\varphi$ is positive and
    $$
    \lim_{y\to\infty} \frac{\varphi(ty)}{\varphi(y)}=  t^{\theta},
    $$
    for all $t>0$.
        This property is denoted for short by $\varphi\in RV_{\theta}$.
We refer to~\cite{Bing1989} for a detailed account on regular variations.
\begin{description} 
    \item [\Au~] $Y$ is a real random variable with density function $f\in RV_{-1/\gamma-1}$ with $\gamma\in (0,1)$.

    \item [\Ad~] $g\in RV_{\kappa}$ with $\kappa> 0$. 
    
    \item [\At~] There exists $q\in (1,+\infty)$ such that $\mathbb{E}(\|\varepsilon\|^q)<\infty$.

\end{description} 
Assumption \Au~implies that the distribution of $Y$ is in the Fr\'echet maximum domain of attraction with positive tail-index $\gamma$,
see \cite[Theorem~1.5.8]{Bing1989} and  \cite[Theorem~1.2.1]{Haan2007}. This domain of attraction consists of heavy-tailed distributions, such as Pareto, Student~$t$ or Burr distributions, see~\cite{beigoesegteu2004} for other examples.
The larger $\gamma$ is, the heavier the tail.
The restriction to $\gamma<1$ ensures that the first-order moment { $\mathbb{E}(|Y|\ind)$ exists for all $y>0$}.
Assumption \Ad~states that $g$  ultimately behaves like a power function.
Assumption \At~can be interpreted as an assumption on the tails of $\|\varepsilon\|$. It is satisfied, for instance, by distributions with exponential-like tails such as Gaussian, Gamma or Weibull distributions.

\paragraph{\bf Missig data framework.} We consider the situation where some coordinates of the covariate $X$ may not be observed. A random mask 
$\Lambda\in \{0,1\}^p$ is applied on the covariate, such that only the $\Lambda\odot X$ part of $X$ is available for estimation. The missing at random \citep{rubin1976inference} mechanism is adopted here,
{\it i.e.} $\Lambda$ is dependent of $Y$ but not of $X$:
\begin{description}
    \item [\Aq~] $\Lambda$ is a $\{0,1\}^p$ random vector such that, for all $j\in\{1,\dots,p\}$, $\Lambda^{(j)}\mid Y$ are independent Bernoulli random variables with success parameter ${\lambda_j}(Y)=c_j \lambda(Y)$ with $\lambda\in RV_{\tau}$, $\tau<0$ and $c_j>0$.
\end{description}
Note that the rate of non-missing data may depend on the coordinate $j$ through the multiplicative constant $c_j>0$ and decreases in the distribution tail
as a power function with exponent $\tau<0$. We refer to \cite{mayer2019r} for more information on missing data.
Let $(X_i, Y_i, \Lambda_i, \varepsilon_i)_{1\leq i\leq n}$ be a random sequence sampled from a random vector $(X,Y,\Lambda,\varepsilon)$ under model \Mu. Some vectors $X_i$ may not be actually observed because of the missing-ness and our true observations set is thus $\left(Y_i,\Lambda_i \odot X_i \right)_{1\le i \le n}$.

\paragraph{\bf Serial dependence.} We adopt the $\alpha$-mixing framework for its generality, see \cite[Proposition~1 and diagram page~20]{Doukhan1994}. For any two positive integers $a\le b \le +\infty$, let $\mathcal{F}^{b}_{a}$ be the $\sigma$-algebra generated by $\{ (X_i,Y_i,\Lambda_i,\varepsilon_i)\}_{a\le i \le b}$. The sequence $\{ (X_i,Y_i,\Lambda_i,\varepsilon_i)\}_{1\le i \le n}$ is said $\alpha$-mixing if $\alpha(n)\to 0$ as $n\to \infty$ where
 \begin{align*}
  \alpha(n) &:=  \sup_{k\ge 1}\sup_{A\in \mathcal{F}_1^k}\sup_{B\in \mathcal{F}_{k+n}^\infty} |\mathbb{P}(A\cap B) - \mathbb{P}(A)\mathbb{P}(B)|.
\end{align*}
To setup the standard 'small-block/large-block' argument (see \cite{DreesRootzen2010}), the following assumptions are introduced:
\begin{description}
    \item [\Md~] $(X_i, Y_i, \Lambda_i, \varepsilon_i)_{1\leq i\leq n}$ is a strictly-stationary $\alpha$-mixing time serie, sampled from the random vector $(X,Y,\Lambda,\varepsilon)$ under model \Mu, and there exist two sequences of integers $1\ll \ell_n\ll r_n \ll n$ and $ n\alpha(\ell_n)\ll r_n$,  such that 
\begin{align}\label{hyp:mixing}
   &\quad \sum_{j\ge 1 } \alpha^{\frac{\delta}{2+\delta}}(j)<\infty \mbox{ for some } \delta>0 .
\end{align} 
\end{description}
While conditions $1\ll \ell_n\ll r_n \ll n$ and $ n\alpha(\ell_n)\ll r_n$ are at the basis of the 'small-block/large-block' argument, assumption~\eqref{hyp:mixing} is standard in the literature of central limit theorems for strongly mixing processes, see \citet[Theorem~1.7]{Ibragimov62}.

\paragraph{\bf Inference.} Let $y_n\to\infty$ as the sample size $n$ tends to infinity.
 The EPLS with missing data~\eqref{solution} is estimated by the random direction $\hat \beta_\Lambda(y_n) := \hat v_{\Lambda}(y_n)/\| \hat v_{\Lambda}(y_n) \| \in \R^p$, based on 
\begin{align}
    \label{esti}
    \hat v_{\Lambda}(y_n) &:= \ff{\hat{m}_{\Lambda}(y_n)}\left(\hat{\bar{F}}(y_n) \hat{m}_{Y \Lambda \odot X}(y_n) - \hat{m}_{Y}(y_n)\hat{m}_{\Lambda \odot X}(y_n)\right),
\end{align} 
the ratio being understood component-wise with, for any random vector $Z\in \R^p$,
\begin{align*}
& \hat m_{Z}(y_n) := \frac{1}{n}\sum_{i=1}^n Z_i  \indin ,\
\end{align*}
and where $\hat{\bar F}:= \hat{m}_1$ is the empirical survival function.

\section{Asymptotic properties of the estimator}
\label{sec-esti}

We first provide a tool establishing the joint asymptotic properties of  $\hat m_{\Lambda \odot X}(y_n)$, 
$\hat m_{\Lambda}(y_n)$, $\hat m_{Y\Lambda \odot X}(y_n)$ and $\hat m_{Y}(y_n)$ when $y_n\to\infty$. This latter condition ensures that the rate of convergence, which  is driven by the number of effective observations $n(\lambda\bar F)^{\theta}(y_n)$ used in the estimators, tends to infinity as the sample size increases.

From now on, we denote by $d$ the number of non-zero components of $\beta$, supposed positive $d\ge 1$, namely $d:=|\{ j \in\{1,\dots, p\} :\beta^{(j)}\ne 0 \}|$. Without loss of generality and for simplicity, one may assume the support of $\beta$ to be exactly $\{1,\ldots,d\}$.

\begin{Prop}
\label{prop-loi-jointe-alpha-mixing}
Suppose \Md, \Au, \Ad, \At~and \Aq~hold.  Assume also that $$1<\gamma\min\left\{2\kappa+2q+\tau;q\kappa+2\tau;q\kappa/2+\tau\right\},$$
$$\gamma\max\Big\{\frac{q}{q-1}\kappa+\tau;\frac{q}{q-1}(\kappa+1)+2\tau;\frac{q}{q-2-\delta}+\tau;(2+\delta)\kappa+\tau+1; 2\kappa+\tau+2; \kappa+1;2+\delta\Big \}<1.$$
 Introduce $\Xi_n$ the ${\mathbb R}^{{3d+2}}$-random vector defined as:
$$
\left\{  \Big( \frac{\hat m_{\Lambda^{(j)}X^{(j)}}(y_n)}{m_{\Lambda^{(j)}X^{(j)}}(y_n)} -1\Big)_{1\leq j \leq d}, \Big( \frac{\hat m_{\Lambda^{(j)}}(y_n)}{m_{\lambda^{(j)}}(y_n)} -1\Big)_{1\leq j \leq d}, \Big(  \frac{\hat m_{Y\Lambda^{(j)}X^{(j)}}(y_n)}{m_{Y\Lambda^{(j)}X^{(j)}}(y_n)}-1\Big)_{1\leq j \leq d} , \Big(\frac{\hat{m}_{Y^\iota}(y_n)}{m_{Y^\iota}(y_n)}-1\Big)_{\iota\in \{0,1\}}   \right \}.
$$ 
For any $\theta>1+\frac{\delta}{2+\delta} \in (1,2)$ such that $n(\lambda\bar F)^{\theta}(y_n)\gg 1$ holds, the sequence $
 n^{1/2}(\lambda\bar{F})^{\theta/2}(y_n) \, \Xi_n  $ converges in probability to zero in $\mathbb{R}^{{3d+2}}$. When $\theta=1+\frac{\delta}{2+\delta}$, if furthermore 
$$\gamma\max\left\{(2+\delta)(\kappa+1)+\tau;\frac{q(2+\delta)}{q-(2+\delta)}+\tau\right\} <1,$$ then the sequence $
 n^{1/2}(\lambda\bar{F})^{\theta/2}(y_n) \, \Xi_n  $ converges in distribution, at least up to a subsequence, to a centered Gaussian random vector in $\mathbb{R}^{{3d+2}}$.
\end{Prop}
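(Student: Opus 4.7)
The plan is to reduce the joint statement to the convergence of every linear combination of the coordinates of $\Xi_n$ via the Cramér--Wold device, and to treat each such scalar sum as a centered, strictly stationary, $\alpha$-mixing triangular array. Writing
$$
\Xi_n^{(k)} = \frac{1}{n\,m_{Z_k}(y_n)} \sum_{i=1}^n \bigl(Z_{k,i}\mathds{1}_{\{Y_i\ge y_n\}} - m_{Z_k}(y_n)\bigr),
$$
where $Z_k$ ranges over $\Lambda^{(j)}X^{(j)}$, $\Lambda^{(j)}$, $Y\Lambda^{(j)}X^{(j)}$ and $Y^\iota$, a generic linear combination has the form $S_n = \sum_{i=1}^n \xi_{n,i}$ with $\xi_{n,i}$ centered and inheriting the $\alpha$-mixing coefficient of the underlying process. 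The first preparatory step is to pin down the exact size of the normalizing moments $m_{Z_k}(y_n)$. Substituting the single-index model $X^{(j)} = g(Y)\beta^{(j)} + \varepsilon^{(j)}$ and conditioning on $Y$, so that $\mathbb{E}[\Lambda^{(j)}\mid Y] = c_j\lambda(Y)$, one reduces everything to integrals of $g(Y)^a\lambda(Y)^b Y^c\mathds{1}_{\{Y\ge y_n\}}$ plus similar $\varepsilon$-type remainders. Karamata's theorem then yields $m_{\Lambda^{(j)}X^{(j)}}(y_n)\asymp g(y_n)\lambda(y_n)\bar F(y_n)$ for $j\le d$, and analogous orders for the other moments; the $\varepsilon$-contributions are of strictly smaller order thanks to \At~and Hölder. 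This justifies the homogeneous normalization by $(\lambda\bar F)^\theta(y_n)$.

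The core quantitative step is a variance bound for $S_n$. Applying the Davydov--Rio covariance inequality (see \cite{Doukhan1994}),
$$
\var(S_n) \le n\,\mathbb{E}\xi_{n,1}^2 + 16\,n\,\|\xi_{n,1}\|_{2+\delta}^2\sum_{j\ge 1}\alpha(j)^{\delta/(2+\delta)},
$$
where the summability is precisely \eqref{hyp:mixing}. Bounding $\|\xi_{n,1}\|_2$ and $\|\xi_{n,1}\|_{2+\delta}$ is where every numerical constraint in the statement enters: each inequality of the form $\gamma(\cdots)<1$ or $\gamma(\cdots)>1$ certifies, via iterated Hölder estimates and Potter bounds, that a given $L^r$ norm of $Z_{k,1}\mathds{1}_{\{Y_1\ge y_n\}}/m_{Z_k}(y_n)$ is finite and of the correct order $(\lambda\bar F)^{-\theta/2}(y_n)$. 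One concludes $\var(S_n) = O(n/(\lambda\bar F)^\theta(y_n))$, so that Chebyshev's inequality gives, as soon as $\theta > 1+\delta/(2+\delta)$ and $n(\lambda\bar F)^\theta(y_n)\to\infty$, the first claim $n^{1/2}(\lambda\bar F)^{\theta/2}(y_n)\Xi_n\to 0$ in $L^2$, hence in probability.

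For the critical exponent $\theta = 1+\delta/(2+\delta)$, the above variance bound is sharp rather than negligible, and one applies the small-block/large-block device of \cite{DreesRootzen2010}. Partition $\{1,\dots,n\}$ into $k_n\asymp n/(r_n+\ell_n)$ alternating blocks of sizes $r_n$ (large) and $\ell_n$ (small); the conditions $\ell_n\ll r_n\ll n$ and $n\alpha(\ell_n)\ll r_n$ imply that the separator block sums are $o_{\mathbb{P}}(1)$ (reusing the variance bound) and that the large-block sums $B_1,\dots,B_{k_n}$ can be coupled with independent copies $\widetilde B_1,\dots,\widetilde B_{k_n}$ with $|\mathbb{E}e^{it\sum_j B_j} - \mathbb{E}e^{it\sum_j\widetilde B_j}|\to 0$. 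The additional requirement $\gamma\max\{(2+\delta)(\kappa+1)+\tau;\, q(2+\delta)/(q-(2+\delta))+\tau\}<1$ is exactly what makes the Lyapunov quantity $\sum_j\mathbb{E}|\widetilde B_j|^{2+\delta}$ tend to zero, so that a Lindeberg--Feller CLT applies to the independent array. Since the covariance matrix of $n^{1/2}(\lambda\bar F)^{\theta/2}(y_n)\Xi_n$ is uniformly bounded but not a priori convergent, one extracts a subsequence along which it converges to a positive semi-definite matrix and concludes by Prokhorov tightness together with Cramér--Wold.

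The main obstacle is not any single tool, all of which are standard, but the bookkeeping of exponents: each product $Y^a g(Y)^b \lambda(Y)^c\varepsilon^{(j)}$ restricted to $\{Y\ge y_n\}$ must be translated via regular variation and \At~into a precise $L^r$ estimate, and the full list of conditions in the statement is exactly the minimal list that guarantees all the integrability, variance and Lyapunov bounds simultaneously. Making these translations transparent, and verifying that the order $(\lambda\bar F)^\theta(y_n)$ is really produced at each step rather than parasitic factors of $g(y_n)$ or $y_n$ surviving in the ratios, is what occupies most of the technical work.
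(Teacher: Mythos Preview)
Your proposal is essentially correct and follows the same architecture as the paper: Cramér--Wold reduction, control of the off-diagonal covariances via the Davydov/Rio $\alpha$-mixing inequality, and the big-block/small-block CLT at the critical exponent (the paper invokes \cite[Lemma~C.7]{DavisonPadoanStupfler2023} as a black box rather than redoing the coupling, and carries out the moment calculus term by term through its Lemmas~\ref{denom1}--\ref{newlemma}, but the substance is identical).

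One slip to correct: the bound $\var(S_n)=O\bigl(n/(\lambda\bar F)^{\theta}(y_n)\bigr)$ cannot hold for a \emph{free} $\theta$, since the left side does not depend on $\theta$. What the $(2+\delta)$-moment estimate actually produces is $\|\xi_{n,1}\|_{2+\delta}^{2}\lesssim (\lambda\bar F)^{2/(2+\delta)-2}(y_n)=(\lambda\bar F)^{-\theta_0}(y_n)$ with $\theta_0:=1+\delta/(2+\delta)$, hence $\var(S_n)=O\bigl(n(\lambda\bar F)^{-\theta_0}(y_n)\bigr)$; the variance of the \emph{rescaled} statistic is then $O\bigl((\lambda\bar F)^{\theta-\theta_0}(y_n)\bigr)$, which is exactly why $\theta>\theta_0$ gives convergence to zero while $\theta=\theta_0$ requires the Lyapunov argument. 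With this exponent fixed, your outline matches the paper's proof.
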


Proposition~\ref{prop-loi-jointe-alpha-mixing} holds in the weaker $\psi,\phi,\rho$-mixing frameworks, see \cite{Doukhan1994}, with a better rate that corresponds to $\delta=0$. Yet in any case, we found ourselves limited to only provide a non-explicit bound on the limiting covariance.
Our main result establishes that $\hat v_{\Lambda}(y_n)$ and $\beta$ are asymptotically aligned in $\R^p$, with respect to the convergence in probability and rate $n^{1/2}(\lambda\bar{F})^{\theta/2}(y_n)$. 
\begin{Theo}
\label{theo-princ}
Under the assumptions of Proposition~\ref{prop-loi-jointe-alpha-mixing}, for any $\theta>1+\frac{\delta}{2+\delta}$, it holds that:
\begin{align*}
\min\left\{n^{1/2}(\lambda\bar{F})^{\theta/2}(y_n)  , g(y_n)\bar{F}^{1/q}(y_n)\right\}\left(\hat{\beta}_{\Lambda}(y_n) -  \beta \right)\xrightarrow[n\to +\infty]{\mathbb{P}} 0,\quad \text{in $\mathbb{R}^{p}$.}
 \end{align*}
 \noindent When $\theta=1+\frac{\delta}{2+\delta}$, if moreover $n^{1/2}(\lambda\bar{F})^{\theta/2}(y_n)\ll g(y_n)\bar{F}^{1/q}(y_n)$ and $n g^{-2}(y_n)(\lambda \bar F)(y_n)^{\theta-2/q}\ll 1$, then, the following convergence holds, at least up to a subsequence,
 \begin{align*}
n^{1/2}(\lambda\bar{F})^{\theta/2}(y_n) \left(\hat{\beta}_{\Lambda}(y_n) -  \beta \right)\xrightarrow[n\to +\infty]{d} \beta\odot G,\quad \text{in $\mathbb{R}^{p}$,}
 \end{align*} 
 where $G$ is some centered Gaussian vector in $\R^p$. 
\end{Theo}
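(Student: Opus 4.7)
My plan is to decompose
$\hat\beta_\Lambda(y_n)-\beta
=\bigl[\hat\beta_\Lambda(y_n)-\beta_\Lambda(y_n)\bigr]+\bigl[\beta_\Lambda(y_n)-\beta\bigr]$,
where $\beta_\Lambda(y_n):=v_\Lambda(y_n)/\|v_\Lambda(y_n)\|$ is the normalized
population analogue of~\eqref{esti}, and to bound the two summands separately.
For the deterministic bias, I would substitute
$X^{(j)}=g(Y)\beta^{(j)}+\varepsilon^{(j)}$ inside each tail-moment
$m_{\Lambda^{(j)}X^{(j)}}$, $m_{Y\Lambda^{(j)}X^{(j)}}$, $m_Y$, $\bar F$, $m_{\lambda^{(j)}}$
and condition on $Y$, using \Aq{} to pull out
$\mathbb{E}[\Lambda^{(j)}\mid Y]=c_j\lambda(Y)$. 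Karamata's theorem applied under
\Au{} and \Ad{} then produces principal terms proportional to $\beta^{(j)}$, of
the form $\beta^{(j)} c_j\lambda(y_n)g(y_n)\bar F(y_n)\,y_n^{\iota}$ times an
explicit rational constant in $\gamma,\kappa,\tau$, while the residual terms
involving $\varepsilon^{(j)}$ are controlled by Hölder's inequality and \At,
contributing each at most $O(\bar F^{1-1/q}(y_n))$. After the cancellations
intrinsic to the formula for $v_\Lambda^{(j)}$, the common $c_j$ and
$\lambda(y_n)$ factors disappear from the ratio, leaving
$v_\Lambda(y_n)=K(y_n)\bigl(\beta+r(y_n)\bigr)$ with $K(y_n)\ne 0$ eventually and
$\|r(y_n)\|=O(1/(g(y_n)\bar F^{1/q}(y_n)))$, so that normalizing gives the bias
rate $\|\beta_\Lambda(y_n)-\beta\|=O(1/(g(y_n)\bar F^{1/q}(y_n)))$.

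\textbf{Stochastic error and Gaussian limit.} For the stochastic term I would
view $\hat v_\Lambda^{(j)}(y_n)$ as a $C^1$ function $\Phi_j$ of the five
empirical quantities in~\eqref{esti} and linearize around their deterministic
counterparts. Proposition~\ref{prop-loi-jointe-alpha-mixing} yields the uniform
$O_p(n^{-1/2}(\lambda\bar F)^{-\theta/2}(y_n))$ rate (and, in the boundary case
$\theta=1+\delta/(2+\delta)$, a joint centered Gaussian limit) for every coordinate
of $\Xi_n$. Combining this with the expansions above, the leading stochastic
fluctuation of $\hat v_\Lambda^{(j)}-v_\Lambda^{(j)}$ is of size
$\beta^{(j)}K(y_n)\cdot n^{-1/2}(\lambda\bar F)^{-\theta/2}(y_n)$ for $j\le d$ and
of strictly smaller order for $j>d$, since the contributing empirical means then
only involve $\varepsilon^{(j)}$, which is controlled via \At. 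The Lipschitz
property of $v\mapsto v/\|v\|$ on a neighborhood of $\beta$ transfers this to
$\hat\beta_\Lambda(y_n)-\beta_\Lambda(y_n)=O_p(n^{-1/2}(\lambda\bar F)^{-\theta/2}(y_n))$,
and a triangle inequality with the bias bound gives the first claim. In the
Gaussian case, the hypothesis
$n^{1/2}(\lambda\bar F)^{\theta/2}(y_n)\ll g(y_n)\bar F^{1/q}(y_n)$ makes the
bias asymptotically negligible, while the auxiliary condition
$ng^{-2}(y_n)(\lambda\bar F)^{\theta-2/q}(y_n)\ll 1$ suppresses the quadratic
remainders in the delta-method expansion (arising from cross-products of
$\Xi_n$-fluctuations with the $\varepsilon$-induced slack in the tail moments).
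The joint Gaussian limit then propagates through $\Phi_j$ and the normalization
to produce the structure $\beta\odot G$: the coordinate-wise multiplication by
$\beta$ encodes the $\beta^{(j)}$-scaling of the leading fluctuations on
$\{j\le d\}$ and their vanishing on $\{j>d\}$.

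\textbf{Main obstacle.} The technical heart is the bookkeeping: verifying that,
under the many inequalities on $\gamma,\kappa,q,\tau,\delta$ inherited from
Proposition~\ref{prop-loi-jointe-alpha-mixing}, each Karamata equivalent and
each Hölder bound on an $\varepsilon$-induced moment has a remainder of the
claimed order, and that the scalar $K(y_n)$ is eventually bounded away from
zero so that normalization is well-posed. Controlling the quadratic
delta-method terms and identifying the precise role of the auxiliary condition
$ng^{-2}(y_n)(\lambda\bar F)^{\theta-2/q}(y_n)\ll 1$ in annihilating them is
the most delicate step of the argument.
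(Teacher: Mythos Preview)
Your overall scheme---split into a deterministic bias plus a stochastic delta-method term driven by Proposition~\ref{prop-loi-jointe-alpha-mixing}---is exactly what the paper does. Two points, however, differ from the paper's argument and one of them is a genuine gap.

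\textbf{Centering point and bias rate.} The paper does \emph{not} center at the masked population direction $\beta_\Lambda(y_n)$; it centers at the \emph{unmasked} EPLS direction $w(y_n)$ from~\eqref{solution} and invokes \cite[Proposition~2]{Bousebata2023} to get $\|w(y_n)-\beta\|=O\bigl(g(y_n)^{-1}\bar F(y_n)^{-1/q}\bigr)$ directly. Your claim that $\|\beta_\Lambda(y_n)-\beta\|=O\bigl(g(y_n)^{-1}\bar F(y_n)^{-1/q}\bigr)$ is too optimistic: the $\varepsilon$-residuals in $m_{Y^\iota\Lambda^{(j)}X^{(j)}}$ are controlled by Lemma~\ref{denom1}(iv) at the rate $O\bigl(y^\iota(\lambda\bar F)^{1-1/q}\bigr)$, and after dividing by $m_{\lambda^{(j)}}\sim c_j\lambda\bar F$ and normalizing by $K(y_n)\sim yg\bar F$, the relative remainder is $O\bigl(g^{-1}(\lambda\bar F)^{-1/q}\bigr)$---worse by a divergent factor $\lambda(y_n)^{-1/q}$ than what you state. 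With your centering, the first assertion of the theorem (with $g\bar F^{1/q}$ in the minimum) would not follow as written. Centering at $w$ sidesteps this, at the price of importing the Bousebata result.

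\textbf{Role of the auxiliary condition.} The condition $ng^{-2}(y_n)(\lambda\bar F)^{\theta-2/q}(y_n)\ll 1$ is \emph{not} there to suppress quadratic delta-method remainders; those are automatically $o_p$ of the linear term because $n(\lambda\bar F)^\theta(y_n)\to\infty$. Its actual role is to handle the coordinates $j>d$ (where $\beta_j=0$). Proposition~\ref{prop-loi-jointe-alpha-mixing} gives no information for these indices, so the paper controls $\hat m_{Y^\iota\Lambda^{(j)}\varepsilon^{(j)}}(y_n)$ directly via a Chebyshev/moment argument under mixing, obtaining $\hat\beta_\Lambda^{(j)}(y_n)=O_{\mathbb P}\bigl(g^{-1}(\lambda\bar F)^{-1/q}\bigr)$; the auxiliary condition is precisely what makes $n^{1/2}(\lambda\bar F)^{\theta/2}$ times this quantity vanish. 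Your proposal acknowledges that $j>d$ is special but asserts the fluctuation is ``of strictly smaller order'' without argument---that is exactly the step where the extra hypothesis enters, and it is not free.
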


Note that, when the noise admits moments at any order, we may consider $q=+\infty$, thus boiling the conditions down to $\gamma\max\{(2+\delta)\kappa+\tau+1; 2\kappa+\tau+2; \kappa+1;2+\delta \}<1$.
Regarding the threshold regime associated with a finite value of $q$, let us consider $y_n:=n^c$ with $c\in (0,1)$. The property $n(\lambda\bar F)^\theta(y_n)\gg 1$ is satisfied for a generic $\theta>0$ whenever $c< \frac{\gamma}{\theta(1-\gamma\tau)}$. 
It is then checked through numerical computations that the set of all parameters $(q,\delta,\gamma,\kappa,\tau)$ satisfying the conditions needed for Theorem~\ref{theo-princ} is non-empty.   

\section{Validation on simulated data}
\label{sec-simul}

The large scale experiments are presented in Section~\ref{sub-expe}, some implementation details are provided in Section~\ref{sub-seuil} and the obtained results are discussed in Section~\ref{sub-results}.

\subsection{Experimental design}
\label{sub-expe}
The performance of our method is assessed by a Monte Carlo simulation experiment with $N=500$ independent repetitions.
The experimental design involves four setups for drawing a sample $(X_i,Y_i)_{1\le i \le n}$ with $n=500$ and $p=101$, see Paragraphes~\ref{par1}--\ref{par4}.
In any case, the deterministic $\beta$'s coordinates are set to $\beta^{(j)}=\sqrt{2}\sin(2\pi j / p)$ for $1\le j \le p $ and $g:y\in \R^+ \mapsto y^\kappa$ with $\kappa=0.5$ fixed for shortness. 
To balance the contributions of signal and noise, we propose to rescale the conditional noise variance by multiplying $\varepsilon$ with $g({{y}})/10$. Once both the response variable $Y$ and the noise $\varepsilon$ are simulated, and the deterministic functions $\beta$ and $g$ are chosen, the covariate sampling $\{X_1,\dots,X_n\}$ is readily derived from \Mu. The mask dynamic framework is also common to all setups and is described in Paragraph~\ref{par-mask}. Finally, some remarks on the tail heaviness and the strict stationarity of the processes are provided in Paragraph~\ref{par-tail} and Paragraph~\ref{par-station} respectively.

\subsubsection{First setup: univariate ARMA-resp./multivariate GARCH-noise}
\label{par1}
The output and noise random variables are distributed as follows:
\begin{itemize}
    \item $(Y_i)$ is a ARMA$(1,1)$-time serie, \textit{i.e.} $Y_i = \phi_{\rm resp} Y_{i-1} + \theta_{\rm resp}  \eta_{i-1} + \eta_i$, where the innovations $\eta_i$ are i.i.d.\,BurrXII-distributed with parameters $-\rho$ and $1/\gamma$, \emph{i.e.}, with survival c.d.f. $\bar{F}(y)=(1+y^{-\rho})^{1/\gamma}$ with ${y}\ge 0$, $\gamma \in (0,1)$ being the tail-index and $ \rho <0$ being the second-order parameter. Here, $\gamma\in\{ 0.1, 0.4, 0.7\}$. Throughout the whole section, $\rho=-1$ is fixed. Since ARMA$(1,1)$ is a linear process, the response $(Y_i)$ is also heavy-tailed with same tail-index $-1/\gamma$, akin to the innovations $(\eta_i)$. The ARMA$(1,1)$ parameters  $(\phi_{\rm resp},\theta_{\rm resp})$ are picked among $\{(0.8,-0.3),(0.99,-0.98) \}$. The standard case $(\phi_{\rm resp},\theta_{\rm resp}) = (0.8, -0.3)$ induces a stable and moderate mean-reverting process. In contrast, $(\phi_{\rm resp},\theta_{\rm resp}) = (0.99, -0.98)$ creates a pathological dynamic where a single shock/innovation can trigger a lengthy period of volatility and extreme values, making events highly persistent.
\item $(\varepsilon_i)$ conditionally to $Y=y$ is a GARCH$(1,1)$-multivariate time serie, \emph{i.e.}, $\varepsilon_i =\sigma_i \eta'_i$ and
    $\sigma_i= \omega_{\rm noise} + \alpha_{\rm noise}\varepsilon^2_{i-1} + \beta_{\rm noise}\eta'^2_{i-1}$ where the innovations $(\eta'_i)$ are i.i.d. centered Gaussian vectors in $\R^p$ with Toeplitz correlation matrix, \emph{i.e.}, ${\rm Corr}(\varepsilon_i,\varepsilon_j \mid Y=y)=\rho_c^{|i-j|}$ with $\rho_c=0.8$. Three sets of parameters are considered for the GARCH$(1,1)$ model of the noise: $(\omega_{\rm noise},\alpha_{\rm noise},\beta_{\rm noise})$ $\in$ $\{(0.05,0.1,0.85),(0.05,0.05,0.94)\}$. These configurations are chosen to capture a range of volatility dynamics. The standard case $(\omega_{\rm noise},\alpha_{\rm noise},\beta_{\rm noise}) = (0.05, 0.1, 0.85)$ yields persistent but ultimately transitory volatility clustering, meaning that a period of high variance is likely to be followed by another, but the memory of this volatility decays over time. The parameters $(\omega_{\rm noise},\alpha_{\rm noise},\beta_{\rm noise}) = (0.05, 0.05, 0.94)$ define a process with extreme heteroskedasticity. Here, $\alpha_{\rm noise} + \beta_{\rm noise} = 0.99$ is close to one, and the process almost behaves like an Integrated GARCH (IGARCH), see \citep{IGARCH,Nelson90}, where volatility shocks become permanent. 
\end{itemize}

\subsubsection{Second setup: univariate GARCH-resp./multivariate ARMA-noise}
Next, we move onto the second experimental setup by reversing the serial dependence structure of response and noise in the previous setup. 
\begin{itemize}
    \item The response $(Y_i)$ is an univariate GARCH$(1,1)$-process with i.i.d.~BurrXII$(-\rho,1/\gamma)$-distributed innovations. Concerning its parameters, we opt here for $(\omega_{\rm resp},\alpha_{\rm resp},\beta_{\rm resp})=(\omega_{\rm noise},\alpha_{\rm noise},\beta_{\rm noise})$, meaning that they are the same as when the noise was GARCH$(1,1)$, while the innovations are to be understood as univariate i.i.d. BurrXII$(-\rho,1/\gamma)$-distributed. 
    \item  The multivariate ARMA$(1,1)$-structure on the noise has parameters corresponding to the ARMA-response case, \emph{i.e.,} $(\phi_{\rm noise},\theta_{\rm noise})=(\phi_{\rm resp},\theta_{\rm resp})$. The innovations are similar to the GARCH-noise case, namely i.i.d. centered Gaussian vector in $\R^p$ with Toeplitz correlation matrix and $\rho_c=0.8$.
\end{itemize}

\subsubsection{Third setup: ESTAR-resp./GARCH-noise}
We consider the Exponential Smooth Transition Autoregressive (ESTAR) model \citep{Dijk2002} for the response variable to model nonlinear dynamics. Here, the ESTAR(1) process is defined as:$$Y_i = \phi_{\rm low} Y_{i-1} + \phi_{\rm high} Y_{i-1} \left[1 - \exp\left(- Y_{i-1}^2\right)\right] + \eta_i,$$
where $\eta_i \sim \text{BurrXII}(-\rho, 1/\gamma)$ are i.i.d. innovations and $(\phi_{\rm low}, \phi_{\rm high}) = (0.2, 0.95)$. Endowed with these parameters, the dynamics smoothly shift from a quickly mean-reverting or low-persistence regime ($\phi_{\rm low} = 0.2$) near equilibrium to a regime ($\phi_{\rm high} = 0.95$) where large deviations tend to persist during extreme events. Concerning the noise, the process follows the multivariate GARCH$(1,1)$ specification from the first setup.

\subsubsection{Fourth setup: Independence} 
\label{par4}
The independent case is considered with the following combination of parameters: $$(\phi_{\rm resp},\theta_{\rm resp})=(0,0),\qquad (\omega_{\rm noise},\alpha_{\rm noise},\beta_{\rm noise})=(1,0,0),\qquad \alpha_{\rm bar}=0.$$
In this configuration, extremes occur sporadically and independently. The volatility is constant/clusterless, namely, an extreme value  provides no information about the likelihood of another extreme value at a future time.

\subsubsection{Missing data mechanism}
\label{par-mask}
The mask $(\Lambda_i)$ is a multivariate Binary Autoregressive (or BAR) time serie, with marginal correction such that \Aq~is satisfied with $c_j=1$ and $\lambda(y)=y^\tau$, $\tau \in \{-0.1,-0.5,-0.9\}$. Precisely, the BAR dynamics with cross-sectional independence writes as, for any $1\le j \le p$ and $2\le i \le n$,
$$
\Lambda_1^{(j)} \sim {\rm Bernoulli}(p_1),\quad \pi_{i,j} = \alpha_{\rm bar} \Lambda_{i-1}^{(j)}+(1-\alpha_{\rm bar})p_i, \quad \mbox{and }\Lambda_i^{(j)} \sim {\rm Bernoulli}(\pi_{i,j}). 
$$
The parameters of the BAR process are $\alpha_{\rm bar} \in [0,1)$ which tunes the dependence and $p_i \in [0,1]$ the target Bernoulli parameter for $\Lambda_i^{(j)}$. In our case, we fix $\alpha_{\rm bar}=0.5$ and $p_i=\lambda(Y_i)$ which corresponds to $c_j=1$ in \Aq.  Nonetheless, each $\Lambda_i^{(j)}$ though Bernoulli-distributed has its mean different from $p_i$. To remedy this hindrance and fit into our theoretical framework, we use a correction mechanism based on a rejection sampling technique. It starts with an initialization $\Lambda_1^{(j)} \sim {\rm Bernoulli}(p_1)$. Next, for any $i\ge 2$, we compute $\pi_{i,j}$ and independently sample $\tilde{\Lambda}_i^{(j)} \sim {\rm Bernoulli}(\pi_{i,j})$ as candidate value and an uniform $U$. Then, one applies the asymmetric correction:
\begin{equation*}
\Lambda_{i}^{(j)} = 
\begin{cases} 
1 & \text{if } \pi_{i,j} < p_{i} \text{ and } U < \gamma^{+}_{i,j} \text{ and } \tilde{\Lambda}_{i}^{(j)} = 0, \\
0 & \text{if } \pi_{i,j} > p_{i} \text{ and } U < \gamma^{-}_{i,j} \text{ and } \tilde{\Lambda}_{i}^{(j)} = 1, \\
\tilde{\Lambda}_i^{(j)} & \text{otherwise.}  
\end{cases}
\end{equation*}
    where $U\sim {\rm Uniform}(0,1)$ is independent of any source of randomness and  the correction probabilities are $
        \gamma_{i,j}^+ = \frac{p_{i} - \pi_{i,j}}{1 - \pi_{i,j}}$, $  \gamma_{i,j}^- = \frac{\pi_{i,j} - p_{i}}{\pi_{i,j}}$. The idea is that if $\pi_{i,j} < p_{i}$ (under-generation of success), we flip the candidate value from $0$ to $1$ with probability $\gamma^+_{i,j}$ and keep it $1$ if it is already the candidate value. We symmetrically proceed in the case of success over-generation. This ensures marginal consistency, {\it i.e.} $\mathbb{E}(\Lambda_i^{(j)})=p_i$ for any $1\le i\le n$ and $1\le j \le p$. Indeed, one may condition with respect to $\pi_{i,j}$ and do a case analysis. Write $\mathbb{E}(\Lambda_i^{(j)})=\mathbb{E}(\Lambda_i^{(j)} \mid \pi_{i,j})$. On the event $A = \{ \pi_{i,j}< p_i\}$, we have by independence,
\begin{align*}
\mathbb{E}(\Lambda_i^{(j)} \mid \pi_{i,j})
&=  \mathbb{P}(\tilde{\Lambda}_i^{(j)} = 1 \mid \pi_{i,j}) + \mathbb{P}(\tilde{\Lambda}_i^{(j)} = 0,\ U_i^{(j)} < \gamma_{i,j}^+ \mid \pi_{i,j})
= \pi_{i,j}+ (1 - \pi_{i,j}) \cdot \gamma_{i,j}^+ = p_i.
\end{align*}
On the event $B = \{\pi_{i,j}> p_i\}$, similarly $\mathbb{E}(\Lambda_i^{(j)} \mid \pi_{i,j})
= \mathbb{P}(\tilde{\Lambda}_i^{(j)} = 1,\ U_i^{(j)} \geq \gamma_{i,j}^- \mid \pi_{i,j}) = p_i$. On the event $C = \{ \pi_{i,j} = p_i\}$, one has $\gamma_{i,j}^+ = \gamma_{i,j}^- = 0$ which implies $\Lambda_i^{(j)} = \tilde{\Lambda}_i^{(j)}$ and thus, 
$\mathbb{E}(\Lambda_i^{(j)} \mid \pi_{i,j}) = \mathbb{E}(\tilde{\Lambda}_i^{(j)} \mid \pi_{i,j}) = \pi_{i,j} = p_i$. The claim follows by taking the expectation, since $\{A, B, C\}$ is a partition of~$\Omega$.

\subsubsection{Tail heaviness} \label{par-tail}
The GARCH$(1,1)$-structure tends to increase the tail heaviness, even with Gaussian innovations. Thus, we consider in the second setup that $\gamma\in\{ 0.1, 0.4, 0.5\}$ with $0.5$ instead of $0.7$ in the first setup. The case $\gamma=0.5$ is somewhat the limiting possibility if one wants to reach $\tau \le -0.5$. In fact, \cite[Remark~3.4]{BASRAK200295} provides a way to compute the resulting tail-index $\gamma_g$ (with $g$ as GARCH) of the time series, by solving $\E((\alpha_{\rm resp} \tilde{\eta}_1^2+\beta_{\rm resp})^{1/(2\gamma_g)})=1$ where $\tilde{\eta}_1$ is a standardized BurrXII$(-\rho,1/\gamma)$-random variable. Some Monte-Carlo computations yield, when $\gamma=0.1$, in the standard GARCH$(1,1)$-case a new tail-index $\gamma_g \approx 0.28$ and in the IGARCH-like case, $\gamma_g \approx 0.31$. When $\gamma = 0.4$, it respectively provides $\gamma_g \approx 0.43$ and $\gamma_g \approx 0.45$. In any case, the Monte-Carlo method fails when $\gamma=0.5$.

\subsubsection{Strict-stationarity and $\alpha$-mixing} \label{par-station}
 Denote momentarily $x \in \{ {\rm resp}, {\rm noise}\}$. From \cite{Mokkadem88}, the ARMA$(1,1)$ process with i.i.d. finite mean innovations (Gaussian or heavy-tailed) and $|\phi_x|<1$ is strictly-stationary and $\alpha$-mixing. For the GARCH$(1,1)$-process to be strictly-stationary, it is enough to verify by Monte-Carlo computations that $\E(\log (\alpha_x Z +\beta_x))<0$ when $Z\sim Y$ or $Z\sim \mathcal{N}(0,1)$ (noise case). Thus, by \cite{CarrascoChen2002} and since $\alpha_{x} + \beta_{x} < 1$, it is also $\alpha$-mixing. Nonetheless, the log-moment condition is not numerically true when $\gamma=0.5$ (and even less when $\gamma=0.7$), in the case of GARCH$(1,1)$-response, indicating that the process fails to be strictly-stationary. Still, we decided to include this particularly pathological case as our method performs well. At last, the mask $(\Lambda_i)$ is a finite-state, irreducible, aperiodic Markov chains hence is $\alpha$-mixing, see \cite[Theorem~3.2]{Bradley2005}.

\subsection{Selection of the threshold}
\label{sub-seuil}

The choice of the diverging threshold is a recurrent problem in extreme-value statistics, see \cite{scarrott2012review} for a review. In the considered setup, a natural method is to select the value $y_n$ which maximizes the empirical covariance between $Y$ and $\hat \beta_\Lambda(y_n)^\top X$ using a grid search on $(0,\infty)$. This optimization problem can be
simplified by remarking that the considered empirical covariance is a piecewise 
constant function of $y_n$ with $n$ jumps at $Y_1,\dots,Y_n$.
It is thus sufficient to consider a discrete grid; to this end let us introduce $\bar{r}(k)$ defined as
\begin{align*}
\frac{1}{k} \sum_{i=1}^k Y_{n-i+1,n} \, \hat \beta_{\Lambda}(Y_{n-k+1,n})^\top X_{(n-i+1,n)} - \frac{1}{k} \sum_{i=1}^k Y_{n-i+1,n}\, \frac{1}{k} \sum_{i=1}^k  \hat \beta_{\Lambda}(Y_{n-k+1,n})^\top X_{(n-i+1,n)} ,
\end{align*}
and where $X_{(n-i+1,n)}$ denotes the concomitant of $Y_{n-i+1,n}$, \emph{i.e.}, the random variable $X_s$ with $s\in \{1,\ldots,n\}$ being the unique index such that $Y_s=Y_{n-i+1,n}$.
In practice, we let $\hat k := {\rm argmax}_{5\le k\le n/5} \bar{r}(k)$
in order to prevent instabilities of
estimates built on too few data points. 

\subsection{Results}
\label{sub-results}

The performance of our method is assessed through the comparison of the two graphs $(j/p, \hat \beta^{(j)}_\Lambda(Y_{n-\hat k +1,n}))$ and $(j/p, \beta_j)$ indexed by $1\le j \le p$. Figures~\ref{fig:iid_iid}--\ref{fig:estar_patho_GARCH} display the results
on each of the 11 considered dependence frameworks. Each of them includes six or nine panels
associated with different set of parameters $(\gamma,\tau,\kappa)$.

In all $9\times9+2\times 6=93$ situations, the orange curve represents the graph associated with $\beta$, while the blue one is associated with the averaged value of $\hat \beta(Y_{n-\hat k+1,n})$ over the $N=500$ Monte-Carlo replications, where $\hat k$ is selected following the above described procedure. The light blue area corresponds to the confidence region comprising the top $5-95\%$ values of the Monte Carlo replications.

Figure~\ref{fig:iid_iid} displays the no-serial dependence case. Adversely, Figure~\ref{fig:st_ARMA_st_GARCH} corresponds to a standard ARMA$(1,1)$-response $(\phi_{\rm resp},\theta_{\rm resp})=(0.8,-0.3)$ together with a standard GARCH$(1,1)$-noise $(\omega_{\rm noise},\alpha_{\rm noise},\beta_{\rm noise})=(0.05,0.1,0.85)$. The results for the IGARCH-like noise are presented in Figure~\ref{fig:st_ARMA_patho_GARCH}. Similarly, Figure~\ref{fig:patho_ARMA_st_GARCH} is for the pathological ARMA$(1,1)$-response with $(\phi_{\rm resp},\theta_{\rm resp})= (0.99,-0.98)$ with standard GARCH$(1,1)$-noise, and Figure~\ref{fig:patho_ARMA_patho_GARCH} when instead the GARCH$(1,1)$-noise is close to IGARCH, \emph{i.e.,} $(\omega_{\rm noise},\alpha_{\rm noise},\beta_{\rm noise})=(0.05,0.05,0.94)$.
Next, the second experimental setup is gathered in Figures~\ref{fig:st_GARCH_st_ARMA}--\ref{fig:patho_GARCH_patho_ARMA} with mirrored sets of parameters.
This is synthesized in Table~\ref{tab-fig}.
\begin{table}[h]
    \centering
\begin{tabular}{|c|ccccc|}
 Response / Noise  & Indep. & Stand. ARMA & Stand. GARCH&  Patho. ARMA  & Patho. GARCH \\
 \hline
 Indep. &Figure~\ref{fig:iid_iid}  & & & & \\
  Stand. ARMA   & & &  Figure \ref{fig:st_ARMA_st_GARCH} && Figure \ref{fig:st_ARMA_patho_GARCH} \\ 
  Stand. GARCH  & &  Figure \ref{fig:st_GARCH_st_ARMA}  &  & Figure \ref{fig:st_GARCH_patho_ARMA}&  \\ 
   Patho. ARMA & &  &Figure \ref{fig:patho_ARMA_st_GARCH}&  &  Figure \ref{fig:patho_ARMA_patho_GARCH}  \\
  Patho. GARCH & & Figure \ref{fig:patho_GARCH_st_ARMA} &  & Figure \ref{fig:patho_GARCH_patho_ARMA} &  \\
  ESTAR & & & Figure \ref{fig:estar_st_GARCH}& & Figure \ref{fig:estar_patho_GARCH}\\
\end{tabular}
\caption{Organization of the results on simulated data.}
\label{tab-fig}
    \end{table}
Finally, Figures~\ref{fig:estar_st_GARCH} and Figrure~\ref{fig:estar_patho_GARCH} display the case of an ESTAR$(1)$-response with $\phi_{\rm low}=0.2$ and $\phi_{\rm high}=0.95$ with both GARCH$(1,1)$-noise models. The first two rows are devoted to a standard noise with $(\omega_{\rm resp},\alpha_{\rm resp},\beta_{\rm resp})=(0.05,0.1,0.85)$ for $\gamma \in \{0.1,0.5\}$ while the last two rows concern the pathological noise with $(\omega_{\rm resp},\alpha_{\rm resp},\beta_{\rm resp})=(0.05,0.05,0.94)$. 

In all considered situations, the EPLS method provides reliable estimates in terms of bias and variance. It appears to be robust with respect to heavy tails, serial dependence, and missing data, by showing good performance in pathological and nonlinear settings.
Unsurprisingly, the estimation procedure becomes more instable as $\tau<0$ since it implies lighter tails for $\lambda$ and thus more masked observations which reduces the available data sample.
On the other hand, a larger tail-index $\gamma$ yields heavier tails for the response and produces extremely large values more frequently, leading to higher variance in tail estimates. 

\begin{figure}[p]
    \centering
    
    \begin{subfigure}{0.325\textwidth}
        \includegraphics[scale=0.35]{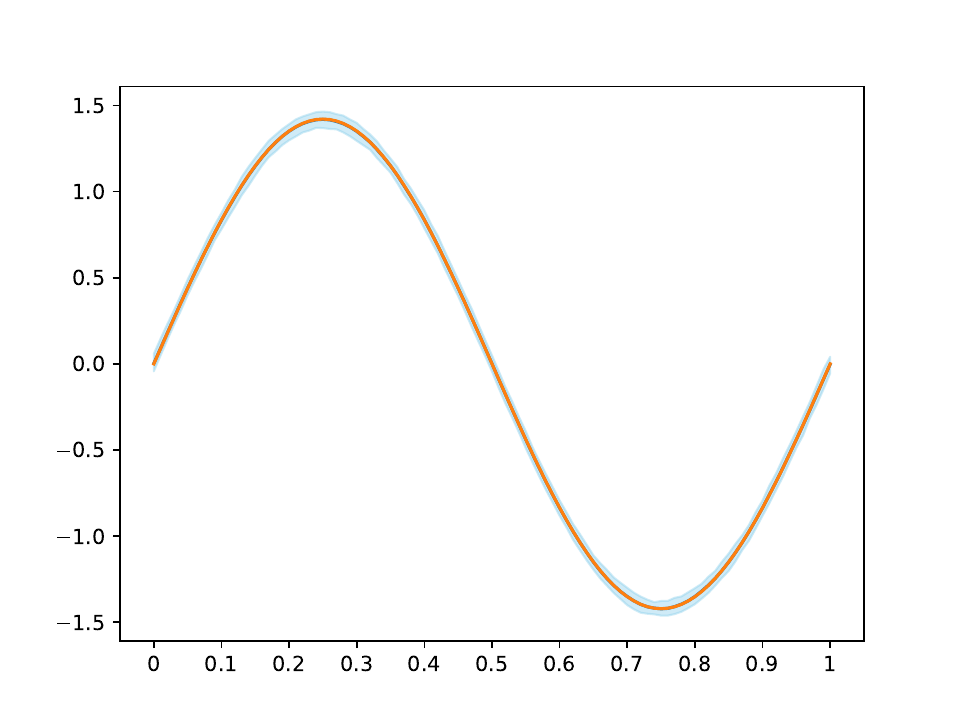}
        \caption{$\gamma=0.1$, $\tau = -0.1$, $\kappa = 0.5$.}
    \end{subfigure}
    \hfill
    \begin{subfigure}{0.325\textwidth}
        \includegraphics[scale=0.35]{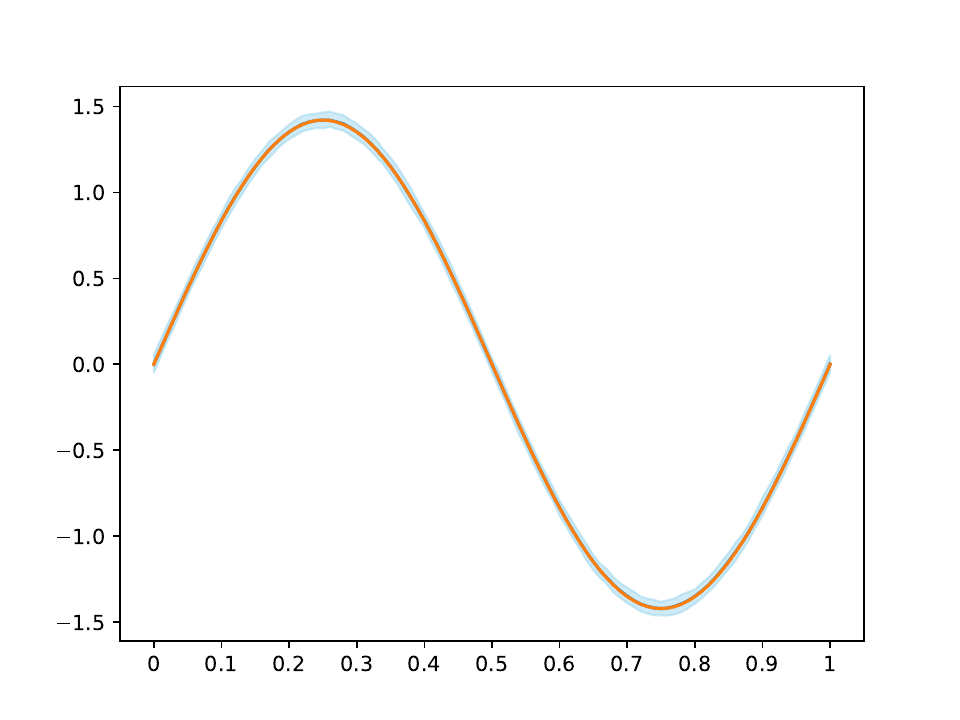}
        \caption{$\gamma=0.1$, $\tau = -0.5$, $\kappa = 0.5$.}
    \end{subfigure}
    \hfill
    \begin{subfigure}{0.325\textwidth}
        \includegraphics[scale=0.35]{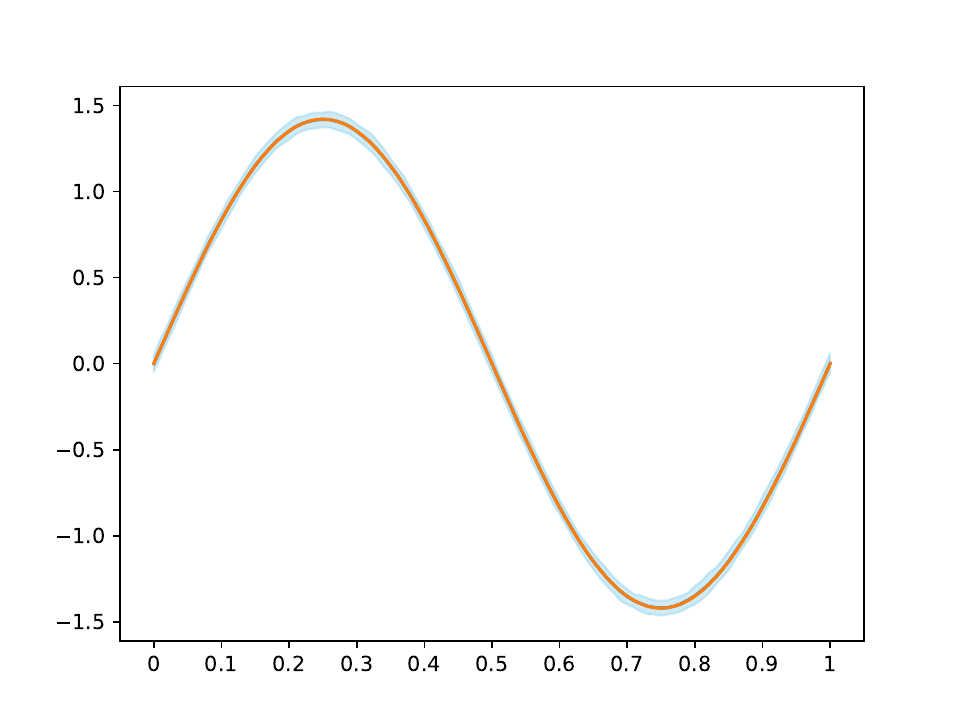}
        \caption{$\gamma=0.1$, $\tau = -1$, $\kappa = 0.5$.}
    \end{subfigure}

    \medskip 
    \begin{subfigure}{0.325\textwidth}
        \includegraphics[scale=0.35]{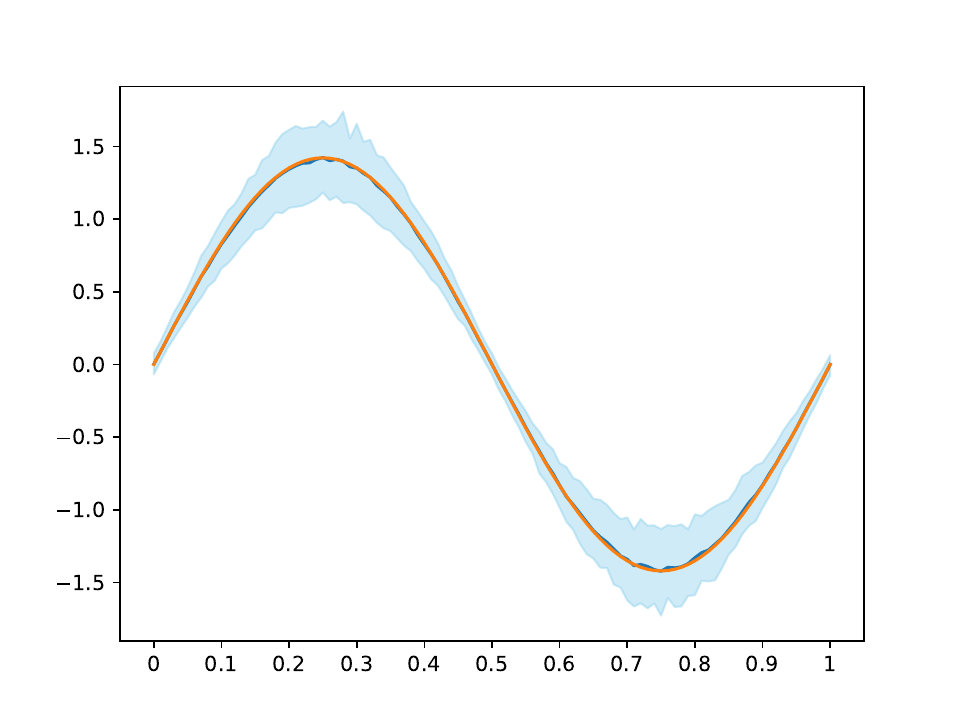}
        \caption{$\gamma=0.5$, $\tau = -0.1$, $\kappa = 0.5$.}
    \end{subfigure}
    \hfill
    \begin{subfigure}{0.325\textwidth}
        \includegraphics[scale=0.35]{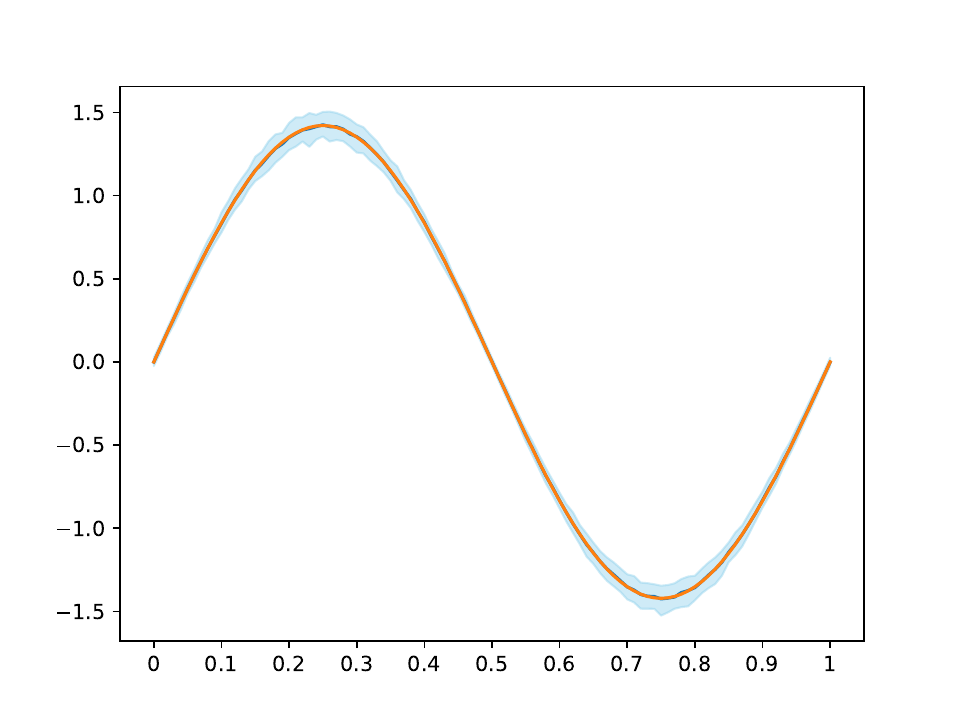}
        \caption{$\gamma=0.5$, $\tau = -0.5$, $\kappa = 0.5$.}
    \end{subfigure}
    \hfill
    \begin{subfigure}{0.325\textwidth}
        \includegraphics[scale=0.35]{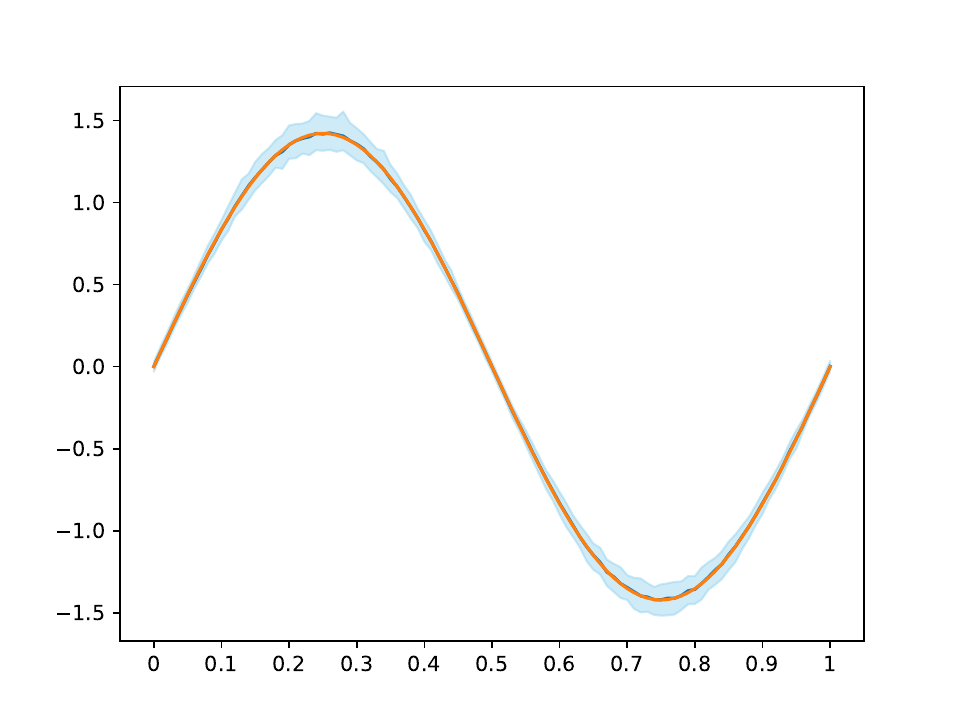}
        \caption{$\gamma=0.5$, $\tau = -1$, $\kappa = 0.5$.}
    \end{subfigure}

    \medskip 

    \begin{subfigure}{0.325\textwidth}
        \includegraphics[scale=0.35]{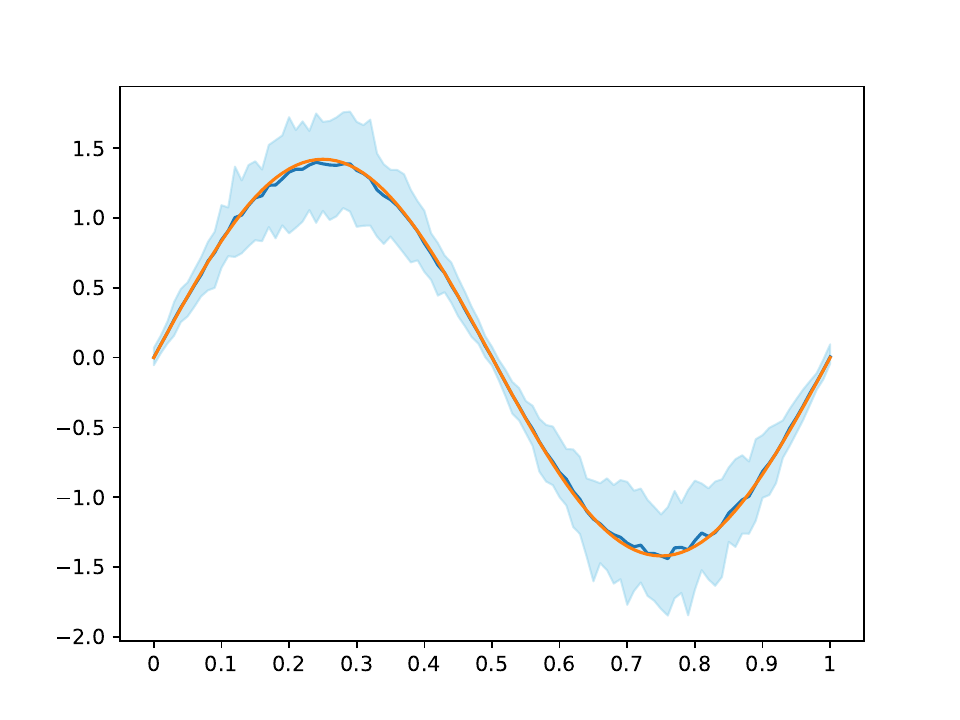}
        \caption{$\gamma=0.9$, $\tau = -0.1$, $\kappa = 0.5$.}
    \end{subfigure}
    \hfill
    \begin{subfigure}{0.325\textwidth}
        \includegraphics[scale=0.35]{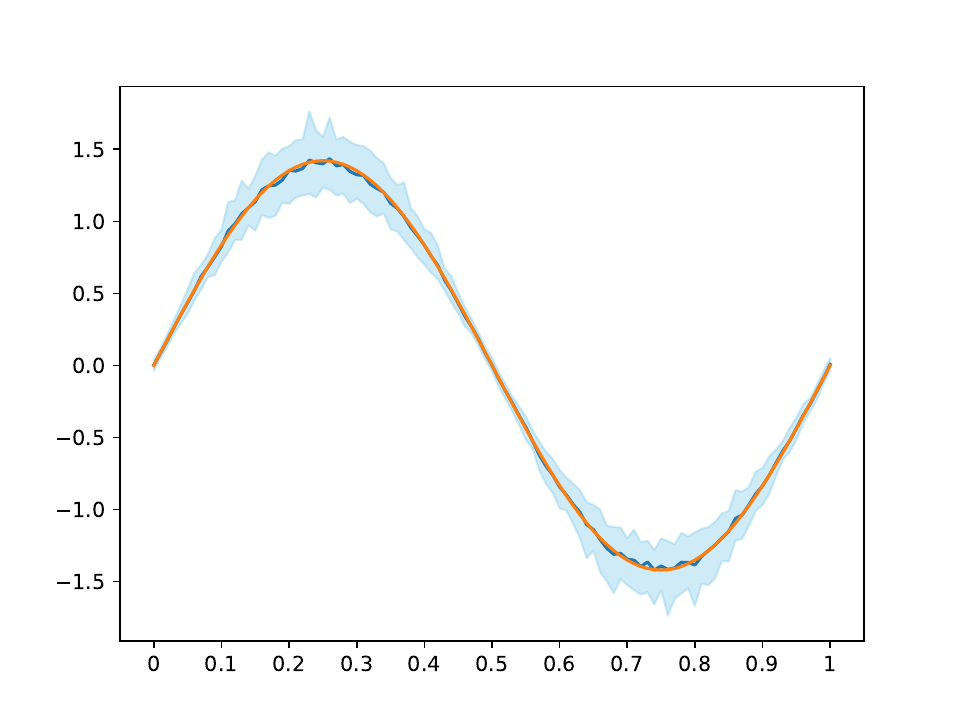}
        \caption{$\gamma=0.9$, $\tau = -0.5$, $\kappa = 0.5$.}
    \end{subfigure}
    \hfill
    \begin{subfigure}{0.325\textwidth}
        \includegraphics[scale=0.35]{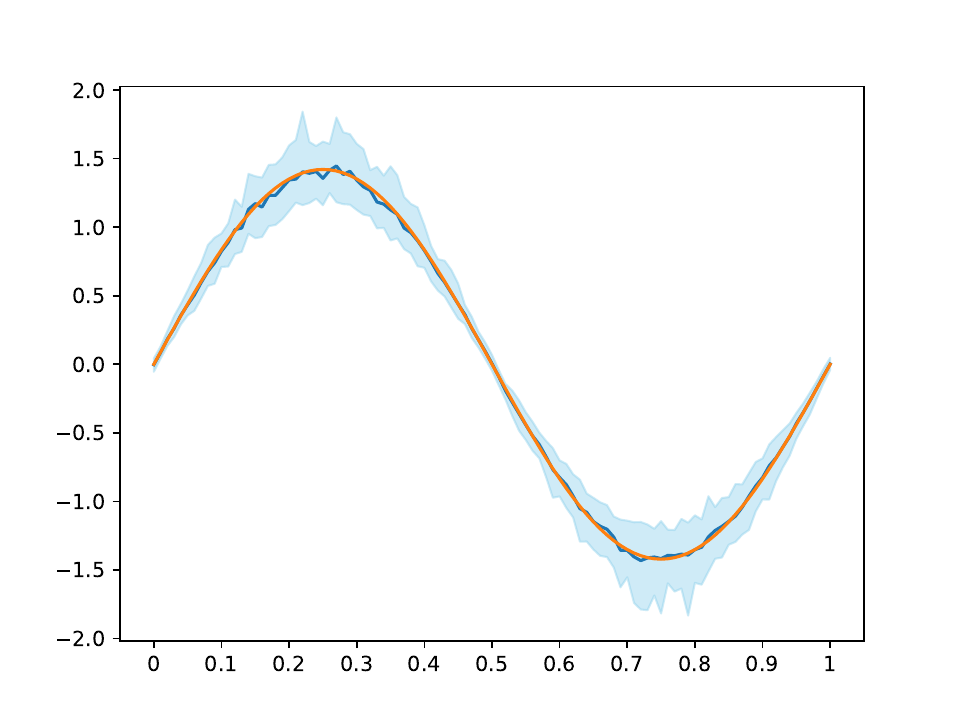}
        \caption{$\gamma=0.9$, $\tau = -1$, $\kappa = 0.5$.}
    \end{subfigure}
      
\caption{Simulation results on the inverse model without serial dependence (i.i.d. case), \emph{i.e.,} $(\phi_{\rm resp},\theta_{\rm resp})=(0,0)$ and $(\omega_{\rm noise},\alpha_{\rm noise},\beta_{\rm noise})=(1,0,0)$. }
    \label{fig:iid_iid}
\end{figure}

\begin{figure}[p]
    \centering
    
    \begin{subfigure}{0.325\textwidth}
        \includegraphics[scale=0.35]{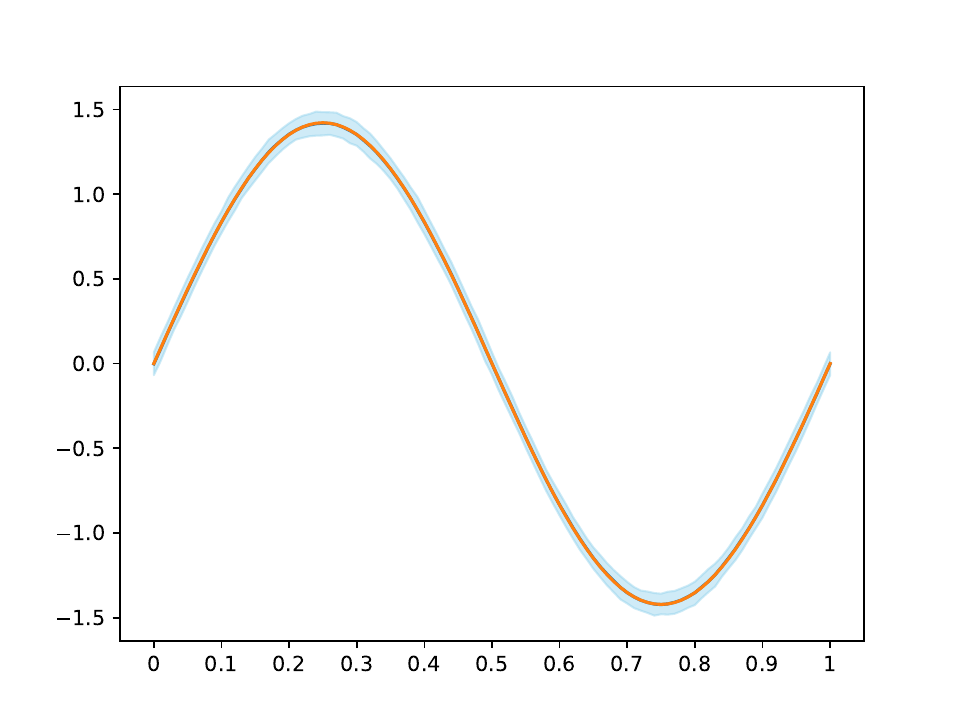}
        \caption{$\gamma=0.1$, $\tau = -0.1$, $\kappa=0.5$.}
    \end{subfigure}
    \hfill
    \begin{subfigure}{0.325\textwidth}
        \includegraphics[scale=0.35]{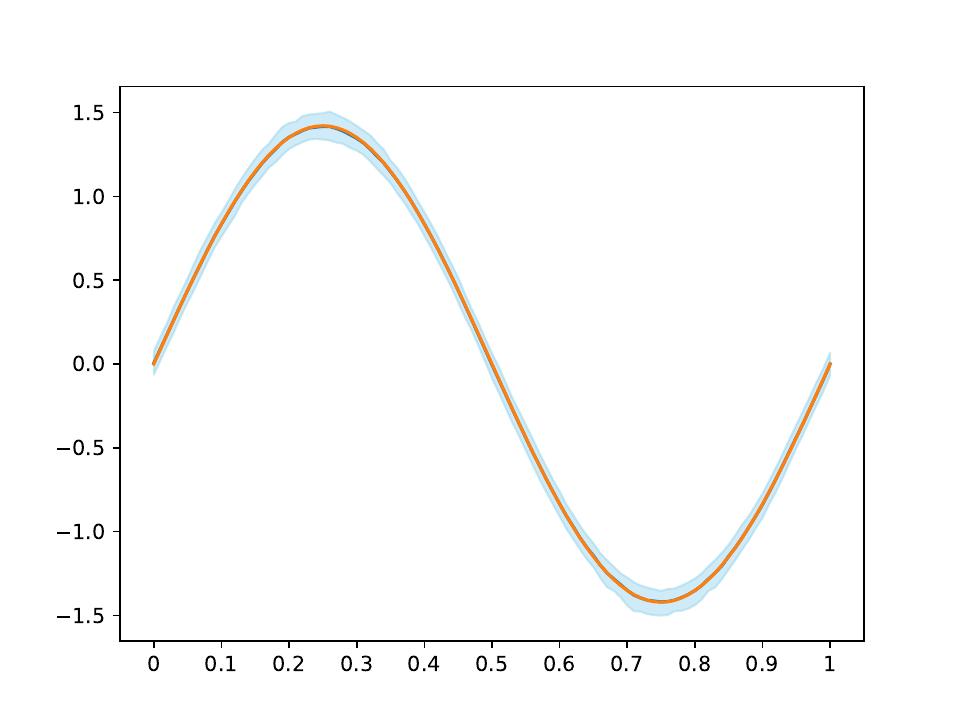}
        \caption{$\gamma=0.1$, $\tau = -0.5$, $\kappa=0.5$.}
    \end{subfigure}
    \hfill
    \begin{subfigure}{0.325\textwidth}
        \includegraphics[scale=0.35]{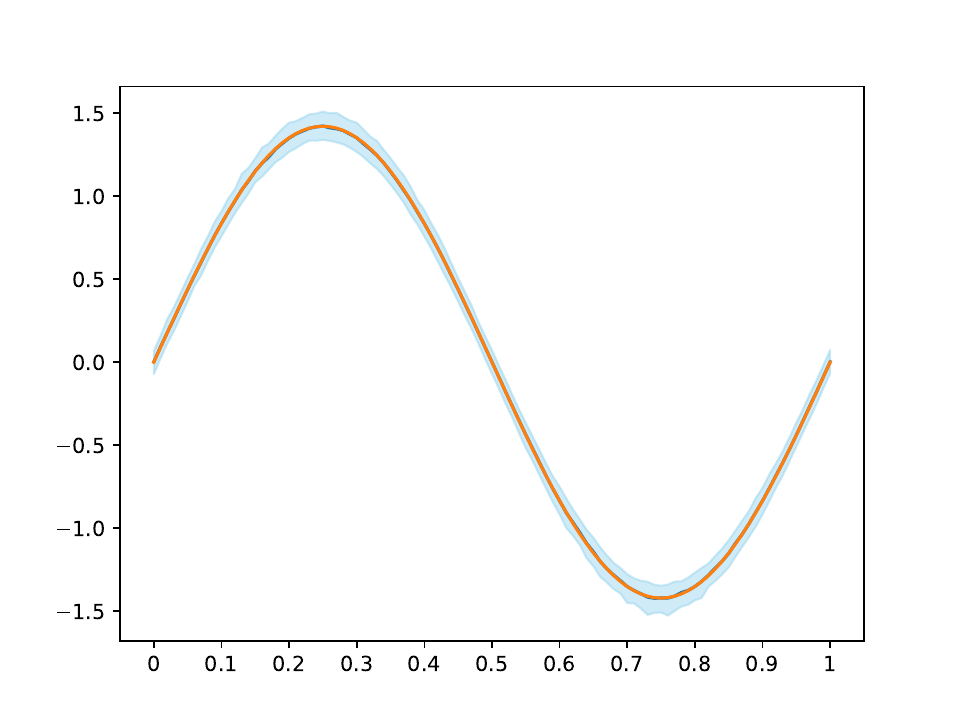}
        \caption{$\gamma=0.1$, $\tau = -0.9$, $\kappa=0.5$.}
    \end{subfigure}

    \medskip 
    \begin{subfigure}{0.325\textwidth}
        \includegraphics[scale=0.35]{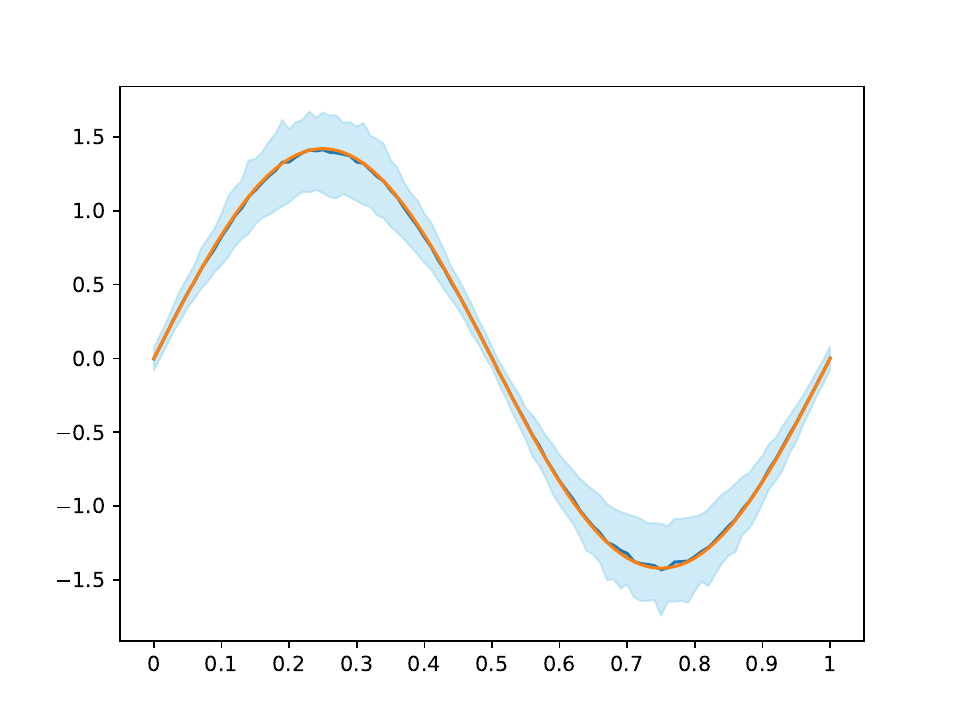}
        \caption{$\gamma=0.4$, $\tau = -0.1$, $\kappa=0.5$.}
    \end{subfigure}
    \hfill
    \begin{subfigure}{0.325\textwidth}
        \includegraphics[scale=0.35]{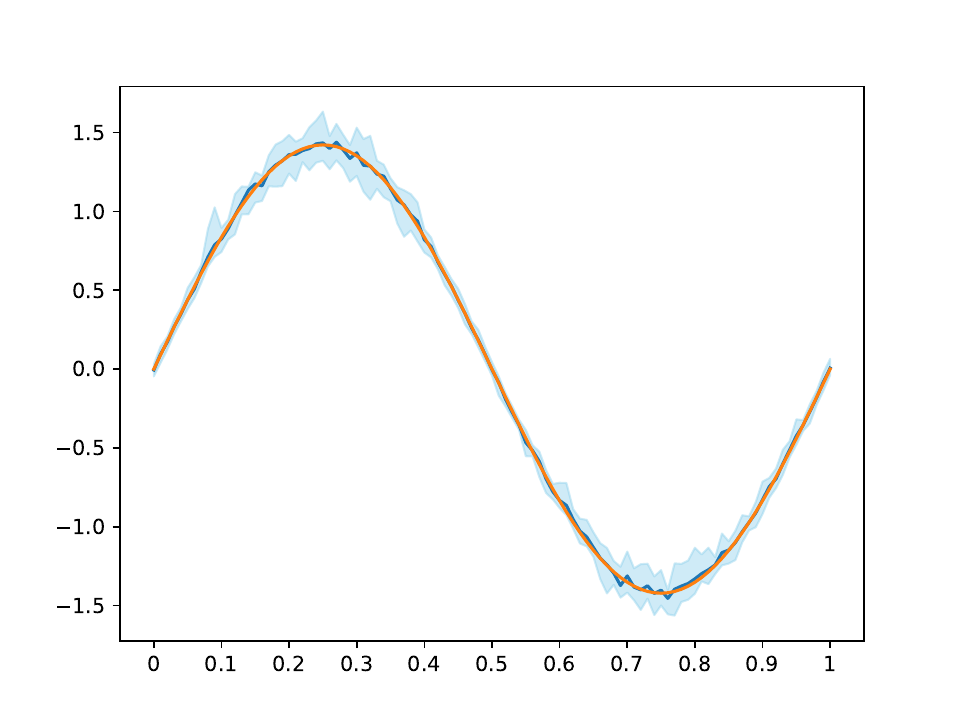}
        \caption{$\gamma=0.4$, $\tau = -0.5$, $\kappa=0.5$.}
    \end{subfigure}
    \hfill
    \begin{subfigure}{0.325\textwidth}
        \includegraphics[scale=0.35]{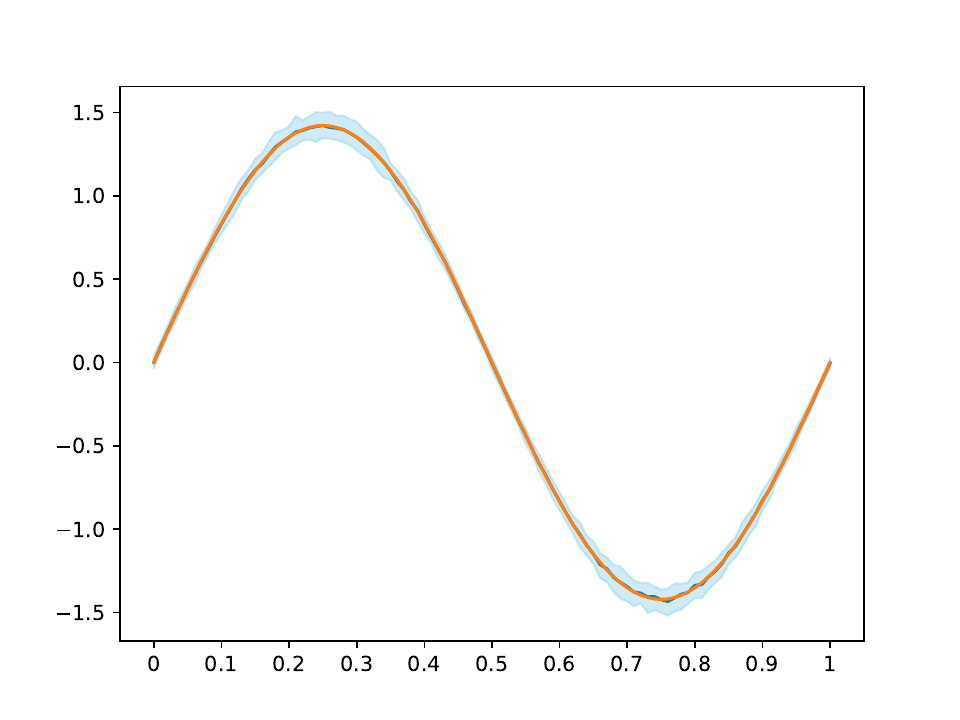}
        \caption{$\gamma=0.4$, $\tau = -0.9$, $\kappa=0.5$.}
    \end{subfigure}

    \medskip 

    \begin{subfigure}{0.325\textwidth}
        \includegraphics[scale=0.35]{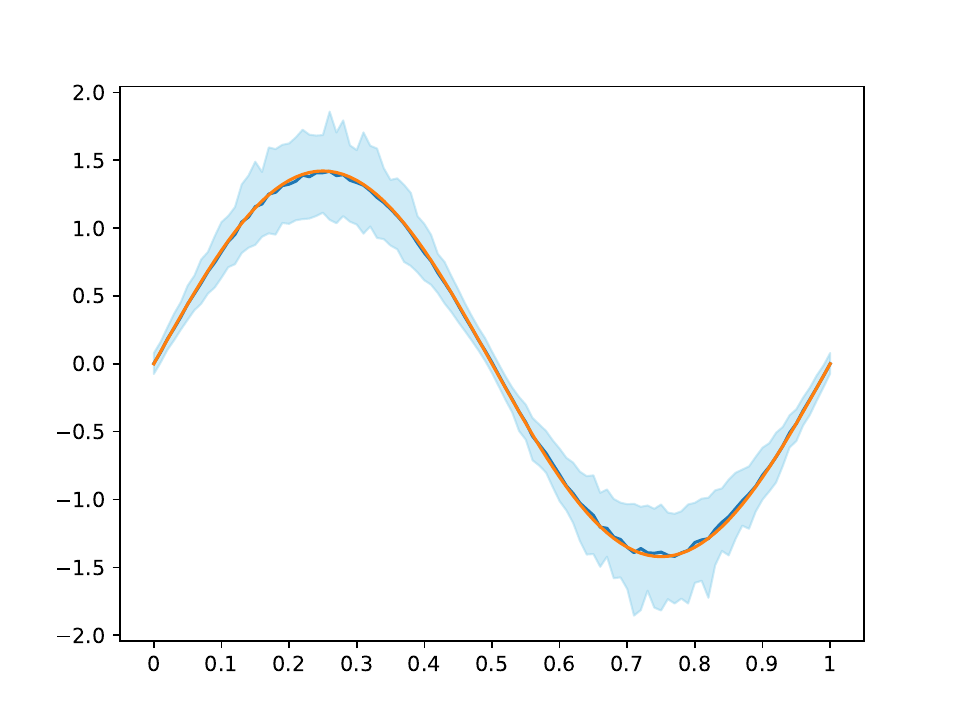}
        \caption{$\gamma=0.7$, $\tau = -0.1$, $\kappa=0.5$.}
    \end{subfigure}
    \hfill
    \begin{subfigure}{0.325\textwidth}
        \includegraphics[scale=0.35]{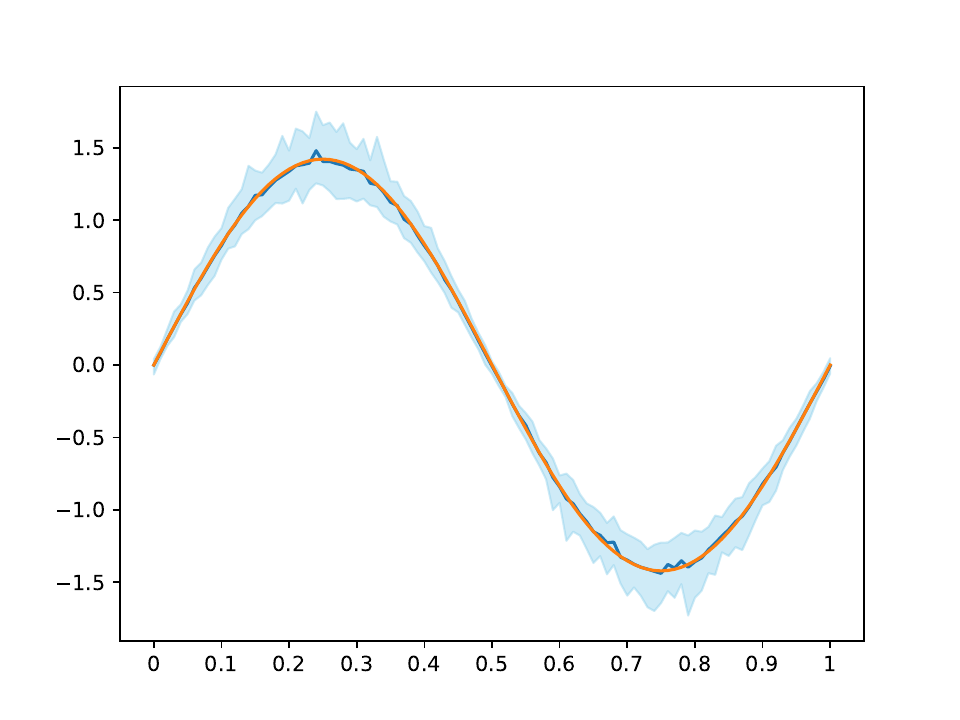}
        \caption{$\gamma=0.7$, $\tau = -0.5$, $\kappa=0.5$.}
    \end{subfigure}
    \hfill
    \begin{subfigure}{0.325\textwidth}
        \includegraphics[scale=0.35]{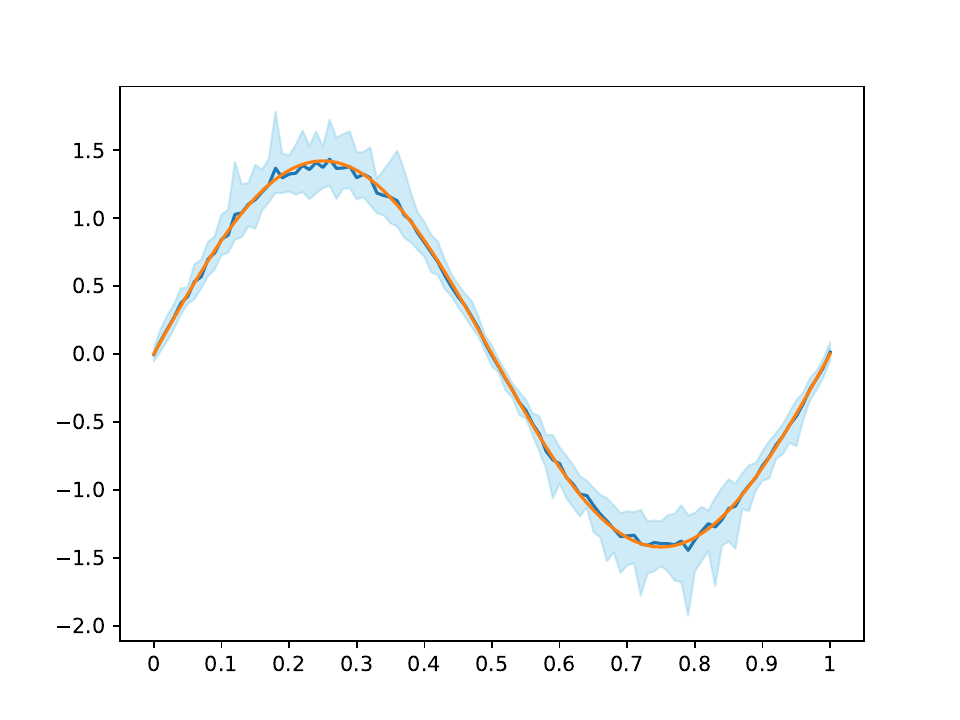}
        \caption{$\gamma=0.7$, $\tau = -0.9$, $\kappa=0.5$.}
    \end{subfigure}

      
\caption{Simulation results with serial dependence 'classic ARMA-resp. + classic GARCH-noise', \emph{i.e.,} $(\phi_{\rm resp},\theta_{\rm resp})=(0.8,-0.3)$ and $(\omega_{\rm noise},\alpha_{\rm noise},\beta_{\rm noise})=(0.05,0.1,0.85)$. }
    \label{fig:st_ARMA_st_GARCH}
\end{figure}

\begin{figure}[p]
    \centering
    
    \begin{subfigure}{0.325\textwidth}
        \includegraphics[scale=0.35]{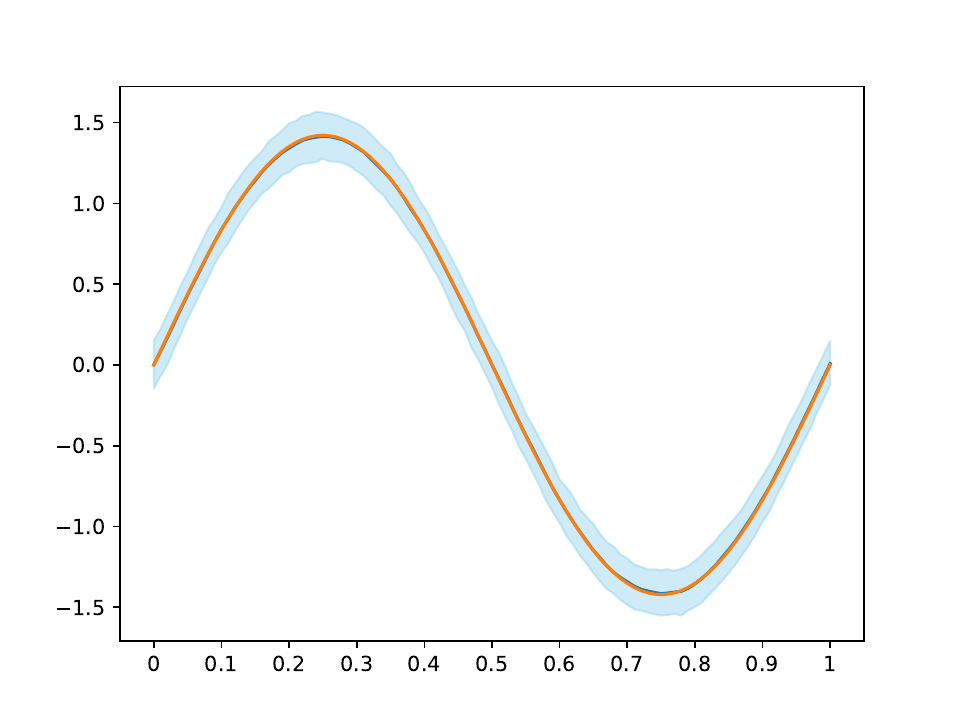}
        \caption{$\gamma=0.1$, $\tau = -0.1$, $\kappa=0.5$.}
    \end{subfigure}
    \hfill
    \begin{subfigure}{0.325\textwidth}
        \includegraphics[scale=0.35]{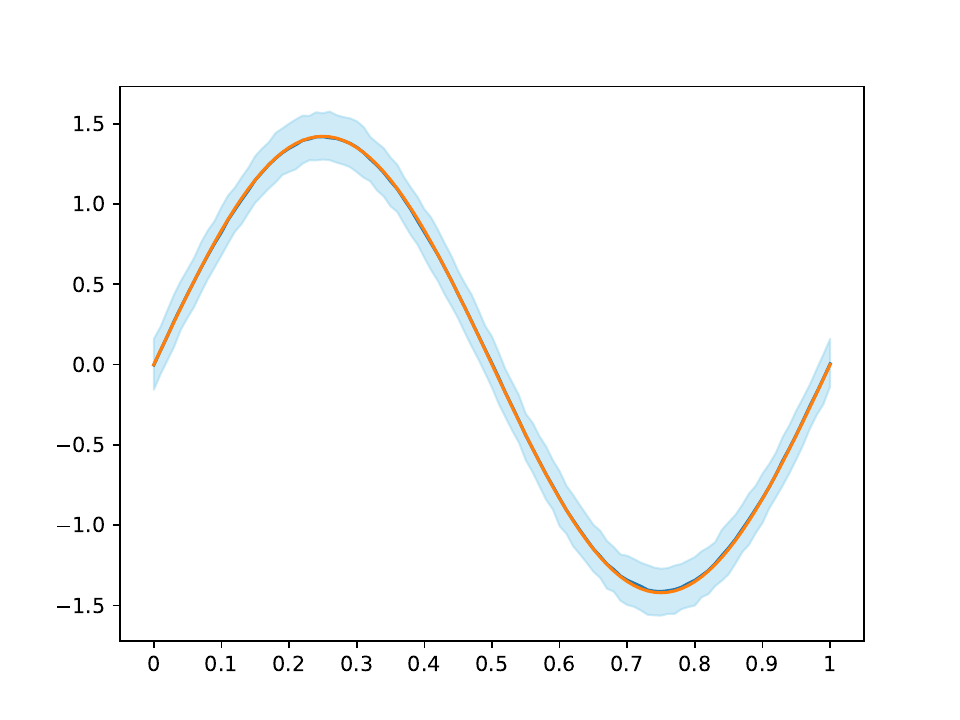}
        \caption{$\gamma=0.1$, $\tau = -0.5$, $\kappa=0.5$.}
    \end{subfigure}
    \hfill
    \begin{subfigure}{0.325\textwidth}
        \includegraphics[scale=0.35]{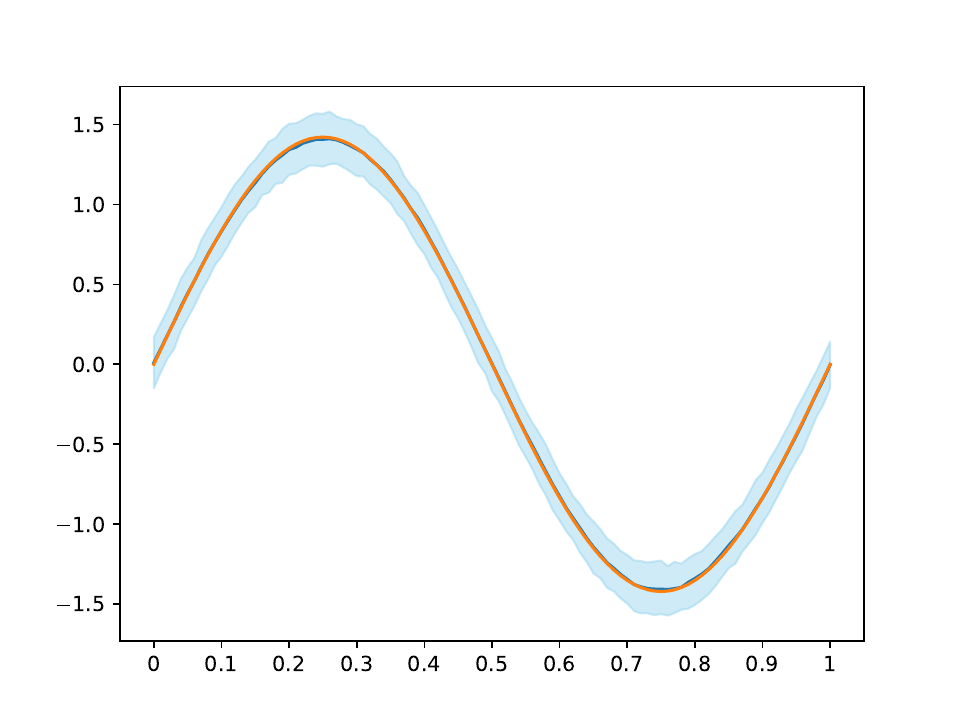}
        \caption{$\gamma=0.1$, $\tau = -0.9$, $\kappa=0.5$.}
    \end{subfigure}

    \medskip 
    \begin{subfigure}{0.325\textwidth}
        \includegraphics[scale=0.35]{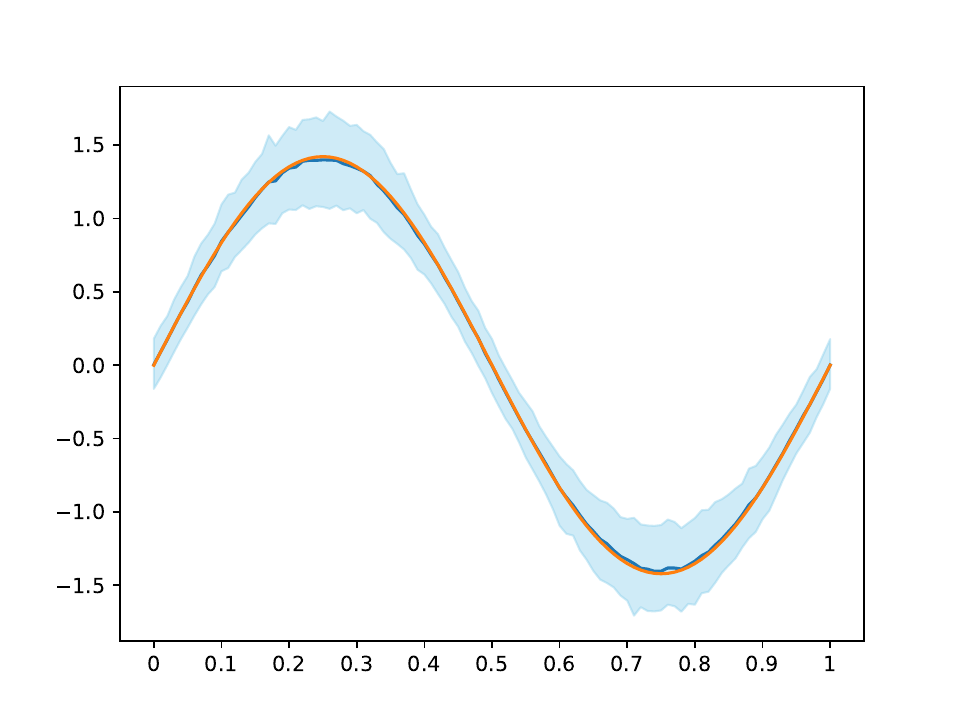}
        \caption{$\gamma=0.4$, $\tau = -0.1$, $\kappa=0.5$.}
    \end{subfigure}
    \hfill
    \begin{subfigure}{0.325\textwidth}
        \includegraphics[scale=0.35]{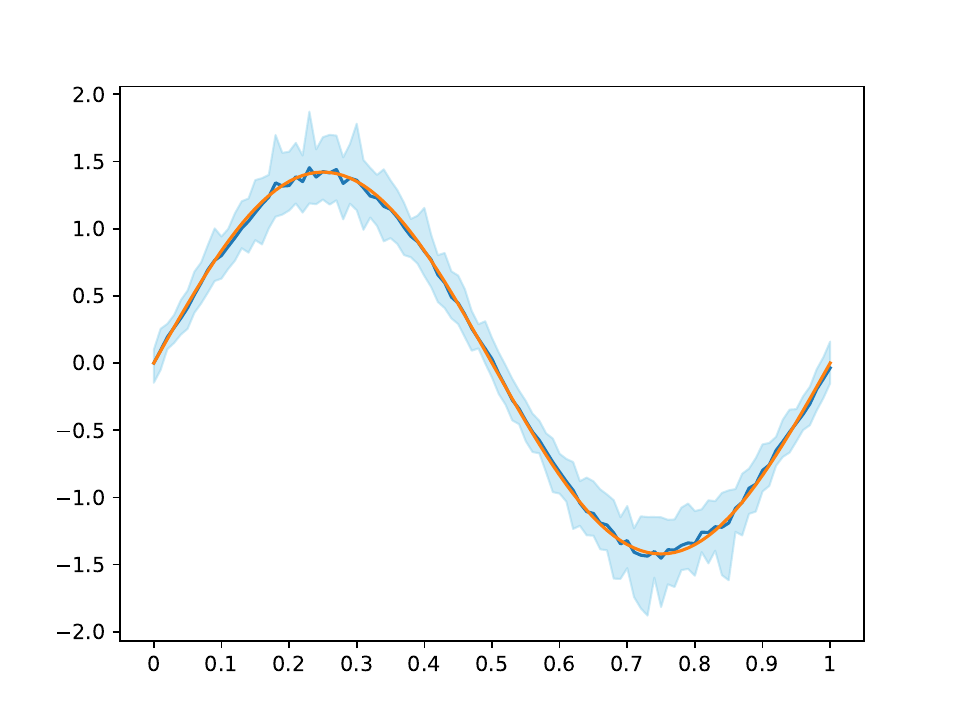}
        \caption{$\gamma=0.4$, $\tau = -0.5$, $\kappa=0.5$.}
    \end{subfigure}
    \hfill
    \begin{subfigure}{0.325\textwidth}
        \includegraphics[scale=0.35]{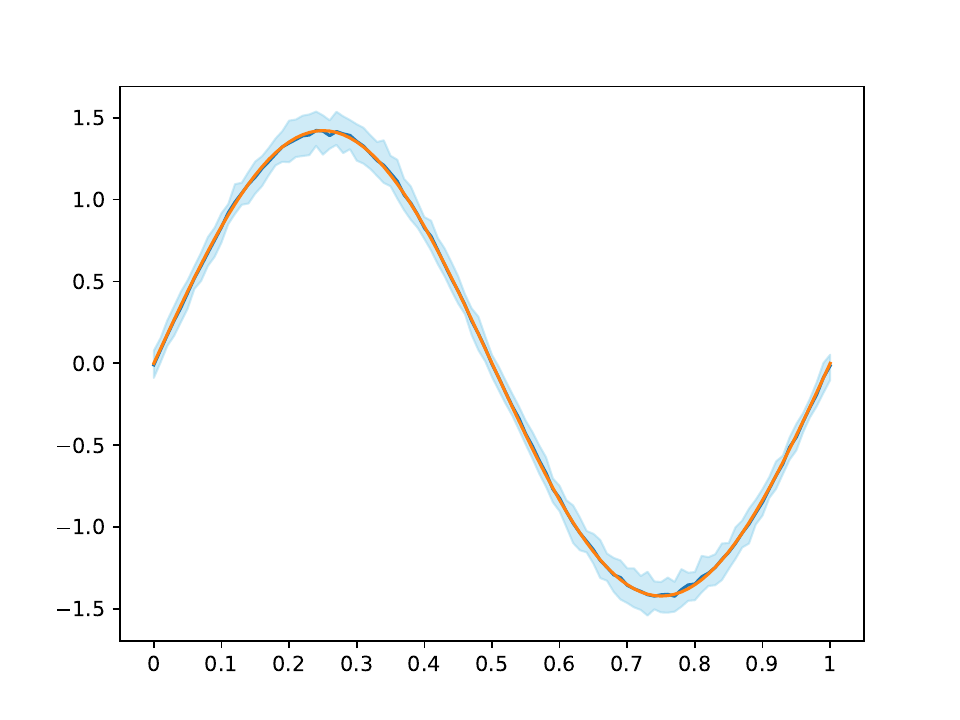}
        \caption{$\gamma=0.4$, $\tau = -0.9$, $\kappa=0.5$.}
    \end{subfigure}

    \medskip 

    \begin{subfigure}{0.325\textwidth}
        \includegraphics[scale=0.35]{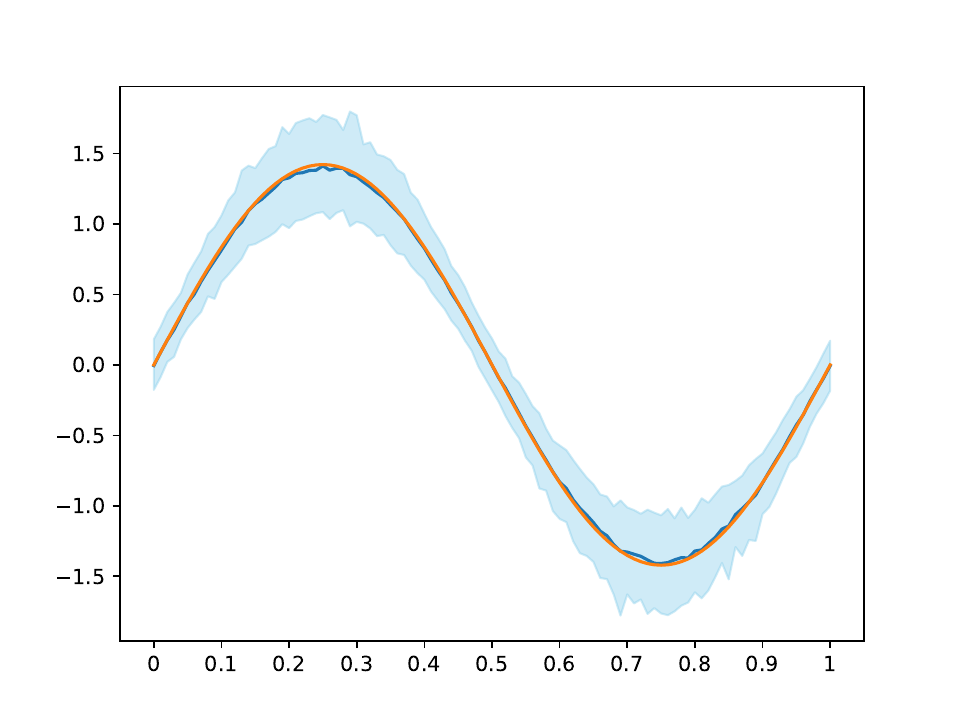}
        \caption{$\gamma=0.7$, $\tau = -0.1$, $\kappa=0.5$.}
    \end{subfigure}
    \hfill
    \begin{subfigure}{0.325\textwidth}
        \includegraphics[scale=0.35]{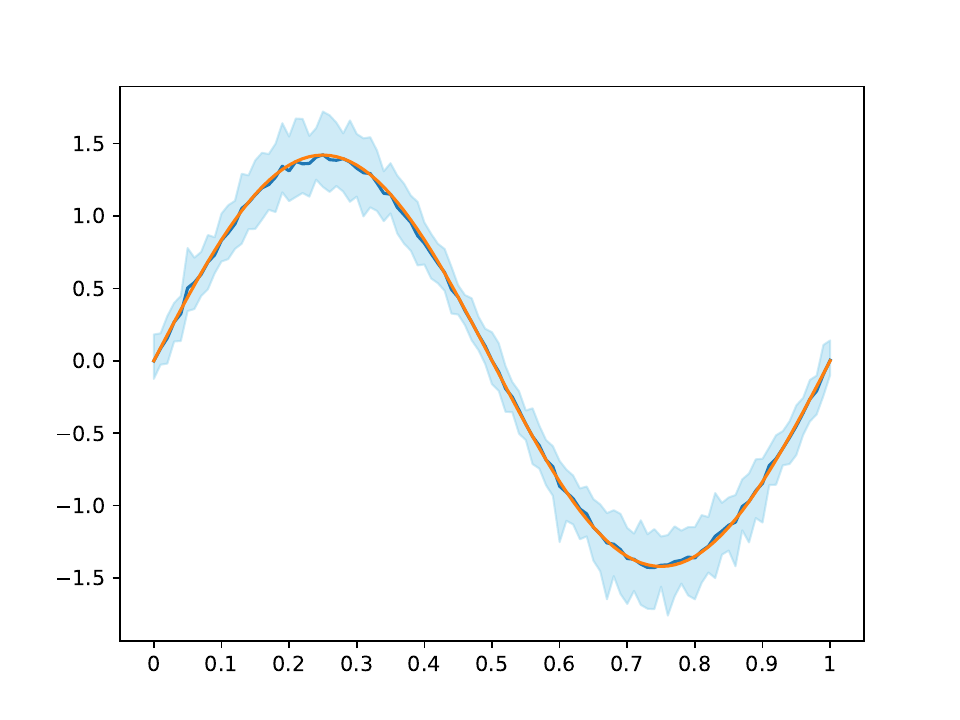}
        \caption{$\gamma=0.7$, $\tau = -0.5$, $\kappa=0.5$.}
    \end{subfigure}
    \hfill
    \begin{subfigure}{0.325\textwidth}
        \includegraphics[scale=0.35]{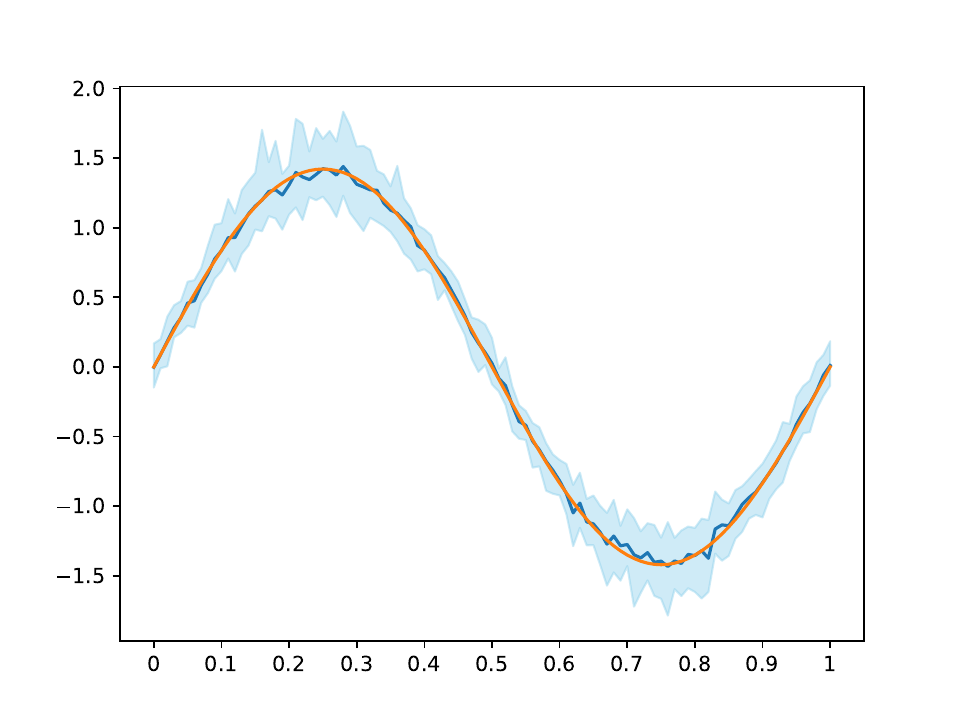}
        \caption{$\gamma=0.7$, $\tau = -0.9$, $\kappa=0.5$.}
    \end{subfigure}
      
\caption{Simulation results on the inverse model with serial dependence of type 'classic ARMA-resp. + IGARCH-like noise', \emph{i.e.,} $(\phi_{\rm resp},\theta_{\rm resp})=(0.8,-0.3)$ and $(\omega_{\rm noise},\alpha_{\rm noise},\beta_{\rm noise})=(0.05,0.05,0.94)$.}
    \label{fig:st_ARMA_patho_GARCH}
\end{figure}

\begin{figure}[p]
    \centering
    
    \begin{subfigure}{0.325\textwidth}
        \includegraphics[scale=0.35]{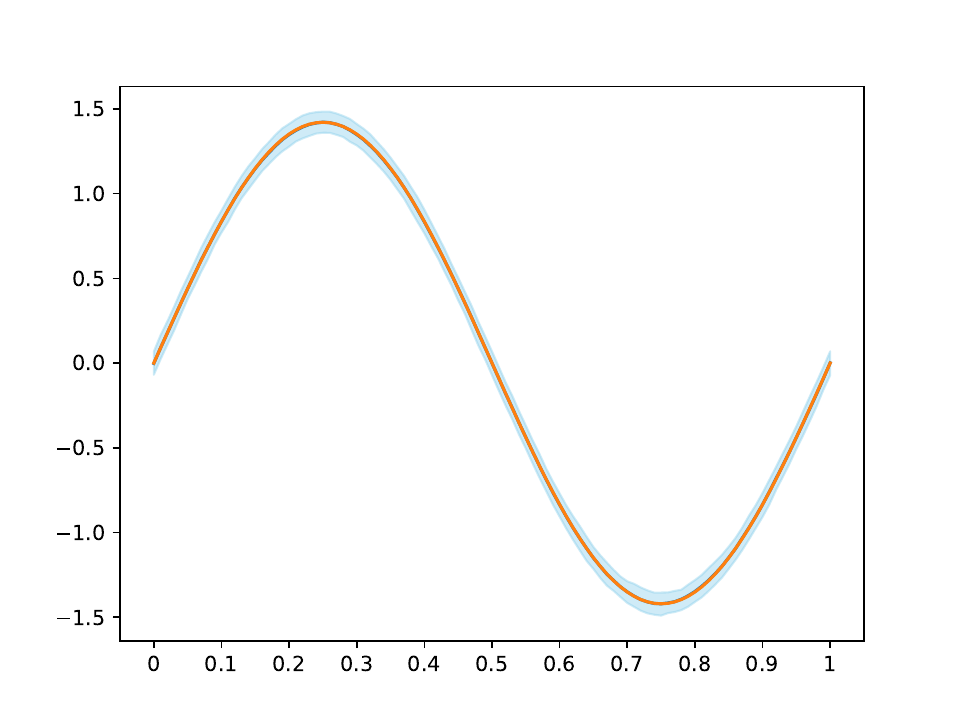}
        \caption{$\gamma=0.1$, $\tau = -0.1$, $\kappa=0.5$.}
    \end{subfigure}
    \hfill
    \begin{subfigure}{0.325\textwidth}
        \includegraphics[scale=0.35]{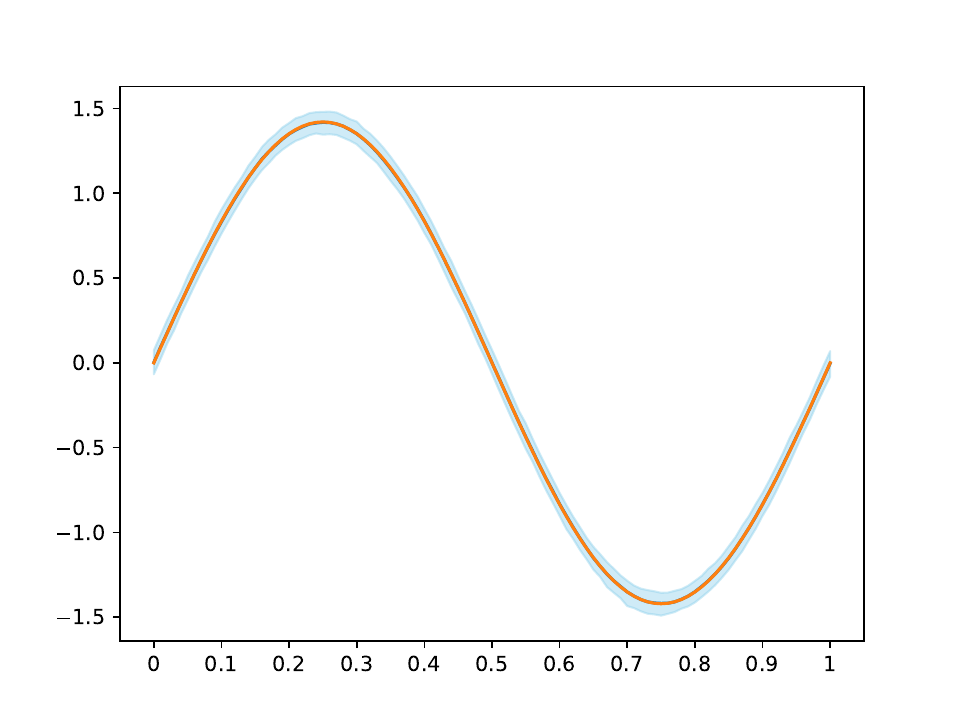}
        \caption{$\gamma=0.1$, $\tau = -0.5$, $\kappa=0.5$.}
    \end{subfigure}
    \hfill
    \begin{subfigure}{0.325\textwidth}
        \includegraphics[scale=0.35]{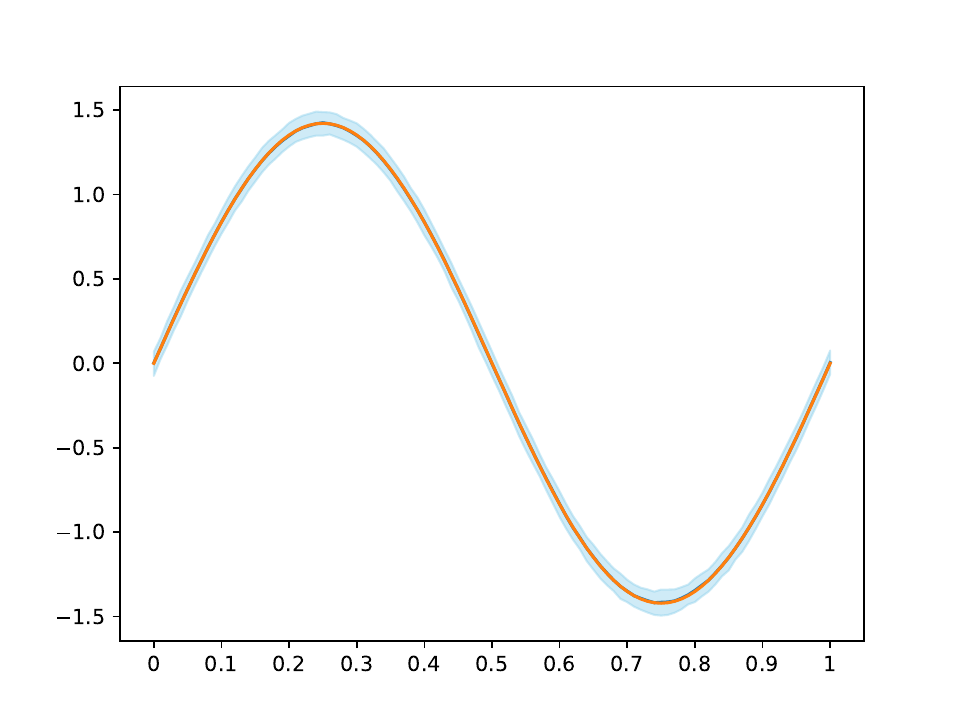}
        \caption{$\gamma=0.1$, $\tau = -0.9$, $\kappa=0.5$.}
    \end{subfigure}

    \medskip 
    \begin{subfigure}{0.325\textwidth}
        \includegraphics[scale=0.35]{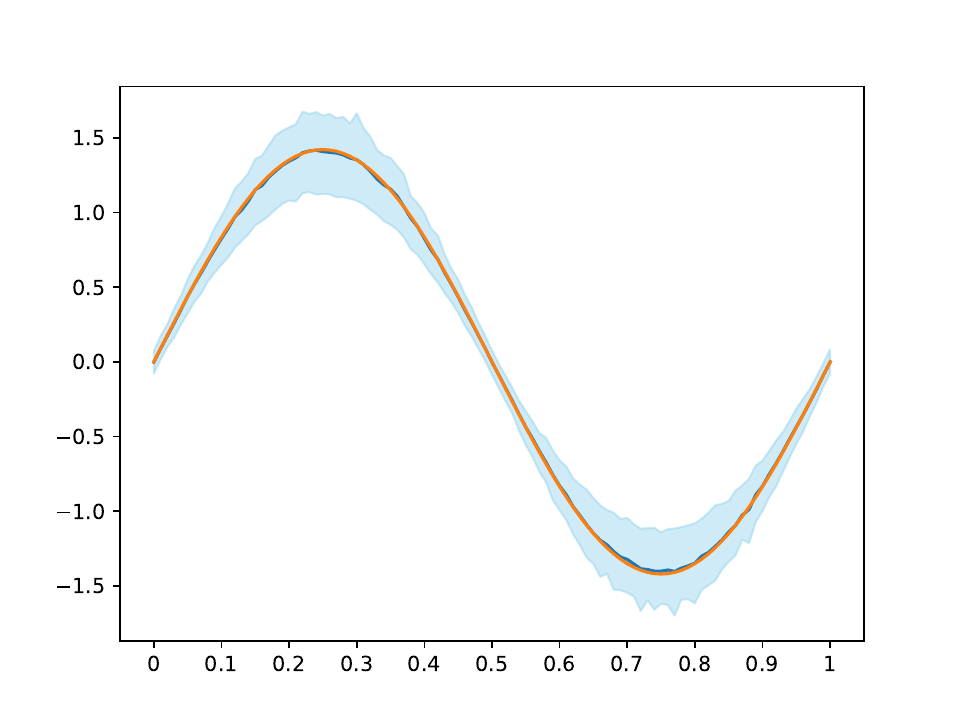}
        \caption{$\gamma=0.4$, $\tau = -0.1$, $\kappa=0.5$.}
    \end{subfigure}
    \hfill
    \begin{subfigure}{0.325\textwidth}
        \includegraphics[scale=0.35]{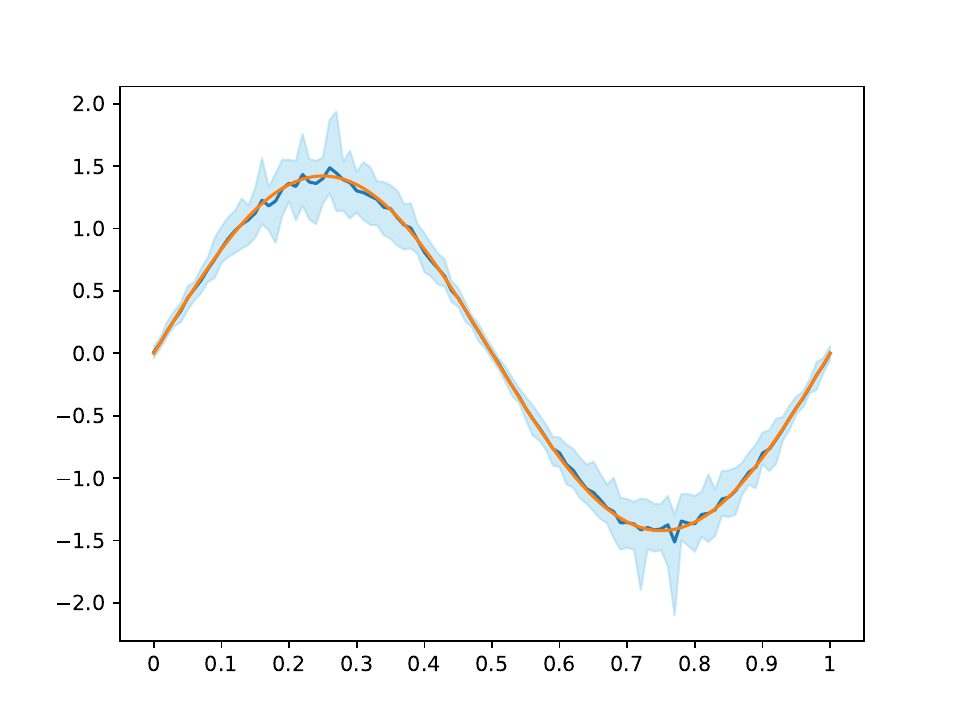}
        \caption{$\gamma=0.4$, $\tau = -0.5$, $\kappa=0.5$.}
    \end{subfigure}
    \hfill
    \begin{subfigure}{0.325\textwidth}
        \includegraphics[scale=0.35]{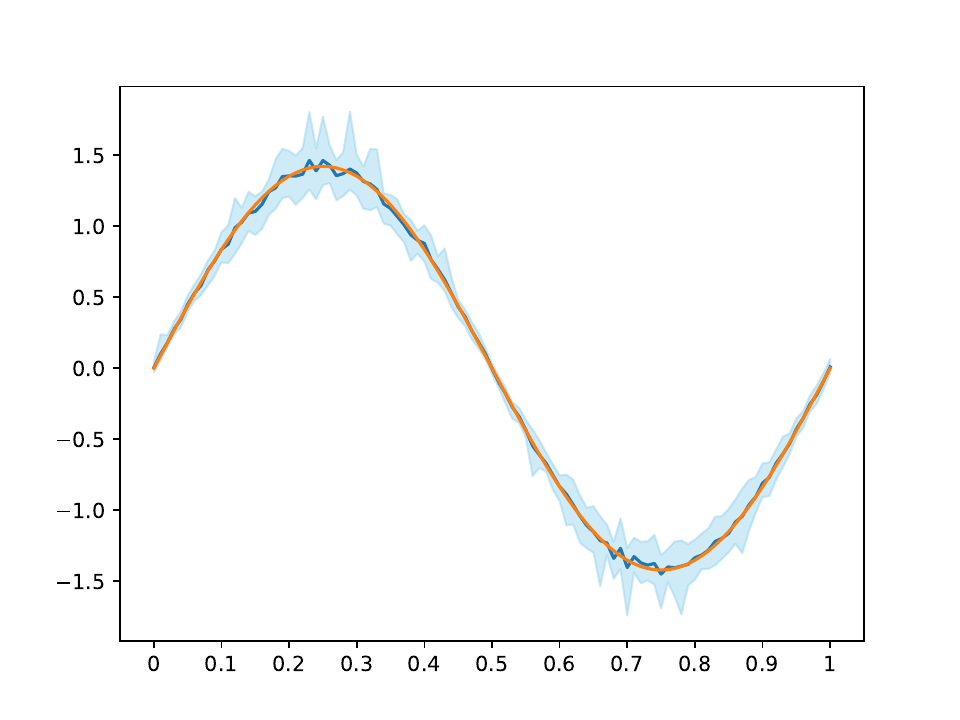}
        \caption{$\gamma=0.4$, $\tau = -0.9$, $\kappa=0.5$.}
    \end{subfigure}

    \medskip 

    \begin{subfigure}{0.325\textwidth}
        \includegraphics[scale=0.35]{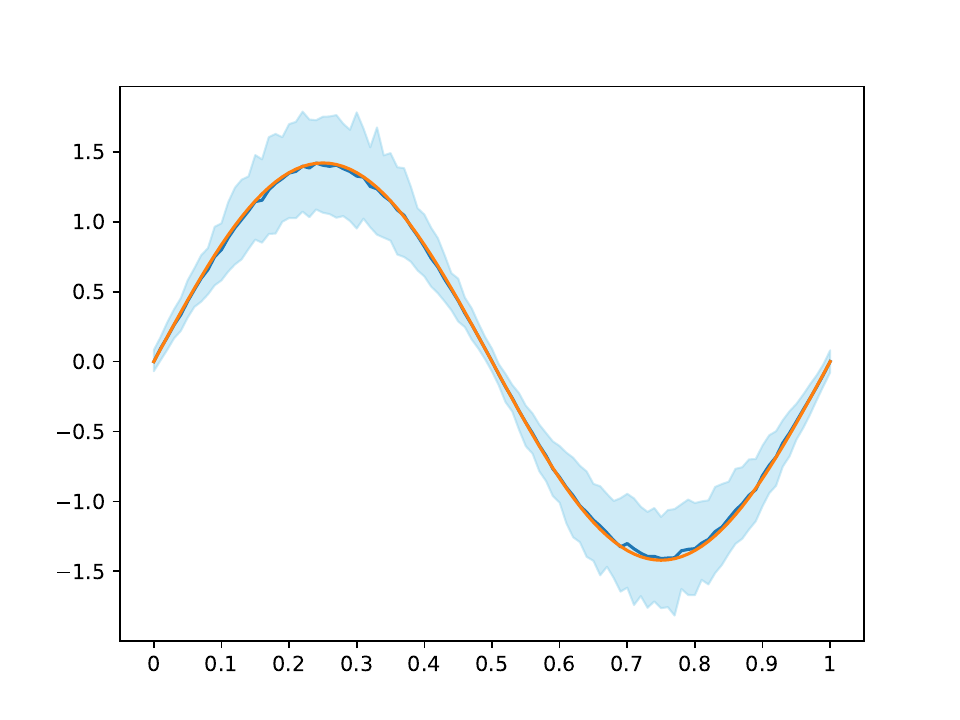}
        \caption{$\gamma=0.7$, $\tau = -0.1$, $\kappa=0.5$.}
    \end{subfigure}
    \hfill
    \begin{subfigure}{0.325\textwidth}
        \includegraphics[scale=0.35]{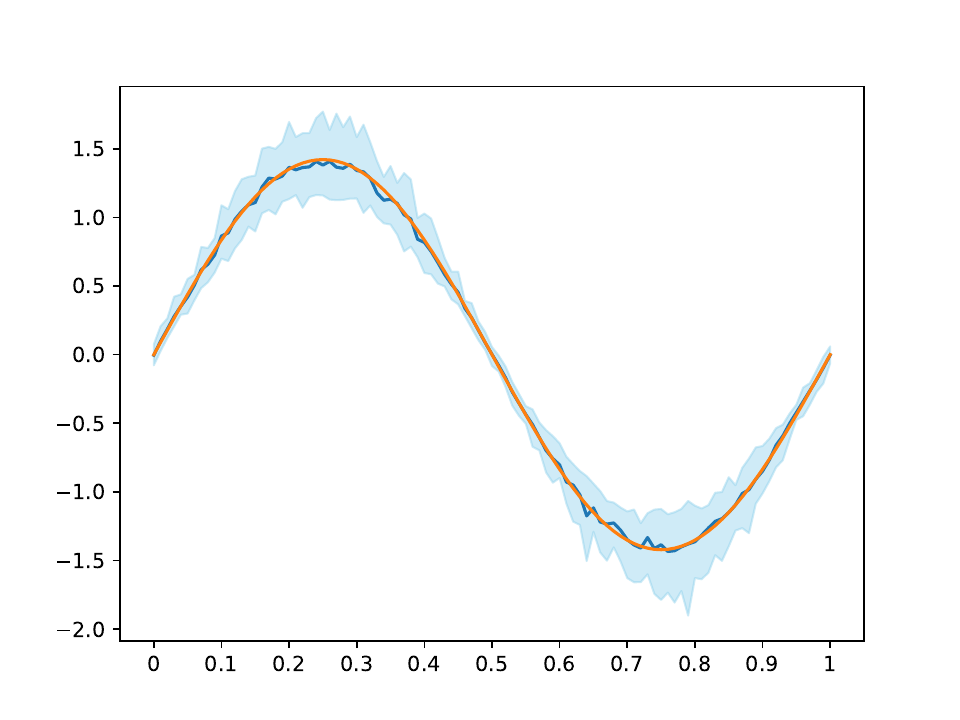}
        \caption{$\gamma=0.7$, $\tau = -0.5$, $\kappa=0.5$.}
    \end{subfigure}
    \hfill
    \begin{subfigure}{0.325\textwidth}
        \includegraphics[scale=0.35]{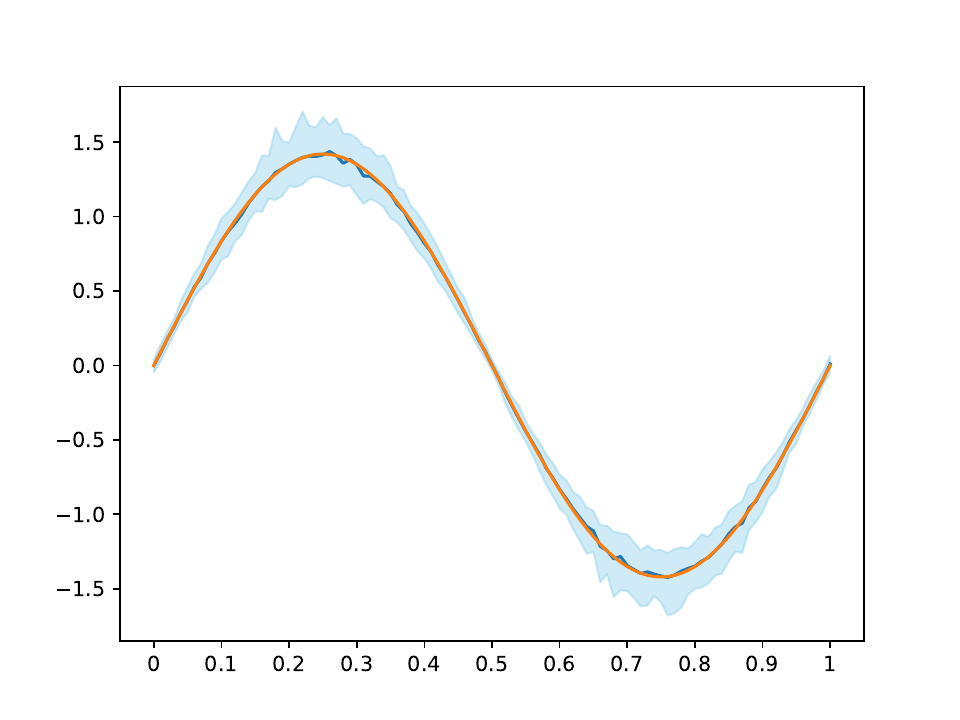}
        \caption{$\gamma=0.7$, $\tau = -0.9$, $\kappa=0.5$.}
    \end{subfigure}
      
\caption{Simulation results on the inverse model with serial dependence of type 'pathological ARMA-resp. + classic GARCH noise', \emph{i.e.,} $(\phi_{\rm resp},\theta_{\rm resp})= (0.99,-0.98)$ and $(\omega_{\rm noise},\alpha_{\rm noise},\beta_{\rm noise})=(0.05,0.1,0.85)$.}
    \label{fig:patho_ARMA_st_GARCH}
\end{figure}

\begin{figure}[p]
    \centering
    
    \begin{subfigure}{0.325\textwidth}
        \includegraphics[scale=0.35]{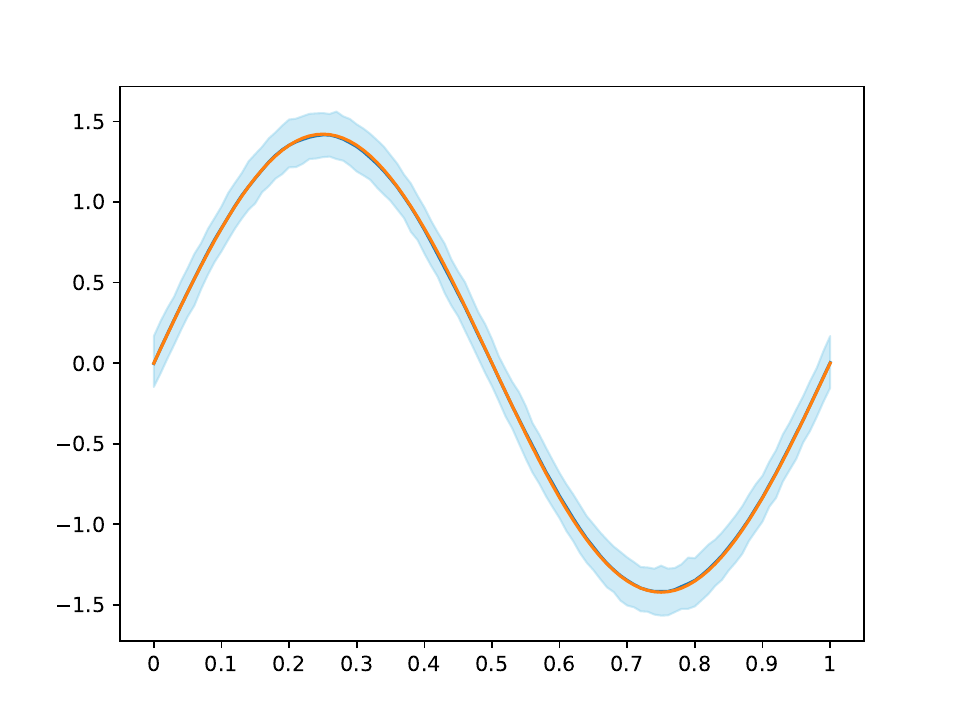}
        \caption{$\gamma=0.1$, $\tau = -0.1$, $\kappa=0.5$.}
    \end{subfigure}
    \hfill
    \begin{subfigure}{0.325\textwidth}
        \includegraphics[scale=0.35]{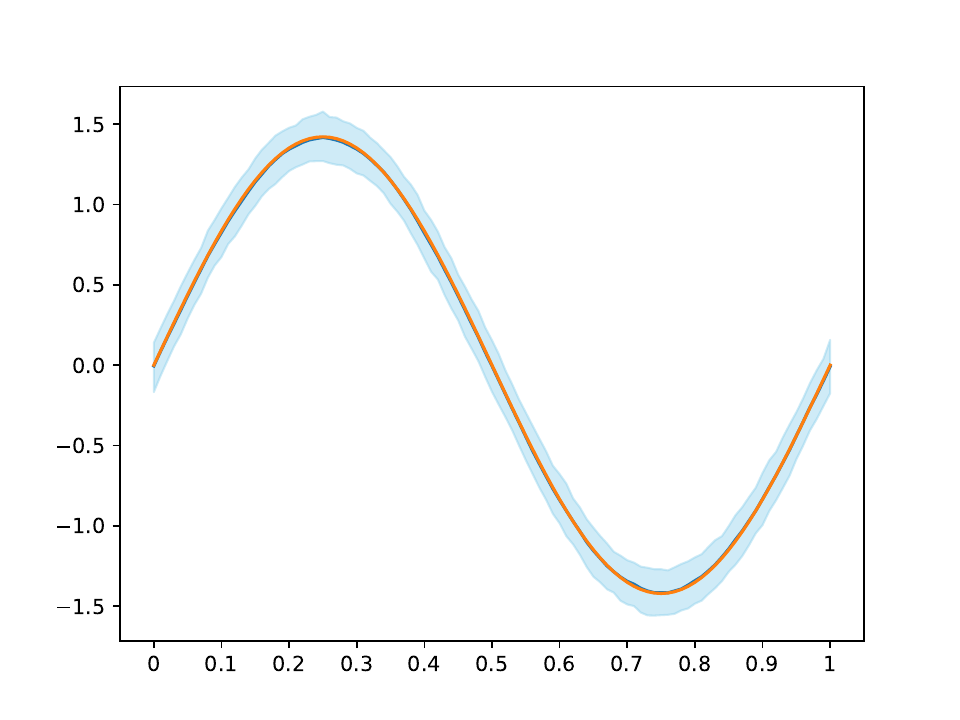}
        \caption{$\gamma=0.1$, $\tau = -0.5$, $\kappa=0.5$.}
    \end{subfigure}
    \hfill
    \begin{subfigure}{0.325\textwidth}
        \includegraphics[scale=0.35]{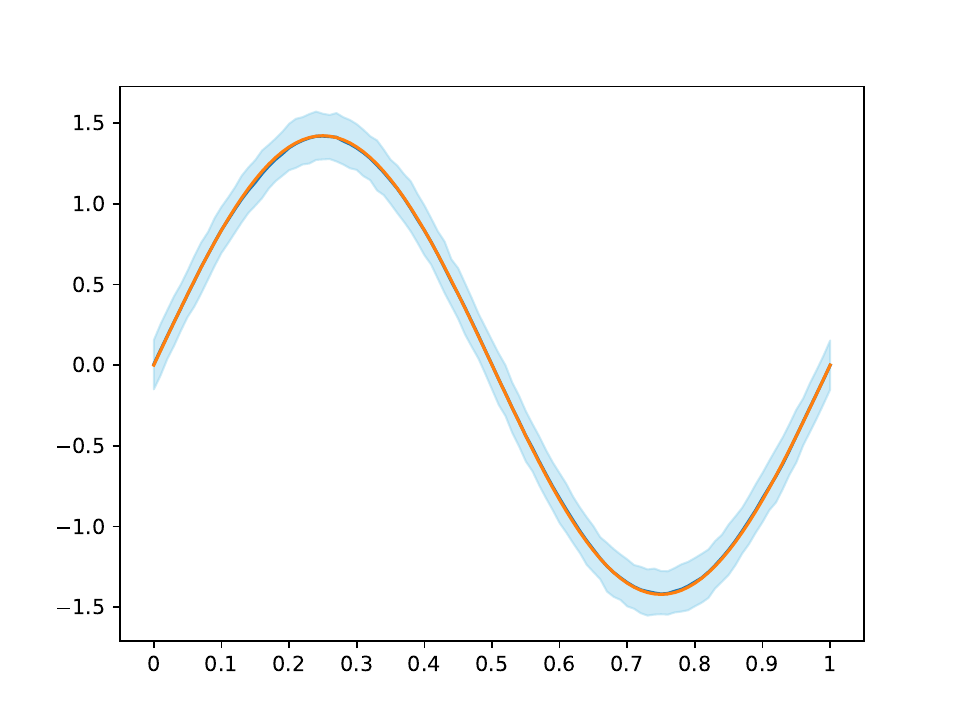}
        \caption{$\gamma=0.1$, $\tau = -0.9$, $\kappa=0.5$.}
    \end{subfigure}

    \medskip 
    \begin{subfigure}{0.325\textwidth}
        \includegraphics[scale=0.35]{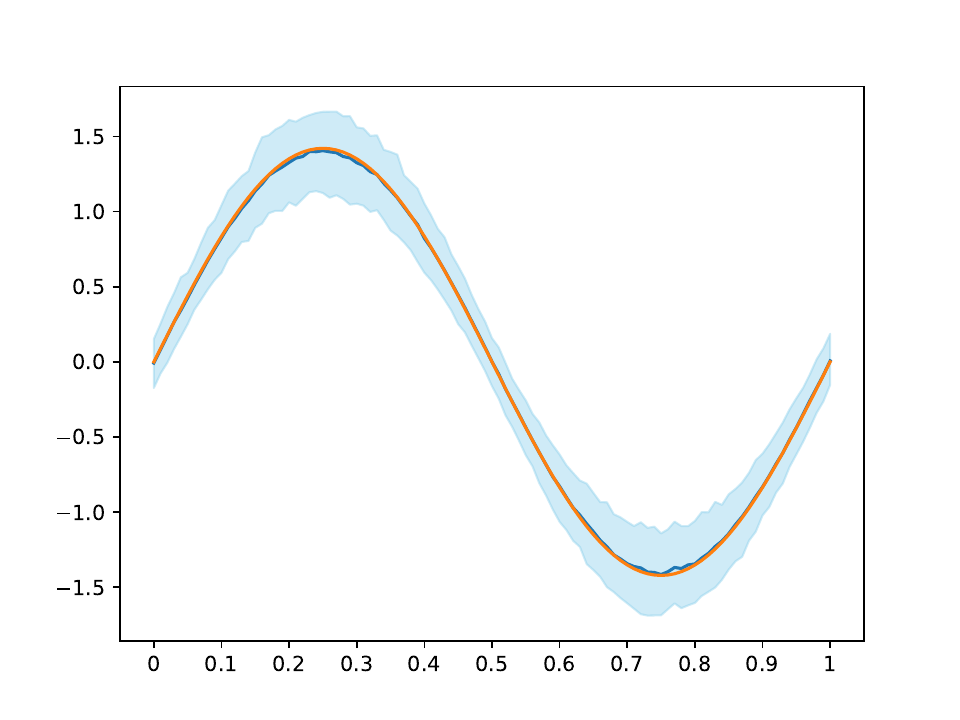}
        \caption{$\gamma=0.4$, $\tau = -0.1$, $\kappa=0.5$.}
    \end{subfigure}
    \hfill
    \begin{subfigure}{0.325\textwidth}
        \includegraphics[scale=0.35]{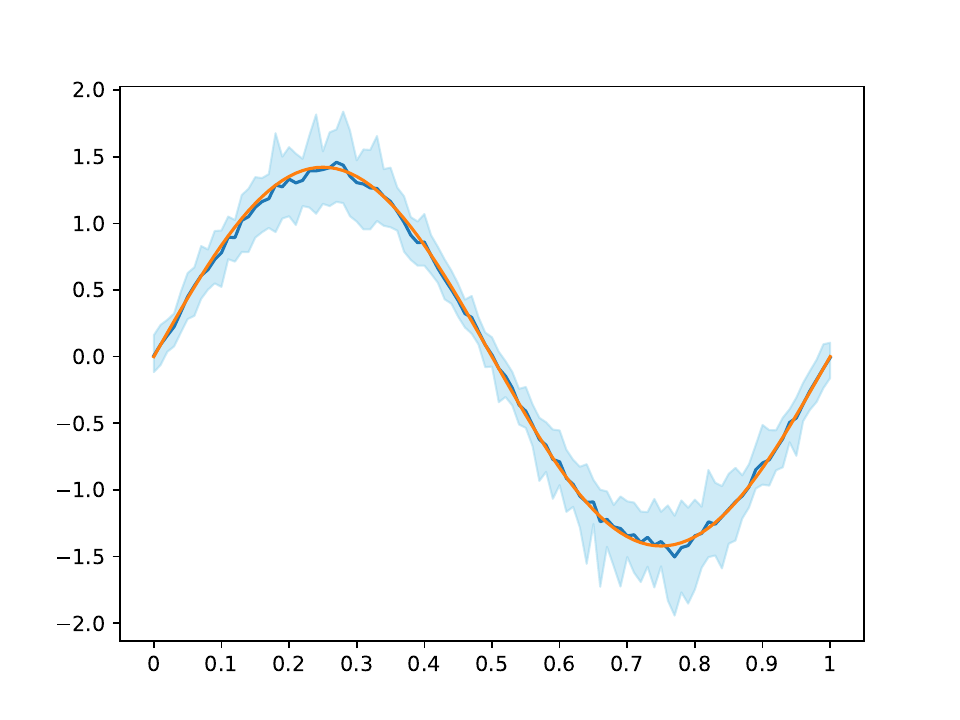}
        \caption{$\gamma=0.4$, $\tau = -0.5$, $\kappa=0.5$.}
    \end{subfigure}
    \hfill
    \begin{subfigure}{0.325\textwidth}
        \includegraphics[scale=0.35]{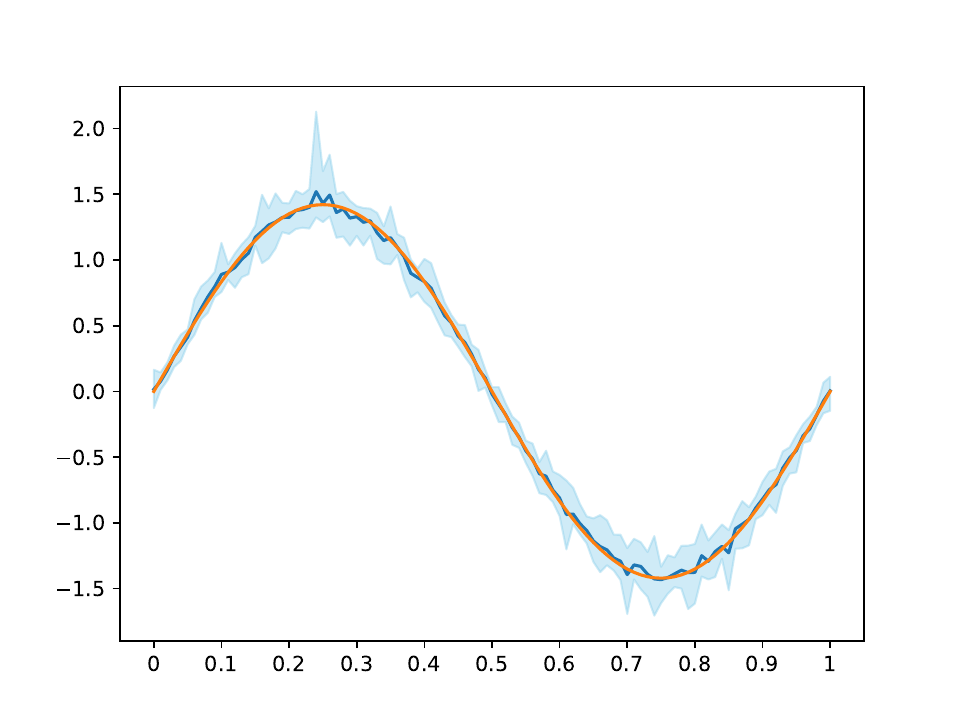}
        \caption{$\gamma=0.4$, $\tau = -0.9$, $\kappa=0.5$.}
    \end{subfigure}

    \medskip 

    \begin{subfigure}{0.325\textwidth}
        \includegraphics[scale=0.35]{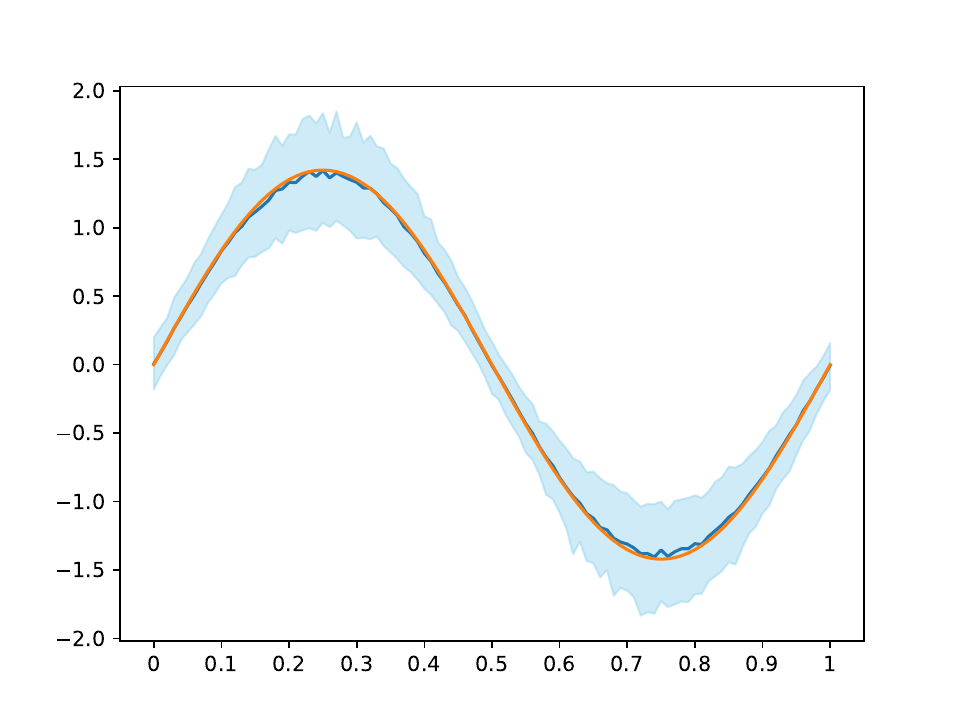}
        \caption{$\gamma=0.7$, $\tau = -0.1$, $\kappa=0.5$.}
    \end{subfigure}
    \hfill
    \begin{subfigure}{0.325\textwidth}
        \includegraphics[scale=0.35]{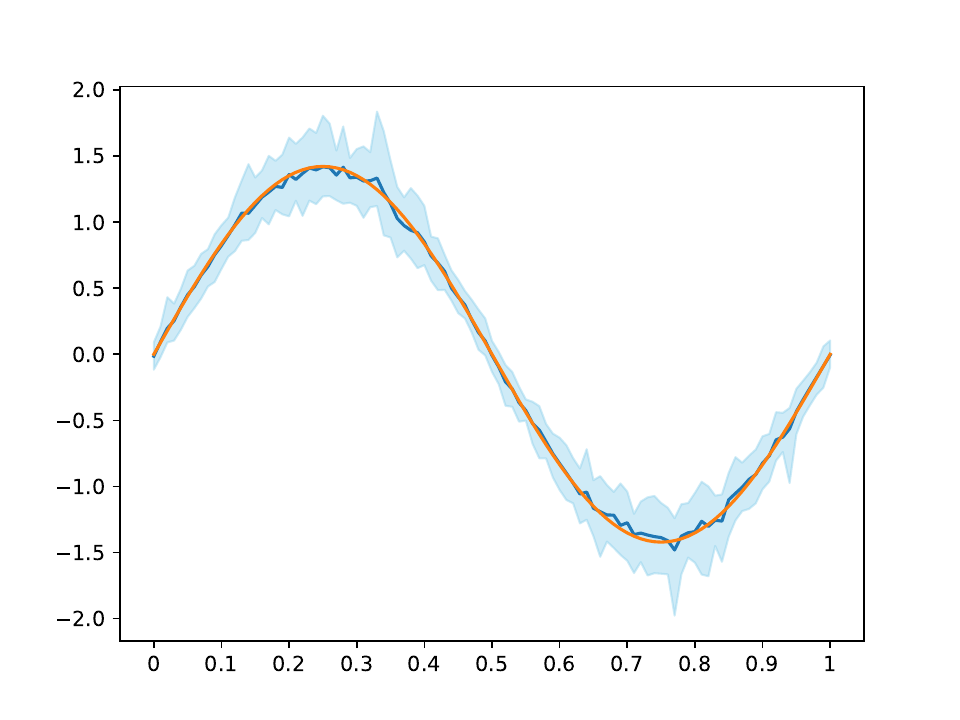}
        \caption{$\gamma=0.7$, $\tau = -0.5$, $\kappa=0.5$.}
    \end{subfigure}
    \hfill
    \begin{subfigure}{0.325\textwidth}
        \includegraphics[scale=0.35]{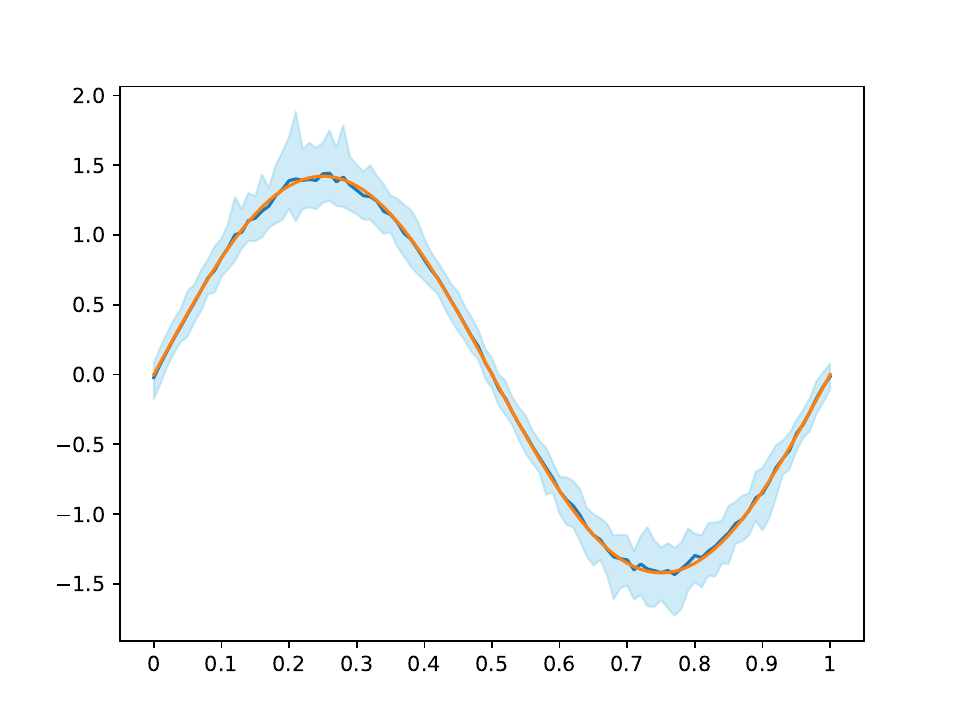}
        \caption{$\gamma=0.7$, $\tau = -0.9$, $\kappa=0.5$.}
    \end{subfigure}
      
\caption{Simulation results on the inverse model with serial dependence of type 'pathological ARMA-resp. + IGARCH-like noise', \emph{i.e.,} $(\phi_{\rm resp},\theta_{\rm resp})= (0.99,-0.98)$ and $(\omega_{\rm noise},\alpha_{\rm noise},\beta_{\rm noise})=(0.05,0.05,0.94)$.}
    \label{fig:patho_ARMA_patho_GARCH}
\end{figure}

\begin{figure}[p]
    \centering
    
    \begin{subfigure}{0.325\textwidth}
        \includegraphics[scale=0.35]{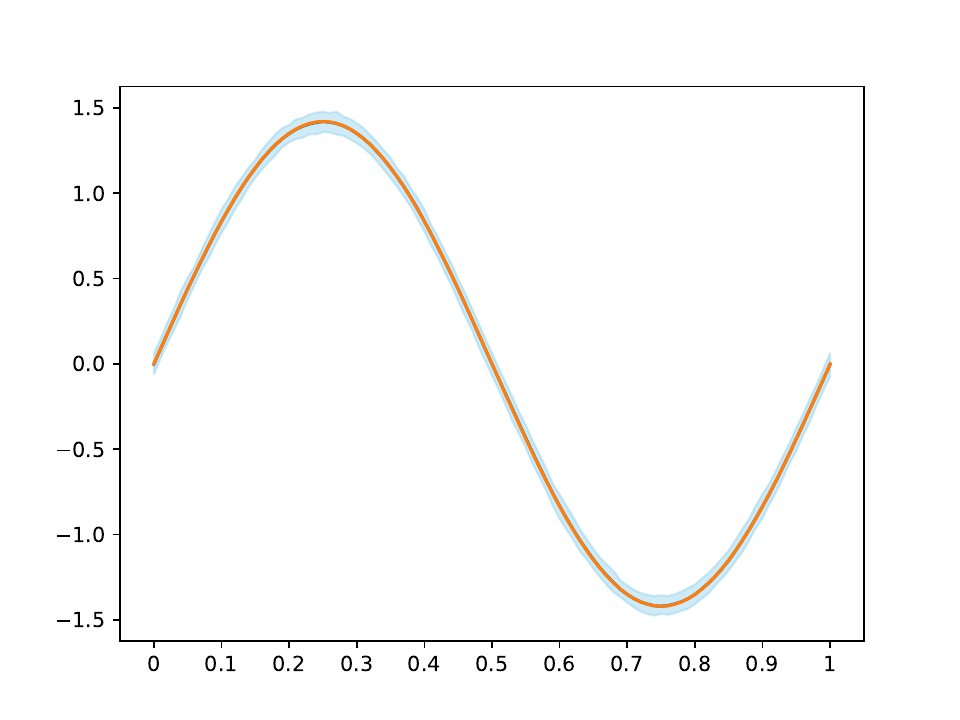}
        \caption{$\gamma=0.1$, $\tau = -0.1$, $\kappa=0.5$.}
    \end{subfigure}
    \hfill
    \begin{subfigure}{0.325\textwidth}
        \includegraphics[scale=0.35]{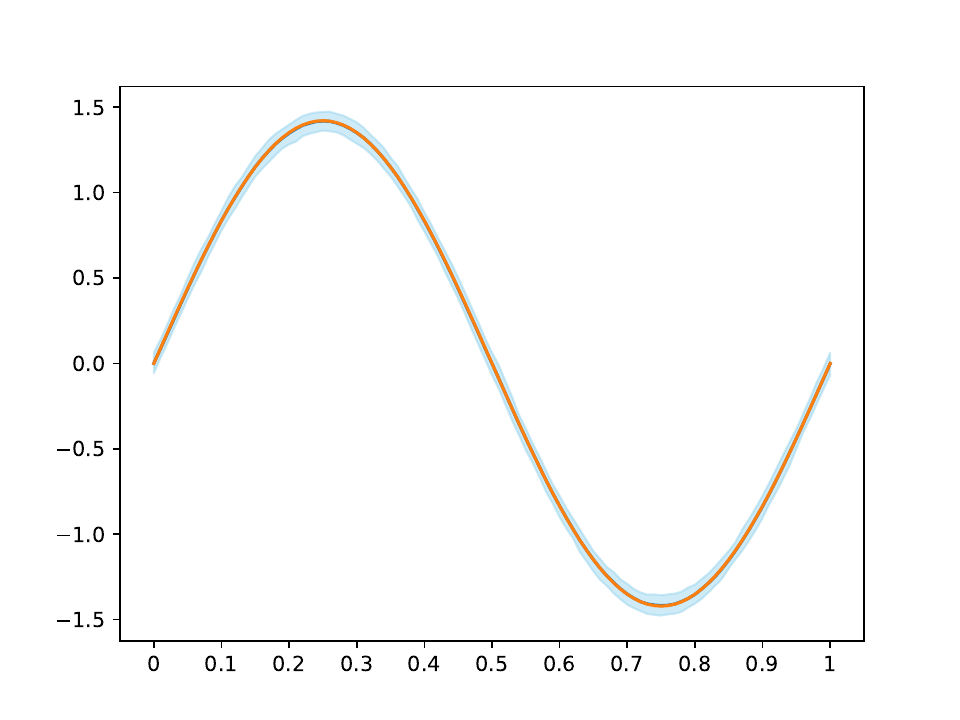}
        \caption{$\gamma=0.1$, $\tau = -0.5$, $\kappa=0.5$.}
    \end{subfigure}
    \hfill
    \begin{subfigure}{0.325\textwidth}
        \includegraphics[scale=0.35]{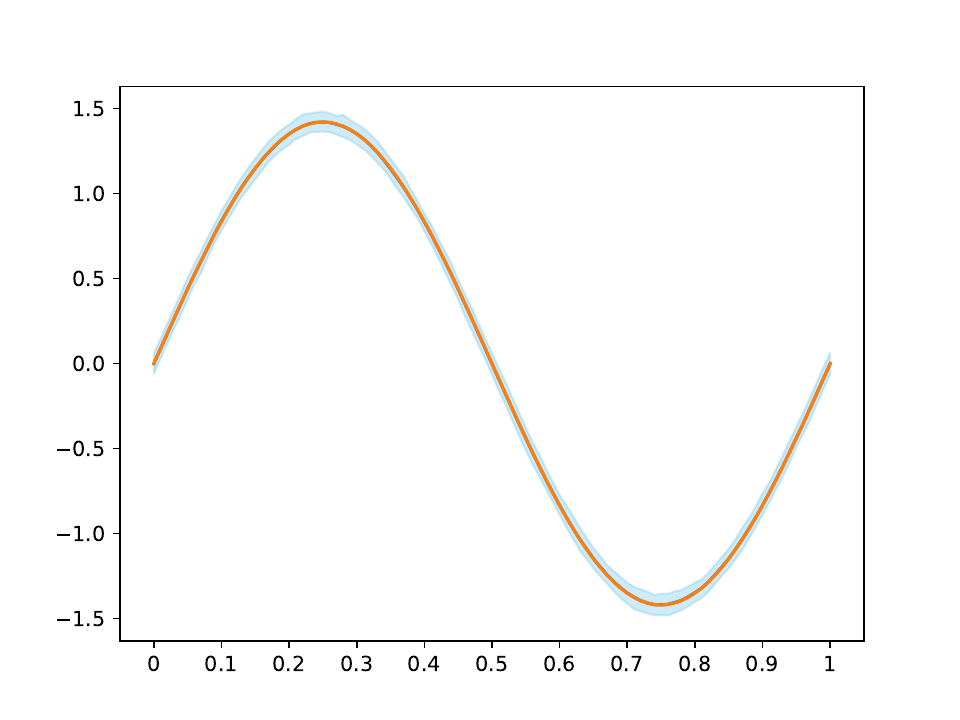}
        \caption{$\gamma=0.1$, $\tau = -0.9$, $\kappa=0.5$.}
    \end{subfigure}

    \medskip 
   \begin{subfigure}{0.325\textwidth}
        \includegraphics[scale=0.35]{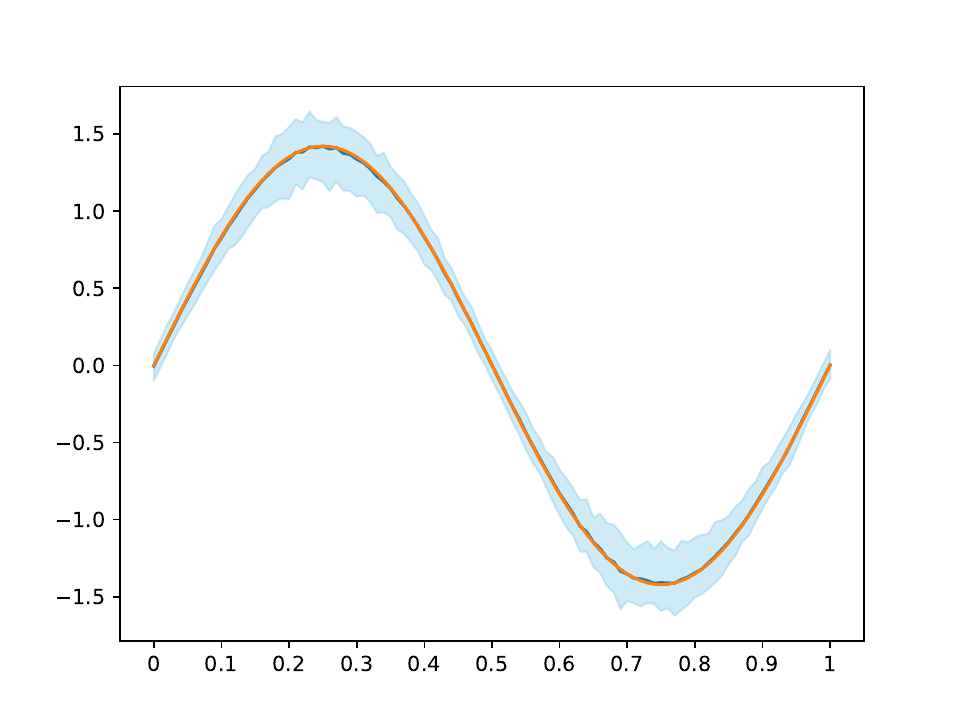}
        \caption{$\gamma=0.4$, $\tau = -0.1$, $\kappa=0.5$.}
    \end{subfigure}
    \hfill
    \begin{subfigure}{0.325\textwidth}
        \includegraphics[scale=0.35]{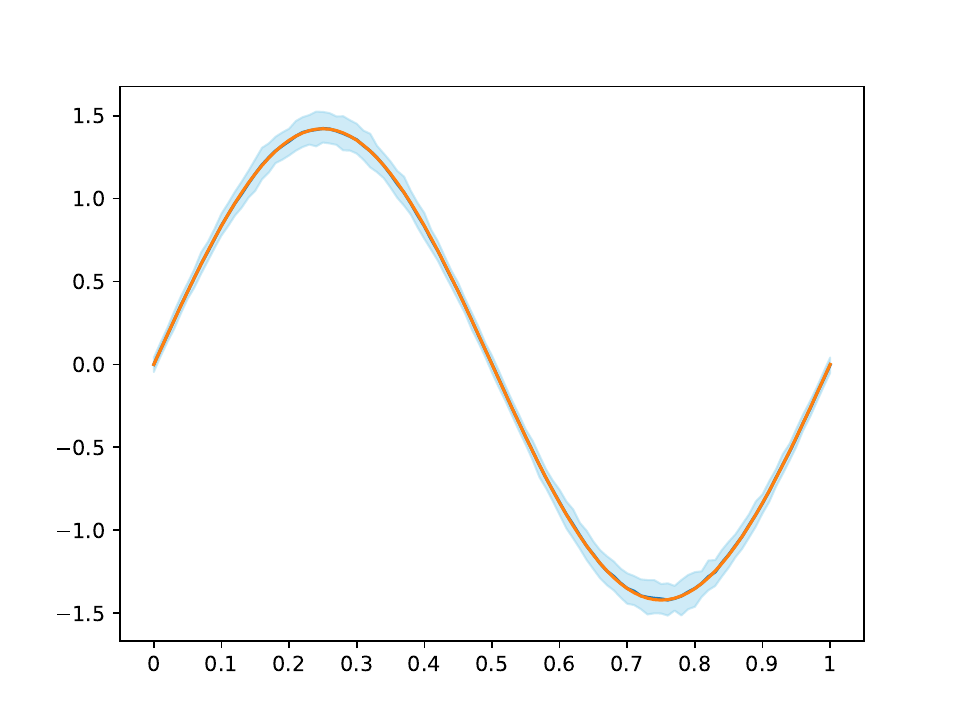}
        \caption{$\gamma=0.4$, $\tau = -0.5$, $\kappa=0.5$.}
    \end{subfigure}
    \hfill
    \begin{subfigure}{0.325\textwidth}
        \includegraphics[scale=0.35]{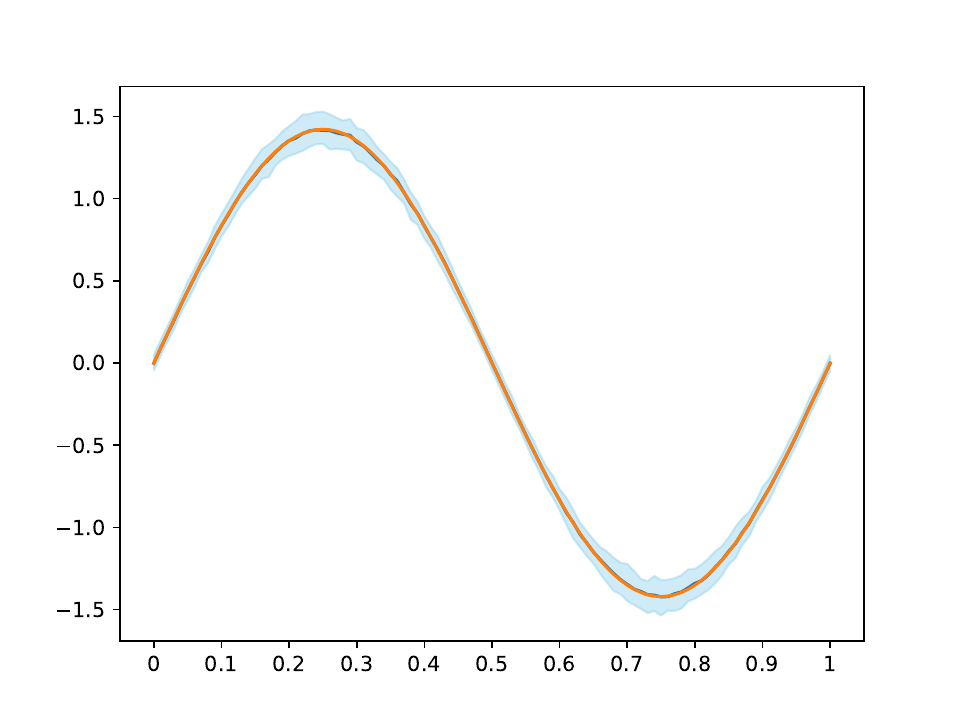}
        \caption{$\gamma=0.4$, $\tau = -0.9$, $\kappa=0.5$.}
    \end{subfigure}

    \medskip 

  \begin{subfigure}{0.325\textwidth}
        \includegraphics[scale=0.35]{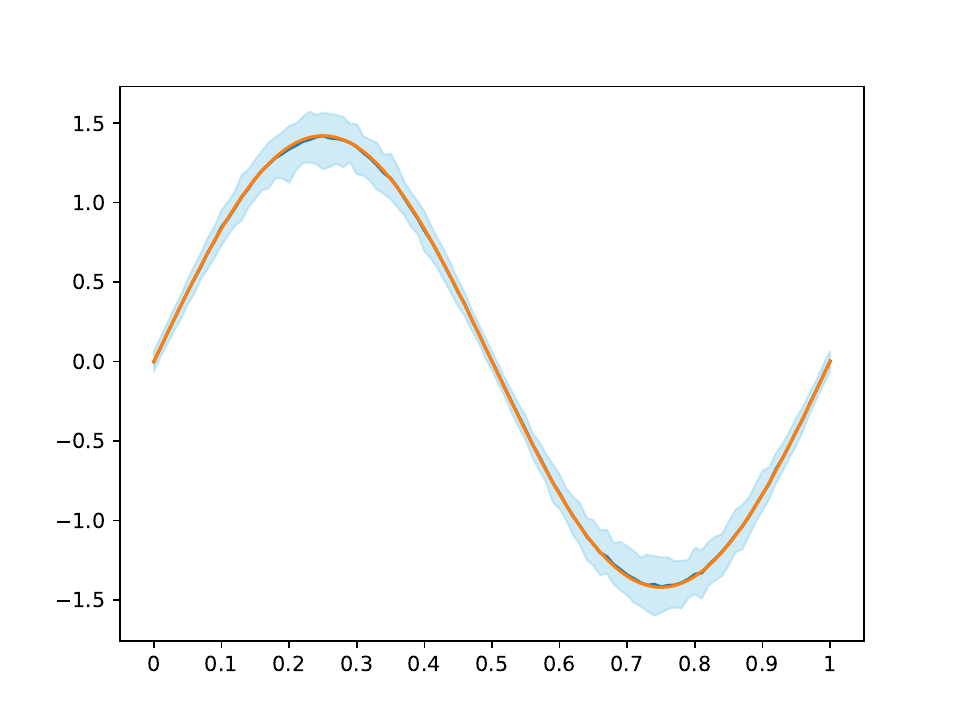}
        \caption{$\gamma=0.5$, $\tau = -0.1$, $\kappa=0.5$.}
    \end{subfigure}
    \hfill
    \begin{subfigure}{0.325\textwidth}
        \includegraphics[scale=0.35]{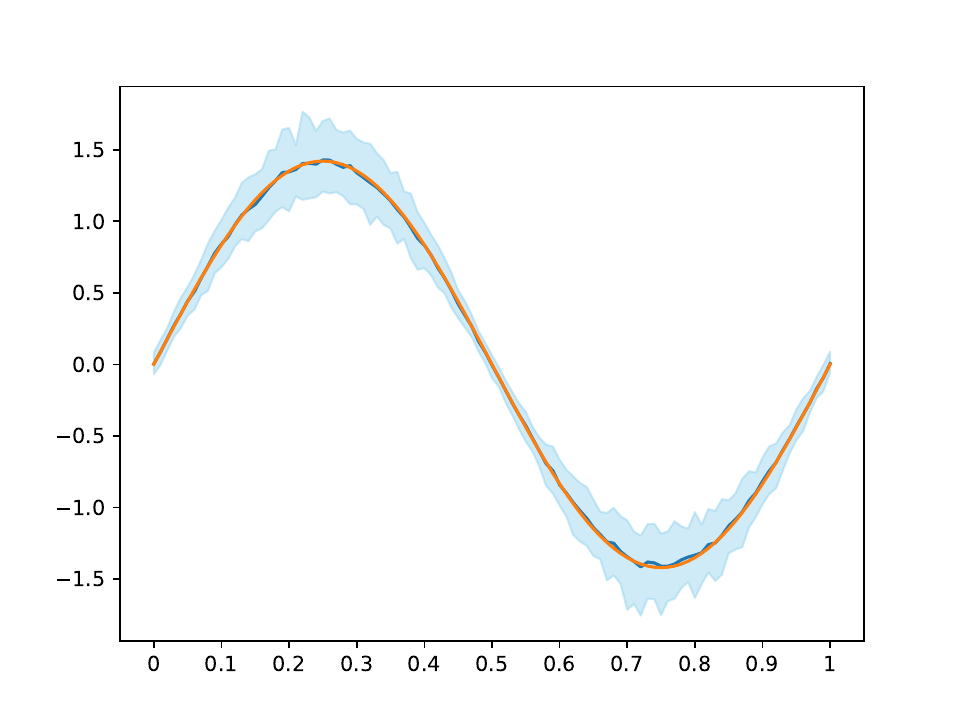}
        \caption{$\gamma=0.5$, $\tau = -0.5$, $\kappa=0.5$.}
    \end{subfigure}
    \hfill
    \begin{subfigure}{0.325\textwidth}
        \includegraphics[scale=0.35]{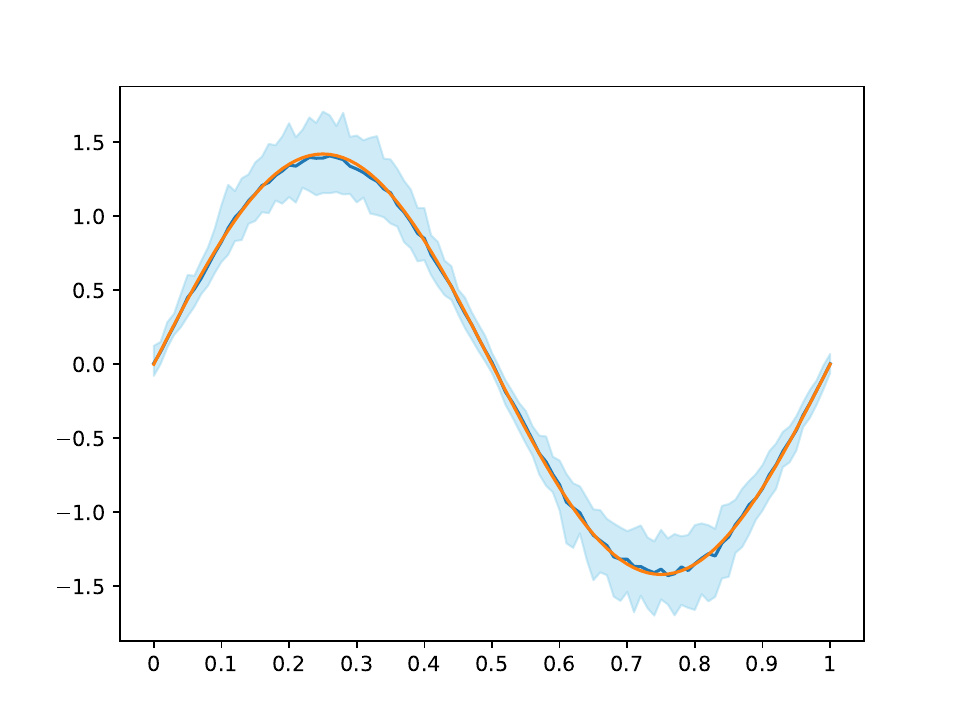}
        \caption{$\gamma=0.5$, $\tau = -0.9$, $\kappa=0.5$.}
    \end{subfigure}
      
\caption{Simulation results on the inverse model with serial dependence of type standard GARCH-resp and standard ARMA-noise, \emph{i.e.,}  $(\omega_{\rm resp},\alpha_{\rm resp},\beta_{\rm resp})=(0.05,0.1,0.85)$ and $(\phi_{\rm noise},\theta_{\rm noise})= (0.8,-0.3)$.}
    \label{fig:st_GARCH_st_ARMA}
\end{figure}

\begin{figure}[p]
    \centering
    
    \begin{subfigure}{0.325\textwidth}
        \includegraphics[scale=0.35]{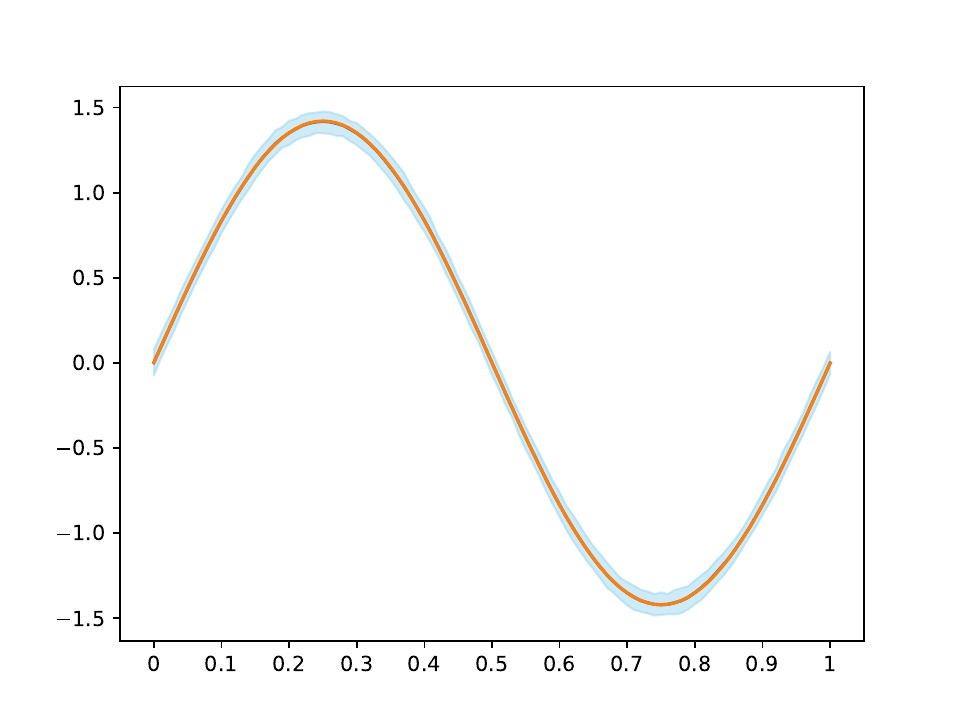}
        \caption{$\gamma=0.1$, $\tau = -0.1$, $\kappa=0.5$.}
    \end{subfigure}
    \hfill
    \begin{subfigure}{0.325\textwidth}
        \includegraphics[scale=0.35]{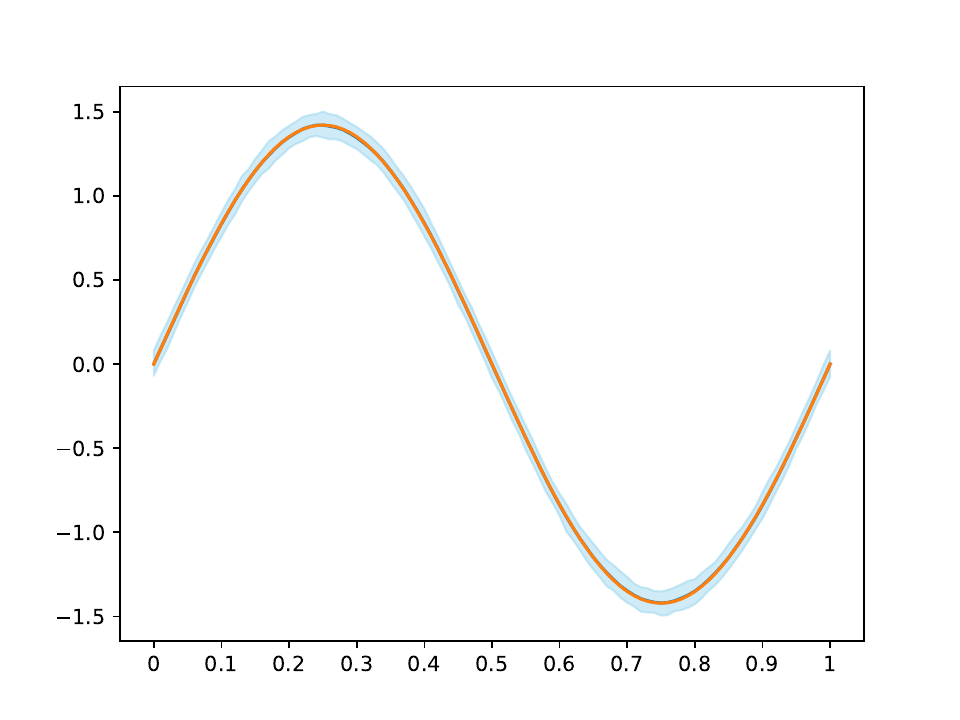}
        \caption{$\gamma=0.1$, $\tau = -0.5$, $\kappa=0.5$.}
    \end{subfigure}
    \hfill
    \begin{subfigure}{0.325\textwidth}
        \includegraphics[scale=0.35]{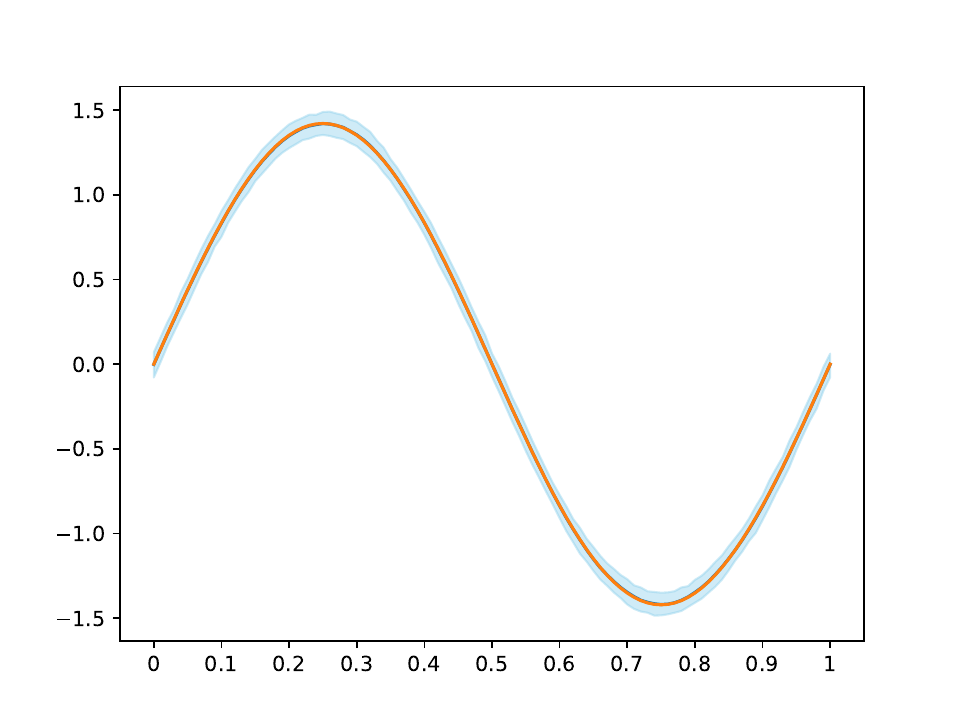}
        \caption{$\gamma=0.1$, $\tau = -0.9$, $\kappa=0.5$.}
    \end{subfigure}

    \medskip 
   \begin{subfigure}{0.325\textwidth}
        \includegraphics[scale=0.35]{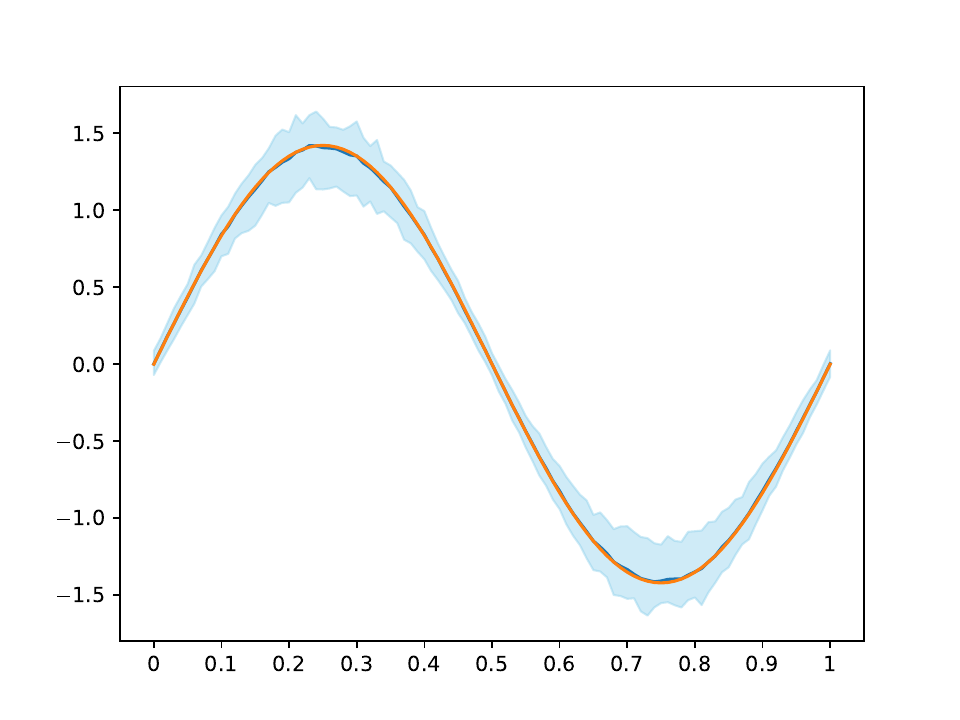}
        \caption{$\gamma=0.4$, $\tau = -0.1$, $\kappa=0.5$.}
    \end{subfigure}
    \hfill
    \begin{subfigure}{0.325\textwidth}
        \includegraphics[scale=0.35]{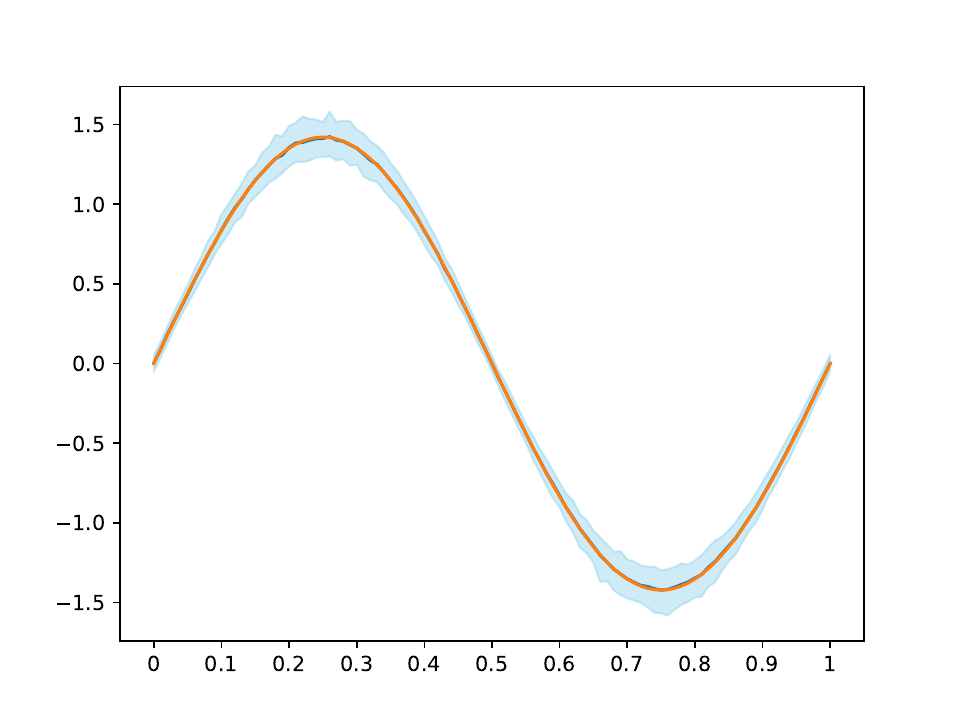}
        \caption{$\gamma=0.4$, $\tau = -0.5$, $\kappa=0.5$.}
    \end{subfigure}
    \hfill
    \begin{subfigure}{0.325\textwidth}
        \includegraphics[scale=0.35]{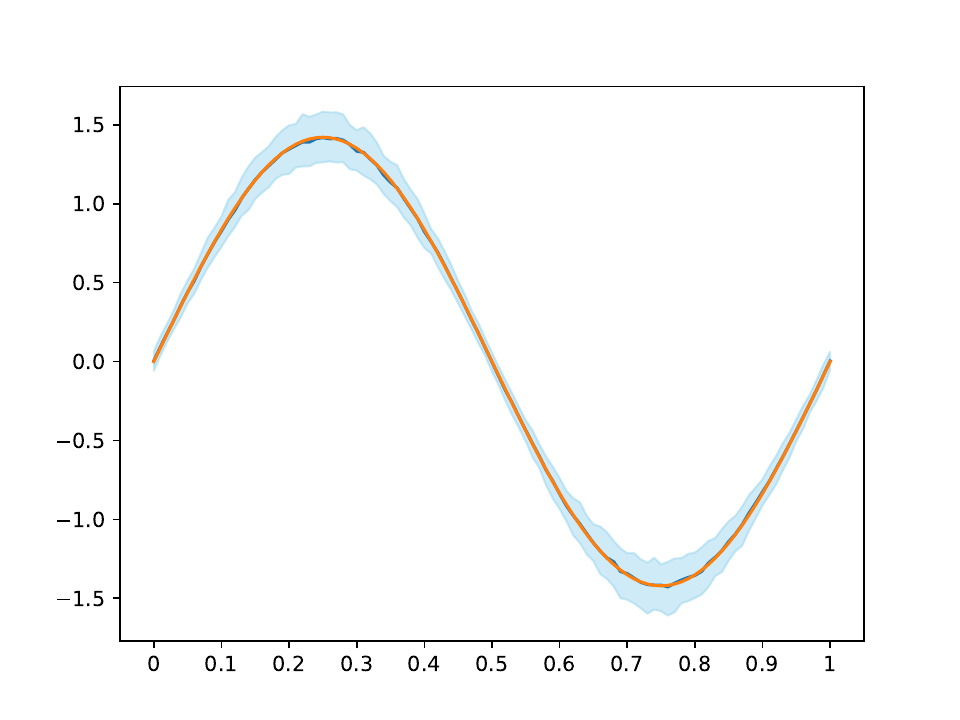}
        \caption{$\gamma=0.4$, $\tau = -0.9$, $\kappa=0.5$.}
    \end{subfigure}

    \medskip 

  \begin{subfigure}{0.325\textwidth}
        \includegraphics[scale=0.35]{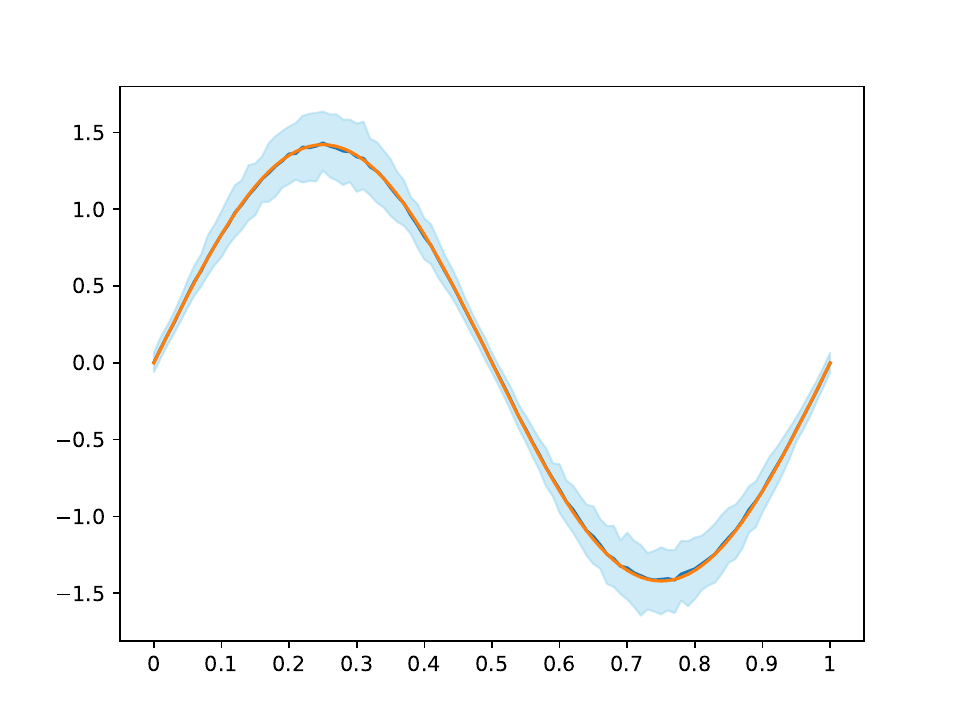}
        \caption{$\gamma=0.5$, $\tau = -0.1$, $\kappa=0.5$.}
    \end{subfigure}
    \hfill
    \begin{subfigure}{0.325\textwidth}
        \includegraphics[scale=0.35]{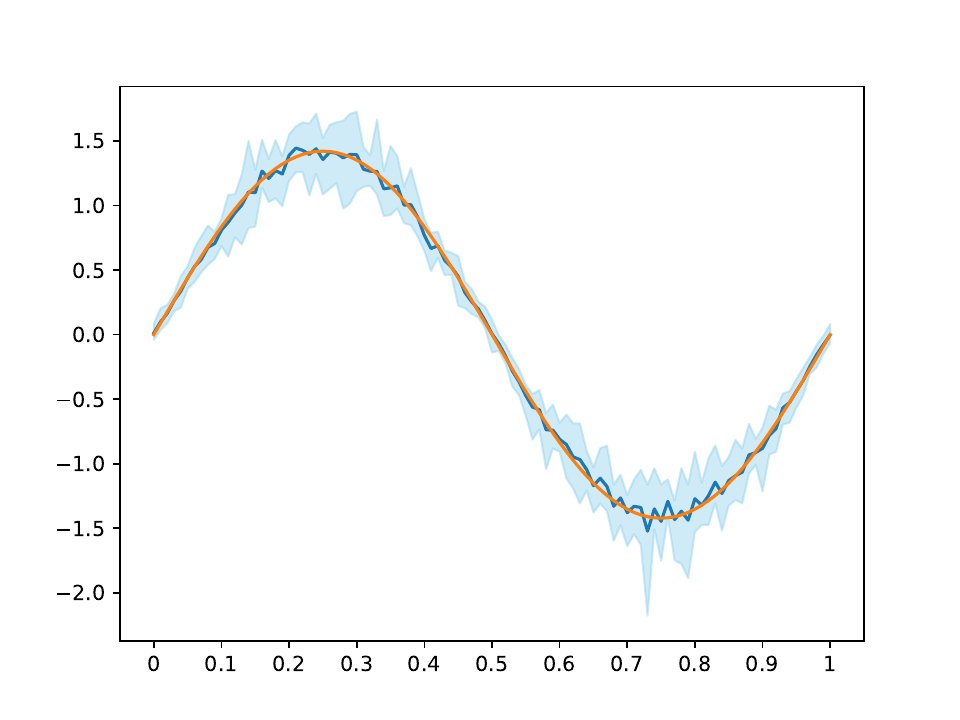}
        \caption{$\gamma=0.5$, $\tau = -0.5$, $\kappa=0.5$.}
    \end{subfigure}
    \hfill
    \begin{subfigure}{0.325\textwidth}
        \includegraphics[scale=0.35]{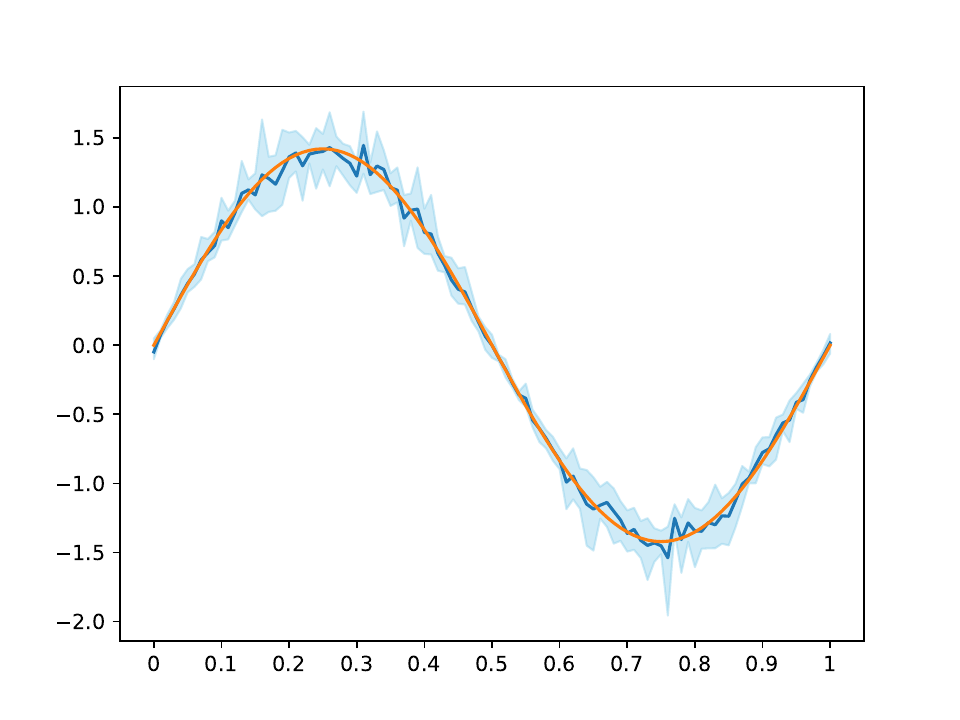}
        \caption{$\gamma=0.5$, $\tau = -0.9$, $\kappa=0.5$.}
    \end{subfigure}
      
\caption{Simulation results on the inverse model with serial dependence of type IGARCH-like response and standard ARMA-noise, \emph{i.e.,}  $(\omega_{\rm resp},\alpha_{\rm resp},\beta_{\rm resp})=(0.05,0.05,0.94)$ and $(\phi_{\rm noise},\theta_{\rm noise})= (0.8,-0.3)$.}
    \label{fig:patho_GARCH_st_ARMA}
\end{figure}

\begin{figure}[p]
    \centering
    
    \begin{subfigure}{0.325\textwidth}
        \includegraphics[scale=0.35]{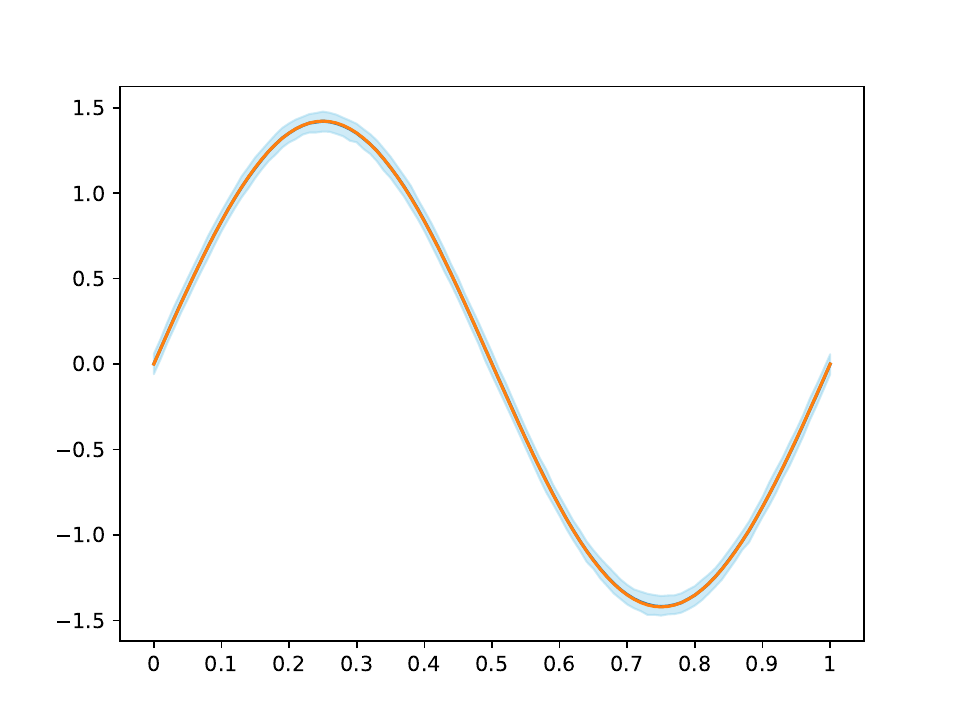}
        \caption{$\gamma=0.1$, $\tau = -0.1$, $\kappa=0.5$.}
    \end{subfigure}
    \hfill
    \begin{subfigure}{0.325\textwidth}
        \includegraphics[scale=0.35]{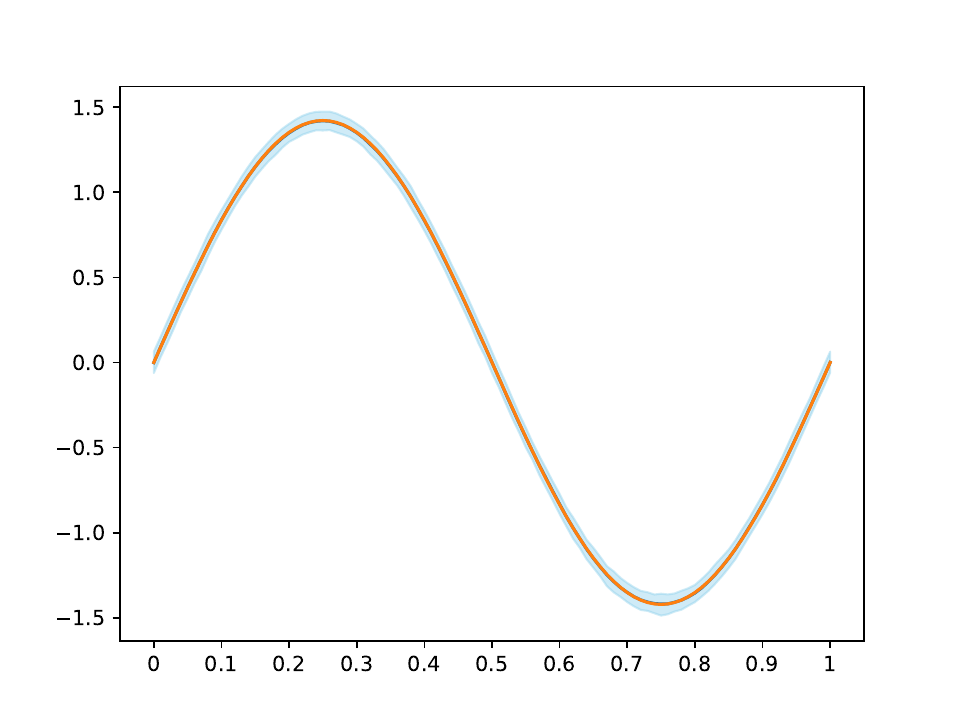}
        \caption{$\gamma=0.1$, $\tau = -0.5$, $\kappa=0.5$.}
    \end{subfigure}
    \hfill
    \begin{subfigure}{0.325\textwidth}
        \includegraphics[scale=0.35]{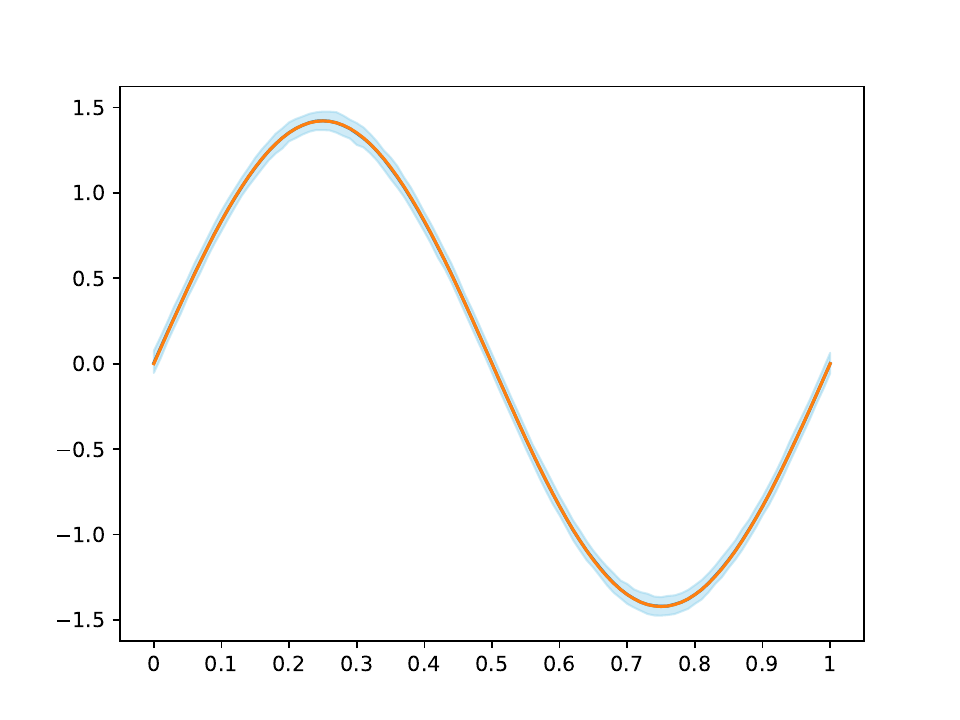}
        \caption{$\gamma=0.1$, $\tau = -0.9$, $\kappa=0.5$.}
    \end{subfigure}

    \medskip 
   \begin{subfigure}{0.325\textwidth}
        \includegraphics[scale=0.35]{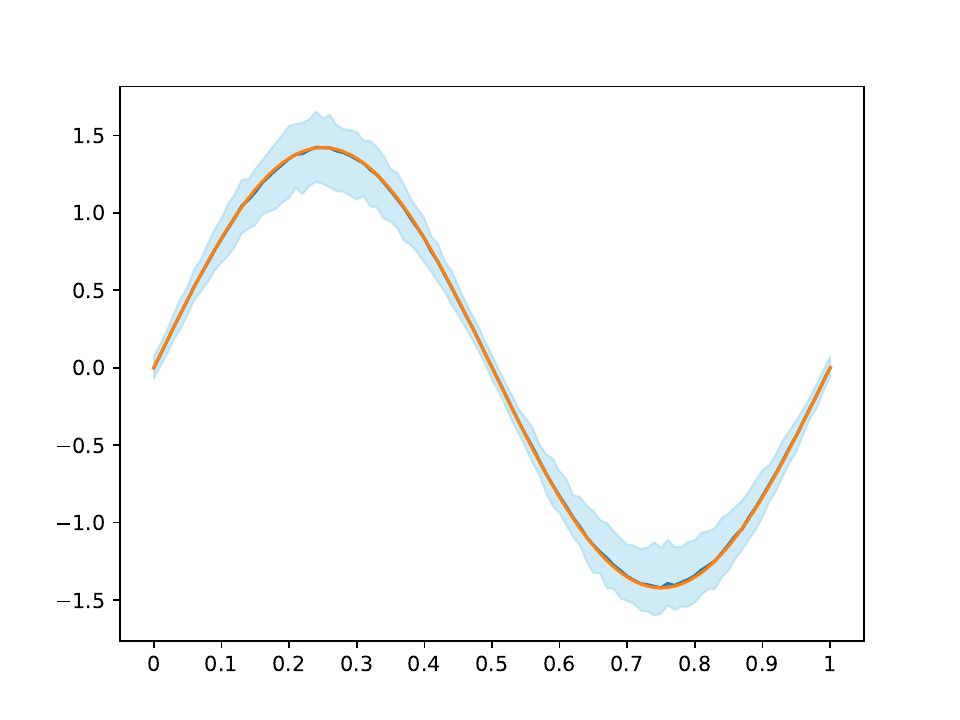}
        \caption{$\gamma=0.4$, $\tau = -0.1$, $\kappa=0.5$.}
    \end{subfigure}
    \hfill
    \begin{subfigure}{0.325\textwidth}
        \includegraphics[scale=0.35]{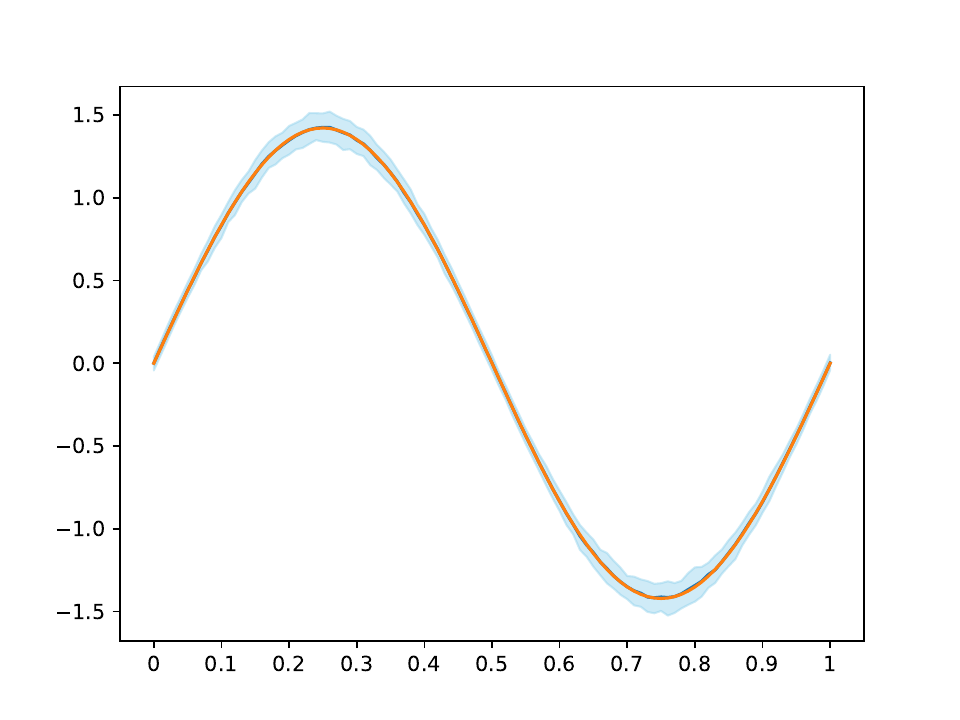}
        \caption{$\gamma=0.4$, $\tau = -0.5$, $\kappa=0.5$.}
    \end{subfigure}
    \hfill
    \begin{subfigure}{0.325\textwidth}
        \includegraphics[scale=0.35]{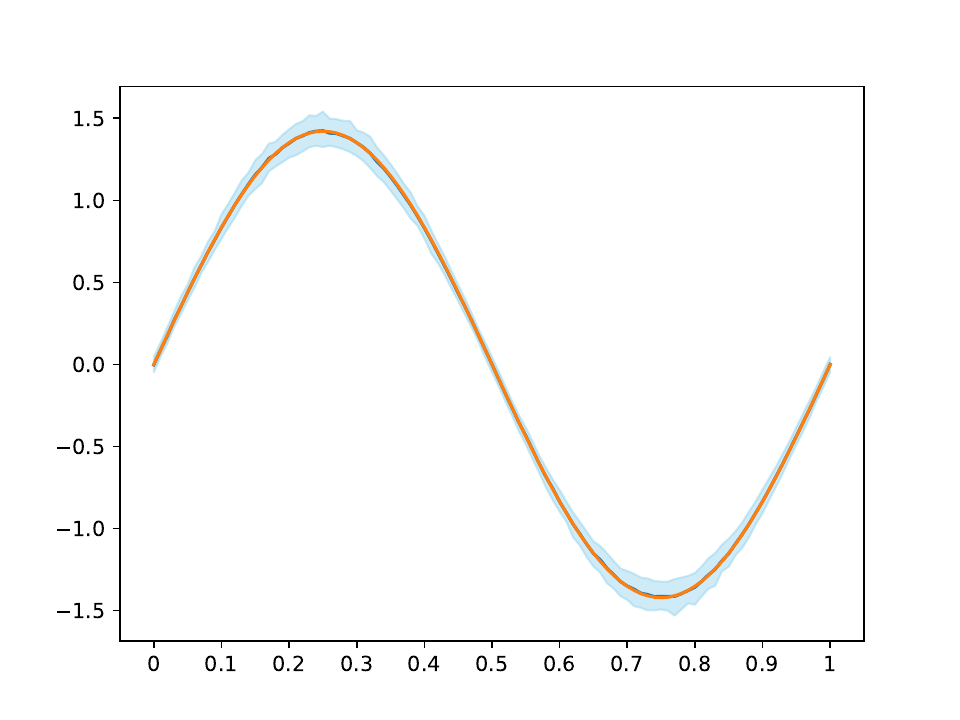}
        \caption{$\gamma=0.4$, $\tau = -0.9$, $\kappa=0.5$.}
    \end{subfigure}

    \medskip 

  \begin{subfigure}{0.325\textwidth}
        \includegraphics[scale=0.35]{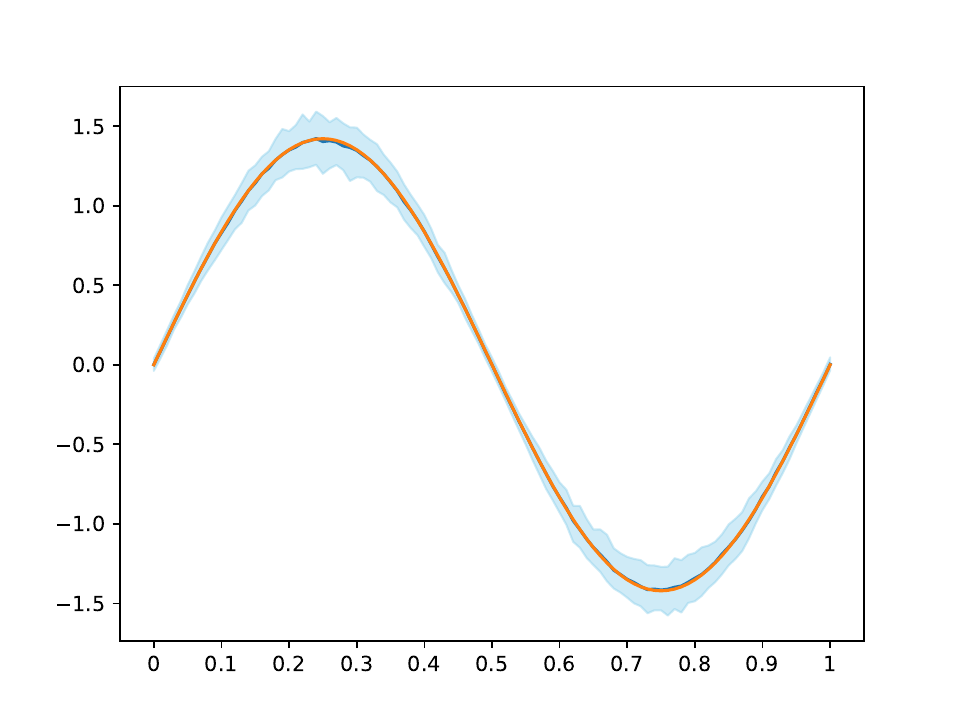}
        \caption{$\gamma=0.5$, $\tau = -0.1$, $\kappa=0.5$.}
    \end{subfigure}
    \hfill
    \begin{subfigure}{0.325\textwidth}
        \includegraphics[scale=0.35]{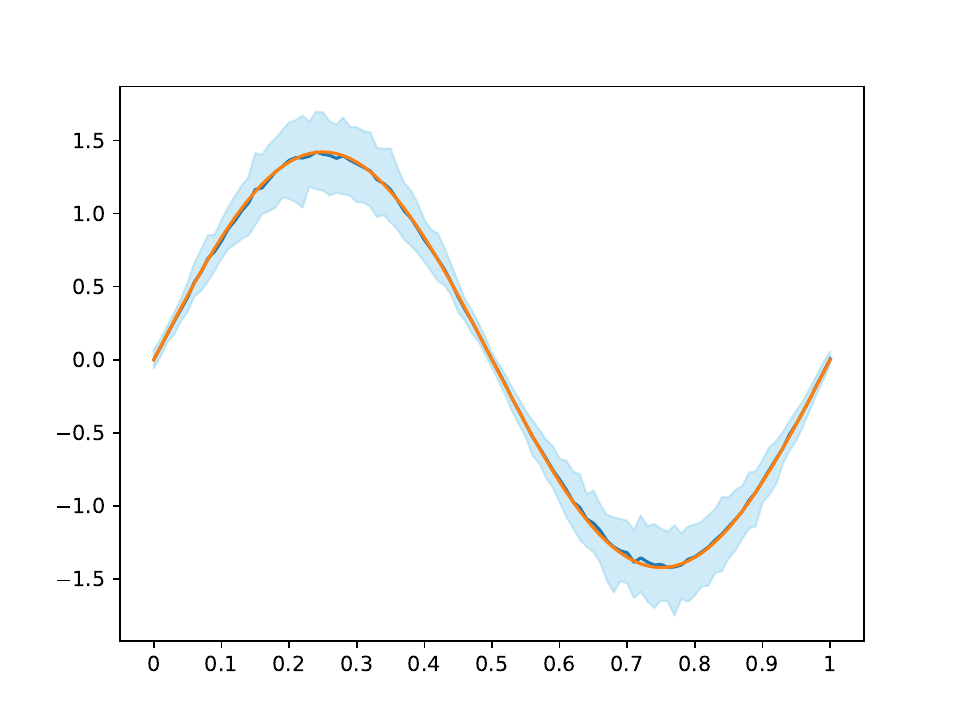}
        \caption{$\gamma=0.5$, $\tau = -0.5$, $\kappa=0.5$.}
    \end{subfigure}
    \hfill
    \begin{subfigure}{0.325\textwidth}
        \includegraphics[scale=0.35]{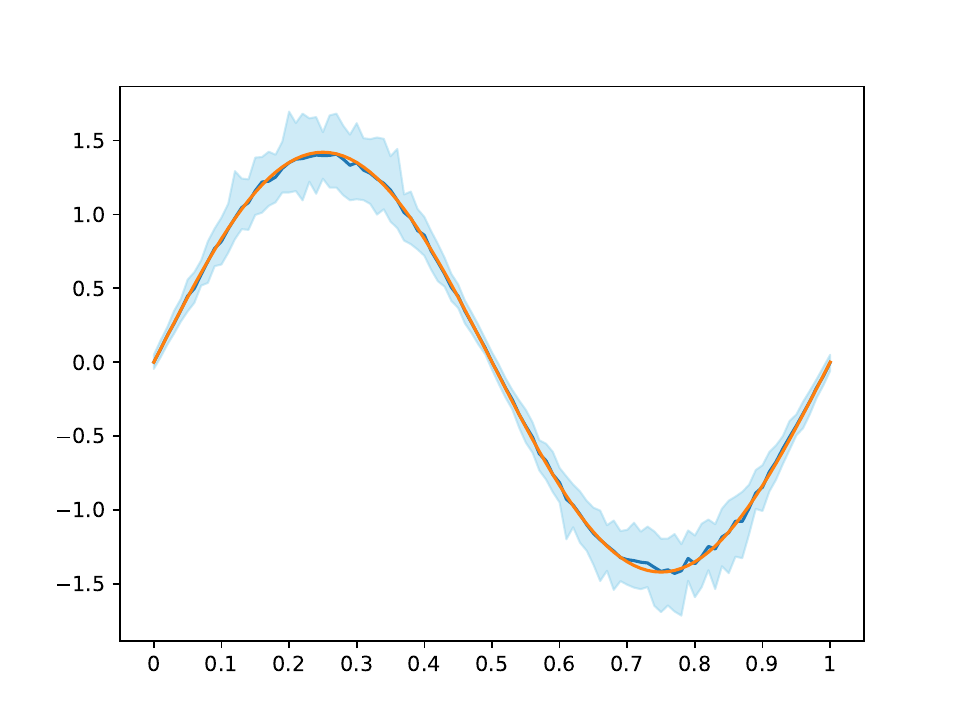}
        \caption{$\gamma=0.5$, $\tau = -0.9$, $\kappa=0.5$.}
    \end{subfigure}
      
\caption{Simulation results on the inverse model with serial dependence of type standard GARCH-response and pathological ARMA-noise, \emph{i.e.,}  $(\omega_{\rm resp},\alpha_{\rm resp},\beta_{\rm resp})=(0.05,0.1,0.85)$ and $(\phi_{\rm noise},\theta_{\rm noise})= (0.99,-0.98)$.}
    \label{fig:st_GARCH_patho_ARMA}
\end{figure}

\begin{figure}[p]
    \centering
    
    \begin{subfigure}{0.325\textwidth}
        \includegraphics[scale=0.35]{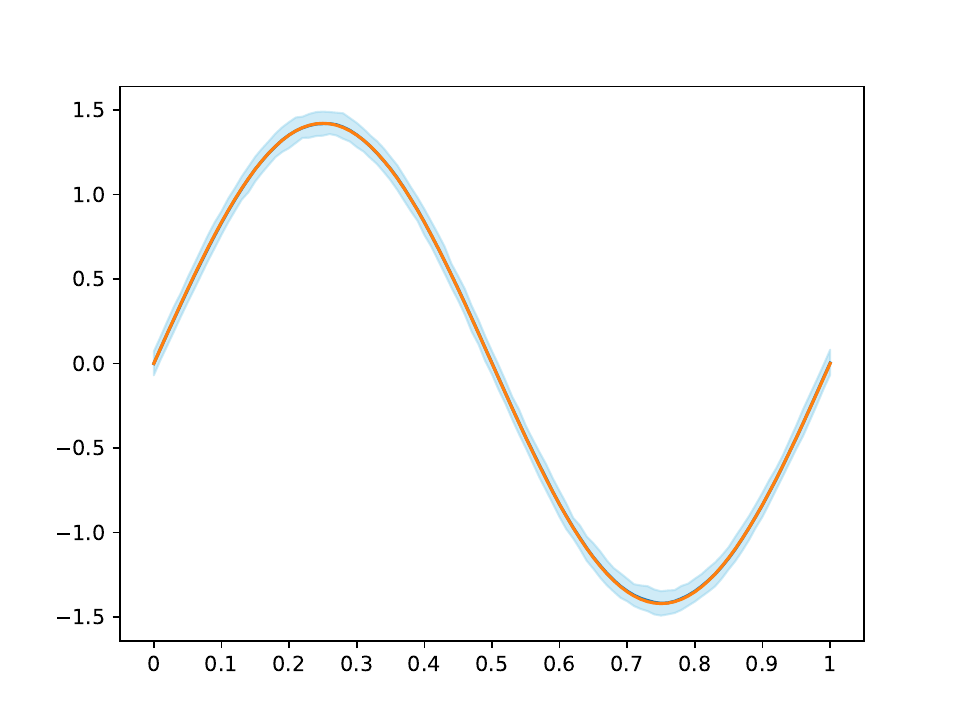}
        \caption{$\gamma=0.1$, $\tau = -0.1$, $\kappa=0.5$.}
    \end{subfigure}
    \hfill
    \begin{subfigure}{0.325\textwidth}
        \includegraphics[scale=0.35]{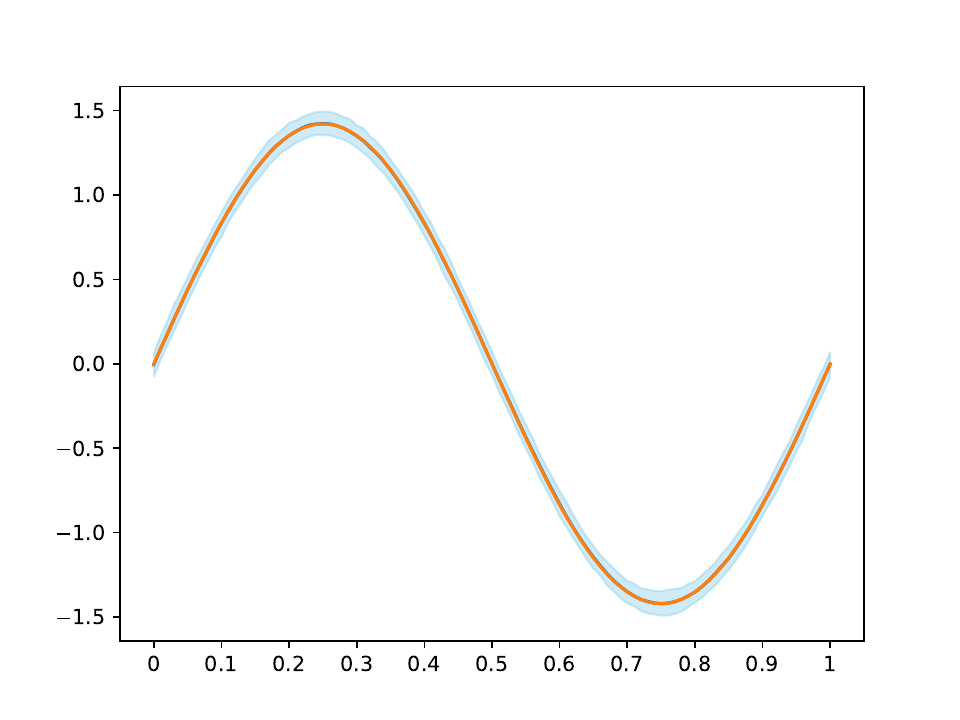}
        \caption{$\gamma=0.1$, $\tau = -0.5$, $\kappa=0.5$.}
    \end{subfigure}
    \hfill
    \begin{subfigure}{0.325\textwidth}
        \includegraphics[scale=0.35]{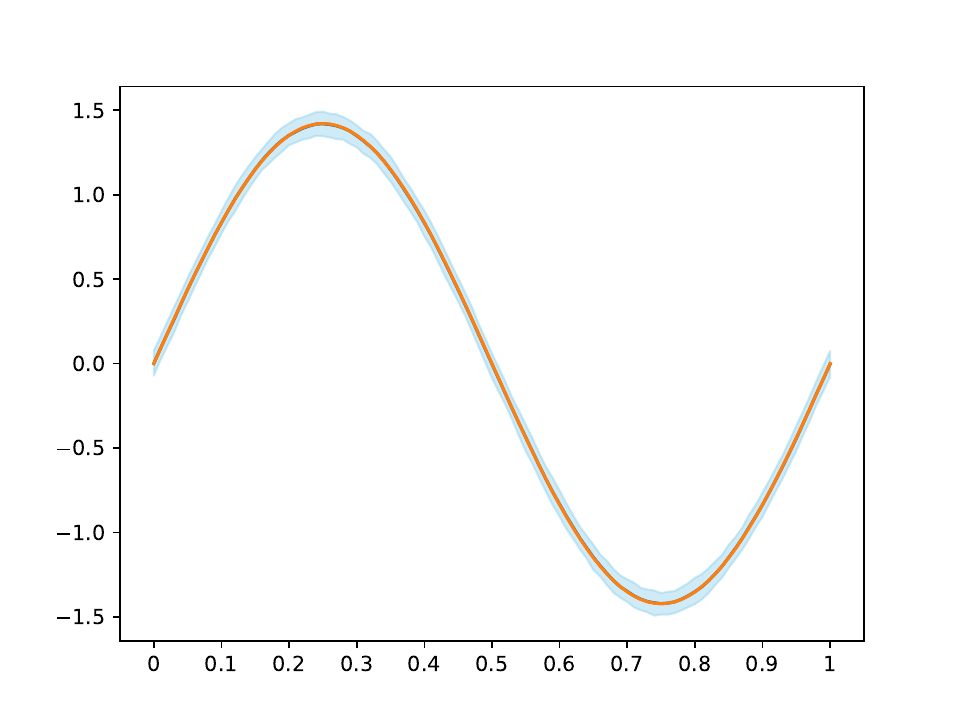}
        \caption{$\gamma=0.1$, $\tau = -0.9$, $\kappa=0.5$.}
    \end{subfigure}

    \medskip 
   \begin{subfigure}{0.325\textwidth}
        \includegraphics[scale=0.35]{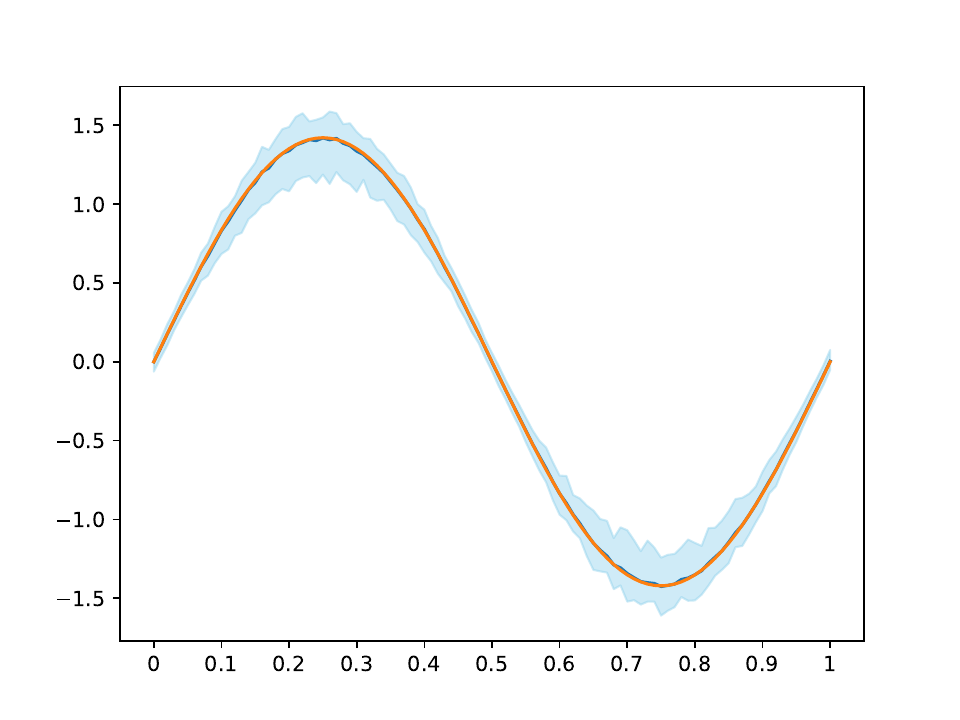}
        \caption{$\gamma=0.4$, $\tau = -0.1$, $\kappa=0.5$.}
    \end{subfigure}
    \hfill
    \begin{subfigure}{0.325\textwidth}
        \includegraphics[scale=0.35]{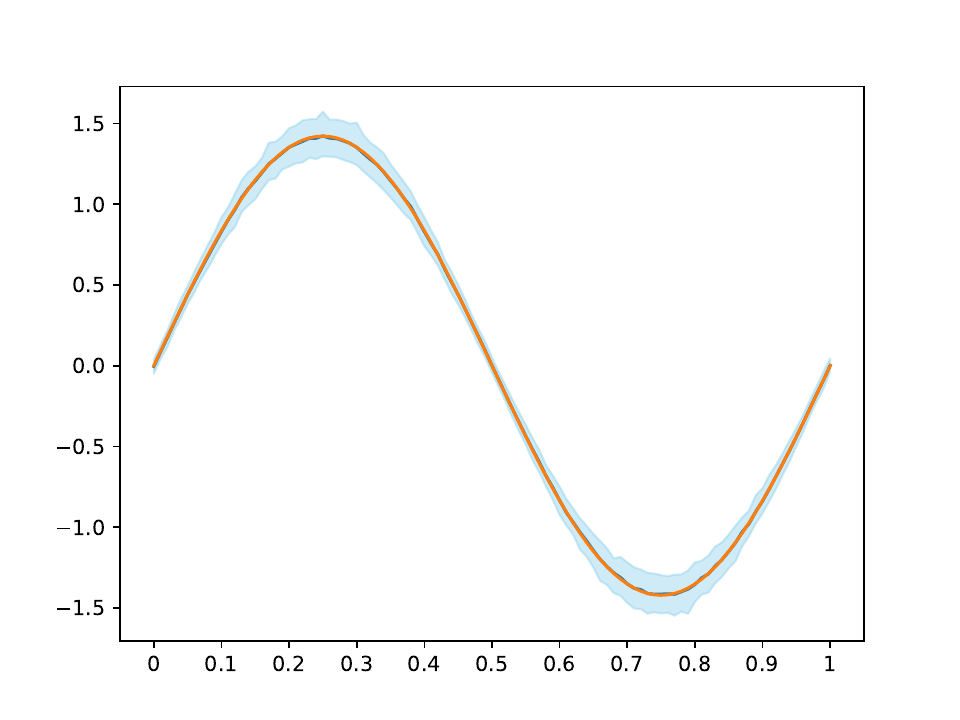}
        \caption{$\gamma=0.4$, $\tau = -0.5$, $\kappa=0.5$.}
    \end{subfigure}
    \hfill
    \begin{subfigure}{0.325\textwidth}
        \includegraphics[scale=0.35]{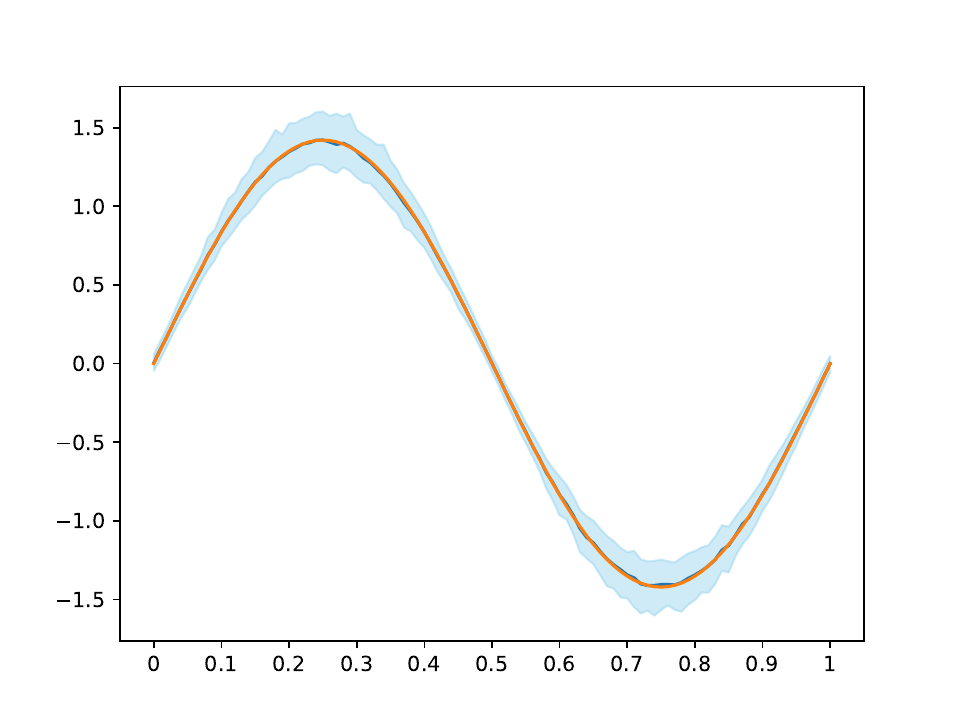}
        \caption{$\gamma=0.4$, $\tau = -0.9$, $\kappa=0.5$.}
    \end{subfigure}

    \medskip 

  \begin{subfigure}{0.325\textwidth}
        \includegraphics[scale=0.35]{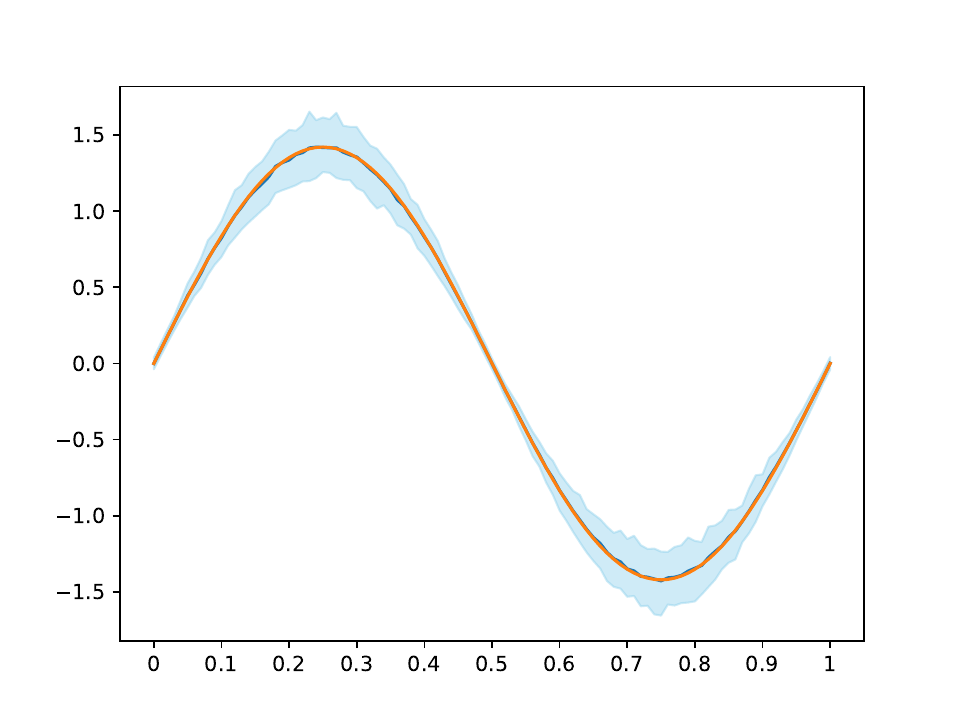}
        \caption{$\gamma=0.5$, $\tau = -0.1$, $\kappa=0.5$.}
    \end{subfigure}
    \hfill
    \begin{subfigure}{0.325\textwidth}
        \includegraphics[scale=0.35]{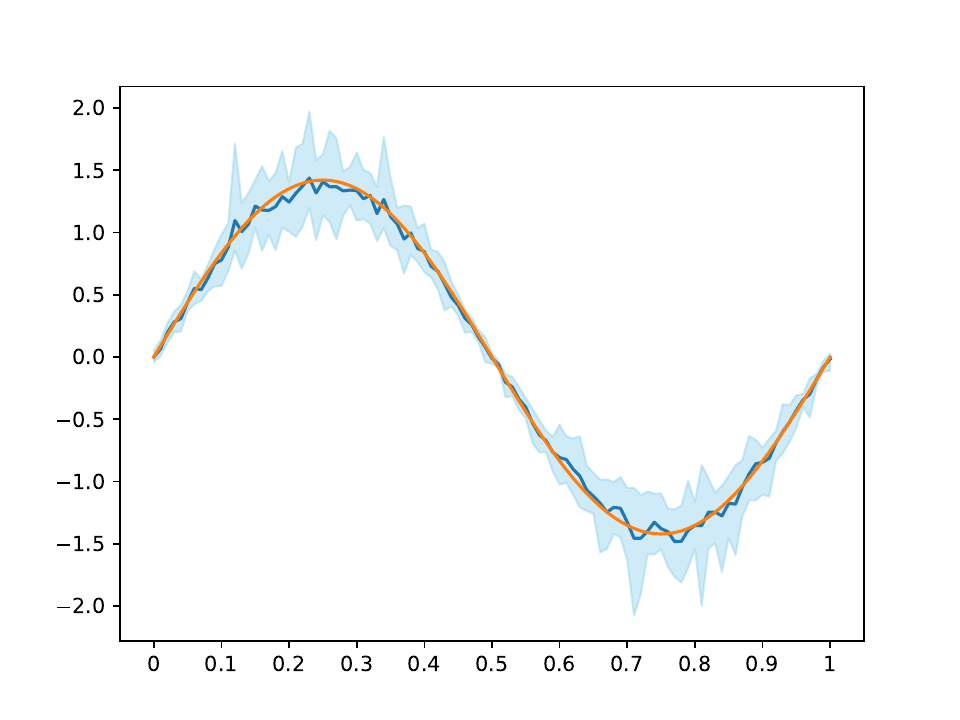}
        \caption{$\gamma=0.5$, $\tau = -0.5$, $\kappa=0.5$.}
    \end{subfigure}
    \hfill
    \begin{subfigure}{0.325\textwidth}
        \includegraphics[scale=0.35]{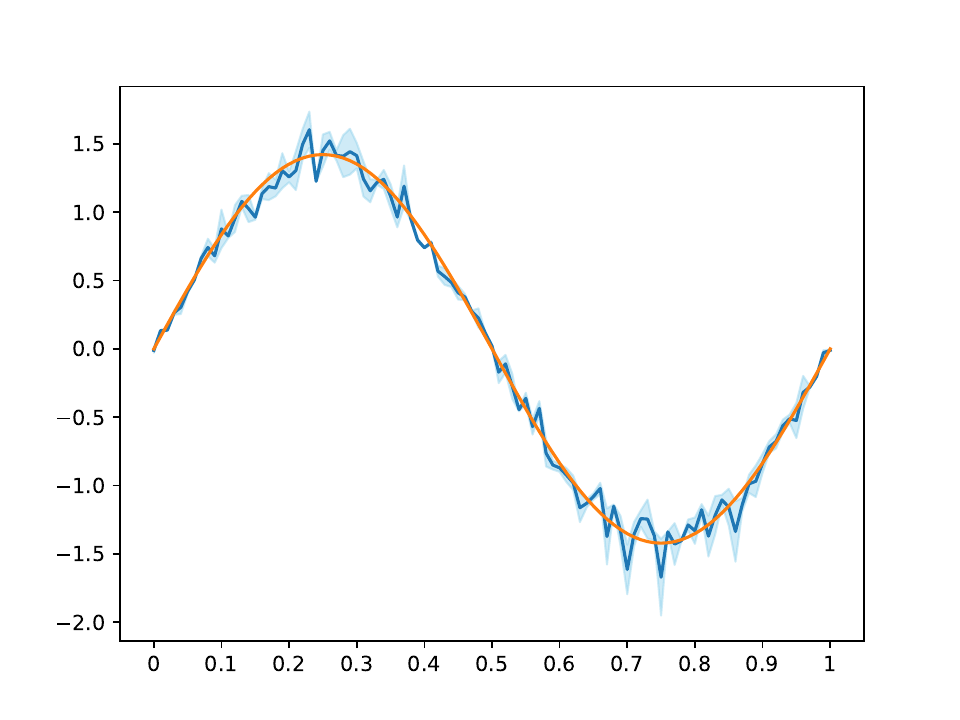}
        \caption{$\gamma=0.5$, $\tau = -0.9$, $\kappa=0.5$.}
    \end{subfigure}
      
\caption{Simulation results on the inverse model with serial dependence of type IGARCH-like response and pathological ARMA-noise, \emph{i.e.,}  $(\omega_{\rm resp},\alpha_{\rm resp},\beta_{\rm resp})=(0.05,0.05,0.94)$ and $(\phi_{\rm noise},\theta_{\rm noise})= (0.99,-0.98)$.}
    \label{fig:patho_GARCH_patho_ARMA}
\end{figure}

\begin{figure}[p]
    \centering
        \begin{subfigure}{0.325\textwidth}
        \includegraphics[scale=0.35]{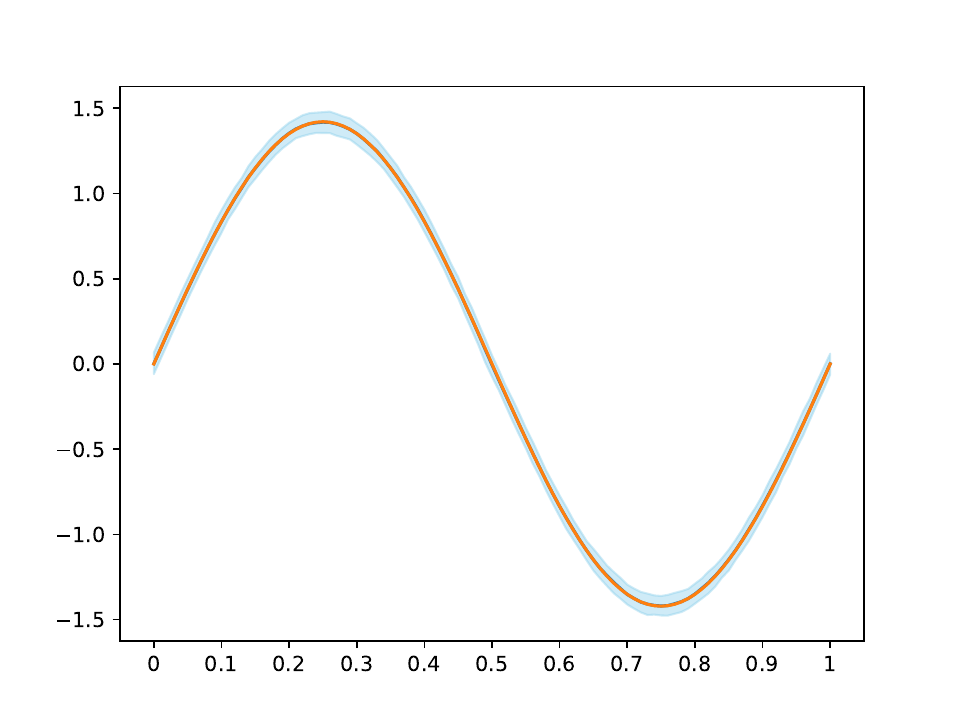}
        \caption{$\gamma=0.1$, $\tau = -0.1$, $\kappa=0.5$.}
    \end{subfigure}
    \hfill
    \begin{subfigure}{0.325\textwidth}
        \includegraphics[scale=0.35]{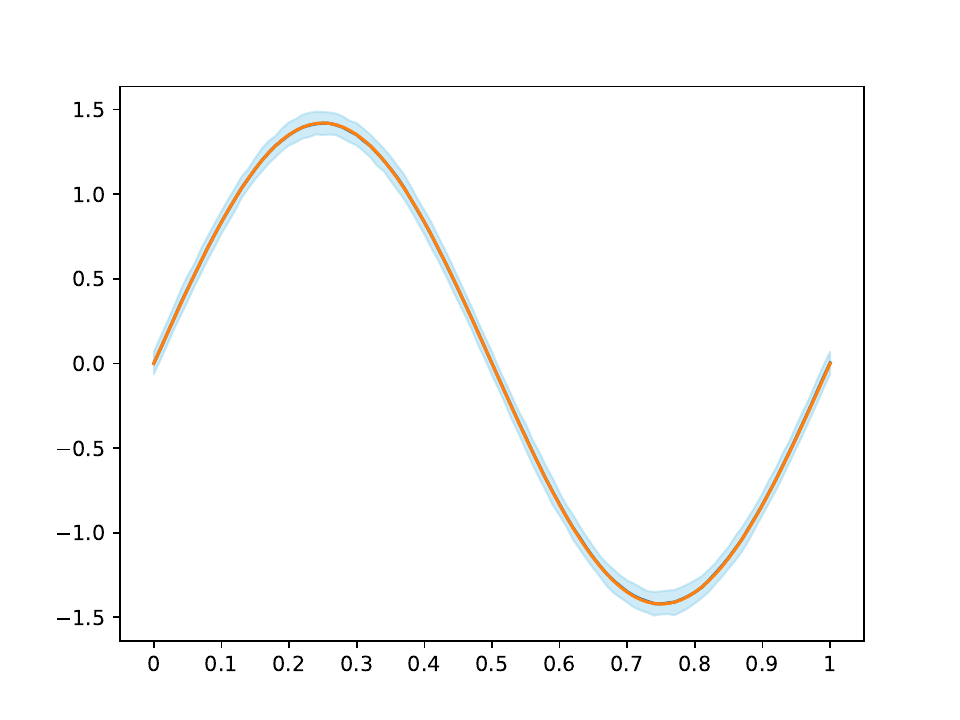}
        \caption{$\gamma=0.1$, $\tau = -0.5$, $\kappa=0.5$.}
    \end{subfigure}
    \hfill
    \begin{subfigure}{0.325\textwidth}
        \includegraphics[scale=0.35]{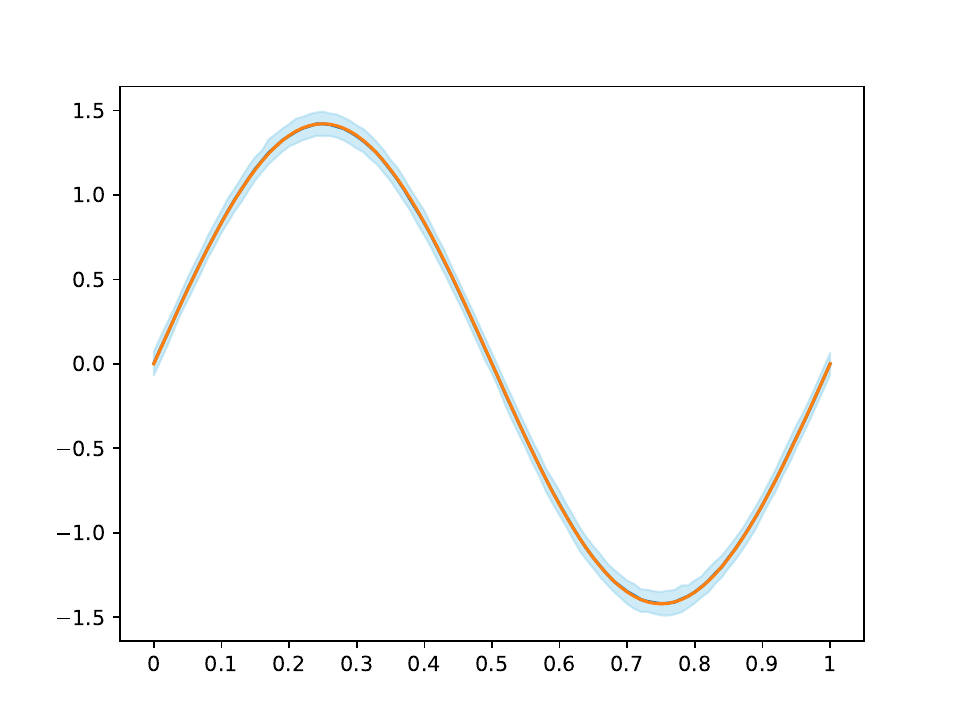}
        \caption{$\gamma=0.1$, $\tau = -0.9$, $\kappa=0.5$.}
    \end{subfigure}
    \medskip 
                \begin{subfigure}{0.325\textwidth}
        \includegraphics[scale=0.35]{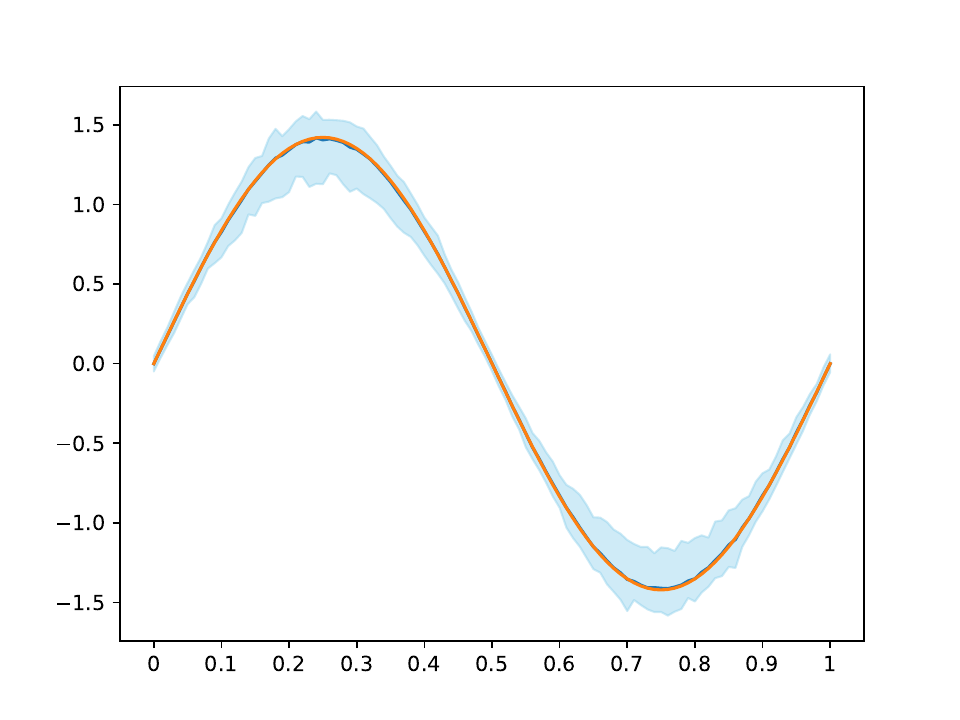}
        \caption{$\gamma=0.5$, $\tau = -0.1$, $\kappa=0.5$.}
    \end{subfigure}
    \hfill
    \begin{subfigure}{0.325\textwidth}
        \includegraphics[scale=0.35]{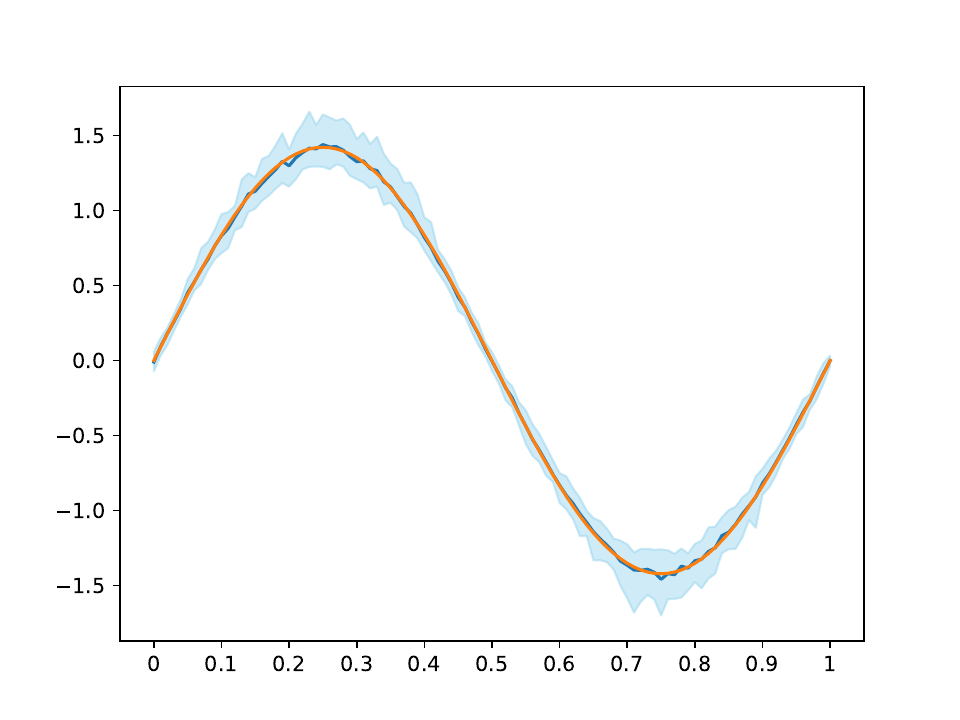}
        \caption{$\gamma=0.5$, $\tau = -0.5$, $\kappa=0.5$.}
    \end{subfigure}
    \hfill
    \begin{subfigure}{0.325\textwidth}
        \includegraphics[scale=0.35]{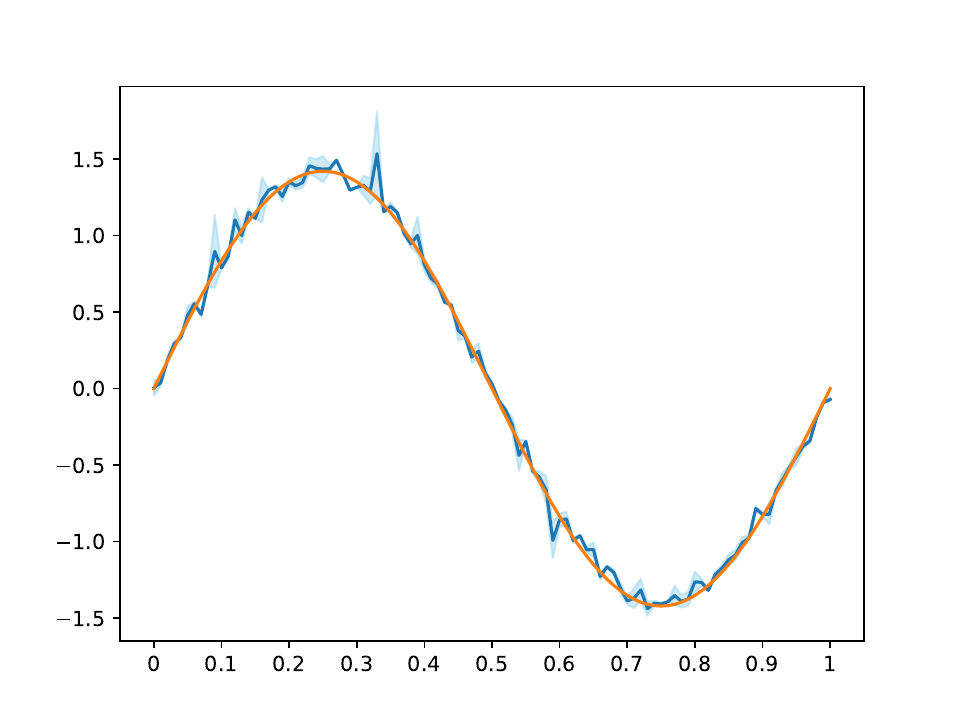}
        \caption{$\gamma=0.5$, $\tau = -0.9$, $\kappa=0.5$.}
    \end{subfigure}
    \medskip 

      
\caption{Simulation results with serial dependence of type ESTAR for the response and standard GARCH for the noise.}
    \label{fig:estar_st_GARCH}
\end{figure}

\begin{figure}[p]
    \centering
        \begin{subfigure}{0.325\textwidth}
        \includegraphics[scale=0.35]{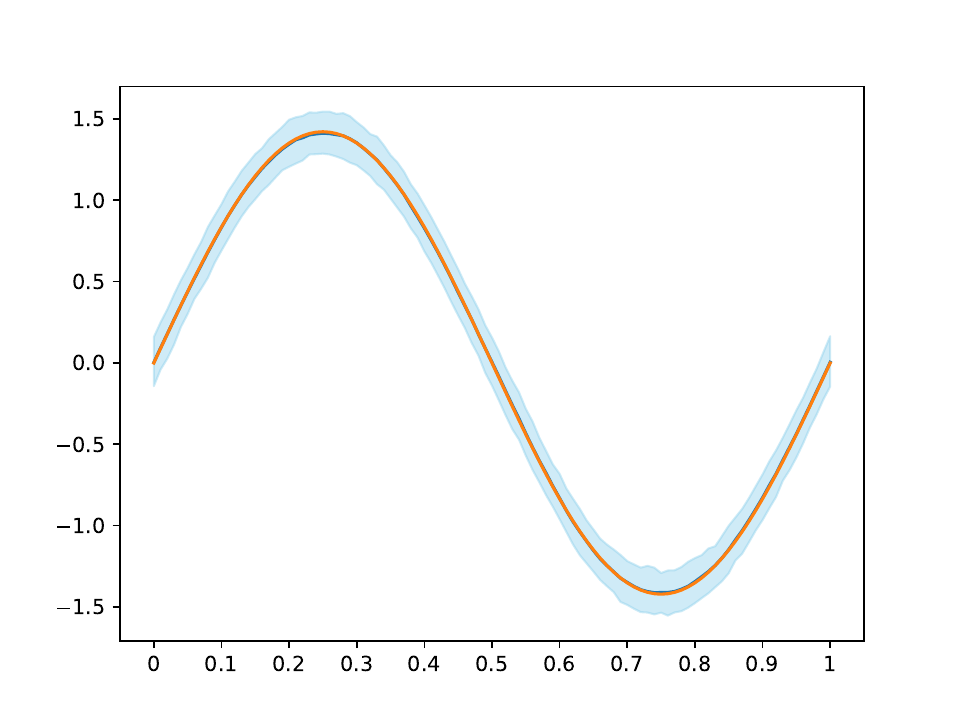}
        \caption{$\gamma=0.1$, $\tau = -0.1$, $\kappa=0.5$.}
    \end{subfigure}
    \hfill
    \begin{subfigure}{0.325\textwidth}
        \includegraphics[scale=0.35]{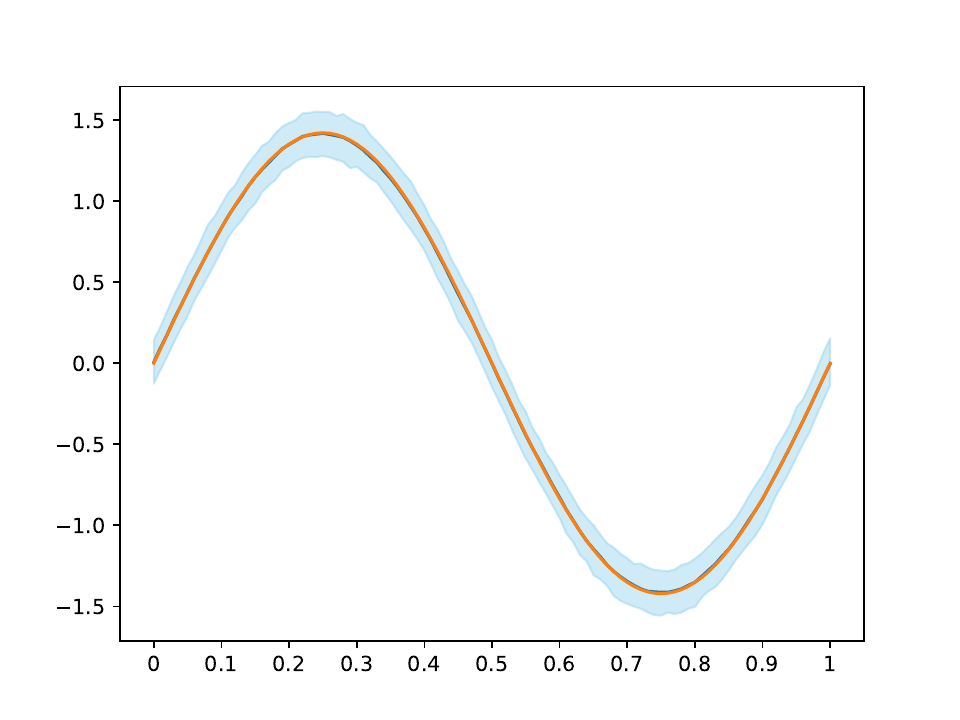}
        \caption{$\gamma=0.1$, $\tau = -0.5$, $\kappa=0.5$.}
    \end{subfigure}
    \hfill
    \begin{subfigure}{0.325\textwidth}
        \includegraphics[scale=0.35]{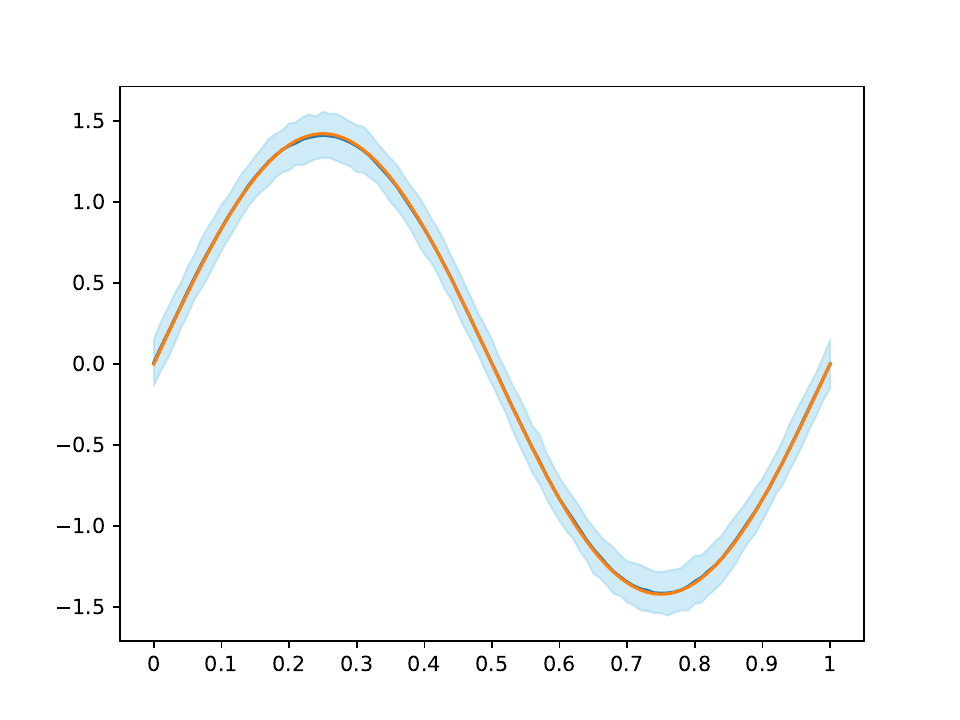}
        \caption{$\gamma=0.1$, $\tau = -0.9$, $\kappa=0.5$.}
    \end{subfigure}
    \medskip 
                \begin{subfigure}{0.325\textwidth}
        \includegraphics[scale=0.35]{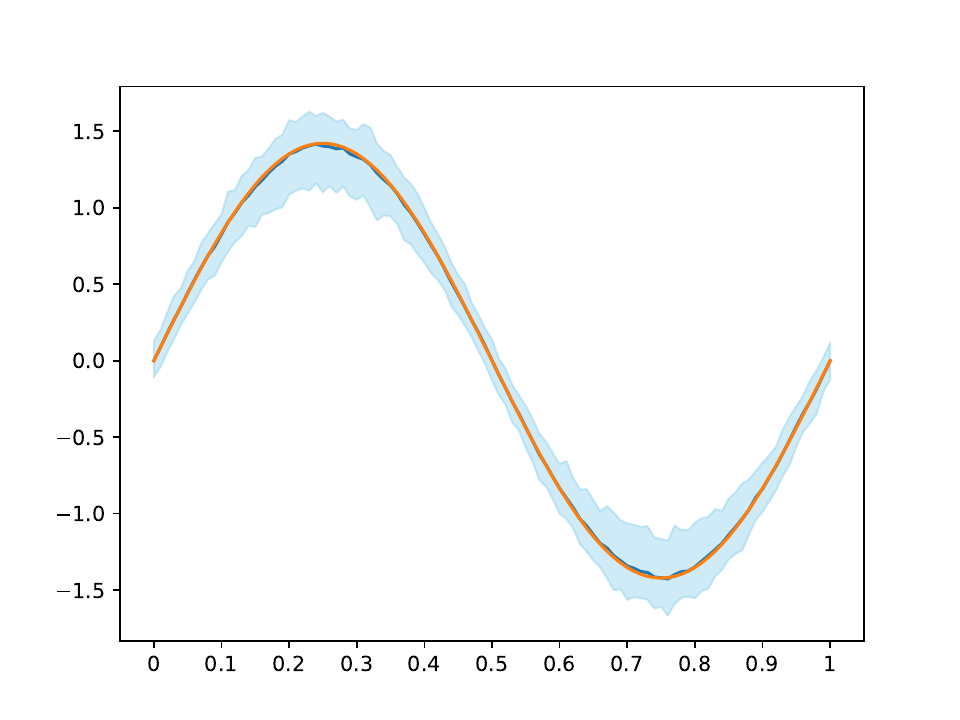}
        \caption{$\gamma=0.5$, $\tau = -0.1$, $\kappa=0.5$.}
    \end{subfigure}
    \hfill
    \begin{subfigure}{0.325\textwidth}
        \includegraphics[scale=0.35]{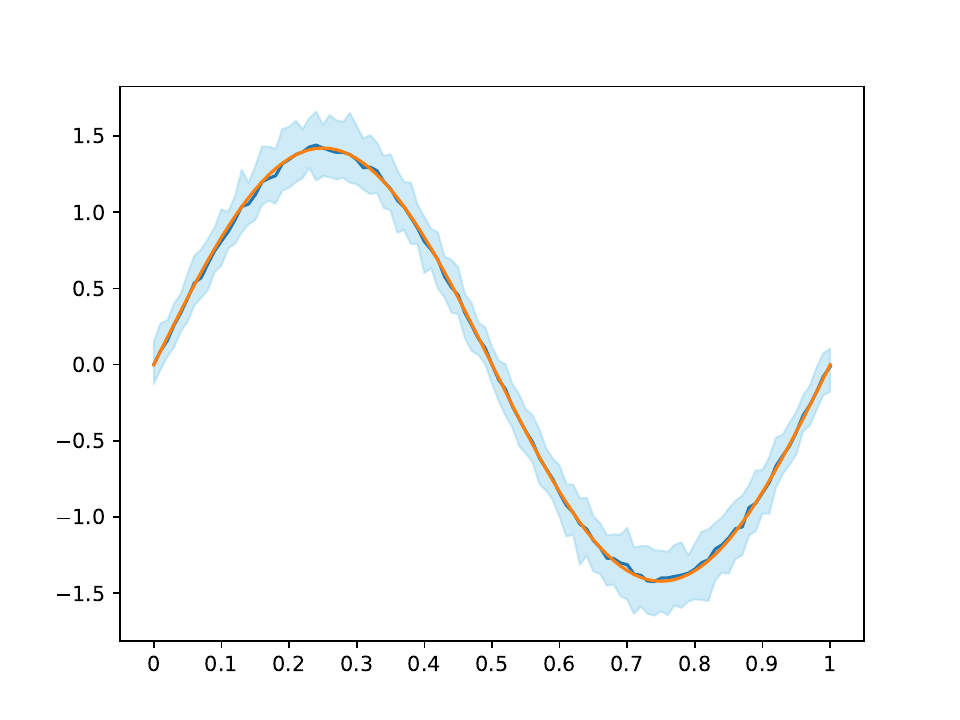}
        \caption{$\gamma=0.5$, $\tau = -0.5$, $\kappa=0.5$.}
    \end{subfigure}
    \hfill
    \begin{subfigure}{0.325\textwidth}
        \includegraphics[scale=0.35]{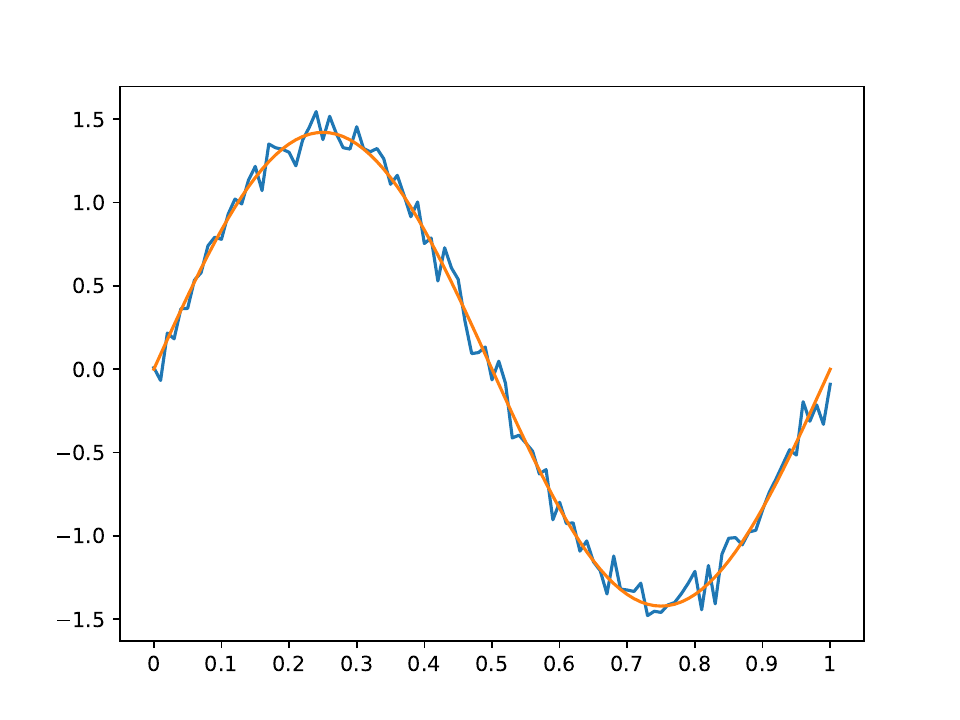}
        \caption{$\gamma=0.5$, $\tau = -0.9$, $\kappa=0.5$.}
    \end{subfigure}

      
\caption{Simulation results with serial dependence of type ESTAR for the response and both cases pathological GARCH for the noise.}
    \label{fig:estar_patho_GARCH}
\end{figure}

\section{Illustration on real data}
\label{sec-reel}

We exemplify our extreme dimension reduction technique on real-world environmental data from the National Oceanic and Atmospheric Administration (NOAA), the primary U.S. agency for climate observation. Our analysis focuses on the state of Texas, a region frequently affected by extreme weather events, using daily data from NOAA’s Global Historical Climatology Network (GHCN) over the five-year period from January 1st, 2020, to December 31st, 2024 ($1827$ days in total). The data are described in more details in Section~\ref{sub-data}. The competitors as well as the performance assessment method are presented in Section~\ref{sub-compet} and the associated results are provided in Section~\ref{sub-results2}.

\subsection{Data selection protocol}
\label{sub-data}
We begin by parsing the complete station inventory provided in \texttt{ghcnd-stations.txt}, which contains metadata (identifier, location, name, etc.) for all GHCN stations worldwide. We retain only stations located within Texas, identified via latitude/longitude filtering or the ``TX'' state code. From this subset, we further restrict attention to stations with names containing keywords suggestive of institutional or professional meteorological infrastructure (e.g., ``airport'', ``afb'', ``intl'', ``weather service'', ``observatory'', etc.). This heuristic aims to prioritize stations likely to have more reliable and continuous records, often associated with airports, military bases, or official weather bureaus.

We extract two variables: daily total precipitation (\texttt{PRCP}, in mm) and daily maximum temperature (\texttt{TMAX}, in degrees Celsius). The dataset distinguishes three cases for any variable on a given date: (i) no entry at all, (ii) a \texttt{NaN} value, or (iii) a valid numerical value. We interpret the absence of any entry (i) as indicating that the station was not operational on that day, and exclude such days from the analysis. This simplification removes any masking mechanism related to such absences, under the virtual assumption that no data implies no potential signal. The \texttt{NaN} values, which correspond to known but missing measurements, are imputed to zero. Admittedly, this may lead to an underestimation of the true missing-ness, as certain days with no recorded value could still reflect relevant information. Addressing this issue would require attributing causes to each missing or empty value, potentially through spatial borrowing from nearby stations, which we acknowledge as a meaningful but computationally intensive extension beyond the scope of this study. We retain only those stations where precipitation data is available (i.e., not \texttt{NaN}) for at least $99\%$ of the reduced time range. Any remaining \texttt{NaN} entries are dropped, along with their corresponding dates. This leads to $Y$ having in total $n=1755$ daily observations. For the maximum temperature, we only consider stations with between $5\%$ and $20\%$ missing values on the same date range as $Y$. From the cleaned dataset, we form all possible triplets of aligned time series $(Y, X^{(1)}, X^{(2)})$, where $Y$ denotes the daily precipitation from one station, and $(X^{(1)}, X^{(2)})$ the maximum temperature from two other stations (which may coincide with $Y$’s location). It yields $27910$ possible triplets of stations.


To ensure statistical soundness in our extreme value analysis, we impose a series of filtering criteria on the triplets $(Y, X^{(1)}, X^{(2)})$, concerning both temporal stationarity and tail behavior beyond data completeness. First, we retain only those triplets where the response variable $Y$ (precipitation) satisfies standard weak stationarity conditions. This is verified via the Augmented Dickey–Fuller (ADF) test (testing for the absence of unit roots) and the Kwiatkowski–Phillips–Schmidt–Shin (KPSS) test (testing for stationarity around a constant mean). Both tests must support the stationarity hypothesis at the 5\% level. We require joint stationarity for the pair $(X^{(1)}, X^{(2)})$, and each individual series must again pass both ADF and KPSS tests. Additionally, we enforce cross-correlation stability between $X^{(1)}$ and $X^{(2)}$, as follows: we compute the rolling correlation between the two series (using a fixed window size, e.g., 100 days), and subject the resulting time series of correlations to an ADF test. This tests the temporal constancy of their dependence structure. Only pairs whose rolling correlations are stationary are considered jointly stationary. In addition to the value processes, we retain only the datasets for which the binary masks $M_1$ and $M_2$ flagging missing-ness are jointly stationary. This is assessed via a chi-squared test for homogeneity applied to joint binary patterns over multiple time windows, with Haldane’s correction (adding $0.5$ to all frequencies to avoid zero counts). The null hypothesis is that the distribution of missing-ness patterns remains stable over time ({\it i.e.} the missing-ness process is stationary). We reject it if the $p$-value is below 5\%; thus, only datasets with $p \geq 0.05$ are considered to exhibit stationary missing-ness and are kept for analysis.

Finally, we check whether the marginal distribution of the response variable exhibits heavy-tailed behavior. A standard tool in extreme value theory is the Hill estimator for estimating the tail-index $\gamma$ of a distribution, based on the $k$ largest order statistics of the sample. A Hill plot displays the estimated tail-index as a function of $k$. A visual plateau in such a plot, that is a region where the estimates remain approximately constant across a range of $k$, is typically interpreted as evidence for a regularly varying tail, with the plateau level serving as an estimate of $\gamma$. To enhance reproducibility, we rather implement an automatic procedure to detect heavy-tailness. This method scans the Hill plot for stable plateaus: contiguous regions of order statistics where the estimated tail-index is approximately flat (low slope), low-variance, and exceeds a minimum threshold ($\gamma > 0.2$). A plateau is declared if these criteria are met within a sufficiently long sliding window, ensuring that transient fluctuations are not misclassified as stable signals. If such a region is detected, we conclude that the data are heavy-tailed and retain the mean Hill estimate over the plateau as our tail-index estimate. 

Overall, the procedure leads to $5794$ admissible triplets $(Y,X^{(1)},X^{(2)})$ of stations.

\subsection{Tail covariance evaluation and competitors}
\label{sub-compet}


To assess the effectiveness of our method, we propose some custom evaluation pipeline based on the conditional tail covariance and statistics over the admissible triplets of stations. 

For a given admissible triplet and a direction $\beta \in \mathbb{R}^2$, we compute the scalar projection $\beta^\top X$ and evaluate its empirical covariance with $Y$, conditionally on $Y$ exceeding a threshold $y$.

A variety of baseline projection competitor strategies is considered: random projections, Extreme PCA (EPCA) computed only on the extreme samples, Extreme Sliced Inverse Regression (ESIR) with a slicing restricted to the top quantile of $Y$, and an extreme variant of Random Forest (ERF) trained to classify extreme events versus non-extreme ones, from which a projection direction is extracted based on feature importance scores. More precisely, fix a threshold $y>0$ and define the exceedances index set as
\[
\mathcal{I}_y \; :=\; \{\, i\in\{1,\dots,n\} : Y_i>y \,\},\qquad N_y:=|\mathcal{I}_{y}|.
\]
Let us denote by $X_y\in\mathbb{R}^{N_y\times 2}$  the submatrix of $(X^{(j)}_i)_{i,j}$ with rows $\{X_i^\top : i\in\mathcal{I}_y\}$ and by $Y_y\in\mathbb{R}^{N_y}$ for the corresponding tail responses. Let $ \mathcal{M}:=\{ \mathrm{EPLS},\mathrm{EPCA},\mathrm{LDA},\mathrm{SIR},\mathrm{ESIR},\mathrm{ERF}\}$, then each method $m\in \mathcal{M}$ produces a projection vector score
\[
Z_{m,y} \; :=\; X_y\,\beta_m(y) \in \mathbb{R}^{N_y},
\]
where the method–specific direction $\beta_m(y)\in\mathbb{R}^p$ is defined below. For each $m\in \mathcal{M}$, we compute the unbiased empirical covariance between the method’s score and the tail responses,
\begin{align}\label{eq:extreme_pearson_cov}
    &\frac{1}{N_y-1}\sum_{i\in\mathcal{I}_y}\left((Z_{m,y})_i-\bar Z_{m,y}\right)\left(Y_i-\bar Y_y\right),
\end{align}
with  $\bar Z_{m,y}=N_y^{-1}\sum_{i\in\mathcal{I}_y} (Z_{m,y})_i$ and $\bar Y_{y}=N_y^{-1}\sum_{i\in\mathcal{I}_y} Y_i$. The four competitors are then computed as follows:

\medskip
\emph{(i) Extreme PCA.}
The PCA is performed on the tail covariates $\{X_i: i\in\mathcal{I}_y\}$, {\it i.e.} $\beta_{\mathrm{EPCA}}(y)$ is the top eigenvector of $\widehat\Sigma_y$ where 
\(
\widehat\Sigma_y=N_y^{-1}\sum_{i\in\mathcal{I}_y} (X_i-\bar X_y)(X_i-\bar X_y)^\top.
\)  
and $\bar X_y=N_y^{-1}\sum_{i\in\mathcal{I}_y} X_i$.
\medskip
\emph{(ii) Extreme LDA.}
Let the \emph{severity} within the tail be $Y_i-y$ for $i\in\mathcal{I}_y$.
Partition $\{Y_i-y: i\in\mathcal{I}_y\}$ into five equal-frequency slices and denote the slice index by $s(i)\in\{1,\dots,5\}$.
For each slice $\ell$, define the slice mean as
\[
\mu_\ell \;=\; \frac{1}{|\{i\in\mathcal{I}_y:s(i)=\ell\}|} \sum_{\substack{i\in\mathcal{I}_y\\ s(i)=\ell}} X_i
\in\mathbb{R}^2,
\]
and compute the tail covariance $\hat\Sigma_y$ as above (on $X_y$). The method forms a contrast vector $\Delta = \mu_K - \frac{1}{K-1}\sum_{k=1}^{K-1}\mu_k$, which captures the shift in $X$ from the less extreme to the most extreme slice. We define $\beta_{\mathrm{LDA}}(y)= \hat\Sigma_y^{+}\Delta $ where $\hat\Sigma_y^{+}$ is the Moore–Penrose pseudo-inverse. 

\medskip
\emph{(iii) Extreme Random Forest.}
A random forest, implemented in \texttt{scikit-learn}, is trained to classify the tail indicator $B_{i,y}:=\mathds{1}_{\{Y_i\geq y\}}$ using all $n$ observations. It yields impurity-based feature importances
\[
\iota=(\iota_1,\iota_2)^\top \in [0,\infty)^2,\qquad  \iota_1+\iota_2= 1,
\] 
which quantify the relative contribution of each predictor to identifying the extreme region ${Y > y}$. Using it as a linear proxy of the nonlinear forest, we define the direction as $\beta_{\mathrm{ERF}}(y)=\iota/\|\iota\|$.

\medskip
\emph{(iv) Random projections.} This naive method consists in sampling
$500$ directions uniformly on the unit sphere and defining $\beta_{\mathrm{R}}$ as their empirical mean. 

\medskip
\emph{(v) Sliced Inverse Regression (SIR).}
The range of the response $Y$ is divided into $H=10$ disjoint slices $\mathcal{S}_1,\dots,\mathcal{S}_H$ (of approximately equal size), based on quantiles of $Y$. Then, for each $h\in \{1,\ldots,H\}$, one computes the slice mean $ m_h = \frac{1}{N_h}\sum_{i: Y_i \in \mathcal{S}_h}  X_i$ where $N_h = \#\{i: Y_i \in \mathcal{S}_h\}$ is the number of response elements in $\mathcal{S}_h$, and also the between-slice covariance matrix
  \[
  \hat\Sigma_B = \sum_{h=1}^H \frac{N_h}{n} \,(m_h - \bar{X})(m_h - \bar{X})^\top,
  \]
  where $\bar{X}=n^{-1}\sum_{i=1}^n X_i$ is the mean of the standardized covariates. 
 The SIR direction $\beta_{\mathrm{SIR}}$ is obtained as the leading generalized eigenvector of
\(
\hat\Sigma_{XX}^{-1}\hat\Sigma_B,
\)
where the sample covariance of $X_i$ is denoted by $\hat{\Sigma}_{XX}=\tfrac1n\sum_{i=1}^n (X_i-\bar{X})(X_i-\bar{X})^\top$.

\medskip
\emph{(vi) Extreme Sliced Inverse Regression (Extreme SIR).}
To obtain a direction $\beta_{\mathrm{ESIR}}$, the above procedure is applied only to the exceedances/tail sample $\{(Y_i,X_i): Y_i > y\}$.

\subsection{Results}
\label{sub-results2}

A large-scale analysis is performed over the admissible triplets of stations and across multiple thresholds. The threshold is chosen as $y=q_Y(\alpha)$ where  $\alpha\in\{0.90,0.91,\ldots,0.99\}$. The tail covariances~\eqref{eq:extreme_pearson_cov} are computed for all methods and ranked from one to six. The evaluation metric then consists in the mean rank (in $\{1,\dots,6\}$, where $1$ is best) achieved by each method over $5794$ station triplets. The results are summarized in Figure~\ref{fig:mean_rank} which represents the evolution of the mean ranks for each method across the different values of the percentile $\alpha$.
The mean curves are fairly stable as the percentile increases. Extreme PLS shows consistently superior performance across all thresholds, maintaining the first rank. Next, Extreme PCA and Extreme RF perform similarly, with both methods following the same trend across percentiles. In contrast, Standard SIR, Extreme SIR and Extreme LDA under-perform over most of the threshold range.  In the most extreme region, all five competitors tend to converge toward similar performance.

As an illustration, we propose to focus on a single triplet $(Y,X^{(1)},X^{(2)})$. The threshold considered is the order statistic $Y_{n-k+1,n}$.  The tail covariance~\eqref{eq:extreme_pearson_cov} is plotted in Figure~\ref{fig:tail_cov_vs_threshold_value_p1_27178} as a function of the threshold index $1\le k\le n$, for all methods. We also include the random baseline by representing the full range for $500$ projections, which displays the range of variation due to randomness. The performance of EPLS is consistent with Figure~\ref{fig:mean_rank}: The EPLS curve closely follows the 'highest random path' in the sense of highest tail covariance among all possibilities. Again, Extreme PCA performs similarly well as Extreme PLS, while Extreme RF does worse and both versions of SIR compare to the average of the random strategy.

The estimates of an extreme conditional quantile (of level $\alpha=0.99$) associated with the three best performing dimension reduction methods (namely, EPLS, Extreme PCA and Extreme RF) are compared in Figure~\ref{fig:quantile_comparison}. 
In all three cases, the explanatory variable is first projected on the estimated dimension reduction axis $\hat\beta$, and second, a kernel based estimator of the conditional quantile is computed on the obtained data set $\{(\hat\beta^\top X_i, Y_i),\, i=1,\dots,n\}$. 
The dimension reduction step has two benefits: it allows us to mitigate the curse of dimensionality in the nonparametric estimator and it moreover yields an easy visualization of the estimated in a two-dimensional space.
It appears that EPLS, Extreme PCA and Extreme RF yield similar projected data and thus estimated conditional curves sharing similar shapes.

\begin{figure}[p]
\centering
\includegraphics[width=0.8\textwidth]{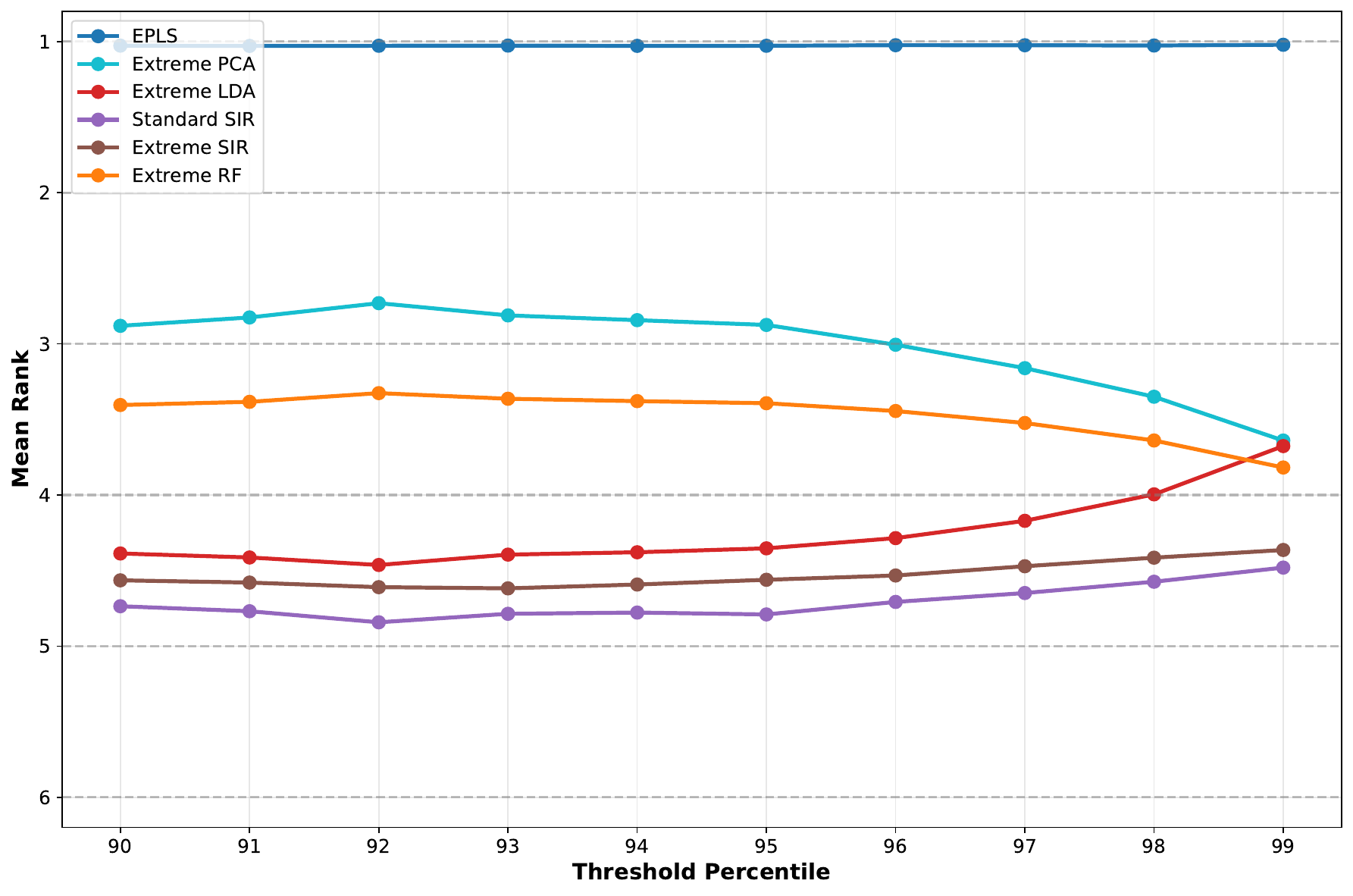}
\caption{Evolution of mean rank performance across threshold percentiles for the different methods on real data. Results are based on evaluation across 5,794 admissible weather station triplets. Lower values indicate better performance (closer to rank one). The random projection results are not displayed here, the mean rank being constant equal to six.}
\label{fig:mean_rank}
\end{figure}

\begin{figure}[p]
\centering \includegraphics[width=0.8\textwidth]{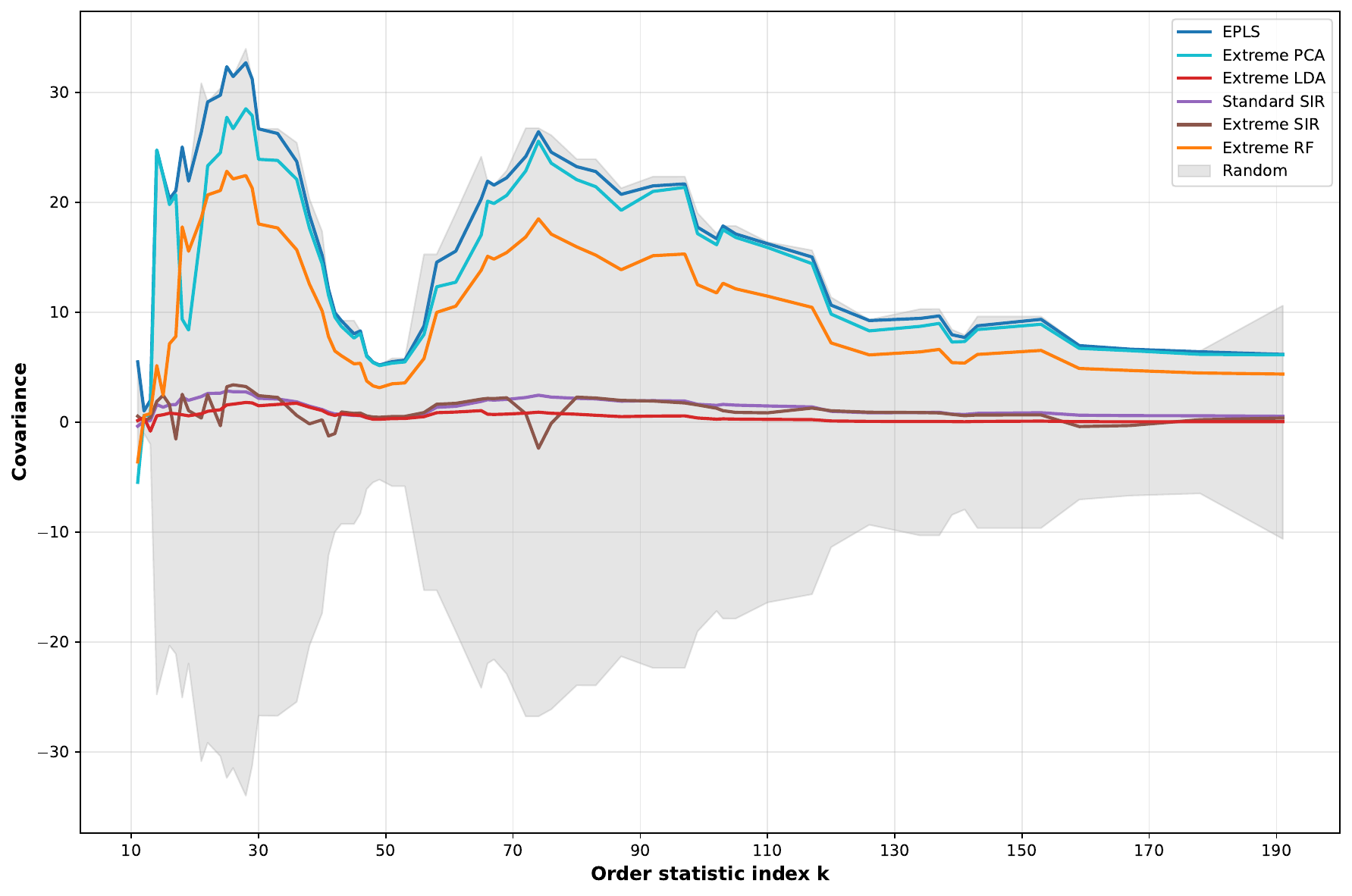} \caption{Tail covariance achieved by EPLS (blue) on the triplet $(\texttt{USC00416892}, \texttt{US1TXMRR003}, \texttt{USC00410058})$ of weather stations compared to the competitors and a random projection baseline. The $x$-axis is the index $k$ associated to the order statistic $Y_{n-k+1,n}$ used as threshold. The grey shaded region indicates the full variability range of the random baseline.}
\label{fig:tail_cov_vs_threshold_value_p1_27178}
\end{figure}

\begin{figure}[htbp]
\centering
\includegraphics[width=0.6\textwidth]{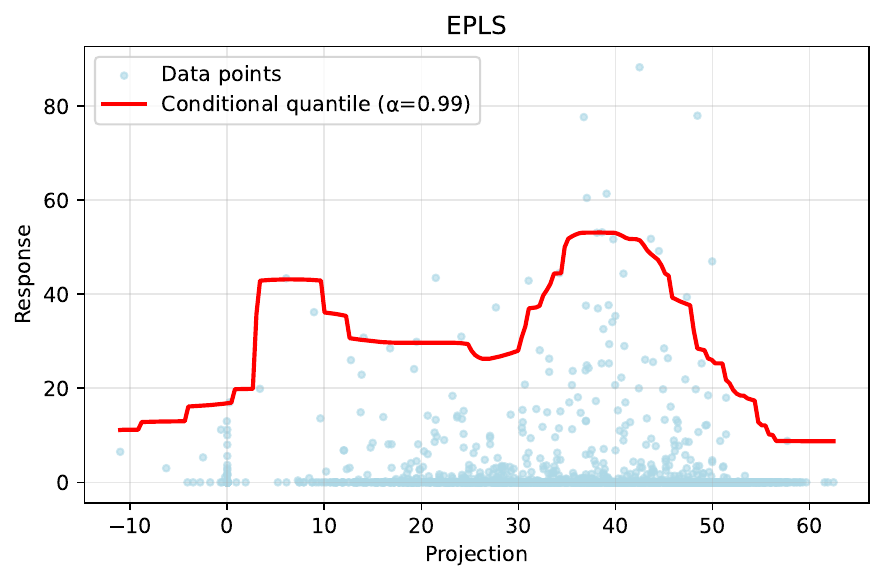}
\includegraphics[width=0.6\textwidth]{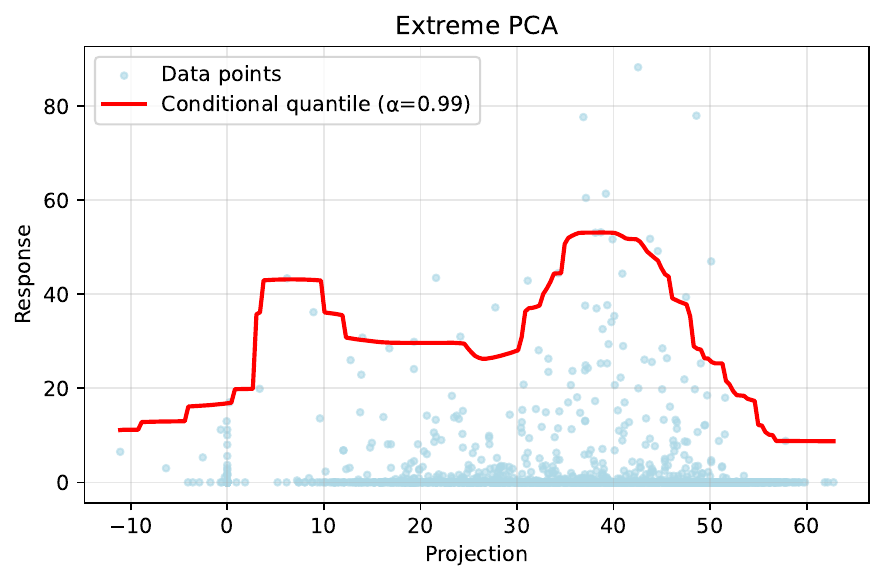}
\includegraphics[width=0.6\textwidth]{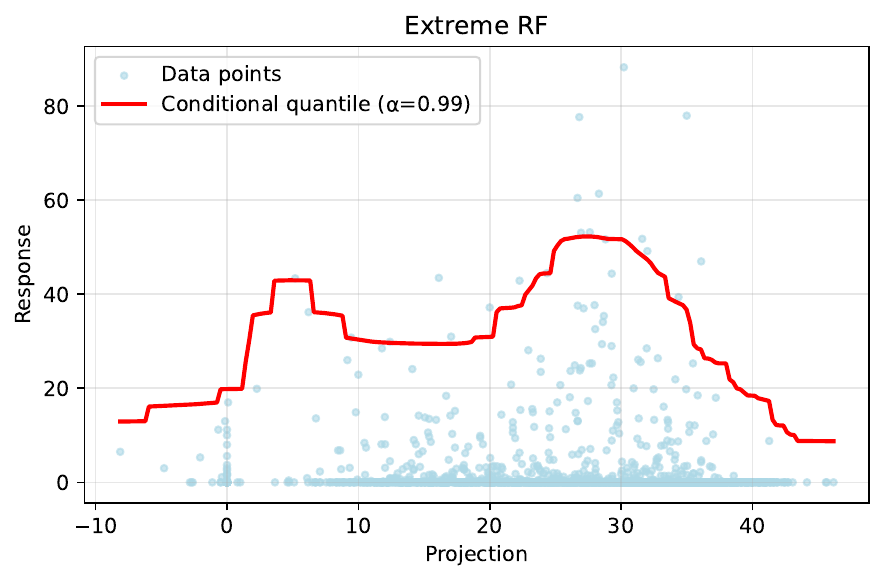}
\caption{Comparison on the triplet $(\texttt{USC00415766}, \texttt{US1TXTN0058}, \texttt{US1TXWA0042})$ of conditional extreme quantile ($\alpha=0.99$) estimation across {three} projection methods. Each panel displays the projected point cloud $(\hat\beta^\top X_i, Y_i)$ with the corresponding estimated conditional quantile curve. 
}
\label{fig:quantile_comparison}
\end{figure}

\section{Appendix: Proofs}

In the sequel we adopt, for any real sequences $A_n$ and $B_n$ such that $B_n\ne 0$ for $n$ large enough, the notation $A_n\lesssim B_n$ if and only if there exists $c\in (0,+\infty)$ independent of $n$ such that $A_n\le c B_n$ for $n$ large enough.
Section~\ref{sec-annexe1} provides three lemmas that will reveal useful to prove the main results in Section~\ref{sec-annexe-main}.

\subsection{Preliminary results}
\label{sec-annexe1}


\noindent The first lemma is an adaptation of~\cite[Proposition~1.5.10]{Bing1989}
to the missing data setting. It provides an asymptotic equivalent of conditional expectations above a large threshold.
\begin{Lem}
\label{denom1}
Let $\rho\in RV_{\mu}$ with $\mu\in \mathbb{R}$ and let $Y$ be a random variable satisfying \Au. 
\begin{itemize}
    \item [(i)]
If {\color{black} $\gamma \mu <1$}, then, as $y\to\infty$,
$$
\mathbb{E}\left[\rho(Y) \mid Y\geq y\right] = \frac{m_{\rho(Y)}(y)}{\bar{F}(y)}\sim \frac{1}{1-\gamma \mu}\rho(y).
$$
\item [(ii)] If \Aq~holds, with {\color{black} $\gamma (\mu+\tau)<1$},
then, as $y\to\infty$, for all $(j\neq\ell)\in\{1,\dots,p\}^2$,
\begin{align*}
\mathbb{E}\left[\Lambda^{(j)}\rho(Y) \ind \right] & \sim  \frac{c_j}{1-\gamma (\mu+\tau)}\rho(y) \lambda(y) \bar F(y).
\end{align*}
\item [(iii)]  If \Aq~holds, with {\color{black} $\gamma (\mu+2\tau)<1$},
then, as $y\to\infty$, for all $(j\neq\ell)\in\{1,\dots,p\}^2$,
\begin{align*}
\mathbb{E}\left[\Lambda^{(j)}\Lambda^{(\ell)}\rho(Y)\ind\right] & \sim  \frac{c_jc_\ell}{1-\gamma (\mu+2\tau)}\rho(y) \lambda^2(y) \bar F(y).
\end{align*}
\item[(iv)] If \At~and \Aq~hold with \textcolor{black}{$\gamma(\frac{q}{q-1}\mu+\tau)<1$}, then, as $y\to\infty$, for all $(j\neq\ell)\in\{1,\dots,p\}^2$, 
\begin{align*}
\mathbb{E}\left[\Lambda^{(j)}\varepsilon^{(j)}\rho(Y) \ind \right] &=O\left(\rho(y) (\lambda\bar F)^{1-1/q}(y)\right).
\end{align*}
\item[(v)] If \At~and \Aq~hold with \textcolor{black}{$\gamma(\frac{q}{q-2-t}\mu+\tau)<1$} for $t\ge 0$, then, as $y\to\infty$, for all $(j\neq\ell)\in\{1,\dots,p\}^2$, 
\begin{align*}
\mathbb{E}\left[\Lambda^{(j)} (\varepsilon^{(j)})^{2+t}\rho(Y) \ind\right] &=O\left(\rho(y) (\lambda\bar F)^{1-(2+t)/q}(y)\right).
\end{align*}
\end{itemize}
\end{Lem}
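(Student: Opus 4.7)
The plan is to prove the five parts in sequence, observing that (ii) and (iii) reduce to (i) by conditioning on $Y$ and invoking the MAR mechanism \Aq, while (iv) and (v) follow from (ii) together with a Hölder step that separates the noise moments from the rest.

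\textbf{Part (i).} I would start from the integral representation $m_{\rho(Y)}(y)=\int_y^\infty \rho(u) f(u)\,du$. Since $f\in RV_{-1/\gamma-1}$ by \Au, Karamata's theorem yields $\bar F\in RV_{-1/\gamma}$ and $y f(y)\sim \bar F(y)/\gamma$. The integrand satisfies $\rho f\in RV_{\mu-1/\gamma-1}$, and the assumption $\gamma\mu<1$ forces this index to be strictly below $-1$, so Karamata's integral theorem (Proposition~1.5.10 in \citealp{Bing1989}) applies and gives
$$
\int_y^\infty \rho(u)f(u)\,du \;\sim\; \frac{y\rho(y)f(y)}{1/\gamma-\mu}.
$$
Replacing $yf(y)$ by its equivalent $\bar F(y)/\gamma$ produces exactly $\rho(y)\bar F(y)/(1-\gamma\mu)$, which proves (i).

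\textbf{Parts (ii) and (iii).} For (ii), I would condition on $Y$: assumption \Aq\ gives $\mathbb{E}[\Lambda^{(j)}\mid Y]=c_j\lambda(Y)$, so
$$
\mathbb{E}\bigl[\Lambda^{(j)}\rho(Y)\mathds{1}_{\{Y\geq y\}}\bigr] \;=\; c_j\,\mathbb{E}\bigl[\lambda(Y)\rho(Y)\mathds{1}_{\{Y\geq y\}}\bigr].
$$
Since $\lambda\rho\in RV_{\mu+\tau}$, part (i) applied with $\rho$ replaced by $\lambda\rho$ delivers the stated equivalent under the hypothesis $\gamma(\mu+\tau)<1$. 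For (iii), the conditional independence of $\Lambda^{(j)}$ and $\Lambda^{(\ell)}$ given $Y$ asserted by \Aq\ yields $\mathbb{E}[\Lambda^{(j)}\Lambda^{(\ell)}\mid Y]=c_jc_\ell\lambda^2(Y)$, and the same reduction to (i) with $\lambda^2\rho\in RV_{\mu+2\tau}$ closes the case.

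\textbf{Parts (iv) and (v).} I would handle these by Hölder's inequality applied to split the noise from the rest. For (iv), using conjugate exponents $q$ and $q/(q-1)$,
$$
\bigl|\mathbb{E}[\Lambda^{(j)}\varepsilon^{(j)}\rho(Y)\mathds{1}_{\{Y\geq y\}}]\bigr| \;\le\; \mathbb{E}\bigl[|\varepsilon^{(j)}|^q\bigr]^{1/q}\,\mathbb{E}\bigl[\Lambda^{(j)}\rho(Y)^{q/(q-1)}\mathds{1}_{\{Y\geq y\}}\bigr]^{(q-1)/q},
$$
where I used $(\Lambda^{(j)})^{q/(q-1)}=\Lambda^{(j)}$ since $\Lambda^{(j)}\in\{0,1\}$. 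The first factor is finite by \At. Since $\rho^{q/(q-1)}\in RV_{q\mu/(q-1)}$, the assumption $\gamma(q\mu/(q-1)+\tau)<1$ makes part (ii) applicable to the second factor, giving an equivalent of order $\rho(y)^{q/(q-1)}\lambda(y)\bar F(y)$. Raising to the $(q-1)/q$ power produces the announced $O\bigl(\rho(y)(\lambda\bar F)^{1-1/q}(y)\bigr)$. For (v), the very same scheme applies with Hölder exponents $q/(2+t)$ and $q/(q-2-t)$, which are admissible under the implicit restriction $q>2+t$; the resulting regularly-varying function is $\rho^{q/(q-2-t)}\in RV_{q\mu/(q-2-t)}$, and (ii) is invoked under $\gamma(q\mu/(q-2-t)+\tau)<1$.

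\textbf{Main obstacle.} The proof is essentially bookkeeping: the genuinely novel ingredient is the MAR conditioning that turns mask-valued integrals into regularly-varying integrals of $Y$ alone. No step is deep, but I would pay careful attention to the sharpness of the regular-variation indices entering Karamata and to verifying that each derived index indeed satisfies the companion $\gamma(\cdot)<1$ convergence condition, since these same conditions are then inherited by the more delicate moment assumptions of Proposition~\ref{prop-loi-jointe-alpha-mixing}.
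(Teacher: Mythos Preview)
Your proposal is correct and follows essentially the same approach as the paper: conditioning on $Y$ via \Aq\ to reduce (ii)--(iii) to (i), and H\"older's inequality combined with (ii) for (iv)--(v). The only cosmetic difference is that for (i) the paper simply cites \cite[Lemma~2]{Bousebata2023}, whereas you spell out the Karamata argument explicitly; this is precisely the content of that cited lemma, so the two coincide.
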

\begin{proof}[\bf{\textup{Proof}}] (i) is established in \cite[Lemma~2]{Bousebata2023}. \\
(ii) 
Let us consider
\begin{align*}
   \mathbb{E}\left[\Lambda^{(j)}\rho(Y) \ind \right] &= 
 \mathbb{E}\left\{   \mathbb{E}\left[\Lambda^{(j)}\rho(Y) \ind  \mid Y \right] \right \} = \mathbb{E}\left\{ \rho(Y) \ind \mathbb{E}\left[\Lambda^{(j)}   \mid Y \right] \right \}\\
 &=  \mathbb{E}\left\{ \rho(Y) {\lambda_j}(Y) \ind  \right \}=
 c_j \mathbb{E}\left\{ \rho(Y) \lambda(Y) \mid Y\geq y  \right \}\bar F(y).
\end{align*}
Since $\rho(\cdot) \lambda(\cdot) \in RV_{\mu+\tau}$ and {\color{black} $\gamma(\mu+\tau) <1$}, 
part~(i) of the Lemma proves the result. \\
(iii) The proof is similar after noting that 
$\mathbb{E}\left[\Lambda^{(j)}\Lambda^{(\ell)}  \mid Y \right]={\lambda_j}(Y){\lambda_\ell}(Y)$ for $j\neq \ell$.
\\ (iv) 
By Hölder's, for any $q'\ge 1$ such that $1/q+1/q'=1$, one has
$$
\mathbb{E}\left[\Lambda^{(j)}\varepsilon^{(j)}\rho(Y) \ind \right]\le  \mathbb{E}(|\varepsilon^{(j)}|^{q})^{1/{q}} \mathbb{E}( \rho(Y) ^{q'}{\Lambda^{(j)}}\ind)^{1/{q'}}. 
$$
In view of~(ii) together with \textcolor{black}{$\gamma( \frac{q}{q-1}\mu+\tau)<1$} and \At, the result  follows. \\
(v) Again, Hölder inequality implies that for any $q'\ge 1$ such that $(2+t)/q+1/q'=1$,
\begin{eqnarray}
\mathbb{E}\left[\Lambda^{(j)}(\varepsilon^{(j)})^2\rho(Y) \indn \right]\le  \mathbb{E}(|\varepsilon^{(j)}|^{q})^{2/{q}} \mathbb{E}(| \rho(Y) |^{q'}{\Lambda^{(j)}}  \indn)^{1/{q'}}. \nonumber 
\end{eqnarray}
In view of~(ii) together with \textcolor{black}{$\gamma( \frac{q}{q-2-t}\mu+\tau)<1$} and \At, the result  follows. 
\end{proof}

\noindent The next lemma makes use of Lemma~\ref{denom1} to derive asymptotic equivalents of conditional covariances above a large threshold in the missing data framework.

\begin{Lem}
\label{lem-cov}
Suppose $\rho_1\in RV_{\mu_1}$, $\rho_2 \in RV_{\mu_2}$ with $\mu_1,\mu_2\in{\mathbb R}$ and let $Y$ be a random variable satisfying \Au. Assume that \Aq~holds.
\begin{itemize}
    \item [(i)] Assume {\color{black} $\gamma(\mu_1+\mu_2+2\tau)<1$, $\gamma (\mu_s+\tau) <1$ for $s\in\{1,2\}$} holds.
Then, as $y\to\infty$, for all $(j\neq\ell)\in\{1,\dots,p\}^2$,
\begin{align*}
\cov\left[\Lambda^{(j)}\rho_1(Y)\ind,\Lambda^{(\ell)}\rho_2(Y)\ind\right] & \sim  \frac{c_j c_\ell}{1-\gamma (\mu_1+\mu_2+2\tau)}\rho_1(y)\rho_2(y)\lambda^2(y) \bar F(y).
\end{align*}
  \item [(ii)] Assume {\color{black} $\gamma(\mu_1+\mu_2+\tau)<1$, $\gamma (\mu_s+\tau) <1$ for $s\in\{1,2\}$} holds.
Then, as $y\to\infty$, for all $(j\neq\ell)\in\{1,\dots,p\}^2$,
\begin{align*}
\cov\left[\Lambda^{(j)}\rho_1(Y)\ind,\Lambda^{(j)}\rho_2(Y)\ind\right] & \sim \frac{c_j}{1-\gamma (\mu_1+\mu_2+\tau)}\rho_1(y)\rho_2(y) \lambda(y) \bar F(y).
\end{align*}
\item [(iii)] If, moreover, \textcolor{black}{$\gamma(\frac{q}{q-2}(\mu_1+\mu_2) +2\tau) < 1$ and $\gamma(\frac{q}{q-1}\mu_s +\tau) < 1$ for $s\in \{1,2\}$} and \At~hold, then, as $y\to\infty$, for all $(j\neq\ell)\in\{1,\dots,p\}^2$,
\begin{align*}
\cov\left[\Lambda^{(j)}\varepsilon^{(j)}\rho_1(Y)\ind,\Lambda^{(\ell)}\varepsilon^{(\ell)}\rho_2(Y)\ind\right] &=O\left(  \rho_1(y)\rho_2(y)\lambda^{2(1-2/q)}(y)\bar F^{1-2/q}(y) \right).
\end{align*}
\item [(iv)] If \textcolor{black}{$\gamma(\frac{q}{q-1}(\mu_1+\mu_2) +2\tau) < 1$} and \At~hold, then, as $y\to\infty$, for all $(j\neq\ell)\in\{1,\dots,p\}^2$,
\begin{align*}
\cov\left[\Lambda^{(j)}\varepsilon^{(j)}\rho_1(Y)\ind,\Lambda^{(\ell)}\rho_2(Y)\ind\right] &=O\left(  \rho_1(y)\rho_2(y) \lambda^{2(1-1/q)}(y)\bar F^{1-1/q}(y) \right).
\end{align*}
\item [(v)] If \textcolor{black}{$\gamma(\frac{q}{q-1}(\mu_1+\mu_2)\vee \mu_2 +\tau) < 1$, $\gamma(\mu_2 +\tau) < 1$} and \At~hold, then, as $y\to\infty$, for all $j\in\{1,\dots,p\}$,
$$\cov\left[\Lambda^{(j)}\varepsilon^{(j)}\rho_1(Y)\ind,\Lambda^{(j)}\rho_2(Y)\ind\right] =O\left( \rho_1(y)\rho_2(y) (\lambda \bar{F})^{1-2/q}(y) \right).$$

\end{itemize}
\end{Lem}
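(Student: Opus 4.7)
The plan is to treat all five parts uniformly by the decomposition $\cov(A,B)=\E[AB]-\E[A]\E[B]$ and to estimate each of the two terms via Lemma~\ref{denom1}. The driving observation is that $AB$ carries a single indicator $\ind$, while the product of means contains two; this always forces an additional factor of the form $\lambda^{\alpha}(y)\bar F^{\beta}(y)$ with $\alpha,\beta\ge 0$ not both zero in the product of means, and such a factor vanishes under \Au--\Aq. Consequently the product of means will be asymptotically negligible, and the leading behaviour of the covariance will be dictated by $\E[AB]$.

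For parts~(i) and~(ii), I would apply Lemma~\ref{denom1}(iii), respectively~(ii), directly to the function $\rho:=\rho_1\rho_2\in RV_{\mu_1+\mu_2}$: when $j\ne\ell$, the conditional independence of $\Lambda^{(j)}$ and $\Lambda^{(\ell)}$ given $Y$ yields $\E[\Lambda^{(j)}\Lambda^{(\ell)}\mid Y]=c_jc_\ell\lambda^2(Y)$, whereas when $j=\ell$ the idempotence $(\Lambda^{(j)})^2=\Lambda^{(j)}$ replaces $\lambda^2(Y)$ by $c_j\lambda(Y)$. In both cases, two applications of Lemma~\ref{denom1}(ii) bound the product of means by $O(\rho_1\rho_2\lambda^2\bar F^2)$, which is of strictly smaller order than the claimed leading contributions $\rho_1\rho_2\lambda^2\bar F$ (part~(i)) and $\rho_1\rho_2\lambda\bar F$ (part~(ii)), delivering the desired asymptotic equivalents.

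For parts~(iii)--(v), I would combine the same decomposition with Hölder's inequality to pull the noise factors $\varepsilon^{(j)}$ and $\varepsilon^{(\ell)}$ out of the expectations, exactly as in the proof of Lemma~\ref{denom1}(iv)--(v). In~(iii), Hölder with exponents $(q,q,q/(q-2))$ separates both noise factors and reduces the remainder to Lemma~\ref{denom1}(iii) applied to $|\rho_1\rho_2|^{q/(q-2)}\in RV_{(\mu_1+\mu_2)q/(q-2)}$; raising the resulting estimate to the power $(q-2)/q$ recovers the stated order $\rho_1\rho_2\lambda^{2(1-2/q)}\bar F^{1-2/q}$. In~(iv), Hölder with exponents $(q,q/(q-1))$ leaves a single noise factor and reduces to Lemma~\ref{denom1}(iii) applied to $|\rho_1\rho_2|^{q/(q-1)}$, delivering the bound $\rho_1\rho_2\lambda^{2(1-1/q)}\bar F^{1-1/q}$. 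In~(v), the idempotence $(\Lambda^{(j)})^2=\Lambda^{(j)}$ allows a direct application of Lemma~\ref{denom1}(iv) to $\rho_1\rho_2$, giving in fact the sharper bound $O(\rho_1\rho_2(\lambda\bar F)^{1-1/q})$; the stated weaker bound $O(\rho_1\rho_2(\lambda\bar F)^{1-2/q})$ then follows since $\lambda\bar F\to 0$ and $q>2$. In each case the product of means is handled by Lemma~\ref{denom1}(ii) and~(iv), and a short computation confirms that its ratio to the $\E[AB]$ term is of the form $\lambda^{\alpha}\bar F^{\beta}$ with $\alpha,\beta\ge 0$ not both zero, hence tends to zero.

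The main bookkeeping obstacle I anticipate is matching the regularity conditions on $\rho_1\rho_2$ and its Hölder-rescaled powers that arise along the way with the stated assumptions of the form $\gamma(\cdot+\tau)<1$ or $\gamma(\cdot+2\tau)<1$; this reduces to checking finitely many inequalities term by term, and in each case one indeed recovers precisely the hypotheses listed in the statement. A secondary subtlety is that parts~(i) and~(ii) require a genuine asymptotic equivalence, so the product of means must be shown to be of strictly smaller order than the leading term, whereas in parts~(iii)--(v) an $O(\cdot)$ bound suffices and therefore only same-order control of the product of means is needed.
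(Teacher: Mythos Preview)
Your proposal is correct and follows essentially the same route as the paper: decompose each covariance as $\E[AB]-\E[A]\E[B]$, handle $\E[AB]$ via Lemma~\ref{denom1}(ii)--(iii) (combined with H\"older in parts~(iii)--(v)), and check that the product of means is of strictly smaller order. Your observation that part~(v) actually yields the sharper bound $O\bigl(\rho_1\rho_2(\lambda\bar F)^{1-1/q}\bigr)$ is correct and matches what the paper's own calculation gives before it records the weaker exponent $1-2/q$; note however that the passage from the sharper to the weaker bound only needs $\lambda\bar F\to 0$ and $1-1/q>1-2/q$, so your side condition ``$q>2$'' is unnecessary there.
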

\begin{proof}[\bf{\textup{Proof}}]
(i)
Taking account of
\begin{align*}
    \cov\left[\Lambda^{(j)}\rho_1(Y)\ind,\Lambda^{(\ell)}\rho_2(Y)\ind\right] &= \mathbb{E}\left\{{\Lambda^{(j)}} {\Lambda^{(\ell)}}  \rho_1(Y)\rho_2(Y) \ind  \right \} \\
    &- \mathbb{E}\left\{ {\Lambda^{(j)}}  \rho_1(Y) \ind  \right \}\mathbb{E}\left\{  {\Lambda^{(\ell)}}\rho_2(Y) \ind  \right \}
\end{align*}
and applying Lemma~\ref{denom1}(ii,iii) show that, for $j\neq \ell$:
$$
    \cov\left[\Lambda^{(j)}\rho_1(Y)\ind,\Lambda^{(\ell)}\rho_2(Y)\ind\right] \sim \mathbb{E}\left\{{\Lambda^{(j)}} {\Lambda^{(\ell)}}  \rho_1(Y)\rho_2(Y) \ind  \right \}
$$
as $y\to\infty$ and the result follows. \\
(ii) Second,
\begin{align*}
    \cov\left[\Lambda^{(j)}\rho_1(Y)\ind,\Lambda^{(j)}\rho_2(Y)\ind\right] &= \mathbb{E}\left\{{\Lambda^{(j)}}   \rho_1(Y)\rho_2(Y) \ind  \right \} \\
    &- \mathbb{E}\left\{ {\Lambda^{(j)}}  \rho_1(Y) \ind  \right \}\mathbb{E}\left\{  {\Lambda^{(j)}}\rho_2(Y) \ind  \right \}
\end{align*}
and Lemma~\ref{denom1}(ii) implies
$$
 \cov\left[\Lambda^{(j)}\rho_1(Y)\ind,\Lambda^{(j)}\rho_2(Y)\ind\right] \sim \mathbb{E}\left\{{\Lambda^{(j)}}  \rho_1(Y)\rho_2(Y) \ind  \right \}
$$
as $y\to\infty$ so that the result is proved. 

\noindent (iii)
The covariance $$C^{\rho_1,\varepsilon,j}_{\rho_2,\varepsilon,\ell}:=\cov\left[\Lambda^{(j)}\varepsilon^{(j)}\rho_1(Y)\ind,\Lambda^{(\ell)}\varepsilon^{(\ell)}\rho_2(Y)\ind\right]$$ may be expanded as:
$$
 \mathbb{E}\left\{{\Lambda^{(j)}} {\Lambda^{(\ell)}}  \varepsilon^{(j)}\varepsilon^{(\ell)}\rho_1(Y)\rho_2(Y) \ind  \right \} - \mathbb{E}\left\{ {\Lambda^{(j)}}\varepsilon^{(j)}  \rho_1(Y) \ind  \right \}\mathbb{E}\left\{  {\Lambda^{(\ell)}}\varepsilon^{(\ell)}\rho_2(Y) \ind  \right \}.
$$ 
The first term is bounded using H\"older inequality, applied for all $q'\geq 1$ such that $2/q + {1}/{q'} =1$:
\begin{eqnarray}
 \mathbb{E}(|\varepsilon^{(j)}|^{q})^{1/{q}}\mathbb{E}(|\varepsilon^{(\ell)}|^{q})^{1/{q}} \mathbb{E}(( \rho_1(Y)\rho_2(Y) )^{q'}{\Lambda^{(j)}} {\Lambda^{(\ell)}}  \ind)^{1/{q'}}. \nonumber 
\end{eqnarray}
Similarly, for all $q''\geq 1$ such that $1/q + {1}/{q''} =1$,
the product of the second and third terms is bounded above by
\begin{eqnarray}
 \mathbb{E}(|\varepsilon^{(j)}|^{q})^{1/{q}}\mathbb{E}(|\varepsilon^{(\ell)}|^{q})^{1/{q}} \mathbb{E}( \rho_1(Y)^{q''}{\Lambda^{(j)}}  \ind)^{1/{q''}}\mathbb{E}( \rho_2(Y)^{q''}{\Lambda^{(\ell)}}  \ind)^{1/{q''}}. \nonumber 
\end{eqnarray}
In view of Lemma~\ref{denom1}(ii,iii) together \textcolor{black}{with $\gamma(\frac{q}{q-2}(\mu_1+\mu_2) +2\tau) < 1$ and $\gamma(\frac{q}{q-1}\mu_s +\tau) < 1$ for $s\in \{1,2\}$} and \At, it follows that for any $q'>2,q''>1$ such that $2/q + {1}/{q'} =1$ and $1/q + {1}/{q''} =1$,
\begin{align*}
C^{\rho_1,\varepsilon,j}_{\rho_2,\varepsilon,\ell}&=O(  ( \rho_1\cdot\rho_2\cdot \lambda^{2/q'}\cdot \bar F^{1/q'}(y)) \vee ( \rho_1\cdot\rho_2 \cdot \lambda^{2/q''}\cdot\bar F(y)^{2/q''}) ) \nonumber
\\
&=O\left((  \rho_1\cdot\rho_2\cdot \lambda^{2/q'}\cdot \bar F^{1/q'})(y) \right).
\end{align*}
The result follows by using $2/q+1/q'=1$.

\noindent (iv)
It readily follows from the same Hölder type argument with $1/p+1/q=1$ and an application of Lemma~\ref{denom1}(iii).

\noindent (v)
The covariance $$C^{\rho_1,\varepsilon,j}_{\rho_2,j}:=\cov\left[\Lambda^{(j)}\varepsilon^{(j)}\rho_1(Y)\ind,\Lambda^{(j)}\rho_2(Y)\ind\right]$$ may be expanded as:
$$
 \mathbb{E}\left\{{\Lambda^{(j)}} \varepsilon^{(j)}\rho_1(Y)\rho_2(Y) \ind  \right \} - \mathbb{E}\left\{ {\Lambda^{(j)}}\varepsilon^{(j)}  \rho_1(Y) \ind  \right \}\mathbb{E}\left\{  {\Lambda^{(j)}}\rho_2(Y) \ind  \right \}.
$$ 
The first term is bounded using H\"older inequality, applied for all $q'\geq 1$ such that $1/q + {1}/{q'} =1$:
\begin{eqnarray}
 | \mathbb{E}\{{\Lambda^{(j)}} \varepsilon^{(j)}\rho_1(Y)\rho_2(Y) \ind   \}|\le \mathbb{E}(|\varepsilon^{(j)}|^{q})^{1/{q}} \mathbb{E}(( \rho_1(Y)\rho_2(Y) )^{q'}{\Lambda^{(j)}}   \ind)^{1/{q'}}. \nonumber 
\end{eqnarray}
Similarly, the second term is bounded above:
\begin{eqnarray}
 |\mathbb{E}\{ {\Lambda^{(j)}}\varepsilon^{(j)}  \rho_1(Y) \ind   \}|\le  \mathbb{E}(|\varepsilon^{(j)}|^{q})^{1/{q}} \mathbb{E}( \rho_1(Y)^{q'}{\Lambda^{(j)}}   \ind)^{1/{q'}}. \nonumber 
\end{eqnarray}
In view of Lemma~\ref{denom1}(ii) together \textcolor{black}{with $\gamma(\frac{q}{q-1}(\mu_1+\mu_2)\vee \mu_2 +\tau) < 1$, $\gamma(\mu_2 +\tau) < 1$} and \At, it follows that for any $q'>1$ such that $1/q + {1}/{q'} =1$,
\begin{align*}
C^{\rho_1,\varepsilon,j}_{\rho_2,j}&=O\left( \rho_1(y)\rho_2(y) (\lambda \bar{F})^{1-2/q}(y) \right).
\end{align*}
The result is thus proved.
\end{proof}

\noindent Finally, Lemma~\ref{denom1} also allows us to derive asymptotic equivalents of conditional tail moments in the missing data framework.

\begin{Lem}
\label{newlemma}
   Let $Y$ be a random variable satisfying \Au. Let $\iota \in \{0,1\}$.
   \begin{itemize}
    \item [(i)] Suppose \Aq~holds. Then, as $y\to\infty$, 
    \begin{align*}
     m_{\lambda^{(j)}}(y) \sim \frac{c_j}{1-\gamma\tau} \lambda(y)\bar{F}(y),
\end{align*}
for all $j\in\{1,\dots,p\}$.

\item [(ii)] Assume \Mu, \Ad~and~\At~hold with \textcolor{black}{$\gamma(\kappa+\iota)<1$}.
 Then, as $y\to\infty$,
 \begin{align*}
m_{Y^\iota X^{(j)}}(y) \sim \frac{\beta_j}{1-\gamma (\kappa+\iota)} y^\iota g(y) \bar F(y),
\end{align*}
for all $j\in\{1,\dots,d\}$.
\item [(iii)] Assume \Mu, \Ad,~\At~and~\Aq~hold with
\textcolor{black}{$\gamma(\kappa+\tau+\iota)<1$}, and \textcolor{black}{$1< \gamma(q\kappa+\tau)$}. Then, as $y\to\infty$,
\begin{align*}
m_{Y^\iota\Lambda^{(j)}X^{(j)}}(y) &\sim 
\frac{\beta_j c_j}{1-\gamma(\kappa+\tau+\iota)}y^\iota g(y)\lambda(y)\bar{F}(y),
\end{align*}
for all $j\in\{1,\dots,d\}$.

\end{itemize}
\end{Lem}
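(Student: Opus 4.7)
For part~(i), the plan is to use the tower property together with the MAR condition~\Aq~to reduce $m_{\lambda^{(j)}}(y) = \mathbb{E}[\mathbb{E}[\Lambda^{(j)}\mid Y]\ind] = c_j\,\mathbb{E}[\lambda(Y)\ind]$, and then apply Lemma~\ref{denom1}(i) to $\rho=\lambda\in RV_\tau$. The condition $\gamma\tau<1$ required there is automatic since $\tau<0<1/\gamma$, and the announced equivalent follows at once.

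For parts~(ii) and~(iii), the natural strategy is to exploit Model~\Mu~and expand $X^{(j)}=\beta_j g(Y)+\varepsilon^{(j)}$, splitting each tail moment into a \emph{signal} piece (linear in $g(Y)$) and a \emph{noise} piece (linear in $\varepsilon^{(j)}$). In~(ii), the signal term $\beta_j\,\mathbb{E}[Y^\iota g(Y)\ind]$ is handled by Lemma~\ref{denom1}(i) applied to $\rho(y)=y^\iota g(y)\in RV_{\kappa+\iota}$ under the stated assumption $\gamma(\kappa+\iota)<1$, producing the leading order $\frac{\beta_j}{1-\gamma(\kappa+\iota)}\, y^\iota g(y)\, \bar F(y)$. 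The noise term $\mathbb{E}[Y^\iota \varepsilon^{(j)}\ind]$ is controlled by Hölder's inequality with exponents $(q,q')$ (using~\At) followed by Lemma~\ref{denom1}(i) on $\rho(y)=y^{\iota q'}$, yielding an $O(y^\iota \bar F(y)^{1/q'})$ bound; its ratio to the signal is $\bar F(y)^{-1/q}/g(y)$, which vanishes thanks to the regular-variation growth of $g$ versus the tail of~$Y$.

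For part~(iii), I would write $Y^\iota\Lambda^{(j)} X^{(j)} = \beta_j Y^\iota g(Y)\Lambda^{(j)} + Y^\iota \Lambda^{(j)} \varepsilon^{(j)}$. The signal piece is treated by Lemma~\ref{denom1}(ii) with $\rho(y)=y^\iota g(y)\in RV_{\kappa+\iota}$, producing under $\gamma(\kappa+\iota+\tau)<1$ the announced equivalent $\frac{\beta_j c_j}{1-\gamma(\kappa+\iota+\tau)}\, y^\iota g(y)\, \lambda(y)\, \bar F(y)$. The noise piece is bounded by Lemma~\ref{denom1}(iv) with $\rho(y)=y^\iota$, yielding $O(y^\iota (\lambda\bar F)^{1-1/q}(y))$. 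The noise-to-signal ratio is then of order $\big(g(y)(\lambda\bar F)^{1/q}(y)\big)^{-1}$, whose regular-variation exponent equals $-(\kappa+(\tau-1/\gamma)/q)$; it therefore tends to zero precisely when $\kappa+(\tau-1/\gamma)/q>0$, that is $\gamma(q\kappa+\tau)>1$, which is exactly the second stated assumption.

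The main obstacle lies in the careful bookkeeping of regular-variation exponents in the noise contributions: one must verify that the Hölder-derived bound on the $\varepsilon^{(j)}$-term, compared against the $g(Y)$-signal scale, yields precisely the threshold condition $\gamma(q\kappa+\tau)>1$. The fact that this inequality runs in the ``$>$'' direction — opposite to the usual ``$<$'' tail conditions elsewhere in the paper — reflects the requirement that $g$ grow fast enough in the tail to dominate the moment bound on the noise, which is the delicate balance underpinning the lemma.
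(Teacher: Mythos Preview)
Your proposal is correct and, for parts~(i) and~(iii), follows exactly the paper's argument: the tower property plus Lemma~\ref{denom1}(i) for~(i), and the decomposition $X^{(j)}=\beta_j g(Y)+\varepsilon^{(j)}$ with Lemma~\ref{denom1}(ii) for the signal and Lemma~\ref{denom1}(iv) for the noise in~(iii), together with the comparison yielding the threshold $\gamma(q\kappa+\tau)>1$.

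For part~(ii) there is a small discrepancy worth noting. The paper simply cites \cite[Lemma~7(i)]{Bousebata2023} and does not give a self-contained argument. Your explicit H\"older route is sound, but as written it needs two side conditions that are \emph{not} among the stated hypotheses of part~(ii): applying Lemma~\ref{denom1}(i) to $\rho(y)=y^{\iota q'}$ requires $\gamma\iota q/(q-1)<1$ (for $\iota=1$), and the negligibility $\bar F^{-1/q}(y)/g(y)\to 0$ amounts to $\gamma q\kappa>1$. Both conditions do hold under the global hypotheses of Proposition~\ref{prop-loi-jointe-alpha-mixing} where the lemma is used, so this is harmless in context; but strictly speaking, your argument does not deliver~(ii) under its own stated assumptions alone, whereas the cited external lemma presumably does (possibly under additional structure on $\varepsilon$ that is implicit in \cite{Bousebata2023}).
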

\begin{proof}[\bf{\textup{Proof}}]
(i) One may apply Lemma~\ref{denom1}(i) since $\gamma\tau <1$ in view of $\tau<0$, to get 
for $j\in\{1,\dots,p\}$ as $y\to\infty$:
$$
    m_{\lambda^{(j)}}(y) \sim \frac{c_j}{1-\gamma\tau} \lambda(y)\bar{F}(y).
$$ 
(ii) \cite[Lemma~7(i)]{Bousebata2023} yields for $j\in\{1,\dots,d\}$, as long as \textcolor{black}{$\gamma(\kappa+\iota)<1$},
$$
m_{Y^\iota X^{(j)}}(y) \sim \frac{\beta_j}{1-\gamma (\kappa+s)}y^\iota g(y) \bar F(y),
$$
as $y\to \infty$.
\\
(iii) According to model \Mu, the following decomposition holds
for $j\in\{1,\dots,d\}$:
\begin{align*}
m_{Y\Lambda^{(j)}X^{(j)}}(y) &= \beta_j m_{Y\Lambda^{(j)}g(Y)} (y)+m_{Y\Lambda^{(j)}\varepsilon^{(j)}}(y).
\end{align*} 
The asymptotics of Lemma~\ref{denom1}(ii,iv) together with assumptions \At~and \Aq, readily yield, as $y\to \infty$, under \textcolor{black}{$\gamma(\kappa+\tau+\iota)<1$} and \textcolor{black}{$\gamma( \frac{q}{q-1}+\tau)<1$}, \begin{align*}
m_{Y^\iota \Lambda^{(j)}g(Y)}(y) &\sim \frac{\beta_j c_j}{1-\gamma(\kappa+\tau+\iota)}y^\iota g(y)\lambda(y)\bar{F}(y),
\\
    m_{Y^\iota\Lambda^{(j)}\varepsilon^{(j)}}(y) &=O\left(y^\iota\lambda(y)^{1-1/q} \bar F(y)^{1-1/q}\right)
    =o(m_{Y^\iota\Lambda^{(j)}g(Y)}(y)),
\end{align*} 
 where the last negligibility is induced by the condition \textcolor{black}{$1< \gamma(q\kappa+\tau)$}.
Thus, it follows that, for any $j\in\{1,\dots,d\}$, as $y\to \infty$,
$$
m_{Y^\iota\Lambda^{(j)}X^{(j)}}(y) = 
\frac{\beta_j c_j}{1-\gamma(\kappa+\tau+\iota)} y^\iota g(y)\lambda(y)\bar{F}(y)(1+o(1)) 
$$
and the result is proved.
\end{proof}


\subsection{Proof of main results}
\label{sec-annexe-main}


\begin{proof}[\textbf{\textup{Proof of Proposition~\ref{prop-loi-jointe-alpha-mixing}}}]
Consider $\delta>0$ given in condition \eqref{hyp:mixing} of strong mixing, $\theta_0:=1+\frac{\delta}{2+\delta}$ and $\theta\ge \theta_0$. In virtue of the Cramér–Wold device and to establish the joint asymptotic normality of the $(3d+1)-$ random vector $\Xi_n$, we shall prove that any non-zero linear combination of its components is asymptotically Gaussian.
Consider then
$ \left(a_{1,1},\dots,a_{1,d},a_{2,1},\dots,a_{2,d}, a_{3,1},\dots,a_{3,d},a_4,a_5\right)^\top \in {\mathbb R}^{3d+2}$ deterministic and let us investigate the asymptotic distribution of $\tilde \chi_n$ defined as follows:
\begin{eqnarray*}
 \chi_n &=&  \sqrt{n(\lambda\bar{F})^{\theta}(y_n)}\left\{ \sum_{k=1}^3 \sum_{j=1}^d a_{k,j} \chi^{(k)}_{j,n} + a_4  \chi^{(4)}_n +a_5  \chi^{(5)}_n \right\} := \sum_{i=1}^n \chi_{i,n},
\end{eqnarray*}
where $\chi_{i,n}:= \sqrt{\frac{(\lambda\bar{F})^{\theta}(y_n)}{n}} (\sum_{k=1}^3 \chi^{(k)}_{i,n} +  \chi^{(4)}_{i,n} + \chi^{(5)}_n  )$ with, for any $1\le j \le d$ and for $\iota \in \{0,1\}$,
\begin{align*}
    &\chi^{(1+\iota)}_{i,n} :=  \sum_{j=1}^da_{1,j}\left( \frac{Y_i^\iota\Lambda^{(j)}_iX^{(j)}_i}{m_{Y^\iota\Lambda^{(j)}X^{(j)}}(y_n)}\indin  -1\right)
,\qquad \chi^{(2)}_{i,n} := \sum_{j=1}^da_{2,j}\left(\frac{\Lambda^{(j)}_i}{m_{\lambda^{(j)}}(y_n)} \indin - 1 \right),
        \\
&\chi^{(4+\iota)}_{i,n}:=a_{4+\iota}\left(Y^\iota_i\frac{\indin}{m_{Y^\iota}(y_n)} -1\right).
\end{align*}
By measurable mapping, the  $\alpha$-mixing hypothesis implies that $(\chi_{i,n})_{1\le i \le n}$ is a $\alpha$-mixing triangular, row-strictly stationary array of real random variables. Define for $1\le i<j\le n$, $\mathcal{F}_{n,i,j}=\sigma(\chi_{n,i},\chi_{n,i+1},\ldots,\chi_{n,j})$ and the associated strong mixing coefficient by 
$$ \forall n\ge 2,\quad \forall \ell<n,\quad \alpha_n(\ell):=\sup\{  | \p (A\cap B)-\p(A)\p(B)|, A\in \mathcal{F}_{n,1,k},B\in \mathcal{F}_{n,k+\ell,n},1\le k\le n-\ell \}.$$
Then, it readily follows that $\alpha_n \le \alpha$ and one is in the right setting to apply \cite[Lemma~C.7]{DavisonPadoanStupfler2023} to the sequence $(\chi_{i,n})_{1\le i \le n}$. The four requirements of this result are 
\begin{align}\label{eq:rec1}
    \frac{n}{r_n}\var\left(\sum_{i=1}^{\ell_n} \chi_{i,n}\right) &= \frac{n}{r_n}\ell_n\var( \chi_{1,n})+ 2\frac{n}{r_n}\sum_{1\le i_1<i_2\le \ell_n}\cov( \chi_{i_1,n},\chi_{i_2,n}) \ll 1 , \\
\label{eq:rec2}\frac{n}{r_n}\var\left(\sum_{i=1}^{r_n} \chi_{i,n}\right) &= n\var( \chi_{1,n})+ 2\frac{n}{r_n}\sum_{1\le i_1<i_2\le r_n}\cov( \chi_{i_1,n},\chi_{i_2,n}) \to \sigma^2<\infty,\\
\label{eq:rec3}
\var\left( \sum_{i=1}^{ n-r_n\lfloor n/r_n \rfloor }\chi_{i,n} \right)&=(n-r_n\lfloor n/r_n \rfloor )\var( \chi_{1,n})+ 2\sum_{1\le i_1<i_2\le n-r_n\lfloor n/r_n \rfloor }\cov( \chi_{i_1,n},\chi_{i_2,n}) \nonumber
\\&\le r_n\var( \chi_{1,n})+ 2\sum_{1\le i_1<i_2\le n-r_n\lfloor n/r_n \rfloor }\cov( \chi_{i_1,n},\chi_{i_2,n}) \ll 1,\end{align}
and \begin{align}\label{eq:lyapunov}\frac{n}{r_n}\mathbb{E}\Big(|\sum_{i=1}^{r_n} \chi_{i,n}|^{2+\delta}\Big) \ll 1.
\end{align}
The first step is to study the variance $\var(\chi_{1,n})$ involved in \eqref{eq:rec1}--\eqref{eq:rec3}. An expansion yields:
\begin{align}\label{eq:var_template}
\var(\chi_{1,n})&= \frac{(\lambda \bar{F})^{\theta}(y_n)}{n} \sum_{k=1}^5\var(\chi^{(k)}_{1,n}) + 2\frac{(\lambda \bar{F})^{\theta}(y_n)}{n}  \sum_{k_1<k_2} \cov(\chi^{(k_1)}_{1,n},\chi^{(k_2)}_{1,n}) ,
\end{align}
where each term of variance and covariance will be addressed separately. We begin by collecting the variances, starting with the first case $k=1$. Using the standard variance/covariance properties, as well as the identity $\cov(X-\mathbb{E} X, Y-\mathbb{E} Y)=\cov(X,Y)$, one gets
\begin{align*}
\var(\chi^{(1)}_{i,n}) &= \var\left( \sum_{j=1}^d a_{1,j}\left(\frac{  \Lambda_i^{(j)} X_{i}^{(j)}}{m_{\Lambda^{(j)} X^{(j)}}(y_n)} \indin -1\right)\right)
\\&= \sum_{j=1}^da_{1,j}^2\frac{ \var( \Lambda_i^{(j)} X_{i}^{(j)} \indin )}{m_{\Lambda^{(j)} X^{(j)}}^2(y_n)}
\\&+ 2\sum_{1\le j_1<j_2\le d}a_{1,j_1}a_{1,j_2}\frac{\cov(  \Lambda_i^{(j_1)} X_{i}^{(j_1)}\indin,   \Lambda_i^{(j_2)} X_{i}^{(j_2)}\indin )}{m_{\Lambda^{(j_1)} X^{(j_1)}}(y_n)m_{\Lambda^{(j_2)} X^{(j_2)}}(y_n)} .
\end{align*}
Under \Mu, the first term $\var(\Lambda_i^{(j)} X_{i}^{(j)} \indin )/m_{\Lambda^{(j)} X^{(j)}}^2(y_n)$ may be factorized as\begin{align*}
& \beta_j^2\frac{m_{g^2(Y)\Lambda^{(j)}}(y_n)}{m_{\Lambda^{(j)} X^{(j)}}^2(y_n)}\left\{1+\frac{m_{\Lambda^{(j)}\varepsilon^{(j)}}(y_n) }{\beta_j^2 m_{g^2(Y)\Lambda^{(j)}}(y_n)}
    +2\frac{ m_{g(Y)\Lambda^{(j)}\varepsilon^{(j)}}(y_n)}{\beta_jm_{g^2(Y)\Lambda^{(j)}}(y_n)}-\frac{m^2_{\Lambda^{(j)} X^{(j)}}(y_n)}{\beta_j^2m_{g^2(Y)\Lambda^{(j)}}(y_n)}\right\}.
\end{align*}
Using Lemma~\ref{newlemma}(iii) under \textcolor{black}{$\gamma(\kappa + \tau) <1$} and \textcolor{black}{$1<\gamma(q\kappa+\tau)$}, Lemma~\ref{denom1}(ii) under \textcolor{black}{$\gamma(2\kappa+\tau)<1$} and Lemma~\ref{denom1}(iv) under \textcolor{black}{$\gamma(\frac{q}{q-1}\kappa+\tau)<1$}, it follows:
$$\sum_{j=1}^da_{1,j}^2\frac{ \var(  \Lambda_i^{(j)} X_{i}^{(j)} \indin )}{m_{\Lambda^{(j)} X^{(j)}}^2(y_n)} = \frac{(1+o(1))}{\lambda(y_n)\bar{F}(y_n)} \frac{(1-\gamma(\kappa+\tau))^2}{1-\gamma(2\kappa+\tau)} \sum_{j=1}^d \frac{a^2_{1,j}}{c_j} .
$$
Besides, under \Mu, the covariance $\cov(  \Lambda_i^{(j_1)} X_{i}^{(j_1)}\indin, \Lambda_i^{(j_2)} X_{i}^{(j_2)}\indin )$ is  \begin{align*}
& \beta_{j_1}\beta_{j_2}\cov(g(Y_i)\Lambda_i^{(j_1)} \indin,   g(Y_i)\Lambda_i^{(j_2)} \indin ) 
  \\&  +\beta_{j_1}\cov(g(Y_i)\Lambda_i^{(j_1)} \indin,   \Lambda_i^{(j_2)} \varepsilon_{i}^{(j_2)}\indin )
    \\&+\beta_{j_2}\cov(\Lambda_i^{(j_1)} \varepsilon_{i}^{(j_1)}\indin,   g(Y_i)\Lambda_i^{(j_2)} \indin )
    \\&+\cov(\Lambda_i^{(j_1)} \varepsilon_{i}^{(j_1)}\indin,   \Lambda_i^{(j_2)} \varepsilon_{i}^{(j_2)}\indin ).
\end{align*}
The first term is $\frac{c_{j_1}c_{j_2}\beta_{j_1}\beta_{j_2}}{1-2\gamma(\kappa+\tau)}  g^2(y_n) \lambda^2(y_n)\bar F(y_n)(1+o(1))$ by Lemma~\ref{lem-cov}(i) under \textcolor{black}{$2\gamma(\kappa+\tau)<1$}. The second and third terms are $O\left(g(y_n)\lambda^{2(1-1/q)}(y_n)\bar{F}^{1-1/q}(y_n)\right)$ by Lemma~\ref{lem-cov}(iv) under \textcolor{black}{$\gamma(\frac{q}{q-1}\kappa+2\tau)<1$}. The fourth term is $O\left(\lambda^{2(1-2/q)}(y_n)\bar{F}^{1-2/q}(y_n)\right)$ using Lemma~\ref{lem-cov}(iii) and is negligible compared to the second one. The latter is in turn negligible compared to the first one  under \textcolor{black}{$1<\gamma(2\kappa +2q +\tau)$}. Gathering everything, it follows that
$$
\cov(  \Lambda_i^{(j_1)} X_{i}^{(j_1)}\indin, \Lambda_i^{(j_2)} X_{i}^{(j_2)}\indin )=
\frac{\beta_{j_1}\beta_{j_2}c_{j_1}c_{j_2}}{1-2\gamma(\kappa+\tau)}g^2(y_n) (\lambda\bar{F})(y_n)(1+o(1)) ,
$$
as $n\to\infty$. Then, by Lemma~\ref{newlemma}(iii) under \textcolor{black}{$\gamma(\kappa + \tau) <1$} and \textcolor{black}{$1<\gamma(q\kappa  +\tau)$} for the denominator, it yields
\begin{align}\label{eqvar1}
\var(\chi^{(1)}_{i,n})&= \frac{(1+o(1))}{\lambda(y_n)\bar{F}(y_n)}  \left\{ \frac{(1-\gamma(\kappa+\tau))^2}{1-\gamma(2\kappa+\tau)} \sum_{j=1}^d \frac{a^2_{1,j}}{c_j} +  \frac{(1-\gamma(\kappa+\tau))^2}{1-2\gamma(\kappa+\tau)} \sum_{1\le j_1<j_2\le d} 2 a_{1,j_1}a_{1,j_2} \right\} .
\end{align}
We proceed in the same way for $\chi^{(2)}_{i,n}$. One has to begin:
\begin{align*}
\var(\chi^{(2)}_{i,n}) &= \var\left( \sum_{j=1}^d a_{2,j}\left(\frac{ Y_i \Lambda_i^{(j)} X_{i}^{(j)}}{m_{Y\Lambda^{(j)} X^{(j)}}(y_n)} \indin -1\right)\right)
\\&= \sum_{j=1}^da_{2,j}^2\frac{ \var( Y_i\Lambda_i^{(j)} X_i^{(j)} \indin )}{m_{Y\Lambda^{(j)} X^{(j)}}^2(y_n)}
\\&+ 2\sum_{1\le j_1<j_2\le d}a_{2,j_1}a_{2,j_2}\frac{\cov(  Y_i\Lambda_i^{(j_1)} X_{i}^{(j_1)}\indin,  Y_i \Lambda_i^{(j_2)} X_{i}^{(j_2)}\indin )}{m_{Y\Lambda^{(j_1)} X^{(j_1)}}(y_n)m_{Y\Lambda^{(j_2)} X^{(j_2)}}(y_n)} .
\end{align*}
Under \Mu, the term $\var(Y_i\Lambda_i^{(j)} X_{i}^{(j)} \indin )/m_{Y\Lambda^{(j)} X^{(j)}}^2(y_n)$ may be factorized as\begin{align*}
& \beta_j^2\frac{m_{Y^2g^2(Y)\Lambda^{(j)}}(y_n)}{m_{Y\Lambda^{(j)} X^{(j)}}^2(y_n)} \left\{ 1+\frac{m_{Y\Lambda^{(j)}\varepsilon^{(j)}}(y_n) }{\beta_j^2m_{Y^2g^2(Y)\Lambda^{(j)}}(y_n)}
    +2\frac{ m_{Yg(Y)\Lambda^{(j)}\varepsilon^{(j)}}(y_n)}{\beta_jm_{Y^2g^2(Y)\Lambda^{(j)}}(y_n)}-\frac{m^2_{Y\Lambda^{(j)} X^{(j)}}(y_n)}{\beta_j^2m_{Y^2g^2(Y)\Lambda^{(j)}}(y_n)} \right\}.
\end{align*}
Using Lemma~\ref{newlemma}(iii) under \textcolor{black}{$\gamma(\kappa + \tau + 1) <1$} and \textcolor{black}{$1<\gamma(q(\kappa+1)+\tau)$}, Lemma~\ref{denom1}(ii) under \textcolor{black}{$\gamma(2\kappa+\tau+2)<1$} and Lemma~\ref{denom1}(iv) under \textcolor{black}{$\gamma(\frac{q}{q-1}\kappa+\tau)<1$}, it follows that
$$
\sum_{j=1}^da_{2,j}^2\frac{ \var(  Y_i\Lambda_i^{(j)} X_{i}^{(j)} \indin )}{m_{Y\Lambda^{(j)} X^{(j)}}^2(y_n)} = \frac{(1+o(1))}{\lambda(y_n)\bar{F}(y_n)} \frac{(1-\gamma(\kappa+\tau+1))^2}{1-\gamma(2\kappa+\tau+2)} \sum_{j=1}^d \frac{a^2_{2,j}}{c_j} .
$$
Besides, under \Mu, the covariance $\cov(  Y_i \Lambda_i^{(j_1)} X_{i}^{(j_1)}\indin, Y_i \Lambda_i^{(j_2)} X_{i}^{(j_2)}\indin )$ is  \begin{align*}
& \beta_{j_1}\beta_{j_2}\cov(Y_ig(Y_i)\Lambda_i^{(j_1)} \indin,   Y_ig(Y_i)\Lambda_i^{(j_2)} \indin ) 
  \\&  +\beta_{j_1}\cov(Y_ig(Y_i)\Lambda_i^{(j_1)} \indin,   Y_i\Lambda_i^{(j_2)} \varepsilon_{i}^{(j_2)}\indin )
    \\&+\beta_{j_2}\cov(Y_i\Lambda_i^{(j_1)} \varepsilon_{i}^{(j_1)}\indin,  Y_i g(Y_i)\Lambda_i^{(j_2)} \indin )
    \\&+\cov(Y_i\Lambda_i^{(j_1)} \varepsilon_{i}^{(j_1)}\indin,   Y_i\Lambda_i^{(j_2)} \varepsilon_{i}^{(j_2)}\indin ).
\end{align*}
Factorizing by the first and highest term, using Lemma~\ref{lem-cov}(i) under \textcolor{black}{$2\gamma(\kappa+\tau+1)<1$}, Lemma~\ref{lem-cov}(iv) under \textcolor{black}{$\gamma(\frac{q}{q-1}\kappa+2\tau)<1$} and Lemma~\ref{lem-cov}(iii), one obtains 
$$
\cov(  Y_i \Lambda_i^{(j_1)} X_{i}^{(j_1)}\indin, Y_i \Lambda_i^{(j_2)} X_{i}^{(j_2)}\indin )=\frac{\beta_{j_1}\beta_{j_2}c_{j_1}c_{j_2}}{1-2\gamma(\kappa+\tau+1)}y_n^2g^2(y_n) (\lambda\bar{F})(y_n)(1+o(1)) ,
$$
as $n\to\infty$. Then, by Lemma~\ref{newlemma}(iii) under \textcolor{black}{$\gamma(\kappa + \tau + 1) <1$} and \textcolor{black}{$1<\gamma(q\kappa  +\tau)$} for the denominator, it yields \begin{align}\label{eqvar3}
\var(\chi^{(2)}_{i,n})&= \frac{(1+o(1))}{\lambda(y_n)\bar{F}(y_n)} \left\{ \frac{(1-\gamma(\kappa+\tau+1))^2}{1-\gamma(2\kappa+\tau+2)}\sum_{j=1}^d \frac{a^2_{2,j}}{c_j} +  \frac{(1-\gamma(\kappa+\tau+1))^2}{1-2\gamma(\kappa+\tau+1)} \sum_{1\le j_1<j_2\le d} 2 a_{2,j_1}a_{2,j_2} \right\}.
\end{align}

 Similarly, using Lemma~\ref{newlemma}(i) for the denominator and  Lemma~\ref{lem-cov}(i) for the covariance entails
\begin{align}
\nonumber
\var(\chi^{(3)}_{i,n})&=\sum_{j=1}^d a_{3,j}^2 \frac{ \var( \Lambda_i^{(j)} \indin)}{m_{\lambda^{(j)}}^2(y_n)}
+2 \sum_{1\le j_1<j_2\le d}a_{3,j_1}a_{3,j_2} \frac{\cov(  \Lambda_i^{(j_1)}\indin,   \Lambda_i^{(j_2)} \indin )}{m_{\lambda^{(j_1)} }(y_n)m_{\lambda^{(j_2)}}(y_n)}
\\&=\frac{(1+o(1))}{\lambda(y_n)\bar{F}(y_n)}(1-\gamma\tau)\sum_{j=1}^d 
 \frac{a_{3,j}^2}{c_j} + O\left(\frac{1}{\bar{F}(y_n)}\right)
=  \frac{(1+o(1))}{\lambda(y_n)\bar{F}(y_n)}(1-\gamma\tau)\sum_{j=1}^d 
 \frac{a_{3,j}^2}{c_j} ,
 \label{eqvar2}
\end{align} 
since \textcolor{black}{$\tau < 0$}. Finally, the last sub-variance is easily treated with the variance of a Bernoulli distribution: \begin{align} \label{eqvar4}
\var(\chi^{(4)}_{i,n}) &= \var\left( a_4\left(\frac{  1 }{\bar{F}(y_n)} \indin -1\right)\right) = \frac{a_4^2 (1-\bar{F}(y_n))}{\bar{F}(y_n)}\sim  \frac{a_4^2 }{\bar{F}(y_n)} ,\end{align}
while, using Lemma \ref{denom1}(i) under \textcolor{black}{$\gamma <1/2$},
\begin{align} \label{eqvar5}
\var(\chi^{(5)}_{i,n}) & = a^2_5 \frac{ \var( Y_i \indin)}{m_{Y}^2(y_n)} =  a_5^2\left(\frac{m_{Y^2}(y_n)}{m_{Y}^2(y_n)}-1\right)\sim \frac{(1-\gamma)^2}{1-2\gamma} \ff{\bar{F}(y_n)}.\end{align}
Now, we focus on the sub-covariances in $\var(\chi_{1,n})$. The usual properties of the covariance operator entail
\begin{align*}
\cov(\chi^{(1)}_{i,n},\chi^{(3)}_{i,n})&=\cov\left( \sum_{j=1}^d a_{1,j}\left(\frac{\Lambda_{i}^{(j)} X_{i}^{(j)}}{m_{\Lambda^{(j)} X^{(j)}}(y_n)} \mathds{1}_{\{ Y_{i}\ge y_n \}} -1\right),\sum_{j=1}^d a_{3,j}\left(\frac{  \Lambda_{i}^{(j)}}{m_{\lambda^{(j)}}(y_n)} \mathds{1}_{\{ Y_{i}\ge y_n \}}-1\right)\right)
\\&=\sum_{j_1,j_2=1}^d a_{1,j_1}a_{3,j_2} \frac{\cov( \Lambda_{i}^{(j_1)} X_{i}^{(j_1)}\indin, \Lambda_{i}^{(j_2)}\indin)}{m_{\Lambda^{(j_1)} X^{(j_1)}}(y_n)  m_{\lambda^{(j_2)}}(y_n)}
\\&=\sum_{j_1 \ne j_2} a_{1,j_1}a_{3,j_2}\frac{\cov( \Lambda_i^{(j_1)} X_i^{(j_1)}\indin, \Lambda_i^{(j_2)}\indin)}{m_{\Lambda^{(j_1)} X^{(j_1)}}(y_n)  m_{\lambda^{(j_2)}}(y_n)}
\\&+ \sum_{j=1}^d a_{1,j}a_{3,j} \frac{\cov( \Lambda_i^{(j)} X_i^{(j)}\indin, \Lambda_i^{(j)}\indin)}{m_{\Lambda^{(j)} X^{(j)}}(y_n)  m_{\lambda^{(j)}}(y_n)}.
\end{align*}
Let us start with the double sum in $\cov(\chi^{(1)}_{i,n},\chi^{(3)}_{i,n})$ by writing, under \Mu, that \begin{align*}\cov(\Lambda_i^{(j_1)} X_i^{(j_1)}\indin, \Lambda_i^{(j_2)}\indin)&=\beta_{j_1}\cov( g(Y_i)\Lambda_i^{(j_1)}\indin, \Lambda^{(j_2)}\indin)\\&+\cov(\Lambda_i^{(j_1)} \varepsilon_i^{(j_1)}\indin, \Lambda^{(j_2)}\indin). \end{align*} By Lemma~\ref{lem-cov}(i) under \textcolor{black}{$\gamma(\kappa+2\tau)<1$}, \textcolor{black}{$\gamma\kappa<1$}, the first term is $O((g \lambda^2 \bar F)(y_n))$; while Lemma~\ref{lem-cov}(iv) under \textcolor{black}{$\gamma (\frac{q}{q-1}+2\tau)<1$} shows that the second term is $O((\lambda^{2(1-1/q)}\bar{F}^{1-1/q})(y_n))$. Moreover, $( \lambda^{2(1-1/q)}\bar{F}^{1-1/q})(y_n)\ll (g \lambda^2 \bar F)(y_n)$ under \textcolor{black}{$1<\gamma(q\kappa+2\tau)$}, so that using Lemma~\ref{newlemma}(i), Lemma~\ref{newlemma}(iii) and Lemma~\ref{lem-cov}(iv) under \textcolor{black}{$\gamma(\frac{q}{q-1}+2\tau)<1$} and \textcolor{black}{$1<\gamma(q\kappa+\tau)$}, one has, since $d<+\infty$,
\begin{align*}
\sum_{j_1 \ne j_2} a_{1,j_1}a_{3,j_2}\frac{\cov( \Lambda_i^{(j_1)} X_i^{(j_1)}\indin, \Lambda_i^{(j_2)}\indin)}{m_{\Lambda^{(j_1)} X^{(j_1)}}(y_n)  m_{\lambda^{(j_2)}}(y_n)}&=O\left(\frac{1}{\bar{F}(y_n)}\right) .
\end{align*}
The second double sum in $\cov(\chi^{(1)}_{i,n},\chi^{(3)}_{i,n})$ is treated in the similar fashion, by first decomposing under \Mu~and then using this time Lemma~\ref{lem-cov}(v). All-in-one, it yields: 
\begin{align}
\cov(\chi^{(1)}_{i,n},\chi^{(3)}_{i,n})&=  \frac{1}{\lambda(y_n)\bar{F}(y_n)}\sum_{j=1}^d \frac{a_{1,j}a_{3,j}}{c_j}.
\label{eqcovar}
\end{align}
Concerning $\cov(\chi^{(2)}_{i,n},\chi^{(3)}_{i,n})$, we begin by writing:\begin{align*}
\cov(\chi^{(2)}_{i,n},\chi^{(3)}_{i,n})&=\cov\left( \sum_{j=1}^d a_{2,j}\left(\frac{Y_i\Lambda_{i}^{(j)} X_{i}^{(j)}}{m_{Y\Lambda^{(j)} X^{(j)}}(y_n)} \mathds{1}_{\{ Y_{i}\ge y_n \}} -1\right),\sum_{j=1}^d a_{3,j}\left(\frac{  \Lambda_{i}^{(j)}}{m_{\lambda^{(j)}}(y_n)} \mathds{1}_{\{ Y_{i}\ge y_n \}}-1\right)\right)
\\&=\sum_{j_1,j_2=1}^d a_{2,j_1}a_{3,j_2} \frac{\cov( Y_i\Lambda_{i}^{(j_1)} X_{i}^{(j_1)}\indin, \Lambda_{i}^{(j_2)}\indin)}{m_{Y\Lambda^{(j_1)} X^{(j_1)}}(y_n)  m_{\lambda^{(j_2)}}(y_n)}
\\&=\sum_{j_1 \ne j_2} a_{2,j_1}a_{3,j_2}\frac{\cov( Y_i\Lambda_i^{(j_1)} X_i^{(j_1)}\indin, \Lambda_i^{(j_2)}\indin)}{m_{Y\Lambda^{(j_1)} X^{(j_1)}}(y_n)  m_{\lambda^{(j_2)}}(y_n)}
\\&+ \sum_{j=1}^d a_{2,j}a_{3,j} \frac{\cov(Y_i \Lambda_i^{(j)} X_i^{(j)}\indin, \Lambda_i^{(j)}\indin)}{m_{Y\Lambda^{(j)} X^{(j)}}(y_n)  m_{\lambda^{(j)}}(y_n)}.
\end{align*}
Let us start with the double sum in $\cov(\chi^{(2)}_{i,n},\chi^{(3)}_{i,n})$ by remarking that, under \Mu, \begin{align*}\cov(Y_i\Lambda_i^{(j_1)} X_i^{(j_1)}\indin, \Lambda_i^{(j_2)}\indin)&=\beta_{j_1}\cov( Y_ig(Y_i)\Lambda_i^{(j_1)}\indin, \Lambda^{(j_2)}\indin)\\&+\cov(Y_i\Lambda_i^{(j_1)} \varepsilon_i^{(j_1)}\indin, \Lambda^{(j_2)}\indin). \end{align*} By Lemma~\ref{lem-cov}(i) under \textcolor{black}{$\gamma(\kappa+2\tau+1)<1$}, \textcolor{black}{$\gamma(\kappa+1)<1$}, the first term is $O(y_n(g \lambda^2 \bar F)(y_n))$; while Lemma~\ref{lem-cov}(iv) under \textcolor{black}{$\gamma (\frac{q}{q-1}+2\tau)<1$} shows that the second term is $O(y_n(\lambda^{2(1-1/q)}\bar{F}^{1-1/q})(y_n))$. Moreover, $( \lambda^{2(1-1/q)}\bar{F}^{1-1/q})(y_n)\ll (g \lambda^2 \bar F)(y_n)$ under \textcolor{black}{$1<\gamma(q\kappa+2\tau)$}, so that using Lemma~\ref{newlemma}(i), Lemma~\ref{newlemma}(iii) and Lemma~\ref{lem-cov}(iv) under \textcolor{black}{$\gamma(\frac{q}{q-1}+2\tau)<1$} and \textcolor{black}{$1<\gamma(q\kappa+\tau)$}, the leading term is the first one and by comparison, since $d<+\infty$,
\begin{align*}
\sum_{j_1 \ne j_2} a_{2,j_1}a_{3,j_2}\frac{\cov( Y_i\Lambda_i^{(j_1)} X_i^{(j_1)}\indin, \Lambda_i^{(j_2)}\indin)}{m_{\Lambda^{(j_1)} X^{(j_1)}}(y_n)  m_{\lambda^{(j_2)}}(y_n)}&=O\left(\frac{1}{\bar{F}(y_n)}\right) .
\end{align*}
The second double sum in $\cov(\chi^{(2)}_{i,n},\chi^{(3)}_{i,n})$ is treated in the similar fashion, by first decomposing under \Mu~and then using this time Lemma~\ref{lem-cov}(v) under \textcolor{black}{$\gamma(\frac{q}{q-1}+\tau)<1$}. All-in-one, it yields: 
\begin{align}
\cov(\chi^{(2)}_{i,n},\chi^{(3)}_{i,n})&=  \frac{1}{\lambda(y_n)\bar{F}(y_n)}\sum_{j=1}^d \frac{a_{2,j}a_{3,j}}{c_j}.
\label{eqcovar2}
\end{align}
Regarding $\cov(\chi^{(2)}_{i,n},\chi^{(1)}_{i,n})$, we begin by writing:\begin{align*}
\cov(\chi^{(2)}_{i,n},\chi^{(1)}_{i,n})&=\cov\left( \sum_{j=1}^d a_{2,j}\left(\frac{Y_i\Lambda_{i}^{(j)} X_{i}^{(j)}}{m_{Y\Lambda^{(j)} X^{(j)}}(y_n)} \mathds{1}_{\{ Y_{i}\ge y_n \}} -1\right),\sum_{j=1}^d a_{1,j}\left(\frac{\Lambda_{i}^{(j)} X_{i}^{(j)}}{m_{\Lambda^{(j)} X^{(j)}}(y_n)} \mathds{1}_{\{ Y_{i}\ge y_n \}} -1\right)\right)
\\&=\sum_{j_1,j_2=1}^d a_{2,j_1}a_{1,j_2} \frac{\cov( Y_i\Lambda_{i}^{(j_1)} X_{i}^{(j_1)}\indin, \Lambda_{i}^{(j_2)} X_{i}^{(j_2)}\indin)}{m_{Y\Lambda^{(j_1)} X^{(j_1)}}(y_n)  m_{\Lambda^{(j_2)} X^{(j_2)}}(y_n) }
\\&=\sum_{j_1 \ne j_2} a_{2,j_1}a_{1,j_2}\frac{\cov( Y_i\Lambda_i^{(j_1)} X_i^{(j_1)}\indin, \Lambda_i^{(j_2)}X_{i}^{(j_2)}\indin)}{m_{Y\Lambda^{(j_1)} X^{(j_1)}}(y_n)  m_{\Lambda^{(j_2)} X^{(j_2)}}(y_n) }
\\&+ \sum_{j=1}^d a_{1,j}a_{2,j} \frac{\cov(Y_i \Lambda_i^{(j)} X_i^{(j)}\indin, \Lambda_i^{(j)}X_{i}^{(j)}\indin)}{m_{Y\Lambda^{(j)} X^{(j)}}(y_n)  m_{\lambda^{(j)}X^{(j)}}(y_n)}.
\end{align*}
We start with the double sum in $\cov(\chi^{(2)}_{i,n},\chi^{(1)}_{i,n})$ by writing, under \Mu, that \begin{align*}\cov(Y_i\Lambda_i^{(j_1)} X_i^{(j_1)}\indin, \Lambda_i^{(j_2)}X_i^{(j_2)}\indin)&=\beta_{j_1}\beta_{j_2}\cov( Y_ig(Y_i)\Lambda_i^{(j_1)}\indin, g(Y_i)\Lambda^{(j_2)}\indin)\\&+\beta_{j_1}\cov(Y_ig(Y_i)\Lambda_i^{(j_1)} \indin, \Lambda^{(j_2)}\varepsilon_i^{(j_2)}\indin)
\\&+\beta_{j_2}\cov(\Lambda_i^{(j_1)}\varepsilon_i^{(j_1)} \indin, Y_ig(Y_i)\Lambda^{(j_2)}\indin)
\\&+\cov(Y_i\Lambda_i^{(j_1)}\varepsilon_i^{(j_1)} \indin, \Lambda^{(j_2)}\varepsilon_i^{(j_2)}\indin)
. \end{align*} By Lemma~\ref{lem-cov}(i) under \textcolor{black}{$\gamma(2\kappa+2\tau+1)<1$}, the first term is $O(y_n(g^2 \lambda^2 \bar F)(y_n))$; while Lemma~\ref{lem-cov}(iv) under \textcolor{black}{$\gamma(\frac{q}{q-1}(\kappa + 1) +2\tau) < 1$} shows that the second and third term are $O\left( y_ng(y_n) \lambda^{2(1-1/q)}(y_n)\bar F^{1-1/q}(y_n)\right)$. The fourth term is $O\left( y_n\lambda^{2(1-2/q)}(y_n)\bar F^{1-2/q}(y_n)\right) $ thanks to Lemma~\ref{lem-cov}(iii) under \textcolor{black}{$\gamma(\frac{q}{q-2} +2\tau) < 1$} and \textcolor{black}{$\gamma(\frac{q}{q-1} +\tau) < 1$}. Doing again a comparison, it holds that $( \lambda^{2(1-1/q)}\bar{F}^{1-1/q})(y_n)\ll (g \lambda^2 \bar F)(y_n)$ under \textcolor{black}{$1<\gamma(q\kappa+2\tau)$}, so that using Lemma~\ref{newlemma}(i), Lemma~\ref{newlemma}(iii) and Lemma~\ref{lem-cov}(iv) under \textcolor{black}{$\gamma(\frac{q}{q-1}+2\tau)<1$} and \textcolor{black}{$1<\gamma(q\kappa+\tau)$}, the leading term is the first one and, since $d<+\infty$,
\begin{align*}
\sum_{j_1 \ne j_2} a_{2,j_1}a_{1,j_2}\frac{\cov( Y_i\Lambda_i^{(j_1)} X_i^{(j_1)}\indin, \Lambda_i^{(j_2)}X_{i}^{(j_2)}\indin)}{m_{Y\Lambda^{(j_1)} X^{(j_1)}}(y_n)  m_{\Lambda^{(j_2)} X^{(j_2)}}(y_n) }&=O\left(\frac{1}{\bar{F}(y_n)}\right) .
\end{align*}
The case $\cov(\chi^{(k_1)}_{i,n},\chi^{(k_2)}_{i,n})$ with $k_1\in \{1,2\}, k_2\in \{4,5\}$ is treated as follows. Let $\iota,s\in \{0,1\}$ and using the definition of the tail-moments, one has, since $d<+\infty$,
\begin{align*}
\cov(\chi^{(1+\iota)}_{i,n},\chi^{(4+s)}_{i,n})&=\cov\left( \sum_{j=1}^d a_{1+\iota,j}\left(\frac{Y^\iota_i\Lambda_{i}^{(j)} X_{i}^{(j)}}{m_{Y^\iota\Lambda^{(j)} X^{(j)}}(y_n)} \mathds{1}_{\{ Y_{i}\ge y_n \}} -1\right),a_{4+s}\left(\frac{Y^s_i\indin}{m_{Y^s}(y_n)} -1\right)\right)
\\&=a_{4+s}\sum_{j=1}^d a_{1+\iota,j} \frac{\cov( Y^\iota_i\Lambda_{i}^{(j)} X_{i}^{(j)}\indin,Y^s_i \indin)}{m_{Y^\iota\Lambda^{(j)} X^{(j)}}(y_n)  m_{Y^s}(y_n) } 
\\&=a_{4+s} \sum_{j=1}^d a_{1+\iota,j} \frac{\E( Y^{\iota + s}\Lambda^{(j)} X^{(j)}\ind)-\E( Y^{\iota}\Lambda^{(j)} X^{(j)}\ind)\E( Y^s\ind)}{m_{Y^\iota\Lambda^{(j)} X^{(j)}}(y_n)  m_{Y^s}(y_n) }  
\\&= a_{4+s} \sum_{j=1}^d a_{1+\iota,j} \frac{\E( Y^{\iota + s}\Lambda^{(j)} X^{(j)}\ind)}{m_{Y^\iota\Lambda^{(j)} X^{(j)}}(y_n)  m_{Y^s}(y_n) } (1+o(1)) = O\left(\frac{1}{\bar{F}(y_n)}\right).
\end{align*}
Besides, for the case $k_1=3$ and $k_2\in \{4,5\}$, since $d<+\infty$,
\begin{align*}
\cov(\chi^{(3)}_{i,n},\chi^{(4+s)}_{i,n})&=\cov\left( \sum_{j=1}^d a_{3,j}\left(\frac{\Lambda_{i}^{(j)} }{m_{\lambda^{(j)}}(y_n)} \mathds{1}_{\{ Y_{i}\ge y_n \}} -1\right),a_{4+s}\left(\frac{Y^s_i\indin}{m_{Y^s}(y_n)} -1\right)\right)
\\&=a_{4+s}\sum_{j=1}^d a_{3,j} \frac{\cov( \Lambda_{i}^{(j)} \indin, Y^s_i\indin)}{m_{\lambda^{(j)}}(y_n) m_{Y^s}(y_n) } 
\\&=a_{4+s} \sum_{j=1}^d a_{3,j} \frac{m_{Y^s \Lambda^{(j)}}(y_n)- m_{\lambda^{(j)}}(y_n)m_{Y^s}(y_n)}{m_{\lambda^{(j)}}(y_n)  m_{Y^s}(y_n) }  = O\left(\frac{1}{\bar{F}(y_n)}\right).
\end{align*}
Finally,
\begin{align*}
\cov(\chi^{(5)}_{i,n},\chi^{(4)}_{i,n})&=\cov\left(a_5\left(\frac{Y_{i} }{m_{Y}(y_n)} \mathds{1}_{\{ Y_{i}\ge y_n \}} -1\right),a_4\left(\frac{\indin}{\bar{F}(y_n)} -1\right)\right)
\\&=a_4a_5 \frac{\cov( Y_i \indin, \indin)}{m_{Y}(y_n)  \bar{F}(y_n) } 
= a_4a_5 \frac{\E( Y\ind)(1-\bar{F}(y_n))}{m_{Y}(y_n)  \bar{F}(y_n) }  = O\left(\frac{1}{\bar{F}(y_n)}\right).
\end{align*}
Combining \eqref{eqvar1}--\eqref{eqcovar2} into~\eqref{eq:var_template}, it follows that, for some $\vfi_1$ depending on $d$, $\gamma,\kappa,\tau$ and $a_{k,j}$ for $1\le k \le 5$ and $1\le j \le d$, since $d<+\infty$,
\begin{align}\label{eq:var}
    \var( \tilde \chi_{1,n}) &= \frac{(\lambda \bar{F})^{\theta-1}(y_n)}{n}\vfi_1(1+o(1)).
\end{align}
The second step is the control of the covariances in \eqref{eq:rec1}--\eqref{eq:rec3} and, to this aim, let us consider the expansion, for $i_1\ne i_2$,
\begin{align*}
\cov(\chi_{i_1,n},\chi_{i_2,n}) &= \frac{(\lambda \bar{F})^{\theta}(y_n)}{n} \sum_{k=1}^4 \cov(\chi^{(k)}_{i_1,n},\chi^{(k)}_{i_2,n})
+
2\frac{(\lambda \bar{F})^{\theta}(y_n)}{n} \sum_{k_1<k_2} \cov(\chi^{(k_1)}_{i_1,n},\chi^{(k_2)}_{i_2,n}).
\end{align*}
Our goal now is to prove that, for any $k_1,k_2\in \{1,2,3,4,5\}$ and $i_1,i_2\le n$, it holds that
$$ 
\cov(\chi^{(k_1)}_{i_1,n},\chi^{(k_2)}_{i_2,n}) \lesssim \alpha^{\frac{\delta}{2+\delta}}(|i_1-i_2|) (\lambda \bar{F})^{\frac{2}{2+\delta}-2}(y_n).
$$
Invoking \cite[Theorem~3]{Doukhan1994} with herein $p=q=2+\delta$, one may write, for $\iota \in \{0,1\}$,
\begin{align*}
\cov(\chi^{(1+\iota)}_{i_1,n},\chi^{(1+\iota)}_{i_2,n})&= 
 \sum_{j_1,j_2=1}^d a_{1+\iota,j_1} a_{1+\iota,j_2} \frac{\cov( Y^\iota_{i_1}\Lambda_{i_1}^{(j_1)} X_{i_1}^{(j_1)}  \mathds{1}_{\{ Y_{i_1}\ge y_n \}} ,Y^\iota_{i_2}\Lambda_{i_2}^{(j_2)} X_{i_2}^{(j_2)}  \mathds{1}_{\{ Y_{i_2}\ge y_n \}})}{m_{Y^\iota\Lambda^{(j_1)} X^{(j_1)}(y_n)}m_{Y^\iota\Lambda^{(j_2)} X^{(j_2)}}(y_n)}
 \\&\le 
8\alpha^{\frac{\delta}{2+\delta}}(|i_1-i_2|) \sum_{j_1,j_2=1}^d a_{1
+\iota,j_1}  a_{1+\iota,j_2} \frac{\prod_{s=1}^2\E( Y^\iota \Lambda^{(j_s)} (X^{(j_s)})^{2+\delta}  \mathds{1}_{\{ Y\ge y_n \}})^{\frac{1}{2+\delta}} }{m_{Y^\iota\Lambda^{(j_1)} X^{(j_1)}(y_n)}m_{Y^\iota\Lambda^{(j_2)} X^{(j_2)}}(y_n)}.
 \end{align*}
Then, under model \Mu, and by a Jensen-type argument, it follows that
\begin{align*}
\cov(\chi^{(1+\iota)}_{i_1,n},\chi^{(1+\iota)}_{i_2,n}) &\lesssim \alpha^{\frac{\delta}{2+\delta}}(|i_1-i_2|)\sum_{j_1,j_2=1}^d a_{1+\iota,j_1} a_{1+\iota,j_2}\frac{\max_{j\le d}\E(Y^\iota g^{2+\delta}(Y)\Lambda^{(j)} \mathds{1}_{\{ Y\ge y_n \}} )^{\frac{2}{2+\delta}}}{m_{Y^\iota\Lambda^{(j_1)} X^{(j_1)}(y_n)}m_{Y^\iota\Lambda^{(j_2)} X^{(j_2)}}(y_n)} 
\\&+\alpha^{\frac{\delta}{2+\delta}}(|i_1-i_2|)\sum_{j_1,j_2=1}^d a_{1+\iota,j_1} a_{1+\iota,j_2}\frac{\max_{j\le d}\E(Y^\iota \Lambda^{(j)} (\varepsilon^{(j)})^{2+\delta} \mathds{1}_{\{ Y\ge y_n \}} )^{\frac{2}{2+\delta}}}{m_{Y^\iota\Lambda^{(j_1)} X^{(j_1)}(y_n)}m_{Y^\iota\Lambda^{(j_2)} X^{(j_2)}}(y_n)} .
\end{align*}
Next, using Lemma~\ref{newlemma}(iii) under \textcolor{black}{$1<\gamma(q\kappa+\tau)$} for the denominator and Lemma~\ref{denom1}(ii) under \textcolor{black}{$\gamma((2+\delta)\kappa+\tau+1)<1$} together with Lemma~\ref{denom1}(v) under \textcolor{black}{$\gamma(\frac{q}{q-2-\delta} +\tau)<1$} for respectively the first and second numerator terms, one may write under \At~for the negligibility in the last line, for $n$ large enough, for any $i_2\ne i_2$, since $d<+\infty$ and \textcolor{black}{$1<\gamma(q\kappa+\tau)$}, 
\begin{align}
\nonumber
\cov(\chi^{(1+\iota)}_{i_1,n},\chi^{(1+\iota)}_{i_2,n}) &\lesssim\alpha^{\frac{\delta}{2+\delta}}(|i_1-i_2|)\left\{ \frac{y_n^{\iota\frac{2}{2+\delta}}g^2(y_n)(\lambda\bar{F})^{\frac{2}{2+\delta}}(y_n)}{y_n^2g^2(y_n)(\lambda \bar{F})^2(y_n)} + \frac{y_n^{\iota\frac{2}{2+\delta}}(\lambda\bar{F})^{\frac{2}{2+\delta}-\frac{2}{q}}(y_n)}{y_n^2g^2(y_n)(\lambda \bar{F})^2(y_n)} \right\}
  \\ \nonumber&= \alpha^{\frac{\delta}{2+\delta}}(|i_1-i_2|)y_n^{\iota\frac{2}{2+\delta}-2}( \lambda\bar{F})^{\frac{2}{2+\delta}-2}(y_n)\left\{1 + g^{-2}(y_n)(\lambda\bar{F})^{-\frac{2}{q}}(y_n)\right\}
    \\&\lesssim \alpha^{\frac{\delta}{2+\delta}}(|i_1-i_2|)(\lambda \bar{F})^{\frac{2}{2+\delta}-2}(y_n).
    \label{eqC33}
\end{align}
Lemma~\ref{newlemma}(i) and \cite[Theorem~3]{Doukhan1994} give that, for $n$ large enough, for any $i_2\ne i_2$, since $d<+\infty$,
\begin{align}
\nonumber
\cov(\chi^{(3)}_{i_1,n},\chi^{(3)}_{i_2,n}) &=\cov\left( \sum_{j=1}^d a_{3,j}\left(\frac{  \Lambda_{i_1}^{(j)}}{m_{\lambda^{(j)}}(y_n)} \mathds{1}_{\{ Y_{i_1}\ge y_n \}}-1\right),\sum_{j=1}^d a_{3,j}\left(\frac{  \Lambda_{i_2}^{(j)}}{m_{\lambda^{(j)}}(y_n)} \mathds{1}_{\{ Y_{i_2}\ge y_n \}}-1\right)\right)
\\ \nonumber&=\sum_{j_1,j_2=1}^d a_{3,j_1}a_{3,j_2} \frac{\cov(\Lambda_{i_1}^{(j_1)}\mathds{1}_{\{ Y_{i_1}\ge y_n \}} ,  \Lambda_{i_2}^{(j_2)}\mathds{1}_{\{ Y_{i_2}\ge y_n \}})}{m_{\lambda^{(j_1)}}(y_n)  m_{\lambda^{(j_2)}}(y_n)}
\\ \nonumber&\lesssim \alpha^{\frac{\delta}{2+\delta}}(|i_1-i_2|)\sum_{j_1,j_2=1}^d a_{3,j_1}a_{3,j_2}\frac{\max_{j\le d} \{m_{\lambda^{(j)}}(y_n)\}^{\frac{2}{2+\delta}}}{m_{\lambda^{(j_1)}}(y_n)  m_{\lambda^{(j_2)}}(y_n)} 
\\&\lesssim \alpha^{\frac{\delta}{2+\delta}}(|i_1-i_2|) (\lambda \bar{F})^{\frac{2}{2+\delta}-2}(y_n).
\label{eqC22}
\end{align}
Once more, \cite[Theorem~3]{Doukhan1994} herein with $p=q=2+\delta$ provides, for $\iota \in \{0,1\}$,
\begin{align}\label{eqC44}
\cov(\chi^{(4+\iota)}_{i_1,n},\chi^{(4+\iota)}_{i_2,n})&= 
 a_{4+\iota}^2 \frac{\cov(  Y^\iota_{i_1}\mathds{1}_{\{ Y_{i_1}\ge y_n \}} , Y^\iota_{i_2} \mathds{1}_{\{ Y_{i_2}\ge y_n \}})}{m^2_{Y^\iota}(y_n)}\le 
8a_{4+\iota}^2 \alpha^{\frac{\delta}{2+\delta}}(|i_1-i_2|) \bar{F}^{\frac{2}{2+\delta}-2}(y_n).
 \end{align}
Concerning the cross-covariances, one may write in virtue of \cite[Theorem~3]{Doukhan1994},
\begin{align*}
\cov(\chi^{(1)}_{i_1,n},\chi^{(3)}_{i_2,n})&=\cov\left( \sum_{j=1}^d a_{1,j}\left(\frac{  \Lambda_{i_1}^{(j)} X_{i_1}^{(j)}}{m_{\Lambda^{(j)} X^{(j)}}(y_n)} \mathds{1}_{\{ Y_{i_1}\ge y_n \}} -1\right),\sum_{j=1}^d a_{3,j}\left(\frac{  \Lambda_{i_2}^{(j)}}{m_{\lambda^{(j)}}(y_n)} \mathds{1}_{\{ Y_{i_2}\ge y_n \}}-1\right)\right)
\\&=\sum_{j_1,j_2=1}^d \alpha_{1,j_1}\alpha_{2,j_2}\frac{\cov( \Lambda_{i_1}^{(j_1)} X_{i_1}^{(j_1)}\mathds{1}_{\{ Y_{i_1}\ge y_n \}}, \Lambda_{i_2}^{(j_2)}\mathds{1}_{\{ Y_{i_2}\ge y_n \}})}{m_{\Lambda^{(j_1)} X^{(j_1)}}(y_n)  m_{\lambda^{(j_2)}}(y_n)}
\\&\le 8 \alpha^{\frac{\delta}{2+\delta}}(|i_1-i_2|) \sum_{j_1,j_2=1}^d a_{1,j_1} a_{1,j_2}\frac{\E(\Lambda^{(j_1)} (X^{(j_1)})^{2+\delta}  \mathds{1}_{\{ Y\ge y_n \}})^{\frac{1}{2+\delta}} \E(\Lambda^{(j_2)}  \mathds{1}_{\{ Y\ge y_n \}})^{\frac{1}{2+\delta}} }{m_{\Lambda^{(j_1)} X^{(j_1)}(y_n)}m_{\lambda^{(j_2)}}(y_n)}.
\end{align*}
Now, in view of \Mu, one can decompose the first expectation to get
\begin{align*}
\cov(\chi^{(1)}_{i_1,n},\chi^{(3)}_{i_2,n}) &\lesssim \alpha^{\frac{\delta}{2+\delta}}(|i_1-i_2|)\sum_{j_1,j_2=1}^d a_{1,j_1} a_{3,j_2}\frac{\max_{j\le d}\{\E(g^{2+\delta}(Y)\Lambda^{(j)}  \mathds{1}_{\{ Y\ge y_n \}} )\cdot m_{\lambda^{(j)}}(y_n)\}\}^{\frac{1}{2+\delta}}}{m_{\Lambda^{(j_1)} X^{(j_1)}(y_n)}m_{\lambda^{(j_2)}}(y_n)}
\\& +  \alpha^{\frac{\delta}{2+\delta}}(|i_1-i_2|)\sum_{j_1,j_2=1}^d a_{1,j_1}a_{3,j_2} \frac{\max_{j\le d}\{\E(\Lambda^{(j)} (\varepsilon^{(j)})^{2+\delta} \mathds{1}_{\{ Y\ge y_n \}} ) \cdot m_{\lambda^{(j)}}(y_n)\} \}^{\frac{1}{2+\delta}}}{m_{\Lambda^{(j_1)} X^{(j_1)}(y_n)}m_{\lambda^{(j_2)}}(y_n)}.
\end{align*}
By Lemma~\ref{newlemma}(i,iii) since $\tau<0$ and \textcolor{black}{$1<\gamma(q\kappa+\tau)$} for the denominator, by Lemma~\ref{denom1}(ii) under \textcolor{black}{$\gamma((2+\delta)\kappa+\tau)<1$} for the first numerator and Lemma~\ref{denom1}(v) for the second numerator, one may write, since $d<+\infty$,
\begin{align}
\nonumber
\cov(\chi^{(1)}_{i_1,n},\chi^{(3)}_{i_2,n}) &\lesssim \alpha^{\frac{\delta}{2+\delta}}(|i_1-i_2|) \left\{\frac{g(y_n)(\lambda \bar{F})^{\frac{2}{2+\delta}}(y_n)}{g(y_n)(\lambda \bar{F})^2(y_n)} + \frac{(\lambda\bar{F})^{\frac{2}{2+\delta}-\frac{2}{q}}(y_n)}{g(y_n)(\lambda \bar{F})^2(y_n)} \right\}
\\ \nonumber&= \alpha^{\frac{\delta}{2+\delta}}(|i_1-i_2|) \{(\lambda \bar{F})^{\frac{2}{2+\delta}-2}(y_n)+ g^{-1}(y_n)(\lambda\bar{F})^{\frac{2}{2+\delta}-\frac{2}{q}-2}(y_n) \}
\\&\lesssim \alpha^{\frac{\delta}{2+\delta}}(|i_1-i_2|) (\lambda \bar{F})^{\frac{2}{2+\delta}-2}(y_n),
\label{eqC12}
\end{align}
where we used \textcolor{black}{$1<\gamma(q\kappa/2+\tau)$}  for negligibility in the last line.
By \cite[Theorem~3]{Doukhan1994},
\begin{align*}
\cov(\chi^{(1)}_{i_1,n},\chi^{(2)}_{i_2,n})&=\cov\left( \sum_{j=1}^d a_{1,j}\left(\frac{  \Lambda_{i_1}^{(j)} X_{i_1}^{(j)}}{m_{\Lambda^{(j)} X^{(j)}}(y_n)} \mathds{1}_{\{ Y_{i_1}\ge y_n \}} -1\right),\sum_{j=1}^d a_{2,j}\left(\frac{  Y_{i_2} \Lambda_{i_2}^{(j)} X_{i_2}^{(j)}}{m_{Y\Lambda^{(j)}X^{(j)}}(y_n)} \mathds{1}_{\{ Y_{i_2}\ge y_n \}}-1\right)\right)
\\&=\sum_{j_1,j_2=1}^d \alpha_{1,j_1}\alpha_{2,j_2}\frac{\cov( \Lambda_{i_1}^{(j_1)} X_{i_1}^{(j_1)}\mathds{1}_{\{ Y_{i_1}\ge y_n \}}, Y_{i_2}\Lambda_{i_2}^{(j_2)}X^{(j_2)}_{i_2}\mathds{1}_{\{ Y_{i_2}\ge y_n \}})}{m_{\Lambda^{(j_1)} X^{(j_1)}}(y_n)  m_{Y\Lambda^{(j_2)}X^{(j_2)}}(y_n)}
\\&\le 8 \alpha^{\frac{\delta}{2+\delta}}(|i_1-i_2|) \sum_{j_1,j_2=1}^d a_{1,j_1} a_{3,j_2}\prod_{s\in \{0,1\}}\frac{\E(Y^{s}\Lambda^{(j_{s+1})} (X^{(j_{s+1})})^{2+\delta}  \mathds{1}_{\{ Y\ge y_n \}})^{\frac{1}{2+\delta}}}{m_{Y^{s}\Lambda^{(j_{s+1})} X^{(j_{s+1})}}(y_n)}
\\&=:  8 \alpha^{\frac{\delta}{2+\delta}}(|i_1-i_2|) \sum_{j_1,j_2=1}^d a_{1,j_1} a_{3,j_2}\prod_{s\in \{0,1\}}E_{j_s,\delta,n}.
\end{align*}
Now, in view of \Mu~and a Jensen-type argument, one can decompose:
\begin{align*}
\prod_{s\in \{0,1\}}E_{j_s,\delta,n} &\lesssim \frac{\{\E(g^{2+\delta}(Y)\Lambda^{(j_1)}  \mathds{1}_{\{ Y\ge y_n \}} )\cdot \E(Yg^{2+\delta}(Y)\Lambda^{(j_2)}  \mathds{1}_{\{ Y\ge y_n \}} )\}^{\frac{1}{2+\delta}}}{m_{\Lambda^{(j_1)} X^{(j_1)}(y_n)}m_{Y\Lambda^{(j_2)}X^{(j_2)} }(y_n)}
\\& +  \frac{\{\E(\Lambda^{(j_1)} (\varepsilon^{(j_1)})^{2+\delta} \mathds{1}_{\{ Y\ge y_n \}} ) \cdot \E(Y\Lambda^{(j_2)}  (\varepsilon^{(j_2)})^{2+\delta}\mathds{1}_{\{ Y\ge y_n \}} ) \}^{\frac{1}{2+\delta}}}{m_{\Lambda^{(j_1)} X^{(j_1)}(y_n)}m_{Y\Lambda^{(j_2)}X^{(j_2)} }(y_n)}
\\& + \frac{\{\E(\Lambda^{(j_1)} (\varepsilon^{(j_1)})^{2+\delta} \mathds{1}_{\{ Y\ge y_n \}} ) \cdot \E(Yg^{2+\delta}(Y)\Lambda^{(j_2)}\mathds{1}_{\{ Y\ge y_n \}} ) \}^{\frac{1}{2+\delta}}}{m_{\Lambda^{(j_1)} X^{(j_1)}(y_n)}m_{Y\Lambda^{(j_2)}X^{(j_2)} }(y_n)}
\\& + \frac{\{\E(g^{2+\delta}(Y)\Lambda^{(j_1)}\mathds{1}_{\{ Y\ge y_n \}} )\cdot \E(Y\Lambda^{(j_2)} (\varepsilon^{(j_2)})^{2+\delta} \mathds{1}_{\{ Y\ge y_n \}} )  \}^{\frac{1}{2+\delta}}}{m_{\Lambda^{(j_1)} X^{(j_1)}(y_n)}m_{Y\Lambda^{(j_2)}X^{(j_2)} }(y_n)}.
\end{align*}
By Lemma \ref{newlemma}(iii) under \textcolor{black}{$\gamma(\kappa+\tau+1)<1$}, the common denominator is, up to a multiplicative constant, asymptotically equal to $y_ng^2(y_n)(\lambda \bar F)^2(y_n)$. The first term is then $O\left(y_n^{\frac{1}{2+\delta}-1}(\lambda\bar{F})^{\frac{2}{2+\delta}-2}(y_n)\right)$ by Lemma \ref{denom1}(ii) under \textcolor{black}{$\gamma((2+\delta)\kappa + \tau + 1) < 1$}. The second term is $O\left( y_n^{\ff{2+\delta}-1}g^{-2}(y_n) (\lambda \bar F)^{\f{2}{2+\delta}-\f{2}{q}-2}(y_n)\right)$ thanks to Lemma \ref{denom1}(v) under \textcolor{black}{$\gamma( \frac{q}{q-2-\delta}+\tau) <1$}. Using Lemma \ref{denom1}(ii,v), the third and fourth terms are both $O\left( y_n^{\ff{2+\delta}-1}g^{-1}(y_n) (\lambda \bar F)^{\f{2}{2+\delta}-\f{1}{q}-2}(y_n)\right)$. Notice that the leading term is the first one under \textcolor{black}{$1<\gamma(q\kappa+\tau)$} while being bounded by $(\lambda\bar{F})^{\frac{2}{2+\delta}-2}(y_n)$. Therefore, it follows that, since $d<+\infty$,
\begin{align}
\cov(\chi^{(1)}_{i_1,n},\chi^{(2)}_{i_2,n}) &\lesssim \alpha^{\frac{\delta}{2+\delta}}(|i_1-i_2|) (\lambda \bar{F})^{\frac{2}{2+\delta}-2}(y_n).
\label{eqC13}
\end{align}
Once more, by \cite[Theorem~3]{Doukhan1994}, using  \Mu~and a Jensen-type argument, it yields:
\begin{align*}
\cov(\chi^{(3)}_{i_1,n},\chi^{(2)}_{i_2,n})&=\cov\left( \sum_{j=1}^d a_{3,j}\left(\frac{  \Lambda_{i_1}^{(j)} }{m_{\lambda^{(j)} }(y_n)} \mathds{1}_{\{ Y_{i_1}\ge y_n \}} -1\right),\sum_{j=1}^d a_{2,j}\left(\frac{  Y_{i_2} \Lambda_{i_2}^{(j)} X_{i_2}^{(j)}}{m_{Y\Lambda^{(j)}X^{(j)}}(y_n)} \mathds{1}_{\{ Y_{i_2}\ge y_n \}}-1\right)\right)
\\&=\sum_{j_1,j_2=1}^d a_{3,j_1}a_{2,j_2}\frac{\cov( \Lambda_{i_1}^{(j_1)} \mathds{1}_{\{ Y_{i_1}\ge y_n \}}, Y_{i_2}\Lambda_{i_2}^{(j_2)}X^{(j_2)}_{i_2}\mathds{1}_{\{ Y_{i_2}\ge y_n \}})}{m_{\lambda^{(j_1)} }(y_n)  m_{Y\Lambda^{(j_2)}X^{(j_2)}}(y_n)}
\\&\le 8 \alpha^{\frac{\delta}{2+\delta}}(|i_1-i_2|) \sum_{j_1,j_2=1}^d a_{3,j_1} a_{2,j_2}
\frac{m^{\ff{2+\delta}-1}_{\lambda^{(j_1)} }(y_n)}{ m_{Y\Lambda^{(j_2)} X^{(j_2)}(y_n)}}m_{Y\Lambda^{(j_2)} (X^{(j_2)})^{2+\delta}}^{\frac{1}{2+\delta}}(y_n)
\\&\lesssim  \alpha^{\frac{\delta}{2+\delta}}(|i_1-i_2|) \sum_{j_1,j_2=1}^d a_{3,j_1} a_{2,j_2}
\frac{m^{\ff{2+\delta}-1}_{\lambda^{(j_1)} }(y_n)}{ m_{Y\Lambda^{(j_2)} X^{(j_2)}(y_n)}}(m_{Y\Lambda^{(j_2)} g^{2+\delta}(Y)}^{\frac{1}{2+\delta}}(y_n)+m_{Y\Lambda^{(j_2)} (\varepsilon^{(j_2)})^{2+\delta}}^{\frac{1}{2+\delta}}(y_n))
\end{align*}
By Lemma \ref{newlemma}(i) and Lemma \ref{denom1}(ii,v), since $d<+\infty$, 
\begin{align*}
\cov(\chi^{(3)}_{i_1,n},\chi^{(2)}_{i_2,n})&\lesssim  \alpha^{\frac{\delta}{2+\delta}}(|i_1-i_2|)y_n^{\ff{2+\delta}-1} (\lambda \bar F)^{\f{2}{2+\delta}-2}(y_n) \big\{ 1 + g^{-1}(y_n) (\lambda \bar F)^{-\ff{q}}(y_n)  \big\}.
\end{align*}
Since \textcolor{black}{$1<\gamma(\kappa q+\tau)$}, we conclude that:
\begin{align}
\cov(\chi^{(3)}_{i_1,n},\chi^{(2)}_{i_2,n}) &\lesssim \alpha^{\frac{\delta}{2+\delta}}(|i_1-i_2|) (\lambda \bar{F})^{\frac{2}{2+\delta}-2}(y_n).
\label{eqC23}
\end{align}
It only remains to investigate the cases with $\chi^{(k)}_{i,n}$, $k\in \{4,5\}$. This time, we employ \cite[Corollary A.1]{hallheyde} with herein $p=2+\delta$ to get, using \Mu~and Jensen's inequality,
\begin{align*}
\cov(\chi^{(1)}_{i_1,n},\chi^{(4)}_{i_2,n})&=\cov\left( \sum_{j=1}^d a_{1,j}\left(\frac{  \Lambda_{i_1}^{(j)} X_{i_1}^{(j)}}{m_{\Lambda^{(j)} X^{(j)}}(y_n)} \mathds{1}_{\{ Y_{i_1}\ge y_n \}} -1\right),a_4\left(\frac{\mathds{1}_{\{ Y_{i_2}\ge y_n \}}}{\bar{F}(y_n)} -1\right)\right)
\\&=a_4\sum_{j=1}^d a_{1,j}\frac{\cov( \Lambda_{i_1}^{(j_1)} X_{i_1}^{(j_1)}\mathds{1}_{\{ Y_{i_1}\ge y_n \}}, \mathds{1}_{\{ Y_{i_2}\ge y_n \}})}{m_{\Lambda^{(j_1)} X^{(j_1)}}(y_n)  \bar{F}(y_n)}
\\&\le 6a_4 \alpha^{\frac{\delta}{2+\delta}}(|i_1-i_2|) \sum_{j=1}^d a_{1,j} \frac{\E(\Lambda^{(j)} (X^{(j)})^{2+\delta}  \mathds{1}_{\{ Y\ge y_n \}})^{\frac{1}{2+\delta}} }{m_{\Lambda^{(j_1)} X^{(j_1)}(y_n)}}\bar{F}^{-1}(y_n) 
\\&\lesssim \alpha^{\frac{\delta}{2+\delta}}(|i_1-i_2|) \bar{F}^{-1}(y_n) \frac{m_{\Lambda^{(j)} g^{2+\delta}(Y)}^{\frac{1}{2+\delta}}(y_n)+m_{\Lambda^{(j)} (\varepsilon^{(j)})^{2+\delta}}^{\frac{1}{2+\delta}}(y_n) }{m_{\Lambda^{(j_1)} X^{(j_1)}(y_n)}}.
\end{align*}
By Lemma \ref{denom1}(ii,v), Lemma \ref{newlemma}(iii) and the fact that \textcolor{black}{$1<\gamma(\kappa q + \tau)$}, it follows that,
\begin{align*}
\cov(\chi^{(1)}_{i_1,n},\chi^{(4)}_{i_2,n})&\lesssim \alpha^{\frac{\delta}{2+\delta}}(|i_1-i_2|) \bar{F}^{-1}(y_n) \frac{g(y_n) (\lambda \bar F)^{\ff{2+\delta}}(y_n)+ (\lambda \bar{F})^{\ff{2+\delta}-\ff{q}}(y_n)  }{g(y_n)(\lambda \bar F)(y_n)}
\\&\lesssim \alpha^{\frac{\delta}{2+\delta}}(|i_1-i_2|) \bar{F}^{-1}(y_n) (\lambda \bar F)^{\ff{2+\delta}-1}(y_n) \big\{ 1+ g^{-1}(\lambda \bar F)^{-\ff{q}}(y_n)\big\}
\\&\lesssim \alpha^{\frac{\delta}{2+\delta}}(|i_1-i_2|) \bar{F}^{-1}(y_n) (\lambda \bar F)^{\ff{2+\delta}-1}(y_n) .
\end{align*}
Again, by Lemma \ref{denom1}(i,ii,v) under \textcolor{black}{$(2+\delta)\gamma<1$}, \textcolor{black}{$\gamma((2+\delta)\kappa + \tau + 1)<1$} and \textcolor{black}{$\gamma(\frac{q}{q-2-\delta}+\tau)<1$}, for $\iota \in \{0,1\}$, we have that, using \Mu~and Jensen's inequality:
\begin{align*}
\cov(\chi^{(1+\iota)}_{i_1,n},\chi^{(5)}_{i_2,n})&=\cov\left( \sum_{j=1}^d a_{1+\iota,j}\left(\frac{ Y^\iota_{i_1}  \Lambda_{i_1}^{(j)} X_{i_1}^{(j)}}{m_{Y^\iota\Lambda^{(j)} X^{(j)}}(y_n)} \mathds{1}_{\{ Y_{i_1}\ge y_n \}} -1\right),a_5\left(\frac{Y_{i_2}\mathds{1}_{\{ Y_{i_2}\ge y_n \}}}{m_Y(y_n)} -1\right)\right)
\\&=a_5\sum_{j=1}^d a_{1+\iota,j}\frac{\cov( Y^\iota_{i_1}\Lambda_{i_1}^{(j_1)} X_{i_1}^{(j_1)}\mathds{1}_{\{ Y_{i_1}\ge y_n \}}, Y_{i_2}\mathds{1}_{\{ Y_{i_2}\ge y_n \}})}{m_{Y^\iota\Lambda^{(j_1)} X^{(j_1)}}(y_n)  m_Y(y_n)}
\\&\lesssim \alpha^{\frac{\delta}{2+\delta}}(|i_1-i_2|) \max_{1\le j \le d} \frac{\E(Y^\iota\Lambda^{(j)} (X^{(j)})^{2+\delta}  \mathds{1}_{\{ Y\ge y_n \}})^{\frac{1}{2+\delta}} m_{Y^{2+\delta}}^{\ff{2+\delta}}(y_n)}{m_{Y^\iota\Lambda^{(j_1)} X^{(j_1)}(y_n)}m_Y(y_n)}
\\&\lesssim \alpha^{\frac{\delta}{2+\delta}}(|i_1-i_2|)  \frac{m_{Y^{2+\delta}}^{\ff{2+\delta}}(y_n)}{m_Y(y_n)} \frac{m_{Y^\iota\Lambda^{(j)} g^{2+\delta}(Y)}^{\frac{1}{2+\delta}}(y_n)+m_{Y^\iota\Lambda^{(j)} (\varepsilon^{(j)})^{2+\delta}}^{\frac{1}{2+\delta}}(y_n) }{m_{Y^\iota\Lambda^{(j_1)} X^{(j_1)}(y_n)}}
\\&\lesssim \alpha^{\frac{\delta}{2+\delta}}(|i_1-i_2|) \bar{F}^{\frac{1}{2+\delta}-1}(y_n)y_n^{\iota(\ff{2+\delta}-1)} \frac{g(y_n) (\lambda \bar F)^{\ff{2+\delta}}(y_n)+ (\lambda \bar{F})^{\ff{2+\delta}-\ff{q}}(y_n)  }{g(y_n)(\lambda \bar F)(y_n)}
\\&\lesssim \alpha^{\frac{\delta}{2+\delta}}(|i_1-i_2|) y_n^{\iota(\ff{2+\delta}-1)} (\lambda \bar{F}^2)^{\ff{2+\delta}-1}(y_n)
.
\end{align*}
Using in addition Lemma \ref{newlemma}(i), we get:
\begin{align*}
\cov(\chi^{(4)}_{i_1,n},\chi^{(5)}_{i_2,n})&=\cov\left( a_4\left(\frac{  \Lambda_{i_1}^{(j)} }{m_{\lambda^{(j)}}(y_n)} \mathds{1}_{\{ Y_{i_1}\ge y_n \}} -1\right),a_5\left(\frac{ Y_{i_2}\mathds{1}_{\{ Y_{i_2}\ge y_n \}}}{m_{Y}(y_n)} -1\right)\right)
\\&=a_4a_5\frac{\cov( \Lambda_{i_1}^{(j_1)} \mathds{1}_{\{ Y_{i_1}\ge y_n \}},  Y_{i_2}\mathds{1}_{\{ Y_{i_2}\ge y_n \}})}{m_{\lambda^{(j_1)} }(y_n)  m_{Y}(y_n)}
\lesssim \alpha^{\frac{\delta}{2+\delta}}(|i_1-i_2|)  \frac{ m^{\frac{1}{2+\delta}}_{Y^{2+\delta}}(y_n)}{m_{Y}(y_n)}m^{\frac{1}{2+\delta}-1}_{\lambda^{(j)}}(y_n)
\\&\lesssim \alpha^{\frac{\delta}{2+\delta}}(|i_1-i_2|) \bar{F}^{\frac{1}{2+\delta}-1}(y_n) (\lambda \bar{F})^{\frac{1}{2+\delta}-1}(y_n)\lesssim \alpha^{\frac{\delta}{2+\delta}}(|i_1-i_2|)  (\lambda \bar{F}^2)^{\frac{1}{2+\delta}-1}(y_n).
\end{align*}
We may conclude that, since $\tau<0$,
\begin{align}\label{eqC14}
\cov(\chi^{(1)}_{i_1,n},\chi^{(4)}_{i_2,n})\vee\cov(\chi^{(1)}_{i_1,n},\chi^{(5)}_{i_2,n})\vee \cov(\chi^{(4)}_{i_1,n},\chi^{(5)}_{i_2,n})&\lesssim \alpha^{\frac{\delta}{2+\delta}}(|i_1-i_2|)  (\lambda \bar F)^{\f{2}{2+\delta}-2}(y_n) .
\end{align}
Samewise, using \Mu~and Jensen's inequality,
\begin{align*}
\cov(\chi^{(2)}_{i_1,n},\chi^{(4)}_{i_2,n})&=\cov\left( \sum_{j=1}^d a_{2,j}\left(\frac{ Y_{i_1}  \Lambda_{i_1}^{(j)} X_{i_1}^{(j)}}{m_{Y\Lambda^{(j)} X^{(j)}}(y_n)} \mathds{1}_{\{ Y_{i_1}\ge y_n \}} -1\right),a_4\left(\frac{\mathds{1}_{\{ Y_{i_2}\ge y_n \}}}{\bar{F}(y_n)} -1\right)\right)
\\&=a_4\sum_{j=1}^d a_{2,j}\frac{\cov( Y_{i_1}\Lambda_{i_1}^{(j_1)} X_{i_1}^{(j_1)}\mathds{1}_{\{ Y_{i_1}\ge y_n \}}, \mathds{1}_{\{ Y_{i_2}\ge y_n \}})}{m_{Y\Lambda^{(j_1)} X^{(j_1)}}(y_n)  \bar{F}(y_n)}
\\&\le 6a_4 \alpha^{\frac{\delta}{2+\delta}}(|i_1-i_2|) \sum_{j=1}^d a_{2,j} \frac{\E(Y\Lambda^{(j)} (X^{(j)})^{2+\delta}  \mathds{1}_{\{ Y\ge y_n \}})^{\frac{1}{2+\delta}} }{m_{Y\Lambda^{(j_1)} X^{(j_1)}(y_n)}}\bar{F}^{-1}(y_n) 
\\&\lesssim \alpha^{\frac{\delta}{2+\delta}}(|i_1-i_2|) \bar{F}^{-1}(y_n) \frac{m_{Y\Lambda^{(j)} g^{2+\delta}(Y)}^{\frac{1}{2+\delta}}(y_n)+m_{Y\Lambda^{(j)} (\varepsilon^{(j)})^{2+\delta}}^{\frac{1}{2+\delta}}(y_n) }{m_{Y\Lambda^{(j_1)} X^{(j_1)}(y_n)}}
\\&\lesssim \alpha^{\frac{\delta}{2+\delta}}(|i_1-i_2|) \bar{F}^{-1}(y_n) y_n^{\ff{2+\delta}-1} \frac{g(y_n) (\lambda \bar F)^{\ff{2+\delta}}(y_n)+ (\lambda \bar{F})^{\ff{2+\delta}-\ff{q}}(y_n)  }{g(y_n)(\lambda \bar F)(y_n)}
\\&\lesssim \alpha^{\frac{\delta}{2+\delta}}(|i_1-i_2|) \bar{F}^{-1}(y_n) y_n^{\ff{2+\delta}-1}(\lambda \bar F)^{\ff{2+\delta}-1}(y_n) \big\{ 1+ g^{-1}(\lambda \bar F)^{-\ff{q}}(y_n)\big\}
\\&\lesssim \alpha^{\frac{\delta}{2+\delta}}(|i_1-i_2|) \bar{F}^{-1}(y_n)y_n^{\ff{2+\delta}-1} (\lambda \bar F)^{\ff{2+\delta}-1}(y_n) .
\end{align*}
On the other hand, we have according to \cite[Theorem 3]{Doukhan1994}, 
\begin{align*}
\cov(\chi^{(3)}_{i_1,n},\chi^{(5)}_{i_2,n})&=\cov\left( \sum_{j=1}^d a_{3,j}\left(\frac{  \Lambda_{i_1}^{(j)} }{m_{\lambda^{(j)} }(y_n)} \mathds{1}_{\{ Y_{i_1}\ge y_n \}} -1\right),a_5\left(\frac{Y_{i_2}\mathds{1}_{\{ Y_{i_2}\ge y_n \}}}{m_Y(y_n)} -1\right)\right)
\\&=a_5\sum_{j=1}^d a_{3,j}\frac{\cov( \Lambda_{i_1}^{(j)} \mathds{1}_{\{ Y_{i_1}\ge y_n \}}, Y_{i_2}\mathds{1}_{\{ Y_{i_2}\ge y_n \}})}{m_{\lambda^{(j)} }(y_n)  m_Y(y_n)}
\\&\lesssim \alpha^{\frac{\delta}{2+\delta}}(|i_1-i_2|) \frac{m_{Y^{2+\delta}}^{\ff{2+\delta}}(y_n)}{m_Y(y_n)}  m_{\lambda^{(j)}}^{\frac{1}{2+\delta}-1}(y_n) 
\lesssim  \alpha^{\frac{\delta}{2+\delta}}(|i_1-i_2|)  (\lambda \bar {F}^2)^{\ff{2+\delta}-1}(y_n).
\end{align*}
Combining the previous results, it follows that, since $\tau<0$, \begin{align}\label{eqC34}
\cov(\chi^{(2)}_{i_1,n},\chi^{(4)}_{i_2,n})\vee \cov(\chi^{(3)}_{i_1,n},\chi^{(5)}_{i_2,n}) \vee\cov(\chi^{(2)}_{i_1,n},\chi^{(5)}_{i_2,n})&\lesssim \alpha^{\frac{\delta}{2+\delta}}(|i_1-i_2|)  (\lambda \bar F)^{\f{2}{2+\delta}-2}(y_n) .
\end{align}
In the same fashion as before, 
\cite[Corollary~A.1]{hallheyde} with herein $p=2+\delta$ yields: 
\begin{align*}
\cov(\chi^{(3)}_{i_1,n},\chi^{(4)}_{i_2,n})&=\cov\left( \sum_{j=1}^d a_{3,j}\left(\frac{  \Lambda_{i_1}^{(j)} }{m_{\lambda^{(j)} }(y_n)} \mathds{1}_{\{ Y_{i_1}\ge y_n \}} -1\right),a_4\left(\frac{\mathds{1}_{\{ Y_{i_2}\ge y_n \}}}{\bar{F}(y_n)} -1\right)\right)
\\&=a_4\sum_{j=1}^d a_{3,j}\frac{\cov( \Lambda_{i_1}^{(j)} \mathds{1}_{\{ Y_{i_1}\ge y_n \}}, \mathds{1}_{\{ Y_{i_2}\ge y_n \}})}{m_{\lambda^{(j)} }(y_n)  \bar{F}(y_n)}
\\&\le 6a_4 \alpha^{\frac{\delta}{2+\delta}}(|i_1-i_2|) \bar{F}^{-1}(y_n)\sum_{j=1}^d a_{3,j} m_{\lambda^{(j)}}^{\frac{1}{2+\delta}-1}(y_n) 
\\&\lesssim a_4 \alpha^{\frac{\delta}{2+\delta}}(|i_1-i_2|)  \bar{F}^{-1}(y_n) (\lambda \bar F)^{\ff{2+\delta}-1}(y_n).
\end{align*}
Therefore, it yields since $\tau<0$,
\begin{align}\label{eqC24}
\cov(\chi^{(3)}_{i_1,n},\chi^{(4)}_{i_2,n})&\lesssim \alpha^{\frac{\delta}{2+\delta}}(|i_1-i_2|)  (\lambda \bar F)^{\f{2}{2+\delta}-2}(y_n) .
\end{align}
Combining the bounds~\eqref{eqC33}--\eqref{eqC24}, it follows that, when $n$ is large enough and for $u_n\le n$,
\begin{align}\label{eq:cov_un}
\sum_{1\le i_1<i_2\le u_n} \cov(\chi_{i_1,n},\chi_{i_2,n}) &\lesssim \frac{1}{n}(\lambda \bar{F})^{\frac{2}{2+\delta}-2+\theta}(y_n) \sum_{1\le i_1<i_2\le u_n} \alpha^{\frac{\delta}{2+\delta}}(|i_1-i_2|) .
\end{align}
Furthermore, the following inequality holds under~\eqref{hyp:mixing},
\begin{align}\label{eq:sum_alpha}
     \sum_{1\le i_1<i_2\le u_n} \alpha^{\frac{\delta}{2+\delta}}(i_2-i_1)&\lesssim u_n.
\end{align} 
 Indeed, a derivation writes as,
\begin{align*}
\sum_{1\le i_1<i_2\le u_n} \alpha^{\frac{\delta}{2+\delta}}(i_2-i_1)&= \sum_{i=1}^{u_n}\sum_{j=1}^{u_n-i}\alpha^{\frac{\delta}{2+\delta}}(j)
= \sum_{i=1}^{u_n}\sum_{j=1}^{u_n-i}\alpha^{\frac{\delta}{2+\delta}}(j) = \sum_{i,j} \alpha^{\frac{\delta}{2+\delta}}(j)\mathds{1}_{\{ 1\le i \le u_n, 1\le j \le u_n-i \}}
\\&= \sum_{j=1}^{u_n-1}\sum_{i=1}^{u_n-j}\alpha^{\frac{\delta}{2+\delta}}(j) = \sum_{i,j} \alpha^{\frac{\delta}{2+\delta}}(j)\mathds{1}_{\{ 1\le i \le u_n - j, 1\le j \le u_n-1 \}}
\\&= \sum_{j=1}^{u_n-1}(u_n-j)\alpha^{\frac{\delta}{2+\delta}}(j) \le  u_n\sum_{j=1}^{u_n-1}\alpha^{\frac{\delta}{2+\delta}}(j).
\end{align*}
Once that the control of the variance~\eqref{eq:var} and of the covariance~\eqref{eq:cov_un} are established, we may go back to the conditions~\eqref{eq:rec1}--\eqref{eq:lyapunov} required for applying \cite[Lemma~C.7]{DavisonPadoanStupfler2023}. Firstly, in the case $\theta=\theta_0$, the asymptotic~\eqref{eq:var} implies that $ n\var( \chi_{1,n})\ll 1$. Moreover, the case $\delta=0$, namely $\theta_0=1$, corresponds to $n\var( \chi_{1,n})$ being convergent. 

Secondly,~\eqref{hyp:mixing},~\eqref{eq:var} and~\eqref{eq:cov_un} readily yield the first statement~\eqref{eq:rec1} and the third one~\eqref{eq:rec3} since $n-r_n\lfloor n/r_n \rfloor\le r_n$. The second statement~\eqref{eq:rec2} also follows from~\eqref{hyp:mixing}, ~\eqref{eq:var},~\eqref{eq:cov_un} and \eqref{eq:sum_alpha} with $u_n=r_n$ since we have shown that:
\begin{align*}
   \frac{n}{r_n}\sum_{1\le i_1<i_2\le r_n}\cov( \chi_{i_1,n},\chi_{i_2,n}) \lesssim (\lambda \bar{F})^{\frac{2}{2+\delta}-2+\theta}(y_n)= (\lambda \bar{F})^{\theta-\theta_0}(y_n) & <\infty.
\end{align*}
When $\theta>\theta_0 :=1+\frac{\delta}{2+\delta}$, the latest bound in the previous display is converging to zero, since $\tau<0$, which in turn implies that the limiting variance is zero. Furthermore, it is not mandatory to verify the Lyapunov condition~\eqref{eq:lyapunov} according to the statement of \cite[Lemma~C.7]{DavisonPadoanStupfler2023}. As such, the proof in the case $\theta>\theta_0$ is finished.

It only remains to fulfill the Lyapunov condition~\eqref{eq:lyapunov} in the case $\theta=\theta_0$. Let us start with the triangular inequality for the $L^{2+\delta}(\Omega)$-norm and Jensen's inequality to write 
\begin{align*}
\mathbb{E}\Big(|\sum_{i=1}^{r_n} \chi_{i,n}|^{2+\delta}\Big) &=  \| \sum_{i=1}^{r_n} \chi_{i,n} \|^{2+\delta}_{L^{2+\delta}(\Omega)} \le \sum_{i=1}^{r_n} \|  \chi_{i,n} \|^{2+\delta}_{L^{2+\delta}(\Omega)}
= \frac{(\lambda\bar{F})^{\theta}(y_n)}{n} \sum_{i=1}^{r_n}  \sum_{k=1}^5 \|  \chi^{(k)}_{i,n} \|^{2+\delta}_{L^{2+\delta}(\Omega)}.
\end{align*}
Denote in the sequel $\rho_n=(\lambda\bar{F})^{\theta}(y_n)/n$.
Therefore, the homogeneity of the norm and the triangular inequality entail, for $n$ large enough,
\begin{align*}
&\mathbb{E}\Big(|\sum_{i=1}^{r_n} \chi_{i,n}|^{2+\delta}\Big)\\
&\le   \rho_n^{1+\delta/2} \sum_{i=1}^{r_n} \sum_{j=1}^d \left(a_{1,j}^{2+\delta} \left \|\frac{  \Lambda_i^{(j)} X_{i}^{(j)}}{m_{\Lambda^{(j)} X^{(j)}}(y_n)} \indin  \right\|^{2+\delta}_{L^{2+\delta}(\Omega)}+a_{2,j}^{2+\delta}\left \|\frac{  Y_i\Lambda_i^{(j)} X_{i}^{(j)}}{m_{Y\Lambda^{(j)} X^{(j)}}(y_n)} \indin \right \|^{2+\delta}_{L^{2+\delta}(\Omega)} \right) 
\\&+ \rho_n^{1+\delta/2} \sum_{i=1}^{r_n} 
a_5^{2+\delta} \left \|\frac{Y_i\indin}{m_{Y}(y_n)}  \right\|^{2+\delta}_{L^{2+\delta}(\Omega)}+  \rho_n^{1+\delta/2} \sum_{i=1}^{r_n}\left(2a_4^{2+\delta}+2a_5^{2+\delta}+ \sum_{j=1}^d (a_{1,j}^{2+\delta}+2a_{3,j}^{2+\delta}+a_{2,j}^{2+\delta})\right)
\\&\lesssim   \rho_n^{1+\delta/2} \sum_{i=1}^{r_n} \sum_{j=1}^d \left(a_{1,j}^{2+\delta}\frac{m_{\Lambda^{(j)} (X^{(j)})^{2+\delta}  }(y_n)}{m^{2+\delta}_{\Lambda^{(j)} X^{(j)}}(y_n)}+a_{2,j}^{2+\delta}\frac{m_{\Lambda^{(j)} (YX^{(j)})^{2+\delta}  }(y_n)}{m^{2+\delta}_{Y\Lambda^{(j)} X^{(j)}}(y_n)}  \right)+\rho_n^{1+\delta/2} \sum_{i=1}^{r_n} a_5^{2+\delta} \frac{m_{Y^{2+\delta}}(y_n)}{m^{2+\delta}_Y(y_n)} + r_n\rho_n^{1+\delta/2} 
\\&\lesssim   \rho_n^{1+\delta/2} \sum_{i=1}^{r_n} \sum_{j=1}^d a_{1,j}^{2+\delta}\frac{(m_{ g^{2+\delta}(Y_i)\Lambda^{(j)}  }+m_{\Lambda^{(j)} (\varepsilon^{(j)})^{2+\delta}  })(y_n)}{m^{2+\delta}_{\Lambda^{(j)} X^{(j)}}(y_n)}
\\&+ \rho_n^{1+\delta/2} \sum_{i=1}^{r_n} \sum_{j=1}^d a_{2,j}^{2+\delta}\frac{(m_{ Y^{2+\delta}g^{2+\delta}(Y)\Lambda^{(j)}  }+m_{\Lambda^{(j)} (Y\varepsilon^{(j)})^{2+\delta}  })(y_n)}{m^{2+\delta}_{Y\Lambda^{(j)} X^{(j)}}(y_n)}   + r_n\rho_n^{1+\delta/2} \bar{F}^{-(1+\delta)}(y_n)  +r_n\rho_n^{1+\delta/2} .
\end{align*}
By Lemma~\ref{denom1}(ii) under \textcolor{black}{$\gamma((2+\delta)(\kappa+1)+\tau)<1$} and Lemma~\ref{denom1}(v) under \textcolor{black}{$\gamma(\frac{q}{q-2-\delta}(2+\delta)+\tau)<1$} for the numerators, and by Lemma \ref{newlemma}(i,iii) for the denominators, it follows that
\begin{align*}
\mathbb{E}\Big(|\sum_{i=1}^{r_n} \chi_{i,n}|^{2+\delta}\Big)&\lesssim  r_n \rho_n^{1+\delta/2}  \left\{(\lambda \bar{F})^{-(1+\delta)}(y_n) + g^{-(2+\delta)}(y_n) (\lambda \bar F)^{-(1+\delta)-(2+\delta)/q}(y_n)\right\} 
\\&\lesssim  r_n \rho_n^{1+\delta/2}  (\lambda \bar{F})^{-(1+\delta)}(y_n) = n^{-\delta/2} \ll 1 ,\end{align*} 
where we used in the last line the fact that \textcolor{black}{$1<\gamma(\kappa q + \tau)$} and $\theta_0(1+\delta/2)=1+\delta$. This concludes~\eqref{eq:lyapunov} and thus the proof.


\end{proof}


\begin{proof}[\textbf{\textup{Proof of Theorem~\ref{theo-princ}}}]
We start by considering the expansion $\hat{\beta}_{\Lambda}(y_n)  -  \beta = (\hat{\beta}_{\Lambda}(y_n)  -  w(y_n) ) +  ( w(y_n) - \beta )$ where $w$ is given in \eqref{solution}. Then, by triangle inequality and in view of $ g(y_n)\bar{F}^{1/q}(y_n) \| w(y_n) - \beta\| = O(1)$ provided by \cite[Proposition~2]{Bousebata2023} under \textcolor{black}{$\gamma(\kappa+1)<1$}, it is sufficient to show that, for any $\theta > 1+\frac{\delta}{2+\delta}$, for some Gaussian random vector $G$,
\begin{align}\label{eq:cv_aux}
    n^{1/2}(\lambda\bar{F})^{\theta/2}(y_n) \left(\hat \beta_{\Lambda}(y_n) -  w(y_n)\right)& \xrightarrow[n\to +\infty]{\mathbb{P}} \beta \odot G,\quad \text{in $\mathbb{R}^{p}$.}
\end{align}
Let us introduce
$$\psi_1:= \frac{1-\gamma\tau}{1-\gamma(\kappa+\tau+1)}- \frac{1-\gamma\tau}{(1-\gamma)(1-\gamma(\kappa+\tau))}\mbox{ and }\psi_2 :=  \frac{\gamma^2 \kappa}{(1-\gamma(\kappa+1))(1-\gamma\kappa)(1-\gamma)}.
$$
Let $j\in\{1,\dots,d\}$. Applying the Delta-method together with Proposition~\ref{prop-loi-jointe-alpha-mixing} and the identity given in \eqref{esti}, there exists a sequence of real random variables $(\vartheta_{j,n})_j$ such that $n^{1/2}(\lambda\bar{F})^{\theta/2}(y_n)(\vartheta_{j,n})_j$ is asymptotically a centered Gaussian vector $G\in \R^d$ when $\theta=1+\frac{\delta}{2+\delta}$; in the case $\theta>1+\frac{\delta}{2+\delta}$, the variance of $G$ is zero, \emph{i.e.}, $n^{1/2}(\lambda\bar{F})^{\theta/2}(y_n)\vartheta_{j,n}\ll 1$ in probability. Additionally,
\begin{align*}
     \hat v^{(j)}_{\Lambda}(y_n)&= \frac{\bar{F}(y_n)}{{m}_{\lambda^{(j)}}(y_n)}\left({\bar{F}}(y_n) {m}_{Y \Lambda^{(j)}X^{(j)}}(y_n) - {m}_{Y}(y_n){m}_{\Lambda^{(j)} X^{(j)}}(y_n)\right) (1+\vartheta_{j,n}) .
     \end{align*}
Next, using Lemma~\ref{denom1}(i), Lemma~\ref{newlemma}(i,iii) and Lemma~\ref{newlemma}(ii) under {\color{black} $\gamma(\kappa+1)<1$}, one obtains, since $\beta_j\ne 0$ when $j\in\{1,\dots,d\}$:
\begin{align*}
     \hat v^{(j)}_{\Lambda}(y_n)&= \psi_1 \beta_j  y_ng(y_n)\bar{F}(y_n)(1+\vartheta_{j,n}) .
\end{align*}
Furthermore, going back to~\eqref{esti} and expanding $\| \hat v_{\Lambda}(y_n) \|^2 = \hat v_{\Lambda}(y_n)^\top \hat v_{\Lambda}(y_n)$ thanks to Proposition~\ref{prop-loi-jointe-alpha-mixing}, Lemma~\ref{denom1}(i) and Lemma~\ref{newlemma}(i,iii) yield
\begin{align}\label{eq:aux_norm}
    \| \hat v_{\Lambda}(y_n) \| &= \psi_1 y_ng(y_n)\bar{F}(y_n)(1+o_{\p}(1)), 
\end{align}
since $ \beta  $ is of unit norm, 
Besides, by Lemma~\ref{denom1}(i) and Lemma~\ref{newlemma}(ii) under {\color{black} $\gamma(\kappa+1)<1$}, \begin{align*}
    v^{(j)}(y_n) &=  \psi_2  \beta_jy_ng(y_n)\bar{F}^2(y_n)(1+o(1)) \mbox{ and } 
 \|  v(y_n) \| = \psi_2 y_ng(y_n)\bar{F}^2(y_n)(1+o(1)) .
 \end{align*}
Combining the last three displays, it yields:
\begin{align*}
    n^{1/2}(\lambda\bar{F})^{\theta/2}(y_n)\left( \frac{\hat v^{(j)}_{\Lambda}(y_n)}{\| \hat v_{\Lambda}(y_n) \|} -  \frac{v^{(j)}(y_n)}{\| v(y_n) \|}\right)_{1\le j \le d}&=   n^{1/2}(\lambda\bar{F})^{\theta/2}(y_n)(\beta_j \vartheta_{j,n})_{1\le j \le d}\,  (1+o_\p(1)) .
\end{align*}
Now, let $d+1\le j \le p$, meaning that $\beta_j=0$. It remains to show that for any such $j$ and any $\theta\ge 1+\f{\delta}{2+\delta}$, \begin{align}\label{eq:cv_aux2}
    n^{1/2}(\lambda\bar{F})^{\theta/2}(y_n) \frac{\hat{v}^{(j)}_{\Lambda}(y_n)}{\| \hat{v}_{\Lambda}(y_n)\| }& \xrightarrow[n\to +\infty]{\mathbb{P}} 0.
\end{align}
When $\beta_j=0$, using Proposition \ref{prop-loi-jointe-alpha-mixing} with \eqref{esti}, one may write:
\begin{align*}
     \hat v^{(j)}_{\Lambda}(y_n)&=\frac{1}{{m}_{\lambda^{(j)}}(y_n)}\left({\bar{F}}(y_n) \hat{m}_{Y \Lambda^{(j)}\varepsilon^{(j)}}(y_n) - \hat{m}_{Y}(y_n)\hat{m}_{\Lambda^{(j)} \varepsilon^{(j)}}(y_n)\right) (1+O_{\p}(\{n(\lambda \bar F)^\theta(y_n)\}^{1/2})) .
     \end{align*}
The tool to control the random terms in the latter display will be Tchebychev's
inequality. As such, we begin by studying the variance of each quantities. First, $\var(\hat{m}_{Y^\iota \Lambda^{(j)}\varepsilon^{(j)}}(y_n)) = \var( \ff{n}\sum_{i=1}^n  Y^\iota_i  \Lambda^{(j)}\varepsilon^{(j)}\indin)$ may be decomposed as
\begin{align*}
  &\ff{n} \var( Y^\iota  \Lambda^{(j)}\varepsilon^{(j)} \indn) + \f{2}{n^2}\sum_{i_1<i_2} \cov(Y^\iota_{i_1} \Lambda^{(j)}_{i_1}\varepsilon^{(j)}_{i_1} \mathds{1}_{ \{  Y_{i_1}\geq y_n\} },Y^\iota_{i_2} \Lambda^{(j)}_{i_2}\varepsilon^{(j)}_{i_2} \mathds{1}_{ \{  Y_{i_2}\geq y_n\} }).
\end{align*}
By \cite[Theorem~3]{Doukhan1994} and Lemma \ref{denom1}(v) under \textcolor{black}{$\gamma(\frac{q}{q-2-\delta}(2+\delta)+\tau)<1$}, we may bound the double sum of covariances and the variance as follows, using \eqref{eq:sum_alpha},
\begin{align*}
&
\sum_{i_1<i_2}\alpha^{\frac{\delta}{2+\delta}}(|i_1-i_2|)  \E(  \Lambda^{(j)} (Y^\iota \varepsilon^{(j)})^{2+\delta}  \mathds{1}_{\{ Y\ge y_n \}})^{\frac{2}{2+\delta}}
\lesssim n y_n^{2\iota}(\lambda\bar F)^{\frac{2}{2+\delta}-\f{2}{q}}(y_n),
\\
     &\var( Y^\iota  \Lambda^{(j)}\varepsilon^{(j)} \ind)\le  m_{ \Lambda^{(j)}(Y^\iota \varepsilon^{(j)})^2}(y_n)
          \lesssim y_n^{2\iota}(\lambda\bar F)^{1-2/q}(y_n).
 \end{align*} 
 Thus, by comparison, one gets: \begin{align*}
    \var(\hat{m}_{Y^\iota \Lambda^{(j)}\varepsilon^{(j)}}(y_n)) &\lesssim \ff{n} y_n^{2\iota}(\lambda\bar F)^{\frac{2}{2+\delta}-\f{2}{q}}(y_n).
\end{align*}
Finally, using Tchebychev's
inequality together with $\E(\hat{m}_{Y^\iota \Lambda^{(j)}\varepsilon^{(j)}}(y_n)) =\E(Y^\iota \Lambda^{(j)}\varepsilon^{(j)} \indn) $ upon which we apply Lemma~\ref{denom1}(iv), and $n(\lambda\bar{F})^{\theta}(y_n)\gg 1$ for the comparison, it follows
\begin{align*}
  \hat{m}_{Y^\iota \Lambda^{(j)}\varepsilon^{(j)}}(y_n)  &= \E(Y^\iota \Lambda^{(j)}\varepsilon^{(j)} \indn)+ O\left( \ff{\sqrt{n}} y_n^{\iota}(\lambda\bar F)^{\frac{1}{2+\delta}-\f{1}{q}}(y_n)\right)
  \\&=O\left( y_n^\iota (\lambda \bar F)^{1-1/q}(y_n)\right) .
\end{align*}
Besides, another application of Tchebychev's
inequality, together with a similar argument as above leading to $\var(\hat m_Y(y_n))\lesssim y_n^{2+\delta} \bar{F}^{\frac{2}{2+\delta}}(y_n) $, gives after comparison,
\begin{align*}
  \hat{m}_{Y}(y_n)  &= O\left( y_n \bar{F}(y_n)\right) + O\left( \ff{\sqrt{n}} y_n^{1+\f{\delta}{2}}\bar{F}^{\frac{1}{2+\delta}}(y_n)\right)= O\left( y_n \bar{F}(y_n)\right) .
\end{align*}
Therefore, under \Mu, plugging the above asymptotics in~\eqref{esti} entails, using also~\eqref{eq:aux_norm} and Lemma~\ref{newlemma}(iii),
\begin{align*}
  n^{1/2}(\lambda\bar{F})^{\theta/2}(y_n)  \f{\hat v^{(j)}_{\Lambda}(y_n)}{ \| \hat v_{\Lambda}(y_n) \|}&= n^{1/2}(\lambda\bar{F})^{\theta/2}(y_n) O_{\mathbb{P}}\left( g^{-1}(y_n)(\lambda\bar{F})^{-1/q}(y_n)\right).    
    \end{align*}
    Note that the quantity in the brackets converges to zero under \textcolor{black}{$1<\gamma(\kappa q +\tau)$}.
    The assumption that \textcolor{black}{$n g^{-2}(y_n)(\lambda \bar F)(y_n)^{\theta-2/q}\ll 1$} concludes the proof in the case $j\in \{d+1,\dots,p\}$, regardless of $\theta=1+\f{\delta}{2+\delta}$ or not.
\end{proof}

\bibliographystyle{chicago}
\bibliography{mybib}

\vspace*{1cm}
\paragraph{\bf Acknowledgments} 
This work is supported by the French National Research Agency (ANR), grant EXSTA - ANR-23-CE40-0009. S.~Girard also acknowledges the support of the Chair Stress Test, Risk Management and Financial Steering, led by the French Ecole Polytechnique and its Foundation and sponsored by BNP Paribas.

\end{document}